\documentclass{article}

\usepackage{amsmath}       
\usepackage{amsthm}        
\usepackage{amssymb}       
\usepackage{braket}       
\usepackage{amsfonts}
\usepackage[all]{xy}
\usepackage{xspace}
\usepackage{calc}
\usepackage{comment}
\usepackage{pgfplots}
\usepgflibrary{fpu}
\usepackage{mathtools}
\usepackage{tabularx}
\usepackage{ifthen}             
\usepackage{graphicx}
\usepackage{docmute}
\usepackage{array}
\usepackage{subfloat} 
\usepackage{textcomp}
\usepackage{bbm}            
\usepackage{etoolbox}  
\usepackage{verbatim}
\usepackage{stmaryrd}
\usepackage{cancel}
\usepackage{xcolor}
\colorlet{Black}{black}
\usepackage{tensor}
\usepackage{enumerate}  
\usepackage{thm-restate}
\usepackage[numbers,sort]{natbib}
\usepackage{enumitem}
\usepackage{tocloft}
\usepackage{pdfpages}
\usepackage[export]{adjustbox}

\setlength\cftbeforesecskip{1.8pt}
\setlength\cftaftertoctitleskip{5pt}
\setlength\cftbeforesubsecskip{.3pt}

\usepackage[margin=3cm]{geometry}

\pagestyle{plain}


\newenvironment{tz}[1][]{%
                                \begin{tikzpicture}[baseline={([yshift=-.8ex]current bounding                        box.center)},#1] %
                                }{%
                        \end{tikzpicture} %
                        }


\newcommand\ignore[1]{}

\DeclareRobustCommand{\SkipTocEntry}[5]{}


\usepackage{tikz}
\usetikzlibrary{matrix}
\usetikzlibrary{decorations.pathreplacing}
\usetikzlibrary{decorations.markings}
\usetikzlibrary{arrows}
\usetikzlibrary{calc}
\usetikzlibrary{shapes.misc}
\usetikzlibrary{fit}
\usepgflibrary{decorations.pathmorphing}
\usepgflibrary{shapes.geometric}

\pgfdeclarelayer{edgelayer}
\pgfdeclarelayer{nodelayer}
\pgfdeclarelayer{foreground}
\pgfsetlayers{edgelayer,nodelayer,main,foreground}

\tikzstyle{none}=[inner sep=0pt]

\tikzstyle{rn}=[circle,fill=Red,draw=Black,line width=0.8 pt]
\tikzstyle{gn}=[circle,fill=Lime,draw=Black,line width=0.8 pt]
\tikzstyle{bl}=[circle,fill=Blue,draw=Black,line width=0.8 pt]

\tikzstyle{simple}=[-,draw=Black,thick]
\tikzstyle{arrow}=[-,draw=Black,postaction={decorate},decoration={markings,mark=at position .5 with {\arrow{>}}},thick]
\tikzstyle{tick}=[-,draw=Black,postaction={decorate},decoration={markings,mark=at position .5 with {\draw (0,-0.1) -- (0,0.1);}},line width=2.000]
\makeatother\def\thickness{0.7pt}
\tikzstyle{dot}=[circle, draw=black, fill=black, inner sep=.5ex, line width=\thickness, node on layer=foreground]




\makeatletter
\pgfkeys{%
  /tikz/on layer/.code={
    \pgfonlayer{#1}\begingroup
    \aftergroup\endpgfonlayer
    \aftergroup\endgroup
  },
  /tikz/node on layer/.code={
     \gdef\node@@on@layer{%
      \setbox\tikz@tempbox=\hbox\bgroup\pgfonlayer{#1}\unhbox\tikz@tempbox\endpgfonlayer\egroup}
    \aftergroup\node@on@layer
  },
  /tikz/end node on layer/.code={
    \endpgfonlayer\endgroup\endgroup
  }
}

\def\node@on@layer{\aftergroup\node@@on@layer}

\makeatletter
\def\calign@preamble{%
   &\hfil\strut@
    \setboxz@h{\@lign$\m@th\displaystyle{##}$}%
    \ifmeasuring@\savefieldlength@\fi
    \set@field
    \hfil
    \tabskip\alignsep@
}
\let\cmeasure@\measure@
\patchcmd\cmeasure@{\divide\@tempcntb\tw@}{}{}{}
\patchcmd\cmeasure@{\divide\@tempcntb\tw@}{}{}{}
\patchcmd\cmeasure@{\ifodd\maxfields@
  \global\advance\maxfields@\@ne
  \fi}{}{}{}    
\newenvironment{calign}
{%
  \let\align@preamble\calign@preamble
  \let\measure@\cmeasure@
  \align
}
{%
  \endalign
}  
\makeatother
\tikzset{
    master/.style={
        execute at end picture={
            \coordinate (lower right) at (current bounding box.south east);
            \coordinate (upper left) at (current bounding box.north west);
        }
    },
    slave/.style={
        execute at end picture={
            \pgfresetboundingbox
            \path (upper left) rectangle (lower right);
        }
    }
}

\tikzset{blob/.style={draw, circle, fill=white, inner sep=1pt, minimum width=15pt, font=\scriptsize, line width=0.7pt}}
\tikzset{greenregion/.style={fill=green, fill opacity=0.3, draw=none}}
\tikzset{redregion/.style={fill=red, fill opacity=0.3, draw=none}}
\tikzset{blueregion/.style={fill=blue, fill opacity=0.3, draw=none}}
\tikzset{yellowregion/.style={fill=yellow, fill opacity=0.5, draw=none}}
\tikzset{cyanregion/.style={fill=cyan, fill opacity=0.3, draw=none}}
\tikzset{orangeregion/.style={fill=orange, fill opacity=0.6, draw=none}}
\tikzset{solidgreenregion/.style={fill=green!30, fill opacity=1, draw=none}}
\tikzset{solidredregion/.style={fill=red!30, fill opacity=1, draw=none}}
\tikzset{solidblueregion/.style={fill=blue!30, fill opacity=1, draw=none}}
\tikzset{solidyellowregion/.style={fill=yellow!30, fill opacity=1, draw=none}}
\tikzset{string/.style={line width=0.7pt}}
\tikzset{zig/.style={decoration={zigzag,segment length=3, amplitude=0.5}}}
\tikzset{bnd/.style={draw,string}}   
\tikzset{projector/.style={circle, draw, font=\scriptsize, inner sep=-5pt, minimum width=0.35cm, string, fill=white}}
\tikzset{dimension/.style={font=\scriptsize, inner sep=1pt}}

\tikzset{arrow data/.style 2 args={
      decoration={
         markings,
         mark=at position #1 with \arrow{#2}},
         postaction=decorate}
}

\tikzset{along path/.style={every path/.style={}, sloped, allow upside down}}

\def\zxnormal {
                \def \zxscale{0.55}
                \def\zxnodescale{0.8}
                \def\vertexscale{0.7}
                \def\zxshift{0.075cm}
                \def\hadscale{0.8}
                \def\trianglescale{1}
                \def\boxscale{1}
                }

\def\zxgreen{white}

\def\zxwhite{white}

\tikzset{front/.style ={node on layer=foreground}}
\tikzset{zx/.style = {string, scale=\zxscale}}
\tikzset{zxnode/.style n args={1}{blob,scale=\zxnodescale,fill=#1,node on layer=foreground}}
\tikzset{box/.style={draw, rectangle, fill=white, inner sep=1pt, minimum width=10pt,minimum height=10pt, font=\scriptsize, line width=0.7pt,scale=\zxnodescale,node on layer=foreground}}
\tikzset{boxvertex/.style={draw, rectangle, fill=white, line width=0.733pt,scale=0.75*\vertexscale}}
\tikzset{bigbox/.style={draw, rectangle, fill=white,  minimum width=\boxscale *18pt,minimum height=\boxscale*8pt, line width=0.7pt,scale=\zxnodescale}}
\newlength{\unitbox}
\setlength{\unitbox}{1cm}

\tikzset{widebox/.style ={draw,rectangle, fill=white, line width=0.7pt,scale=0.75*\zxnodescale,minimum height=15pt,inner sep=1pt,  minimum width = \unitbox,   anchor=center }}
\tikzset{wideboxm/.style n args={1}{draw,rectangle, fill=white, line width=0.7pt,scale=0.75*\zxnodescale,minimum height=15pt,inner sep=1pt,  minimum width =2\unitbox+#1\unitbox,   anchor=center }}
\tikzset{triangleup/.style n args={1}{draw, shape=isosceles triangle, isosceles triangle stretches, fill=white, line width=0.7pt,scale=0.75*\zxnodescale,minimum height=15pt,inner sep=1pt,  minimum width = #1*\trianglescale cm +0.15*\trianglescale cm,  shape border rotate=90, anchor=south }}
\tikzset{triangledown/.style n args={1}{draw, shape=isosceles triangle, isosceles triangle stretches, fill=white, line width=0.7pt,scale=0.75*\zxnodescale,minimum height=15pt,inner sep=1pt,  minimum width = #1*\trianglescale cm +0.15*\trianglescale cm,  shape border rotate=-90, anchor=north }}
\tikzset{zxvertex/.style n args={1}{draw,fill=#1,circle,line width=0.7pt,scale=0.75*\vertexscale}}
\tikzset{zxdown/.style={yshift=-\zxshift}}
\tikzset{zxup/.style={yshift=\zxshift}}

\newcommand\mult[3]{ 
\draw[string] (#1.center) to [out=up, in=-135] +(0.5*#2,#3) to [out=-45, in=up] +(0.5*#2,-#3);
\node[zxvertex=\zxgreen,zxdown] at ($(#1)+(0.5*#2,#3)$){};
}

\newcommand\unit[2]{ 
\draw[string] (#1.center) to + (0, -#2);
\node[zxvertex=\zxgreen] at ($(#1) +(0,-#2)$){};
}


\newcommand{\Tr}{\mathrm{Tr}}



\renewcommand{\to}[1][]{\ensuremath{\xrightarrow{#1}}}

\allowdisplaybreaks[1]

\theoremstyle{plain} 

\newtheorem{theorem}{Theorem}[section]
\newtheorem{lemma}[theorem]{Lemma}
\newtheorem{corollary}[theorem]{Corollary}          
\newtheorem{proposition}[theorem]{Proposition}

\newtheorem*{theorem*}{Theorem}

\theoremstyle{definition} 
\newtheorem{definition}[theorem]{Definition}
\newtheorem{remark}[theorem]{Remark}

\newtheorem{notation}[theorem]{Notation}

\theoremstyle{remark}  
\newtheorem{example}[theorem]{Example}


\newtheoremstyle{special_statement} 
        {\topskip}
        {\topskip}
        {\addtolength{\leftskip}{2.5em} \itshape }
        {}
        {\bfseries}
        {:}
        {.5em}
        {}
\theoremstyle{special_statement}



\DeclareMathOperator{\Hom}{Hom}
\DeclareMathOperator{\End}{End}

\newcommand{\id}{\mathrm{id}}

\newcommand{\Bimod}{\mathrm{Bimod}}

\newcommand{\Aut}{\ensuremath{\mathrm{Aut}}}

\newcommand{\Rep}{\mathrm{Rep}}
\newcommand{\Mat}{\mathrm{Mat}}

\newcommand{\Mod}{\mathrm{Mod}}
\newcommand{\diag}{\mathrm{diag}}

\newcommand{\Hilb}{\ensuremath{\mathrm{Hilb}}}

\newcommand{\op}{\text{op}}

\newcommand{\F}{\ensuremath{\mathrm{SSFA}}}

\makeatletter
\DeclareFontFamily{OMX}{MnSymbolE}{}
\DeclareSymbolFont{MnLargeSymbols}{OMX}{MnSymbolE}{m}{n}
\SetSymbolFont{MnLargeSymbols}{bold}{OMX}{MnSymbolE}{b}{n}
\DeclareFontShape{OMX}{MnSymbolE}{m}{n}{
    <-6>  MnSymbolE5
   <6-7>  MnSymbolE6
   <7-8>  MnSymbolE7
   <8-9>  MnSymbolE8
   <9-10> MnSymbolE9
  <10-12> MnSymbolE10
  <12->   MnSymbolE12
}{}
\DeclareFontShape{OMX}{MnSymbolE}{b}{n}{
    <-6>  MnSymbolE-Bold5
   <6-7>  MnSymbolE-Bold6
   <7-8>  MnSymbolE-Bold7
   <8-9>  MnSymbolE-Bold8
   <9-10> MnSymbolE-Bold9
  <10-12> MnSymbolE-Bold10
  <12->   MnSymbolE-Bold12
}{}

\let\llangle\@undefined
\let\rrangle\@undefined
\DeclareMathDelimiter{\llangle}{\mathopen}%
                     {MnLargeSymbols}{'164}{MnLargeSymbols}{'164}
\DeclareMathDelimiter{\rrangle}{\mathclose}%
                     {MnLargeSymbols}{'171}{MnLargeSymbols}{'171}
\makeatother


\RequirePackage{doi}
\usepackage{hyperref}


\usepackage[utf8]{inputenc}
\usepackage{diagrams}

\title{A covariant Stinespring theorem}
\author{Dominic Verdon\footnote{\href{mailto:dominic.verdon@bristol.ac.uk}{dominic.verdon@bristol.ac.uk}}}
\date{School of Mathematics, University of Bristol, United Kingdom \\[2ex]
\today}

\begin{document}
\normalsize
\zxnormal
\maketitle 

\begin{abstract}
We prove a finite-dimensional covariant Stinespring theorem for compact quantum groups. Let $G$ be a compact quantum group, and let $\mathcal{T}:= \Rep(G)$ be the rigid $C^*$-tensor category of finite-dimensional continuous unitary representations of $G$. Let $\Mod(\mathcal{T})$ be the rigid $C^*$-2-category of cofinite semisimple finitely decomposable $\mathcal{T}$-module categories. We show that finite-dimensional $G$-$C^*$-algebras can be identified with equivalence classes of 1-morphisms out of the object $\mathcal{T}$ in $\Mod(\mathcal{T})$. For 1-morphisms $X: \mathcal{T} \to \mathcal{M}_1$, $Y: \mathcal{T} \to \mathcal{M}_2$, we show that covariant completely positive maps between the corresponding $G$-$C^*$-algebras can be `dilated' to isometries $\tau: X \to Y \otimes E$, where $E: \mathcal{M}_2 \to \mathcal{M}_1$ is some `environment' 1-morphism. Dilations are unique up to partial isometry on the environment; in particular, the dilation minimising the quantum dimension of the environment is unique up to a unitary. When $G$ is a compact group this recovers previous covariant Stinespring-type theorems.
\end{abstract}

\setcounter{tocdepth}{2}
\tableofcontents

\section{Introduction}

\addtocontents{toc}{\protect\setcounter{tocdepth}{0}}

\paragraph{Stinespring theorems.}

In finite-dimensional (f.d.) quantum and classical physics we identify systems with f.d.~$C^*$-algebras and dynamics with completely positive trace-preserving (CPTP) linear maps (called \emph{channels}). To formulate an expressive physical theory it is useful to place symmetry restrictions on dynamics; one therefore introduces a compact group $G$ and identifies systems with f.d.~$C^*$-algebras with a $G$-action (called \emph{$G$-$C^*$-algebras} or \emph{$C^*$-dynamical systems}), and dynamics with \emph{covariant} channels intertwining the $G$-actions. Recently these notions have been generalised to compact quantum groups~\cite{Wang1998,Soltan2010,DeCommer2017}.

An essential tool in the study of finite-dimensional quantum physics, particularly quantum information theory, is the finite-dimensional Stinespring theorem~\cite[Thm. 2]{Holevo2007}~\cite[Thm. 1]{Stinespring1955}, which characterises channels between f.d.~$C^*$-algebras $A,B$. In general (a special case of~\cite[Thm. 15]{Szafraniec2010}), the theorem implies that for any completely positive linear map $f: A \to B$ there exists an f.d. right Hilbert $B$-module $\mathcal{E}$, a multiplicative $*$-homomorphism $\Phi: A \to B^*(\mathcal{E})$ (where $B^*(\mathcal{E})$ are the adjointable operators on $\mathcal{E}$) and an adjointable  $B$-module map $V: B \to \mathcal{E}$ such that $f(x) = V^{\dagger} \Phi(x) V$. The completely positive map $f^{\dagger}$ is trace-preserving if and only if $V$ is an isometry. This reduces the study of dynamics between indecomposable f.d.~$C^*$-algebras to the study of isometric maps between Hilbert modules.

A covariant version of the Stinespring theorem, applicable to covariant channels between $G$-$C^*$-algebras for a compact group $G$, has also appeared~\cite[Thm. 1]{Scutaru1979}\cite[Thm. 2.1]{Paulsen1982}. The statement is similar to the non-covariant case, except that $\mathcal{E}$ is now a $G$-equivariant Hilbert module, and $V$ is an intertwiner of representations. This result reduces dynamics between between f.d. matrix $G$-$C^*$-algebras to isometric intertwiners between equivariant Hilbert modules. 

In this work we prove a covariant Stinespring theorem which extends to the case where $G$ is any compact quantum group and holds for maps between any pair of $G$-$C^*$-algebras. We also show uniqueness of the dilation up to a partial isometry on the environment. (In fact, our results are in fact somewhat more general than this --- the theory works in any rigid $C^*$-tensor category, not just in the category of f.d. continuous unitary representations of a compact quantum group.)

Our results can be interpreted as showing that the theory of finite-dimensional $G$-$C^*$-algebras and completely positive maps, which is formulated in the rigid $C^*$-tensor category $\mathcal{T} := \Rep(G)$, is the Morita-theoretical `shadow' of a theory whose dynamics are given by isometric 2-morphisms in the \emph{semisimple $C^*$-2-category} $\Mod(\mathcal{T})$ (which is equivalent to the 2-category of f.d. $G$-$C^*$-algebras, finitely generated $G$-equivariant Hilbert bimodules, and equivariant bimodule morphisms). From a utilitarian standpoint, one can use the 2-category $\Mod(\mathcal{T})$ to construct, study and manipulate covariant channels. More foundationally, however, the result suggests that the more fundamental theory may be one which identifies systems with 1-morphisms and dynamics with isometric 2-morphisms in $\Mod(\mathcal{T})$. In the non-covariant case, such a theory has already been proposed~\cite{Vicary2012,Vicary2012a}\cite[Chap. 8]{Heunen2019}; there, objects were identified with classical information, 1-morphisms with classically controlled quantum systems, and isometric 2-morphisms with classically controlled dynamics. In the covariant case, there are new phenomena (such as inequivalent simple objects) which require interpretation. We here limit ourselves to the presentation of the theorem, leaving questions of interpretation for future work. 

\paragraph{A categorical formulation of covariant physics.}
We now explain what we mean by Morita theory. For any \emph{simple object} $r$ of a semisimple $C^*$-2-category $\mathcal{C}$, the endomorphism category $\End(r)$ is a \emph{rigid $C^*$-tensor category}. Morita theory describes how objects, 1-morphisms and 2-morphisms in $\mathcal{C}$ appear as certain algebraic structures in $\End(r)$.

The category $\Rep(G)$ of f.d. continuous unitary representations of a compact quantum group $G$ is a rigid $C^*$-tensor category. (In fact, every rigid $C^*$-tensor category $\mathcal{T}$ with a faithful unitary $\mathbb{C}$-linear tensor functor $\mathcal{T} \to \Hilb$ (called a \emph{fibre functor}), where $\Hilb$ is the category of finite-dimensional Hilbert spaces and linear maps, is equivalent to $\Rep(G)$ for some $G$.)
The theory of $G$-$C^*$-algebras and covariant channels admits a natural formulation in terms of algebraic structures in the category $\Rep(G)$. In brief, every finite-dimensional $G$-$C^*$-algebra $A$ possesses a canonical $G$-invariant functional $\phi: A \to \mathbb{C}$, such that, with the inner product $\braket{x|y} = \phi(x^*y)$, $A$ becomes a f.d. continuous unitary $G$-representation carrying an algebra structure. This gives a correspondence between $G$-$C^*$-algebras and \emph{separable standard Frobenius algebras} ($\F$s) $A,B,\dots$ in $\Rep(G)$. Likewise, CP maps between $G$-$C^*$-algebras correspond to \emph{CP morphisms} $A \to B$; channels are CP morphisms preserving the \emph{counit}. (See Section~\ref{sec:gc*alg} for a more detailed summary and references.) More generally, one may also consider $\F$s and CP morphisms in rigid $C^*$-tensor categories $\mathcal{T}$ not possessing a fibre functor, although in this case there is no obvious way to identify the $\F$s and CP morphisms with concrete $C^*$-algebras and linear maps. The theory of covariant finite-dimensional physics can therefore be identified with the theory of $\F$s and CP morphisms in a rigid $C^*$-tensor category $\mathcal{T}$.

For these $\F$s and CP morphisms to arise by Morita theory, we need to define a rigid $C^*$-2-category in which $\mathcal{T}$ embeds as an endomorphism category. Just as we `unpacked' a compact quantum group $G$ to obtain a rigid $C^*$-tensor category $\Rep(G)$ with fibre functor in which our physical theory is formulated, we want to `unpack' $\Rep(G)$ further to obtain a rigid $C^*$-2-category. We here consider two ways to `unpack' a rigid $C^*$-tensor category $\mathcal{T}$, which are $C^*$-adaptations of well-known constructions. One is the (strict) 2-category $\Mod(\mathcal{T})$, whose objects are semisimple cofinite finitely decomposable left $\mathcal{T}$-module categories, whose 1-morphisms are unitary $\mathcal{T}$-module functors, and whose 2-morphisms are morphisms of $\mathcal{T}$-module functors --- in this 2-category, $\mathcal{T}$ embeds as the endomorphism category $\End_{T}(\mathcal{T})$ of $\mathcal{T}$ considered as a $\mathcal{T}$-module category. The other is the 2-category $\Bimod(\mathcal{T})$, whose objects are $\F$s in $\mathcal{T}$, whose 1-morphisms are dagger bimodules, and whose 2-morphisms are bimodule homomorphisms --- in this 2-category, $\mathcal{T}$ embeds as the endomorphism category $\End(\mathbbm{1})$ of the trivial $\F$.

\paragraph{Semisimple $C^*$-2-categories.} In fact, $\Bimod(\mathcal{T})$ and $\Mod(\mathcal{T})$ are equivalent semisimple $C^*$-2-categories.

Following~\cite{Douglas} we say that a rigid $C^*$-2-category is furthermore \emph{presemisimple} if it is locally semisimple and additive, and every object decomposes as a finite direct sum of simple objects. The missing ingredient for semisimplicity is splitting of dagger idempotents at the 1-morphism level, which has no parallel in the theory of rigid $C^*$-tensor categories. We therefore need to propose a definition of a dagger idempotent 1-morphism. In~\cite{Douglas}, which treated the non-unitary case, idempotent 1-morphisms were defined as separable algebras in endomorphism categories. In the unitary $C^*$-setting, we do not want to work with all separable algebras, and so need to tighten this definition. We propose that the relevant idempotents in the $C^*$ setting are $\F$s in endomorphism categories. This is motivated physically by the fact that these idempotents can be identified with $G$-$C^*$-algebras.

Our definition of splitting of such an idempotent is Morita-theoretical in nature. It is well-known (e.g.~\cite{Lauda2005}) that, in a 2-category with duals, every 1-morphism $X: r \to s$ out of $r$ induces a Frobenius algebra (the `pair of pants' algebra) on the object $X \otimes X^*$ of the endomorphism category $\End(r)$. In a presemisimple $C^*$-2-category, this construction can be normalised (Proposition~\ref{prop:pairofpants}) to produce $\F$s in the rigid $C^*$-multitensor category $\End(r)$ from \emph{separable} 1-morphisms out of $r$. (Separability of a 1-morphism is a sort of nondegeneracy condition (Definition~\ref{def:separable1morph}).) We say that an $\F$ $A$ in the endomorphism category $\End(r)$ \emph{splits} if it is isomorphic to the pair of pants $X \otimes X^*$ for some separable 1-morphism $X: r \to s$. We say that a rigid $C^*$-tensor category is \emph{semisimple} if it is presemisimple and all $\F$s in all endomorphism categories split.

As we already mentioned, we show that $\Bimod(\mathcal{T})$ and $\Mod(\mathcal{T})$ are semisimple $C^*$-2-categories (Proposition~\ref{prop:bimodsemisimple}, Corollary~\ref{cor:modtsemisimpleetc.}), and that there is an equivalence $\Bimod(\mathcal{T}) \simeq \Mod(\mathcal{T})$ (Theorem~\ref{thm:eilenbergwatts}). These 2-categories can be seen as higher idempotent completions of the rigid $C^*$-tensor category $\mathcal{T}$. We say that a semisimple $C^*$-2-category is \emph{connected} if the Hom-category between any pair of nonzero objects is nonzero; we observe that every connected semisimple $C^*$-2-category $\mathcal{C}$ is equivalent to $\Mod(\mathcal{T})$ for some rigid $C^*$-tensor category $\mathcal{T}$~(Proposition~\ref{prop:reconstruction}). (Here $\mathcal{T}$ can be chosen as the endomorphism category of any simple object in $\mathcal{C}$.)

\paragraph{A classification of $G$-$C^*$-algebras.}

These results already yield a Morita-theoretical characterisation of f.d. $G$-$C^*$-algebras; or, more generally, of $\F$s in a rigid $C^*$-tensor category $\mathcal{T}$. Indeed, by semisimplicity of $\Mod(\mathcal{T})$, every $F$ in $\mathcal{T}$ arises as a pair of pants algebra $X \otimes X^*$ for some separable 1-morphism $X: \mathcal{T} \to \mathcal{M}$ in $\Mod(\mathcal{T})$. 

This results in a classification that has appeared elsewhere~\cite{DeCommer2012,Neshveyev2013a,Neshveyev2018}, although we have not found the notion of equivalence up to a phase in $\End_{\mathcal{T}}(\mathcal{M})$ in other works. We say that an $\F$ in $\mathcal{T}$ is \emph{simple} if it cannot be decomposed as a nontrivial direct sum. We say that two $\F$s in $\mathcal{T}$ are \emph{Morita equivalent} if they are equivalent as objects of $\Bimod(\mathcal{T})$. Let $\mathcal{M}$ be a cofinite semisimple indecomposable right $\tilde{\mathcal{T}}$-module category, for some rigid $C^*$-tensor category $\tilde{\mathcal{T}}$; we say that two objects $X_1,X_2$ of $\mathcal{M}$ are \emph{equivalent up to a phase in $\tilde{\mathcal{T}}$} if there is an object $\theta$ of $\tilde{\mathcal{T}}$ with unit dimension such that $X_1$ is unitarily isomorphic to $X_2 \tilde{\otimes} \theta$. We then have the following theorem:
\begin{theorem*}[Theorem \ref{thm:classificationoffs}]
Let $\mathcal{T}$ be a rigid $C^*$-tensor category. There is a bijective correspondence between:
\begin{itemize}
\item Morita equivalence classes of simple $\F$s in $\mathcal{T}$. 
\item Equivalence classes of cofinite semisimple indecomposable left $\mathcal{T}$-module categories. 
\end{itemize}
Let $\mathcal{M}$ be a cofinite semisimple indecomposable left $\mathcal{T}$-module category. Since $\mathcal{M}$ is indecomposable, the category $\End_{\mathcal{T}}(\mathcal{M})$ of $\mathcal{T}$-module endofunctors on $\mathcal{M}$ is a rigid $C^*$-tensor category with a right action on $\mathcal{M}$. There is a bijective correspondence between:
\begin{itemize}
\item Isomorphism classes of simple $\F$s in the corresponding Morita class.
\item Isomorphism classes of objects in $\mathcal{M}$, up to a phase in $\End_{\mathcal{T}}(\mathcal{M})$.
\end{itemize}
\end{theorem*}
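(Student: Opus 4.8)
The plan is to derive both correspondences from the equivalence $\Bimod(\mathcal{T})\simeq\Mod(\mathcal{T})$ (Theorem~\ref{thm:eilenbergwatts}) together with the semisimplicity of these $2$-categories (Corollary~\ref{cor:modtsemisimpleetc.}), which guarantees that every $\F$ in an endomorphism category splits as a normalised pair of pants. Throughout I would work with the distinguished object $\mathcal{T}$ of $\Mod(\mathcal{T})$, namely $\mathcal{T}$ regarded as a module category over itself, whose endomorphism category is $\mathcal{T}$ again; thus $\F$s in $\mathcal{T}$ are exactly $\F$s in $\End_{\mathcal{T}}(\mathcal{T})$, and I may move freely between the two pictures.

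The first bijection is essentially formal. The objects of $\Bimod(\mathcal{T})$ are by definition the $\F$s in $\mathcal{T}$, and isomorphism of objects in $\Bimod(\mathcal{T})$ is Morita equivalence; so Morita equivalence classes of $\F$s are equivalence classes of objects of $\Bimod(\mathcal{T})$, which transport along Theorem~\ref{thm:eilenbergwatts} to equivalence classes of objects of $\Mod(\mathcal{T})$, i.e.\ of cofinite semisimple finitely decomposable $\mathcal{T}$-module categories. It then remains to match the indecomposable objects on each side: an $\F$ is simple precisely when it is indecomposable as an object of $\Bimod(\mathcal{T})$, and an equivalence of semisimple $C^*$-$2$-categories preserves finite direct sums of objects, hence indecomposability, so simple $\F$s correspond to indecomposable module categories. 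I would check that the additive structures agree (direct sums of $\F$s versus direct sums of objects in $\Bimod(\mathcal{T})$, and direct sums of module categories versus direct sums in $\Mod(\mathcal{T})$), which is a routine unwinding of definitions.

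For the second bijection, fix an indecomposable $\mathcal{M}$ and set $\tilde{\mathcal{T}}:=\End_{\mathcal{T}}(\mathcal{M})$, acting on the right of $\mathcal{M}$. The first step is the Eilenberg--Watts identification of the Hom-category $\Hom_{\Mod(\mathcal{T})}(\mathcal{T},\mathcal{M})$ with $\mathcal{M}$ itself: a $\mathcal{T}$-module functor $X:\mathcal{T}\to\mathcal{M}$ is determined up to unitary isomorphism by $X(\mathbbm 1)\in\mathcal{M}$, compatibly with the right $\tilde{\mathcal{T}}$-actions. Under this identification the normalised pair of pants of $X$ (Proposition~\ref{prop:pairofpants}) is the $\F$ structure on the internal endomorphism object $\underline{\End}(X(\mathbbm 1))$. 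Surjectivity is then immediate from semisimplicity: a simple $\F$ $A$ in the Morita class of $\mathcal{M}$ splits as $X\otimes X^*$ for a separable $1$-morphism $X$ whose target is equivalent to $\mathcal{M}$ by the first correspondence, so after transport $A$ is the pair of pants of an object of $\mathcal{M}$; conversely any nonzero object of $\mathcal{M}$ gives a separable $1$-morphism out of $\mathcal{T}$ (separability in the sense of Definition~\ref{def:separable1morph} being automatic for nonzero $1$-morphisms in this semisimple rigid setting), whose pair of pants is an $\F$ that is simple because its associated module category is the indecomposable $\mathcal{M}$.

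The heart of the proof, and the step I expect to be the main obstacle, is the injectivity clause: that $X\otimes X^*\cong Y\otimes Y^*$ as $\F$s if and only if $X\cong Y\,\tilde\otimes\,\theta$ for some unit-dimension (hence invertible) object $\theta$ of $\tilde{\mathcal{T}}$. One direction is a direct diagrammatic check: replacing $Y$ by $Y\,\tilde\otimes\,\theta$ inserts $\theta\otimes\theta^*\cong\mathbbm 1$ into the middle of the pair of pants, so the $\F$ is unchanged up to isomorphism, using that $\dim\theta=1$ forces $\theta$ invertible with $\theta^*\cong\theta^{-1}$. For the converse I would pass to module categories: an isomorphism of $\F$s $A=X\otimes X^*\xrightarrow{\sim}Y\otimes Y^*=A'$ induces, by restriction of scalars, an equivalence of $\mathcal{T}$-module categories $\Mod_{\mathcal{T}}(A)\simeq\Mod_{\mathcal{T}}(A')$; composing with the equivalences of each side to $\mathcal{M}$ coming from the splitting produces a $\mathcal{T}$-module auto-equivalence of $\mathcal{M}$, that is, exactly an invertible object $\theta$ of $\tilde{\mathcal{T}}$, and tracking the canonical generating objects through these equivalences yields $X\cong Y\,\tilde\otimes\,\theta$. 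The delicate points, where I expect the real work to lie, are verifying that the induced auto-equivalence is unitary, that $\theta$ is genuinely unit-dimensional (a quantum-dimension count comparing $X$ and $Y$, which split the same $\F$), and that the standard separable Frobenius normalisations stay consistent so that the argument remains in the $C^*$ setting rather than merely up to algebra isomorphism. As sanity checks I would confirm the two extremes: for $\mathcal{T}=\Vec$ the dual has no nontrivial phases and isomorphism classes of matrix $\F$s match dimensions, while for $\mathcal{M}=\mathcal{T}=\Rep(G)$ the phases are the characters of $G$ and $\underline{\End}(V)\cong\underline{\End}(W)$ exactly when $V$ and $W$ differ by a character.
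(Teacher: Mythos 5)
Your proposal is correct, and its skeleton coincides with the paper's: both run the first bijection through the equivalence $\Bimod(\mathcal{T}) \simeq \Mod(\mathcal{T})$ (Theorem~\ref{thm:eilenbergwatts}) plus preservation of direct sums, and both reduce the second bijection to (i) deciding when two 1-morphisms $X,Y \colon \mathcal{T} \to \mathcal{M}$ have unitarily $*$-isomorphic pair-of-pants $\F$s and (ii) identifying $\Hom_{\mathcal{T}}(\mathcal{T},\mathcal{M})$ with $\mathcal{M}$ as a $\mathcal{T}$-$\End_{\mathcal{T}}(\mathcal{M})$-bimodule category. The genuine difference is how you handle (i), which you rightly call the heart of the matter: the paper proves nothing here, it invokes Theorem~\ref{thm:pairofpantsisoclassification}, imported wholesale from \cite[Thm. 5.7]{Verdon2020a}. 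Your restriction-of-scalars argument is a self-contained substitute: a unitary $*$-isomorphism $u \colon X \otimes X^* \to Y \otimes Y^*$ induces a unitary $\mathcal{T}$-module equivalence of module categories, composing with the two splitting equivalences gives a $\mathcal{T}$-module autoequivalence of $\mathcal{M}$, i.e.\ an invertible object $\theta$ of $\End_{\mathcal{T}}(\mathcal{M})$, and tracking the free modules (using that $u$ itself is a unitary isomorphism of right modules onto the $u$-twisted free module) gives $X(\mathbbm{1}) \cong Y(\mathbbm{1}) \,\tilde{\otimes}\, \theta$. The points you flag as delicate do all resolve: $d(\theta)=1$ is automatic for an invertible object, since $\End(\id_{\mathcal{M}}) \cong \mathbb{C}$ by indecomposability, dimension is multiplicative, and $d \geq 1$ for nonzero objects in a rigid $C^*$-tensor category; unitarity comes for free because restriction along a unitary $*$-isomorphism and the splitting functors are already unitary module functors, and isomorphic objects in a $C^*$-category are unitarily isomorphic by polar decomposition. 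For (ii), the paper chains $\mathcal{M} \simeq \Mod$-$A$, the local equivalence $\Psi_{\mathbbm{1},A}$, and postcomposition with $E^{-1}$, where you use evaluation at $\mathbbm{1}$; these yield the same identification, yours more directly. Net effect: your route is more self-contained (no appeal to the external classification theorem) at the cost of carrying the Morita-theoretic bookkeeping yourself, while the paper is shorter precisely because that bookkeeping is packaged in the citation.
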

\noindent
In particular, the \emph{connected} (a.k.a \emph{ergodic}) $G$-$C^*$-algebras of e.g.~\cite{Bichon2005,DeCommer2012,Arano2015} are those arising from simple objects of $\mathcal{M}$ (Proposition~\ref{prop:ergodic}).

\paragraph{The covariant Stinespring theorem.}

We then consider dynamics; that is, CP morphisms and channels between $\F$s in $\mathcal{T}$. We embed $\mathcal{T}$ as the endomorphism category $\End_{\mathcal{T}}(\mathcal{T})$ in $\Mod(\mathcal{T})$. By semisimplicity, every $\F$  in $\mathcal{T}$ is isomorphic to a pair of pants algebra $X \otimes X^*$ for a separable 1-morphism $X: \mathcal{T} \to \mathcal{M}$. In this context, we now state our main theorem. The equations below use the diagrammatic calculus for pivotal dagger 2-categories (Section~\ref{sec:pivdagcats}); the main point is that $f$ can be expressed entirely in terms of the dilation $\tau$ and the rigid structure of the 2-category $\Mod(\mathcal{T})$.
\begin{theorem*}
[Theorem~\ref{thm:covstinespring}]
Let $X: \mathcal{T} \to \mathcal{M}_1$, $Y: \mathcal{T} \to \mathcal{M}_2$ be separable 1-morphisms in $\Mod(\mathcal{T})$, and let $f: X \otimes X^* \to Y \otimes Y^*$ be a CP morphism between the corresponding $\F$s in $\End_{\mathcal{T}}(\mathcal{T}) \simeq \mathcal{T}$. 

Then there exists a 1-morphism $E: \mathcal{M}_2 \to \mathcal{M}_1$ (the `environment') and a 2-morphism $\tau: X \to Y \otimes E$ such that the following equation holds:
\begin{calign}\label{eq:stinespringintro}
\includegraphics[scale=.8]{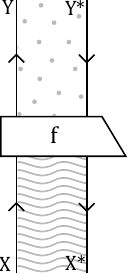}
~~=~~
\includegraphics[scale=.8]{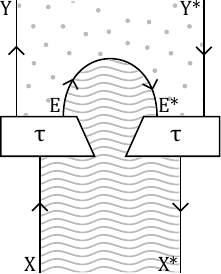}
\end{calign}
We say that $\tau$ is a \emph{dilation} of $f$. The morphism 
\begin{calign}\label{eq:ssisometrycondintro}
\includegraphics[scale=.8]{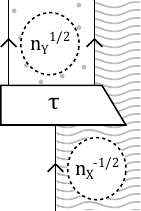}
\end{calign}
is an isometry if and only if $f$ is a channel. (Here $n_X$ and $n_Y$ are normalising factors associated with the canonical trace (Definition~\ref{def:separable1morph}).)

In the other direction, for any 1-morphism $E: \mathcal{M}_2 \to \mathcal{M}_1$ and 2-morphism $\tau: X \to Y \otimes E$, the morphism $f: X \otimes X^* \to Y \otimes Y^*$ defined by~\eqref{eq:stinespringintro} is CP, and a channel if and only if~\eqref{eq:ssisometrycondintro} is an isometry.

Different dilations for a CP morphism $f: X \otimes X^* \to Y \otimes Y^*$ are related by a partial isometry on the environment. Specifically, let $\tau_1: X \to Y \otimes E_1$, $\tau_2: X \to Y \otimes E_2$ be two dilations of $f$. Then there exists a partial isometry $\alpha: E_1 \to E_2$ such that 
\begin{align*}
(\id_Y \otimes \alpha) \circ \tau_1 = \tau_2 
&&
(\id_Y \otimes \alpha^{\dagger}) \circ \tau_2 = \tau_1 
\end{align*}
In particular, the dilation minimising the quantum dimension of the environment $d(E)$ is unique up to unitary $\alpha$. (A concrete construction of the minimal dilation from any other dilation is specified in the last paragraph of the proof.)
\end{theorem*}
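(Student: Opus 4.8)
The plan is to carry out the whole argument inside the diagrammatic calculus of the pivotal dagger $2$-category $\Mod(\mathcal{T})$, reducing each clause of the theorem to two $C^*$-categorical facts about the (finite-dimensional) $C^*$-algebras of $2$-morphisms: existence of operator square roots of positive elements, and polar decomposition. I would dispatch the two ``easy'' directions first and keep the construction of the dilation for the conceptual core. For the converse direction I would take an arbitrary $\tau: X \to Y \otimes E$ and simply \emph{define} $f$ by~\eqref{eq:stinespringintro}; complete positivity is then immediate, since the right-hand side is manifestly of the ``Kraus'' shape $\tau^\dagger(\cdots)\tau$ with the environment $E$ capped off, which is exactly what the intrinsic positivity criterion for CP morphisms detects once $f$ is bent into its Choi form. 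The equivalence ``$f$ is a channel $\Leftrightarrow$~\eqref{eq:ssisometrycondintro} is an isometry'' I would then read off by feeding~\eqref{eq:stinespringintro} into the counit and using the snake/duality equations, which collapse counit-preservation (up to the normalisations $n_X,n_Y$) to $\tau^\dagger\circ\tau = \id$, i.e.\ to isometry of~\eqref{eq:ssisometrycondintro}.

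For the forward (existence) direction I would use the categorical Choi--Jamiolkowski recipe. First, bend the legs of $f: X\otimes X^* \to Y\otimes Y^*$ with the rigid structure to produce its Choi form, a $2$-endomorphism $\hat f$ of the $1$-morphism $Y^*\otimes X$; the defining positivity condition of a CP morphism is precisely the statement that $\hat f \geq 0$ in the $C^*$-algebra $\End(Y^*\otimes X)$. Because $\Mod(\mathcal{T})$ is a semisimple $C^*$-$2$-category, this positive $2$-morphism admits a square root $\sqrt{\hat f}$. I would then take the environment to be $E := Y^*\otimes X$, which has exactly the required type $\mathcal{M}_2 \to \mathcal{M}_1$, and define the dilation by unbending the square root against a coevaluation, $\tau := (\id_Y \otimes \sqrt{\hat f}) \circ (\coev_Y \otimes \id_X): X \to Y\otimes Y^* \otimes X = Y\otimes E$. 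Substituting this $\tau$ into~\eqref{eq:stinespringintro}, the two square-root factors meet and recombine to $\hat f$, and the snake equations unbend $\hat f$ back to $f$, so~\eqref{eq:stinespringintro} holds.

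Finally, for uniqueness up to a partial isometry, I would observe that any two dilations $\tau_1: X \to Y\otimes E_1$ and $\tau_2: X \to Y\otimes E_2$ of the same $f$ must, after the identical bending, yield the \emph{same} Choi form $\hat f$; hence the bent morphisms $g_i$ extracted from $\tau_i$ satisfy $g_1^\dagger\circ g_1 = g_2^\dagger\circ g_2$. The key tool is the standard $C^*$-categorical polar decomposition: two morphisms with the same modulus differ by a partial isometry $u$ on their targets with $u\circ g_1 = g_2$ and $u^\dagger\circ g_2 = g_1$, constructed by setting $u$ to be the isometry $g_1 x \mapsto g_2 x$ on the range of $g_1$ and zero on its orthogonal complement (well-defined since $\|g_1 x\| = \|g_2 x\|$). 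Unbending $u$ produces the desired partial isometry $\alpha: E_1 \to E_2$ together with the two intertwining identities. For the dimension-minimising statement I would use splitting of dagger idempotents in the semisimple $2$-category to compress each $E_i$ onto the support projection of $\tau_i$ (the part of the environment actually reached); since quantum dimension is additive over direct sums, discarding the kernel strictly lowers $d(E)$, so a minimal dilation is one with full-support environment, and between two such dilations $\alpha^\dagger\alpha$ and $\alpha\alpha^\dagger$ are both identities, forcing $\alpha$ to be unitary.

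I expect the main obstacle to be the existence step: pinning down the precise Choi form $\hat f$ and verifying that unbending its square root reproduces~\eqref{eq:stinespringintro} \emph{exactly}, with every pivotal coherence and each normalising factor $n_X,n_Y$ from Definition~\ref{def:separable1morph} correctly placed. By contrast, once the diagrammatic positivity criterion and the polar-decomposition lemma are in hand, the converse, the channel/isometry correspondence, and the uniqueness clause are comparatively routine bookkeeping within the same calculus.
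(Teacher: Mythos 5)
Your proposal has the same skeleton as the paper's proof in three of its four parts. The channel/isometry equivalence via the counit and snake equations is exactly the paper's argument. For uniqueness, the paper likewise bends each dilation to $\tilde{\tau}_i: Y^* \otimes X \to E_i$, observes that being dilations of the same $f$ is precisely $\tilde{\tau}_1^{\dagger} \circ \tilde{\tau}_1 = \tilde{\tau}_2^{\dagger} \circ \tilde{\tau}_2$, applies polar decomposition $\tilde{\tau}_i = u_i \circ |\tilde{\tau}|$, and sets $\alpha := u_2 \circ u_1^{\dagger}$; the minimality clause is likewise handled by splitting support idempotents, and your ``full-support environment'' reformulation is equivalent to the paper's compression argument. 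One presentational caveat: your construction of the partial isometry (``$g_1 x \mapsto g_2 x$ on the range of $g_1$'') is element-theoretic and does not literally parse for 2-morphisms in $\Mod(\mathcal{T})$, which are natural transformations rather than operators on a Hilbert space; the correct abstract substitute is the polar decomposition in the $W^*$-categories of 2-morphisms (Remark~\ref{rem:polar}), which yields the same $u_i$ and support projections. This is a fix of language, not of substance.

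The genuine gap is at the root of your existence direction (and, symmetrically, of your converse). You take as your starting point that ``the defining positivity condition of a CP morphism is precisely the statement that $\hat{f} \geq 0$ in $\End(Y^* \otimes X)$''. That is not the definition in force: a CP morphism is defined by~\eqref{eq:cpcond} (Definition~\ref{def:channel}), an equation phrased in terms of the Frobenius structure of the pair-of-pants algebras $X \otimes X^*$ and $Y \otimes Y^*$ in $\End_{\mathcal{T}}(\mathcal{T}) \simeq \mathcal{T}$ --- whose multiplication and unit carry the normalising factors $n_X^{-1}$, $n_Y^{-1}$ (Proposition~\ref{prop:pairofpants}) --- and the positivity it asserts lives in $\End\bigl((X \otimes X^*) \otimes (Y \otimes Y^*)\bigr)$, not in $\End(Y^* \otimes X)$. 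Identifying these two positivity conditions is a real lemma: one must conjugate by cups and caps of the dualities of $X$ and $Y$, verify that positivity transfers in \emph{both} directions, and track the normalisers. This identification is exactly what the paper outsources to \cite[Lem.~5.12]{Henriques2020}, and it is also why the paper derives its Choi theorem (Theorem~\ref{thm:choi}) \emph{from} the Stinespring theorem rather than the reverse --- so you cannot appeal to that theorem here without circularity. Granting the identification, your square-root-and-unbend construction and your ``Kraus shape'' converse are correct; but as written they establish ``Choi-positive $\Leftrightarrow$ dilatable'', not ``CP in the sense of Definition~\ref{def:channel} $\Leftrightarrow$ dilatable''. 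To complete the proof you must either supply the diagrammatic argument for the identification or cite it, as the paper does.
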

\noindent 
This theorem recovers the aforementioned previous results in the literature: the  f.d. noncovariant Stinespring theorem follows from setting $\mathcal{T} := \Hilb$, and the f.d. covariant Stinespring theorem for a compact group $G$ follows from setting $\mathcal{T} := \Rep(G)$ (Example~\ref{ex:vanillastinespring}).

We finish by reiterating the proposal that, rather than identifying systems with $\F$s in $\mathcal{T}$ and dynamics with CP morphisms, we might identify systems with 1-morphisms in $\Mod(\mathcal{T})$ and dynamics with isometries. We can then see how the 2-categorical theory extends the algebraic theory: $\F$s and CP morphisms in $\mathcal{T}$ correspond to 1-morphisms $X: \mathcal{T} \to \mathcal{M}$ in $\Mod(\mathcal{T})$ and isometries of type $X \to Y \otimes E$, whereas the 2-categorical theory encompasses all 1-morphisms, and all isometries. In fact, the 2-categorical theory unites the theories of $\F$s and CP morphisms in all rigid $C^*$-tensor categories categorically Morita equivalent to $\mathcal{T}$, which appear as the endomorphism categories of simple objects in $\Mod(\mathcal{T})$. However, the physical interpretation of this extended theory is still unclear.

\subsection{Related work}

\paragraph{Categorical Morita equivalence of compact quantum groups.} In~\cite{Neshveyev2018} a notion of a Morita-Galois object was introduced. This is a $G_1$-$G_2$-$C^*$-algebra whose category of equivariant right Hilbert $A$-modules is an invertible $\Rep(G_1)$-$\Rep(G_2)$-bimodule category; such an algebra can be reconstructed from any $\Rep(G_1)$-$\Rep(G_2)$-bimodule category, which includes the Hom-category $\Hom(\mathcal{M}_1, \mathcal{M}_2)$ between simple objects in any semisimple $C^*$-category with fibre functor. We also note that, in the language of this work, two compact quantum groups (or more generally two rigid $C^*$-tensor categories $\mathcal{T}_1,\mathcal{T}_2$) are categorically Morita equivalent precisely when there is an equivalence $\Mod(\mathcal{T}_1) \simeq \Mod(\mathcal{T}_2)$.

\paragraph{Previous covariant Stinespring theorems.} As we have mentioned, our result recovers~\cite[Thm. 2]{Holevo2007}, the finite-dimensional cases of~\cite[Thm. 1]{Stinespring1955}\cite[Thm. 1]{Scutaru1979}\cite[Thm. 2.1]{Paulsen1982}, and the special case of~\cite[Thm. 15]{Szafraniec2010} applying to f.d. unital $C^*$-algebras. Since the category $\Mod(\Rep(G))$ is equivalent to the 2-category of equivariant finitely generated Hilbert bimodules over f.d. $G$-$C^*$-algebras~\cite[P.13]{Neshveyev2018}, it may also be interesting to consider recent work on completely positive maps between Hilbert $C^*$-modules~\cite{Asadi2009,Bhat2010,Joita2010}.

\paragraph{Module categories.} The 2-category $\Mod(\mathcal{T})$ defined here is as a semisimple $C^*$-version of the 2-category of exact module categories over a tensor category~\cite[Rem. 7.12.15]{Etingof2016}\cite{Ostrik2003}. 
Our proofs demonstrate that when working in the $C^*$-setting there is no need to use abelian category theory; all one needs is linearity, the $C^*$-axioms and idempotent splitting. We hope this will make module category theory more accessible to researchers in the quantum information community.

\paragraph{Torsion-freeness for rigid $C^*$-tensor categories.} Several works considered torsion-freeness for compact quantum groups~\cite{Meyer2008,Voigt2011} and more generally for rigid $C^*$-tensor categories~\cite{Arano2015}. It follows from our results that a rigid $C^*$-tensor category $\mathcal{T}$ is torsion-free precisely when the 2-category $\Mod(\mathcal{T})$ has a single simple object up to equivalence (which is necessarily $\mathcal{T}$ itself). In this way we can already characterise $\Mod(\mathcal{T})$ for connected compact groups with torsion-free fundamental group~\cite[\S{}7.2]{Meyer2008}, quantum groups $SU_q(2)$ for $q \in (-1,1) \backslash 0$~\cite[Prop. 3.2] {Voigt2011},  free orthogonal quantum groups~\cite[Cor. 7.7]{Voigt2011}, and free unitary quantum groups~\cite[Cor. 2.9]{Arano2015}.

\paragraph{Irreducibly covariant channels and Temperley-Lieb channels.} A number of recent works have studied \emph{irreducibly covariant} channels for compact groups from a quantum information-theoretical perspective, e.g.~\cite{Mozrzymas2017,Siudzinska2018,Nuwair2014}. In our language, these are channels between connected matrix $G$-$C^*$-algebras, i.e. $G$-$C^*$-algebras corresponding to simple $1$-morphisms $X: \mathcal{T} \to \mathcal{T}$ in $\Mod(\mathcal{T})$. 

Recent work has also considered \emph{Temperley-Lieb} channels covariant for actions of the free orthogonal quantum groups $O_F^{+}$~\cite{Brannan2016,Brannan2020}. Since these compact quantum groups are torsion-free~\cite[Cor. 7.7]{Voigt2011}, all simple $O_F^{+}$-$C^*$-algebras are matrix $O_F^{+}$-$C^*$-algebras $X \otimes X^*$ for some $X \in \Rep(G)$. In this setting, it is observed that covariant channels $X \otimes X^* \to Y \otimes Y^*$ may be constructed from isometries $X \to Y \otimes E$ in $\Rep(O_F^{+})$. In fact, the covariant Stinespring theorem proven here implies that every covariant channel may be constructed in this way, and also characterises when two such isometries produce equivalent channels.

\paragraph{Operator algebras and CP maps in rigid $C^*$-tensor categories.}
In~\cite{Jones2017} the authors give a fully general definition of a $C^*$-algebra in a rigid $C^*$-tensor category $\mathcal{T}$. (From the results in~\cite{Vicary2011}, in the finite-dimensional case these probably correspond to $\F$s in $\mathcal{T}$; indeed, for connected $\F$s this was shown in~\cite{Jones2017a}.) Whereas we focus on the finite-dimensional case, where Morita equivalence classes of f.d. $G$-$C^*$-algebras correspond to \emph{cofinite} $\Rep(G)$-module categories, in~\cite{Jones2017} infinite-dimensional $G$-$C^*$-algebras are treated, which corresponds to dropping the  cofiniteness condition on the module categories. A Stinespring theorem is also proposed in~\cite{Jones2017}, but from quite a different perspective. We hope to generalise the results of this paper to the infinite-dimensional setting in future work.

After the completion of this paper we were made aware of the recent work~\cite[\S{}5.3]{Henriques2020}. There a CP map between pair of pants algebras in a rigid $C^*$-2-category is defined by~\eqref{eq:stinespring}; it is shown that this definition matches the definition of a CP map given in~\cite{Jones2017}. The present work relates this definition to the usual Stinespring's theorem for $G$-$C^*$-algebras, as well as proving uniqueness of a dilation up to partial isometry.

\paragraph{Q-system completion for $C^*$ 2-categories.} In two recent works~\cite{Chen2021,Giorgetti2020} a notion of idempotent completion of a $C^*$-2-category has been defined. We discuss the relationship between these definitions and the $\Bimod$ construction defined here in Remark~\ref{rem:idempcompletions}.

\paragraph{Standard duals for rigid $C^*$-2-categories}
In~\cite{Giorgetti2019} a notion of standard duality for rigid $C^*$-2-categories with finite-dimensional centres was introduced; we make use of this here, in the special case of presemisimple $C^*$-2-categories. In this presemisimple case we find that the characterisation of standard duals using the equivalence $\Mat(\mathcal{C}) \simeq \mathcal{C}$ (Remark~\ref{rem:matrixstandardduals}) is useful for calculations. 

\subsection{Summary} In Section~\ref{sec:background} we review necessary background material on 2-category theory, covering 2-categories and their diagrammatic calculus (Section~\ref{sec:2cats}), 2-functors and icons (Section~\ref{sec:psfct}), pivotal dagger 2-categories (Section~\ref{sec:pivdagcats}), rigid $C^*$-2-categories (Section~\ref{sec:linearstructure}) and semisimplicity (Section~\ref{sec:semisimp}). 

In Section~\ref{sec:bimod} we define the 2-category $\Bimod(\mathcal{T})$ for a rigid $C^*$-tensor category $\mathcal{T}$, and prove that it is a semisimple $C^*$-2-category. In Section~\ref{sec:modt} we define the 2-category $\Mod(\mathcal{T})$. In Section~\ref{sec:eilwatts} we show the equivalence $\Bimod(\mathcal{T}) \simeq \Mod(\mathcal{T})$, and observe that every connected semisimple $C^*$-2-category is equivalent to $\Mod(\mathcal{T})$ for some $\mathcal{T}$.

In Section~\ref{sec:stinespring} we classify $G$-$C^*$-algebras (Section~\ref{sec:gc*alg}) and prove the covariant Stinespring theorem (Section~\ref{sec:covstinespring}) and covariant Choi theorem (Section~\ref{sec:choi}). 

\addtocontents{toc}{\protect\setcounter{tocdepth}{2}}

\section{Background on 2-category theory}\label{sec:background}

In this section we will review some definitions and results about 2-categories. 

\subsection{Diagrammatic calculus for 2-categories}\label{sec:2cats}

Sometimes the noun `2-category' is taken to indicate a strict 2-category. In this work, by contrast, when we say `2-category' we mean the general, fully weak notion, which is sometimes called a bicategory. We will explicitly use the adjective `strict' to distinguish strict 2-categories. We assume that the reader is familiar with the definition of a 2-category; see e.g.~\cite[Def. 2.1.3]{Johnson2021}.

The abstract definition of a 2-category is important for checking whether some collection of data does or does not constitute a 2-category. When working concretely inside a given 2-category it is more convenient to use the diagrammatic calculus for 2-categories, which takes account of the coherence implied by the pentagon and triangle equalities.

We assume that the reader is familiar with this calculus, which is summarised in, for example,~\cite{Marsden2014}\cite[\S{}8.1.2]{Heunen2019}. It is a straightforward extension of the graphical calculus for monoidal categories. 
\begin{itemize}
\item The objects $r,s,\dots$ of the 2-category are represented by labelled regions. (To avoid having too many letters in the diagrams, we will often use black-and-white pattern shading rather than letter labels in order to show which regions correspond to which objects.)
\item  The 1-morphisms $X,Y,\dots: r \to s$ are represented by edges, separating the region $r$ on the left from the region $s$ on the right. Identity 1-morphisms are invisible in the diagrammatic calculus.
\item The 2-morphisms $f,g,\dots:X \to Y$ are represented by labelled boxes with an $X$-edge entering from below and a $Y$-edge leaving above. Identity 2-morphisms are invisible in the diagrammatic calculus, as are the components of the associators and L/R unitors and their inverses.
\item Vertical and horizontal composition of 2-morphisms are represented by vertical and horizontal juxtaposition respectively. We represent vertical and horizontal composition by $\circ$ and $\otimes$ respectively.
\end{itemize}
We do not keep track of identity 1-morphisms and bracketing of 1-morphism composites in the diagrammatic calculus. However, 2-categorical coherence implies that we do not need to keep track of this information while performing our calculations. Indeed, once a choice of bracketing and identity 1-morphisms is specified for the source and target of a diagram (we call such a choice a \emph{parenthesis scheme}), the diagram represents a unique 2-morphism, however it is interpreted.
\begin{proposition}[{\cite[Prop. 4.1]{Bartlett2008}}]\label{prop:uniquedef2morph}
After specifying a parenthesis scheme for the source and target 1-morphisms of a 2-morphism diagram, the 2-morphism represented by the diagram is independent of the choice of parentheses, associators and unitors used to interpret the interior of the diagram.
\end{proposition}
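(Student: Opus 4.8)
The plan is to recognize Proposition~\ref{prop:uniquedef2morph} as a form of the coherence theorem for bicategories, and to reduce the independence claim to two standard properties of the structural 2-isomorphisms (the associators, unitors, and their inverses): \emph{naturality} in all of their arguments, and \emph{coherence}, by which I mean that any two structural 2-isomorphisms with the same source and target 1-morphism are equal. Everything else is bookkeeping organized around these two facts.

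First I would fix what it means to \emph{interpret} a 2-morphism diagram. Reading the diagram from bottom to top cuts it into horizontal slices at the heights of the boxes; each slice is a horizontal composite ($\otimes$) of generating 2-morphisms and identities, and successive slices are joined by vertical composition ($\circ$). To evaluate the diagram one must choose, at the boundary of each slice, a parenthesization of the word of 1-morphisms occurring there together with a placement of identity 1-morphisms, and then insert structural 2-isomorphisms wherever neighbouring boundaries fail to match on the nose, so that every $\otimes$ and $\circ$ becomes well-typed. An interpretation is thus a completely bracketed, completely structured filling of the diagram; by hypothesis the source and target brackets are held fixed, but all interior choices are free, and I must show that the represented 2-morphism is independent of them.

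Next I would connect any two such interpretations $I_1,I_2$ by a finite sequence of elementary moves of two kinds: (a) replacing one interior bracketing by an adjacent one through a single associator or unitor, and (b) sliding a generating box past a structural 2-isomorphism. Moves of kind (b) change nothing, being exactly the naturality squares for the associator and unitors together with the interchange law relating $\otimes$ and $\circ$. Moves of kind (a) alter the value only by pre- and post-composition with structural 2-isomorphisms. Since $I_1$ and $I_2$ share the prescribed source and target brackets, the accumulated structural 2-isomorphism relating them has equal source and target, so coherence forces it to be the identity, and hence $I_1$ and $I_2$ represent the same 2-morphism.

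The crux, and the main obstacle, is the coherence property invoked at the end: that all parallel structural 2-isomorphisms coincide. This is Mac Lane's coherence theorem transported to the bicategorical setting, and the cleanest route I would take is strictification. Passing along a biequivalence from the ambient 2-category to a strict one---for example its image under the bicategorical Yoneda embedding into the strict 2-category of pseudofunctors valued in \cat{Cat}, or the free strictification $\mathrm{st}$---sends every associator and unitor to a trivial constraint, so any two parallel structural 2-isomorphisms become equal there; local faithfulness of the biequivalence then shows they were already equal in the original 2-category. It is precisely the pentagon and triangle identities that guarantee this strictification exists and is well-defined, so they ultimately underwrite every step of the reduction above.
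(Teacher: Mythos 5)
The paper offers no proof of this proposition at all: it is imported verbatim from \cite[Prop. 4.1]{Bartlett2008}, so there is no in-paper argument to compare against. Your sketch is the standard argument underlying that citation --- reduce independence of interpretation to (i) naturality of the constraint cells and (ii) coherence for bicategories, then obtain coherence by strictification (Mac Lane--Par\'e / Gordon--Power--Street) --- and as an overall strategy it is correct.

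Two steps are worded loosely enough that, taken literally, they are false, and both need the same repair. First, coherence does \emph{not} say that any two structural 2-isomorphisms with the same source and target 1-morphism \emph{in the ambient bicategory} are equal: in a skeletal bicategory an associator component can be a non-identity automorphism of a 1-morphism, hence parallel to an identity 2-cell without equalling it. The correct statement is that \emph{formally} parallel constraint cells --- cells whose source and target agree as bracketed words, i.e.\ parallel cells in the free bicategory on the underlying data --- coincide. Your argument survives this correction, because the accumulated structural automorphism produced by your moves of kind (a) is a loop based at the \emph{fixed} formal bracketing of the source (resp.\ target), so it is formally parallel to the identity and coherence in the corrected form applies. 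Second, a pseudofunctor $F$ into a strict 2-category does not "send every associator and unitor to a trivial constraint": the hexagon axiom shows $F(\alpha_{X,Y,Z})$ is a composite of the compositors of $F$, not an identity. The statement one actually proves, by induction on the structure of the constraint cell, is that the image of any constraint cell equals the canonical composite of compositors relating the two bracketings; formally parallel cells therefore have equal images, and local faithfulness of the Yoneda embedding finishes the argument. With those two precisions supplied --- both standard, and both the real content of the coherence theorem you are invoking --- your outline is a correct proof, and it is essentially the argument the cited reference relies on.
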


\subsection{Diagrammatic calculus for 2-functors}\label{sec:psfct}

Here we use the term 2-functor for a functor between 2-categories whose coherence constraints are isomorphisms; this is sometimes called a \emph{pseudofunctor}. We assume the reader is familiar with this notion; see e.g.~\cite[Def. 4.1.2]{Johnson2021}. We remark that if $\mathcal{C},\mathcal{C}'$ are monoidal categories (i.e. one-object 2-categories), then a 2-functor $\mathcal{C} \to \mathcal{C}'$ is simply a monoidal functor.

In order to represent 2-functors within the diagrammatic calculus, we use the calculus of \emph{functorial boxes}~\cite{Mellies2006}. In this  calculus we represent the functor $F_{r,s}$ by drawing a coloured box around 1- and 2-morphisms in $\mathcal{C}(r,s)$.  The calculus is summarised in~\cite[\S{}2.2]{Verdon2020}.

\begin{notation}
In this work we will use shading in diagrams for two reasons: firstly to distinguish which regions in a diagram correspond to which objects, and secondly to indicate functorial boxes. To reduce confusion we use colour only for functorial boxes. That is, regions corresponding to objects are pattern-shaded in black and white, whereas functorial boxes are in colour. 
\end{notation}
\noindent
Finally, we assume that the reader is familiar with the definition of an \emph{icon} between 2-functors~\cite[Def. 4.6.2]{Johnson2021}\cite{Lack2010}. We say that an icon is \emph{invertible} if all its 2-morphism components are invertible. 

\subsection{Pivotal dagger 2-categories}\label{sec:pivdagcats}
A pivotal dagger 2-category is a straightforward horizontal categorification of a pivotal dagger category~\cite[Sec. 7.3]{Selinger2010}\cite[Def. 3.51]{Heunen2019}.

We assume the reader is familiar with the notion of duality for 1-morphisms in a 2-category (see e.g.~\cite[Def. 6.1.1]{Johnson2021}; what they call the right adjoint we call the \emph{right dual}, and what they call the triangle equations we call the \emph{snake equations}).

Let $X: r \to s$ be a 1-morphism in a 2-category, and suppose that $[X^*: s \to r,\eta: \id_s \to X^* \otimes X,\epsilon: X \otimes X^* \to \id_r]$ is a right dual for $X$. In order to represent duality in the graphical calculus, we draw an upward-facing arrow on the $X$-wire and a downward-facing arrow on the $X^*$-wire, and draw $\eta$ and $\epsilon$ as a cup and a cap, respectively. Then the snake equations become purely topological:
\begin{calign}\label{eq:snake}
\includegraphics{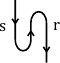}
~~=~~
\includegraphics{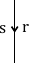}
&&
\includegraphics{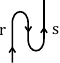}
~~=~~
\includegraphics{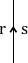}
\ignore{&&
\includegraphics{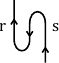}
~~=~~
\includegraphics{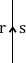}
~~~~~~~~
\includegraphics{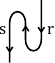}
~~=~~
\includegraphics{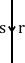}\\\nonumber
\text{right dual} &&\text{left dual}}
\end{calign}
\begin{proposition}[{\cite[Lemmas 3.6, 3.7]{Heunen2019}}]\label{prop:nestedduals}
If $[X^*,\eta_X,\epsilon_X]$ and $[Y^*,\eta_Y,\epsilon_Y]$ are right duals for $X:r \to s$ and $Y: s \to t$ respectively, then $[Y^* \otimes X^*, \eta_{X \otimes Y},\epsilon_{X \otimes Y}]$ is a right dual for $X \otimes Y$, where $\eta_{X \otimes Y}$ and $\epsilon_{X \otimes Y}$ are defined as follows:
\begin{calign}\nonumber\includegraphics{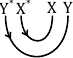}
&&
\includegraphics{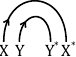}
\\\label{eq:nestedduals}
\eta_{X \otimes Y} && \epsilon_{X \otimes Y}
\end{calign}
Moreover, for any object $r$, $[\id_r,\id_{\id_r},\id_{\id_r}]$ is right dual to $\id_r$.
\end{proposition}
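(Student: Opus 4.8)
The plan is to verify directly that the proposed data $[Y^* \otimes X^*, \eta_{X \otimes Y}, \epsilon_{X \otimes Y}]$ satisfies the defining axioms of a right dual, namely the two snake equations~\eqref{eq:snake}, and then to dispatch the statement about $\id_r$ separately. Throughout I would work in the diagrammatic calculus: by Proposition~\ref{prop:uniquedef2morph} the associators and unitors of the weak $2$-category are absorbed automatically, so it suffices to manipulate the cup-and-cap pictures of~\eqref{eq:nestedduals} using only the interchange law and the snake equations for $X$ and for $Y$.

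For the first snake equation I would substitute the definitions from~\eqref{eq:nestedduals} into $(\epsilon_{X \otimes Y} \otimes \id_{X \otimes Y}) \circ (\id_{X \otimes Y} \otimes \eta_{X \otimes Y})$. Writing $\eta_{X \otimes Y}$ as the $X$-cup nested inside the $Y$-cup, and $\epsilon_{X \otimes Y}$ as the $Y$-cap nested inside the $X$-cap, the composite presents two zigzags: one running along the $X$- and $X^*$-strands, and one along the $Y$- and $Y^*$-strands. The key step is to use the interchange law to slide the counit $\epsilon_Y$ past $\epsilon_X$ (they act on disjoint strands), after which $\eta_X$ and $\epsilon_X$ become adjacent and form a clean $X$-zigzag, whiskered by identities on the remaining strands. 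Applying the snake equation for $X$ straightens this to $\id_X$ and removes the $X^*$-strand; what is left is precisely the $Y$-zigzag built from $\eta_Y$ and $\epsilon_Y$, which the snake equation for $Y$ straightens to $\id_Y$. The total is $\id_X \otimes \id_Y = \id_{X \otimes Y}$, as required.

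The second snake equation, on $Y^* \otimes X^*$, is handled by the mirror-image argument, sliding one unit past the other and then applying the snake equations for $X$ and $Y$ in turn. Finally, the claim that $[\id_r, \id_{\id_r}, \id_{\id_r}]$ is a right dual for $\id_r$ is immediate, since identity $1$-morphisms and identity $2$-morphisms are invisible in the diagrammatic calculus and both snake composites collapse to $\id_{\id_r}$ on the nose. I expect the only genuine subtlety to be the interleaving of the two zigzags in the first two equations: the naive recipe of ``straighten the inner snake, then the outer'' does not literally apply, because the $X$- and $Y$-zigzags are interlocked rather than cleanly nested, and it is the interchange law that lets one extract a clean zigzag on a single strand before invoking its snake equation.
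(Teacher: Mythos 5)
Your overall strategy is sound, and it is worth noting that the paper itself offers no proof of this proposition at all: it is quoted directly from~\cite[Lemmas 3.6, 3.7]{Heunen2019}. A direct diagrammatic verification of the two snake equations~\eqref{eq:snake} for the data~\eqref{eq:nestedduals}, with associators and unitors absorbed via Proposition~\ref{prop:uniquedef2morph}, is therefore exactly the right thing to write down, and your treatment of the $\id_r$ statement by coherence is also fine.

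There is, however, one step that does not typecheck as literally stated. In the first snake composite the four basic 2-morphisms occur, reading upward, in the order $\eta_Y$, $\eta_X$, $\epsilon_Y$, $\epsilon_X$, and you propose to slide $\epsilon_Y$ past $\epsilon_X$ on the grounds that they act on disjoint strands. They do involve disjoint strands, but the two caps are \emph{nested}, not horizontally separated: $\epsilon_X$ joins the original $X$-strand to the $X^*$-strand created by $\eta_X$, and those two strands remain separated by the pair $Y \otimes Y^*$ until $\epsilon_Y$ has been applied. Consequently the composite with $\epsilon_X$ occurring below $\epsilon_Y$ is not even a well-formed diagram -- in a 2-category with no braiding, an inner cap can never be moved above the cap enclosing it, and the interchange law only applies to 2-morphisms acting on horizontally adjacent, disjoint blocks. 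The interchange you actually need is between $\epsilon_Y$ and $\eta_X$: these are genuinely horizontally disjoint ($\epsilon_Y$ caps the original $Y$ against the $Y^*$ produced by $\eta_Y$, while $\eta_X$ creates the new pair $X^* \otimes X$ to their right), so $\epsilon_Y$ slides down past $\eta_X$. After this single interchange the composite reads $\eta_Y$, $\epsilon_Y$, $\eta_X$, $\epsilon_X$ from bottom to top; the lower two morphisms form the $Y$-zigzag whiskered on the left by $\id_X$, the upper two form the $X$-zigzag whiskered on the right by $\id_Y$, and the two snake equations for $X$ and $Y$ (in either order) collapse the whole diagram to $\id_{X \otimes Y}$. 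With this correction, and its mirror image for the second snake equation on $Y^* \otimes X^*$, your argument is complete and agrees with the cited proof.
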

\begin{proposition}[{\cite[Lem. 3.4]{Heunen2019}}]\label{prop:relateduals}
Let $X: r \to s$ be a 1-morphism, and let $[X^*,\eta,\epsilon],[X^*{}',\eta',\epsilon']$ be right duals. Then there is a unique invertible 2-morphism $\alpha: X^* \to X^*{}'$ such that 
\begin{calign}\label{eq:relateduals}
\includegraphics{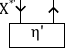}
~~=~~
\includegraphics{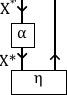}
&
\includegraphics{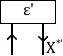}
~~=~~
\includegraphics{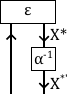}
\end{calign}
Explicitly, $\alpha$ is defined as follows:
\begin{calign}\label{eq:relatedualsisodef}
\includegraphics{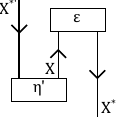}
\end{calign}
\end{proposition}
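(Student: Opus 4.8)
The plan is to take the explicit formula~\eqref{eq:relatedualsisodef} as the \emph{definition} of $\alpha$ and then establish every assertion by sliding wires along the snake equations~\eqref{eq:snake}. Concretely I would set
\[
\alpha \;:=\; (\id_{X^*{}'} \otimes \epsilon) \circ (\eta' \otimes \id_{X^*}),
\]
the zig-zag shown in~\eqref{eq:relatedualsisodef} that bends an $X^*{}'$--$X$ cup out of $\eta'$ beside the incoming $X^*$-wire and then caps the adjacent $X$--$X^*$ pair with $\epsilon$. Throughout I work in the diagrammatic calculus, suppressing associators and unitors; by Proposition~\ref{prop:uniquedef2morph} this is harmless, since each diagram denotes a well-defined 2-morphism once a parenthesis scheme is fixed, so I may reason as though the 2-category were strict.

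First I would check that $\alpha$ satisfies the two equations of~\eqref{eq:relateduals}, namely $(\alpha \otimes \id_X)\circ \eta = \eta'$ and $\epsilon' \circ (\id_X \otimes \alpha) = \epsilon$. Substituting the definition of $\alpha$ into the first left-hand side produces a picture with the cups $\eta$ and $\eta'$ side by side and an $\epsilon$-cap joining the inner $X$- and $X^*$-legs; the snake equation for the \emph{unprimed} dual straightens the resulting $X$-zig-zag and collapses the diagram to $\eta'$. The second equation is symmetric: substituting $\alpha$ and straightening the $X^*{}'$-zig-zag using the snake equation for the \emph{primed} dual collapses the left-hand side to $\epsilon$. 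For invertibility I would introduce the mirror-image zig-zag $\alpha' := (\id_{X^*}\otimes \epsilon')\circ(\eta \otimes \id_{X^*{}'})$, built from the remaining two pieces of data; composing $\alpha$ with $\alpha'$ in either order gives a diagram with two nested zig-zags, which two applications of the snake equations (one for each dual) straighten to $\id_{X^*}$ and $\id_{X^*{}'}$ respectively, so $\alpha' = \alpha^{-1}$.

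Finally, for uniqueness I would take any $\beta: X^* \to X^*{}'$ satisfying the counit equation $\epsilon = \epsilon' \circ (\id_X \otimes \beta)$ and substitute this identity for the $\epsilon$-cap appearing in the definition of $\alpha$. This replaces that cap by a $\beta$-box feeding into an $\epsilon'$-cap, after which the snake equation for the primed dual straightens the $X^*{}'$-zig-zag and leaves exactly $\beta$; hence $\beta = \alpha$, so the counit equation alone pins the solution down and a fortiori there is a unique $\alpha$ satisfying both equations. I do not anticipate a genuine obstacle: the argument is the standard `uniqueness of duals up to canonical isomorphism' transported to the 1-morphism level of a 2-category, and the only points demanding care are bookkeeping---checking that each straightening invokes the snake equation for the correct (primed versus unprimed) dual, and that the suppressed reassociations are legitimate by coherence.
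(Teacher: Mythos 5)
Your proof is correct and matches the standard argument: the paper itself gives no proof of Proposition~\ref{prop:relateduals}, citing \cite[Lem.~3.4]{Heunen2019}, and the proof there is exactly the one you give---take the zig-zag \eqref{eq:relatedualsisodef} as the definition of $\alpha$, verify both equations of \eqref{eq:relateduals} and invertibility by snake equations, and obtain uniqueness by substituting the counit equation back into the definition. All of your steps check out, including the attribution of each straightening to the correct (primed versus unprimed) dual and the observation that the $\epsilon$-equation alone already forces $\beta = \alpha$.
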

\noindent
Using duality, we can define a notion of transposition for 2-morphisms.
\begin{definition}
Let $X,Y: r \to s$ be 1-morphisms with chosen right duals $[X^*,\eta_X,\epsilon_X]$ and $[Y^*,\eta_Y,\epsilon_Y]$. For any 2-morphism $f:X \to Y$, we define its \emph{right transpose} (a.k.a. \emph{mate}) $f^*: Y^* \to X^*$ as follows:
\begin{calign}\label{eq:rtranspose}
\includegraphics{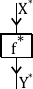}
~~=~~
\includegraphics{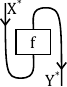}
\end{calign}
A choice of a right dual for every 1-morphism in $\mathcal{C}$ thus defines a contravariant 2-functor $*: \mathcal{C} \to \mathcal{C}$, the \emph{dual functor}, whose multiplicators and unitors are defined using Proposition~\ref{prop:nestedduals} and Proposition~\ref{prop:relateduals}.
\end{definition}
\begin{definition}\label{def:pivcat}
We say that a 2-category $\mathcal{C}$ with chosen right duals is \emph{pivotal} if there is an invertible icon $\iota:** \to \id$ from the double duals 2-functor $* \circ *: \mathcal{C} \to \mathcal{C}$ to the identity 2-functor. The invertible icon is called a \emph{pivotal structure}.
\end{definition}
\noindent
We assume the reader is familiar with the notion of a dagger 2-category and a unitary (a.k.a. dagger) 2-functor (see e.g.~\cite{Heunen2016}). Here are some basic definitions we will use throughout.
\begin{definition}\label{def:isomunitary}
Let $\mathcal{C}$ be a dagger 2-category.  We say that a 2-morphism $f: X \to Y$ is:
\begin{itemize}
\item An \emph{isometry} if $f^{\dagger} \circ f  = \id_X$. 
\item A \emph{coisometry} if $f \circ f^{\dagger} = \id_Y$.
\item \emph{Unitary} if it is an isometry and a coisometry.
\item A \emph{partial isometry} if $(f^{\dagger} \circ f)^2 = f^{\dagger} \circ f$ (or equivalently $(f \circ f^{\dagger})^2 = f \circ f^{\dagger}$).
\item \emph{Positive} if $X=Y$ and there exists some 2-morphism $g: X \to X'$ such that $f = g^{\dagger} \circ g$.
\end{itemize}
\end{definition}
\begin{definition}\label{def:daggerequiv}
Let $\mathcal{C}$ be a dagger 2-category. We say that a 1-morphism $X: r \to s$ is an \emph{equivalence} if there exists a 1-morphism $X^{-1}: s \to r$ and unitary 2-morphisms $\alpha: \id_s \to X^{-1} \otimes X$ and $\beta: X \otimes X^{-1} \to \id_r$. We sometimes write that $[X,X^{-1}, \alpha, \beta]: r \to s$ is an equivalence. If an equivalence $X: r \to s$ exists we say that the objects $r$ and $s$ are \emph{equivalent}.
\end{definition}
\noindent 
The following lemma is common knowledge.
\begin{lemma}
Let $[X,X^{-1},\alpha, \beta]$ be an equivalence in a dagger 2-category. Then there exists an equivalence $[X,X^{-1},\alpha,\beta']$ such that $[X^{-1},\alpha,\beta']$ is a right dual for $X$ (we call such an equivalence an \emph{adjoint equivalence}).
\end{lemma}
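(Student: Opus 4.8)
Let $[X,X^{-1},\alpha,\beta]$ be an equivalence in a dagger 2-category. We must produce a new unit/counit pair, keeping $X$, $X^{-1}$, and the unit $\alpha$ fixed, but replacing $\beta$ by some $\beta'$ so that $[X^{-1},\alpha,\beta']$ satisfies the two snake equations \eqref{eq:snake}, i.e. is a genuine right dual, while $\beta'$ remains unitary (so that $[X,X^{-1},\alpha,\beta']$ is still an equivalence).

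The plan is to run the standard "adjoint-ification" argument from 1-category theory (the trick that upgrades any adjoint equivalence of categories to one satisfying both triangle identities), but carried out internally in the 2-category and checked to preserve unitarity.

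**Approach.** I would keep $\alpha:\id_s\to X^{-1}\otimes X$ as the unit and look for a corrected counit of the form $\beta' = \beta\circ\phi$ for a suitable invertible 2-morphism $\phi:X\otimes X^{-1}\to X\otimes X^{-1}$, chosen so that the two snake equations hold. Concretely, the data $[\alpha,\beta]$ already satisfy the duality equations up to invertible 2-morphisms, since $\alpha$ and $\beta$ are unitary; define the two "swallowtail defect" 2-morphisms obtained by composing one snake equation, which measure the failure of $[\alpha,\beta]$ to be a right dual. Using invertibility of $\alpha$ and $\beta$ these defects are invertible 2-morphisms. One then sets $\beta' := \beta\circ(\id_X\otimes u)$ (horizontally composed on the appropriate wire) where $u$ is built from the defect and the unitor structure, in exactly the way one modifies the counit in the classical proof.

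**Key steps, in order.** First, write down the two composites that the snake equations demand, reading \eqref{eq:snake} diagrammatically with $Y=X^{-1}$. Second, since $\alpha,\beta$ are unitary and hence invertible, form the 2-morphism $\gamma:X\to X$ given by the left-hand snake composite; it is invertible because it is built entirely from invertibles. Third, define $\beta'$ by precomposing $\beta$ with (the horizontal product of $\id$ and) $\gamma^{-1}$ on the correct wire, so that the \emph{first} snake equation becomes an identity by construction. Fourth — the technical heart — verify that with this single correction the \emph{second} snake equation also holds automatically; this is the classical lemma that fixing one triangle identity by adjusting the counit forces the other, and it follows from naturality plus the fact that $\gamma$ and $\alpha,\beta$ all come from the same equivalence data. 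Finally, check that $\beta'$ is still unitary: since $\gamma^{-1}$ is obtained from unitary pieces and a snake bend (which is topological and hence implemented by the unitary cup/cap up to the already-unitary $\alpha,\beta$), $\beta' = \beta\circ(\id\otimes\gamma^{-1})$ is a composite of unitaries, hence unitary. This secures both that $[X^{-1},\alpha,\beta']$ is a right dual and that $[X,X^{-1},\alpha,\beta']$ remains an equivalence.

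**Expected main obstacle.** The genuine difficulty is the fourth step: confirming that correcting the counit to satisfy one snake equation does not spoil the other, so that a \emph{single} adjustment $\beta\mapsto\beta'$ suffices rather than an infinite regress. In the classical category-theoretic proof this is handled by a short but slightly delicate diagram chase using naturality of the unit and the interchange law; here I would reproduce that chase diagrammatically, where the topological nature of the calculus \eqref{eq:snake} makes the cancellations visually transparent. A secondary, easier-to-overlook point is unitarity of $\beta'$: one must confirm that the correction factor is itself unitary and not merely invertible, which relies on $\alpha,\beta$ being unitary (not just invertible) and on the cup/cap being the dagger of one another in a dagger 2-category. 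Once both of these are pinned down the result follows, and I would not expect the remaining verifications to require more than routine diagrammatic manipulation.
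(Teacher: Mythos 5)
Your proposal is correct: it is the standard argument for promoting an equivalence to an adjoint equivalence — correct one snake equation by precomposing $\beta$ with the inverse of the snake defect $\gamma = (\beta \otimes \id_X)\circ(\id_X \otimes \alpha)$ (placed on the $X$-wire, so that the composite cancels to $\gamma\circ\gamma^{-1}$), deduce the second snake equation from the first via the interchange-law/invertibility-of-$\alpha$ chase, and note that unitarity of $\alpha$ and $\beta$ makes $\gamma$, hence $\beta' = \beta\circ(\gamma^{-1}\otimes\id_{X^{-1}})$, unitary, so the data remain an equivalence in the sense of Definition~\ref{def:daggerequiv}. The paper offers no proof of this lemma — it states it as common knowledge — and your reconstruction is exactly the classical argument being appealed to, correctly including the one dagger-specific point (unitarity of the corrected counit) that goes beyond the plain 2-categorical statement.
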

\noindent
Following~\cite{Penneys2018}, we say that a choice of right duals on a dagger 2-category is a \emph{unitary duals functor} if the associated duals functor is a dagger 2-functor. Given a unitary duals functor, there is a canonical associated pivotal structure~\cite[\S{}7.3]{Selinger2010} (for which, in particular, all of the 2-morphism components of the pivotal structure are unitary~\cite[Cor. 3.10]{Penneys2018}). In this case one may define left cups and caps as the daggers of the right cups and caps, which satisfy snake equations analogous to~\eqref{eq:snake}.
\begin{definition}
We call a dagger 2-category equipped with a unitary duals functor a \emph{pivotal dagger 2-category}.
\end{definition}
We use the following useful notation to represent morphisms in a pivotal dagger 2-category. Let $f: X \to Y$ be a 2-morphism. We first make the box for the 2-morphism $f$ asymmetric by tilting the right vertical edge. We now represent the transpose $f^*: Y^* \to X^*$ by rotating the box, as though we had `yanked' both ends of the wire in the RHS of~\eqref{eq:rtranspose}:
\begin{calign}\label{eq:graphcalctranspose}
\includegraphics{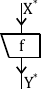}
:=
\includegraphics{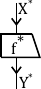}
\end{calign}
We represent the dagger $f^{\dagger}: Y \to X$ by reflection in a horizontal axis, preserving the direction of any arrows:
\begin{calign}\label{eq:graphcalcdagger}
\includegraphics[scale=1]{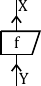}
~~:=~~
\includegraphics[scale=1]{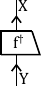}
\end{calign}
Finally, we represent the \emph{conjugate} $f_*:= (f^*)^{\dagger} = (f^{\dagger})^*$ by reflection in a vertical axis:
\begin{calign}\label{eq:conjugate}
\includegraphics[scale=1]{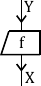}
~~:=~~
\includegraphics[scale=1]{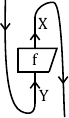}
~~=~~
\includegraphics[scale=1]{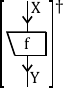}
\end{calign}
Using this notation, 2-morphisms now freely slide around cups and caps.
\begin{proposition}[{\cite[Lemma 3.12, Lemma 3.26]{Heunen2019}}]\label{prop:sliding}
Let $\mathcal{C}$ be a pivotal dagger 2-category and $f:X \to Y$ a 2-morphism. Then:
\begin{calign}\nonumber
\includegraphics{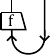}
~~=~~
\includegraphics{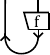}
&
\includegraphics{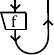}
~~=~~
\includegraphics{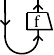}
&
\includegraphics{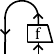}
~~=~~
\includegraphics{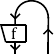}
&
\includegraphics{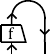}
~~=~~
\includegraphics{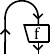}
\\\label{eq:sliding1}
\includegraphics{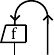}
~~=~~
\includegraphics{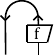}
&
\includegraphics{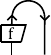}
~~=~~
\includegraphics{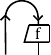}
&
\includegraphics{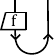}
~~=~~
\includegraphics{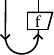}
&
\includegraphics{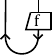}
~~=~~
\includegraphics{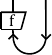}
\end{calign}
\end{proposition}
\begin{definition}\label{def:trace}
Let $X: r \to s$ be an 1-morphism and let $f: X \to X$ be a 2-morphism in a pivotal dagger 2-category $\mathcal{C}$. We define the \emph{right trace} of $f$ to be the following 2-morphism $\Tr_{R}(f): \id_r \to \id_r$: 
\begin{calign}\nonumber
\includegraphics{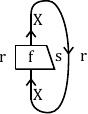}
\end{calign}
We define the \emph{right dimension} $\dim_R(X)$ of an 1-morphism $X: r \to s$ to be $\Tr_R(\id_{X})$. 
The \emph{left trace} $\Tr_L(f):\id_s \to \id_s$ and \emph{left dimension} $\dim_L(X)$ are defined analogously using the right cup and left cap.
\end{definition}

\subsection{Rigid $C^*$-2-categories}\label{sec:linearstructure}

We assume the reader is familiar with the notion of a rigid $C^*$-tensor category (see e.g.~\cite[\S{}2.1]{Neshveyev2013}). We assume that our rigid $C^*$-tensor categories are semisimple, but we do not assume that the endomorphism algebra of the tensor unit is one-dimensional.

We will now review the notion of a (presemisimple) $C^*$-2-category.
\begin{definition}
We say that a dagger 2-category is $\mathbb{C}$-linear if:
\begin{itemize}
\item For any 1-morphisms $X_1,X_2: r \to s$, the Hom-set $\Hom(X_1,X_2)$ is a complex vector space.
\item Horizontal and vertical composition induce linear maps on $\Hom$-spaces, and the dagger induces antilinear maps.
\end{itemize}
We say that an $\mathbb{C}$-linear dagger 2-category is furthermore a \emph{$C^*$-2-category} if:
\begin{itemize}
\item The vector spaces of 2-morphisms are Banach spaces, and $||f \circ g|| \leq ||f|| ||g||$.
\item $||f^{\dagger} \circ f|| = ||f||^2$ for any 2-morphism $f: X \to Y$; in particular, for any 1-morphism $X$ the $*$-algebra $\End(X)$ is a $C^*$-algebra.
\item For any 2-morphism $f: X \to Y$, the 2-morphism $f^{\dagger} \circ f$ is a positive element of the $C^*$-algebra $\End(X)$.
\end{itemize}
A \emph{$C^*$-category} can be defined in the obvious analogous way, so that the $\Hom$-categories of a $C^*$-2-category are all $C^*$-categories. We say that a 2-functor or functor is $\mathbb{C}$-linear if it induces linear maps on morphism spaces.

We say that a $C^*$-2-category is \emph{rigid} if it has duals for 1-morphisms.\footnote{This definition of rigidity for $C^*$-2-categories is only really satisfactory when $\End(\id_{r})$ is finite-dimensional for all objects $r$ of $\mathcal{C}$; the problem is that a unitary dual functor is not known to exist in general~\cite{Zito2007}. We only work with presemisimple $C^*$-2-categories, which all satisfy this condition.}
\end{definition}
\begin{remark}\label{rem:polar}
We observe that the Hom-categories of a rigid $C^*$-2-category are $W^*$-categories in the sense of~\cite[Def. 2.1]{Ghez1985}, since the Hom-spaces are finite-dimensional. This gives us a polar decomposition~\cite[Cor. 2.7]{Ghez1985}. Indeed, for any 2-morphism $f: X \to Y$, we define $|f| := (f^{\dagger} \circ f)^{1/2}$, where this is the positive square root in the f.d. $C^*$-algebra $\End(X)$. Then there exists a unique partial isometry $u: X \to Y$ such that: 
\begin{align*}
f = u \circ |f| && u^{\dagger} \circ u = s(|f|) && u \circ u^{\dagger} = s(|f^{\dagger}|)
\end{align*}
Here $s(|f|)$ is the \emph{support} of $|f|$, i.e. the least projection of all the projections $p$ in $\End(X)$ such that $p \circ |f| = |f| \circ p  = |f|$~\cite[Def. 1.10.3]{Sakai2012}.
\end{remark}
\noindent
We recall the following definitions for $C^*$-1-categories:
\begin{itemize}
\item A \emph{direct sum} of two objects $X_1,X_2$ is an object $X_1 \oplus X_2$ together with isometries $i_1: X_1 \to X_1 \oplus X_2$, $i_2: X_2 \to X_1 \oplus X_2$ such that $i_1 \circ i_1^{\dagger} + i_2 \circ i_2^{\dagger} = \id_{X_1 \oplus X_2}$. 
\item A \emph{zero object} is an object ${\bf 0}$ such that $\Hom({\bf 0},{\bf 0})$ is the zero-dimensional vector space.
\item We say that the category is \emph{additive} if it has a zero object and pairwise direct sums. 
\item For any object $X$, we say that a morphism $f \in \End(X)$ is a \emph{dagger idempotent} if $f = f^{\dagger} = f \circ f$. We say that a \emph{splitting} of the dagger idempotent is an object $V$ together with an isometry $\iota_f: V \to X$ such that $f = \iota_f \circ \iota_f^{\dagger}$. We say that the category is \emph{idempotent complete} if every dagger idempotent has a splitting. 
\item We say that the category is \emph{semisimple} if it is additive and idempotent complete, and the $C^*$-algebra $\End(X)$ is finite-dimensional for every object $X$. In a semisimple category every object is a finite direct sum of \emph{simple} objects, i.e. objects $X_i$ such that $\End(X_i) \cong \mathbb{C}$.
\end{itemize}
We say that a $C^*$-2-category $\mathcal{C}$ is \emph{locally additive}, \emph{locally semisimple}, etc. if all its $\Hom$-categories are. 

The following definitions are obvious unitary adaptations of those from~\cite[\S{}1]{Douglas}.
\begin{definition}\label{def:additive2cat}
Let $\mathcal{C}$ be a locally additive $C^*$-2-category. 
\begin{itemize}
\item We say that a \emph{zero object} in $\mathcal{C}$ is an object ${\bf 0}$ such that the category $\End({\bf 0})$ is the terminal 1-category. 
\item We say that a \emph{direct sum} of two objects $r_1, r_2$ in $\mathcal{C}$ is an object $r_1 \boxplus r_2$ with inclusion and projection 1-morphisms $\iota_i: r_i \to r_1 \boxplus r_2$, $\rho_i: r_1 \boxplus r_2 \to r_i$ such that:
\begin{itemize}
\item $\iota_i \otimes \rho_i$ is unitarily isomorphic to $\id_{r_i}$.
\item $\iota_1 \otimes \rho_2 \in \Hom(r_1,r_2)$ and $\iota_2 \otimes \rho_1 \in \Hom(r_2,r_1)$ are zero 1-morphisms.
\item $\id_{r_1 \oplus r_2}$ is a direct sum of $\rho_1 \otimes \iota_1$ and $\rho_2 \otimes \iota_2$.
\end{itemize}
\item We say that $\mathcal{C}$ is \emph{additive} if it has a zero object and direct sums. 
\end{itemize}
\end{definition}
\noindent
In order to define semisimplicity for $C^*$-2-categories we will need a notion of idempotent completeness which will be introduced in Section~\ref{sec:semisimp}. However, following~\cite[$\S{}1$]{Douglas} we can already define the following weaker notion. 
\begin{definition}
An additive $C^*$-2-category is \emph{presemisimple} if it is locally semisimple,  rigid, and every object is a finite direct sum of objects $\{r_i\}$ with simple identity, i.e. $\id_{r_i}$ is a simple object of $\End(r_i)$.

In a presemisimple $C^*$-2-category an object has simple identity if and only if it is not decomposable as a nontrivial direct sum. We call such objects \emph{simple}.
\end{definition}
\noindent
It is easy to check that zero objects and direct sums in presemisimple $C^*$-2-categories are unique up to equivalence and preserved under $\mathbb{C}$-linear unitary 2-functors. 

In order to perform computations in presemisimple $C^*$-2-categories we will make use of a convenient equivalence that categorifies matrix notation for morphisms in semisimple 1-categories~\cite[\S{}2.2.4]{Heunen2019}. These results are certainly known to experts~\cite[Chap. 8]{Heunen2019}\cite[\S{}2.1]{Reutter2019}, although we have not seen proofs elsewhere. We provide a summary in Appendix~\ref{app:mat2cats}.

Every presemisimple $C^*$-2-category $\mathcal{C}$ has a canonical unitary dual functor. This follows immediately from the more general result in~\cite{Giorgetti2019}; indeed, presemisimplicity implies finite-dimensional centres, in the language of that work. For the following proposition, we observe that for any object $r$ in $\mathcal{C}$, the $C^*$-algebra $\End(\id_r)$ is commutative. In particular, there is a unique trace mapping each of the minimal orthogonal projections to 1, which we call $\Tr_r: \End(\id_r) \to \mathbb{C}$. 

\begin{proposition}[{\cite[Prop. 7.3.3]{Giorgetti2019}}]\label{prop:standardintrinsic}
Let $X: r \to s$ be a 1-morphism in a presemisimple $C^*$-2-category $\mathcal{C}$ and let $[X^*,\eta,\epsilon]$ be a right dual. Define a map $\phi_X: \End(X) \to \mathbb{C}$ as follows:
$$\phi_X(T) := \Tr_s [ \eta^{\dagger} \circ (\id_{X^*} \otimes T) \circ \eta]$$
Define a second map $\psi_X: \End(X) \to \mathbb{C}$ as follows:
$$
\psi_X(T) := \Tr_r [ \epsilon \circ (T \otimes \id_{X^*}) \circ \epsilon^{\dagger} ]
$$
We say that $[X^*,\eta,\epsilon]$ is a \emph{standard dual} for $X$ precisely when $\phi_X = \psi_X$. In this case the map $\phi_X=\psi_X$ is tracial, positive and faithful, and does not depend on the choice of standard dual. 
\end{proposition}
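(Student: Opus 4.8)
The plan is to treat the four claims in turn, noting that positivity and faithfulness need nothing beyond the $C^*$-axioms, whereas traciality and independence are exactly where the standardness hypothesis $\phi_X = \psi_X$ does its work. For positivity, I would set $g := (\id_{X^*} \otimes T) \circ \eta \colon \id_s \to X^* \otimes X$; compatibility of the dagger with $\otimes$ and $\circ$ gives $\phi_X(T^\dagger \circ T) = \Tr_s[g^\dagger \circ g]$. Now $g^\dagger \circ g$ is a positive element of the $C^*$-algebra $\End(\id_s)$ and $\Tr_s$ is the faithful positive trace sending each minimal projection to $1$, so $\phi_X(T^\dagger \circ T) \geq 0$, with equality iff $g = 0$. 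Using the snake equation \eqref{eq:snake} one recovers $T = (\epsilon \otimes \id_X) \circ (\id_X \otimes g)$, so $g = 0$ forces $T = 0$; hence $\phi_X$ is positive and faithful, and symmetrically so is $\psi_X$ (via $\epsilon^\dagger$ and $\Tr_r$). In particular both are faithful positive functionals on the finite-dimensional $C^*$-algebra $\End(X)$, for \emph{any} right dual whatsoever.

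The heart of the matter is the relation between the two partial traces, which I would extract from the two snake equations together with the sliding relations of Proposition~\ref{prop:sliding}, made concrete through the matrix calculus of Appendix~\ref{app:mat2cats}. Decomposing $X$ into simple $1$-morphisms between simple objects yields $\End(X) \cong \bigoplus_i M_{n_i}(\mathbb{C})$ (Hom-spaces between non-isomorphic simples vanish, so there are no cross terms), and the functionals take the form $\phi_X(T) = \sum_i d_i \, \Tr_M[\rho_i \circ T_i]$ and $\psi_X(T) = \sum_i d_i \, \Tr_M[\sigma_i \circ T_i]$, where $T_i$ is the $i$-th block, $\Tr_M$ is the ordinary matrix trace, $d_i$ is the quantum dimension of the $i$-th simple constituent, and $\rho_i,\sigma_i$ are positive invertible density matrices encoding the chosen $\eta$ and $\epsilon$ relative to a fixed standard dual. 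The key computation --- and the step I expect to be the main obstacle --- is that the two snake equations are \emph{not} independent: they force $\sigma_i = \rho_i^{-1}$ for every $i$. This is the abstract avatar of the elementary fact in $\Hilb$ that bending a wire replaces the Gram operator of a dual by its inverse; the content is precisely that $\eta$ and $\epsilon$ cannot be deformed independently, so closing the $X$-loop on the $s$-side and on the $r$-side produces mutually inverse weights.

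Granting $\sigma_i = \rho_i^{-1}$, traciality is immediate. The hypothesis $\phi_X = \psi_X$ gives $\sum_i d_i \Tr_M[\rho_i \circ T_i] = \sum_i d_i \Tr_M[\rho_i^{-1} \circ T_i]$ for all $T$, hence $\rho_i = \rho_i^{-1}$; since each $\rho_i$ is positive this yields $\rho_i = \id$. Therefore $\phi_X(T) = \sum_i d_i \Tr_M[T_i]$ is a dimension-weighted direct sum of ordinary matrix traces, which is manifestly cyclic and re-confirms positivity and faithfulness. Equivalently, in invariant terms, the reciprocal relation produces a positive invertible Radon--Nikodym element $h \in \End(X)$ with $\psi_X(T) = \phi_X(h \circ T)$ and a modular identity $\phi_X(S \circ T) = \phi_X(T \circ h^{-1/2} \circ S \circ h^{1/2})$; then $\phi_X = \psi_X$ together with faithfulness forces $h = \id$, and the modular identity collapses to $\phi_X(S \circ T) = \phi_X(T \circ S)$.

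Finally, independence of the standard dual is transparent from the formula $\phi_X(T) = \sum_i d_i \Tr_M[T_i]$, which refers only to the decomposition of $X$ and not to $\eta,\epsilon$. Alternatively, and more intrinsically, given two standard duals $[X^*,\eta,\epsilon]$ and $[X^*{}',\eta',\epsilon']$, Proposition~\ref{prop:relateduals} supplies a unique invertible $\alpha \colon X^* \to X^*{}'$ intertwining the cups and caps as in \eqref{eq:relateduals}; substituting into the definition gives $\phi'_X(T) = \Tr_s[\eta^\dagger \circ ((\alpha^\dagger \circ \alpha) \otimes T) \circ \eta]$, and the condition $\rho_i = \id$ for both duals forces $\alpha$ to be unitary (the standard fact that standard solutions of the conjugate equations are unique up to unitary), so the factor $\alpha^\dagger \circ \alpha$ drops out and $\phi'_X = \phi_X$. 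The normalisation $\phi_X(\id_X) = \Tr_s[\dim_L(X)] = \sum_i n_i d_i$ of Definition~\ref{def:trace} then identifies this common functional as the canonical categorical trace.
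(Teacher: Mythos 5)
The first thing to note is that the paper does not prove this proposition at all: it is imported verbatim from \cite[Prop.~7.3.3]{Giorgetti2019}, so there is no internal argument to compare against. What your write-up should be measured against is the standard Longo--Roberts-style proof (the 2-categorical analogue of \cite[Thm.~2.2.21]{Neshveyev2013}, which the paper itself points to immediately after the statement), and that is essentially what you have reconstructed. Your positivity/faithfulness argument is complete and correct as stated: $\phi_X(T^\dagger \circ T) = \Tr_s[g^\dagger \circ g]$ with $g = (\id_{X^*}\otimes T)\circ\eta$, positivity and faithfulness of $\Tr_s$ on the commutative $C^*$-algebra $\End(\id_s)$, and recovery of $T$ from $g$ via a snake equation. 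The crux is, as you say, the reciprocity $\sigma_i = \rho_i^{-1}$, and here you substitute an analogy (``bending a wire inverts the Gram operator'') for a derivation; the derivation is available from a tool you invoke only later, namely Proposition~\ref{prop:relateduals}: any two right duals are related by a \emph{single} invertible $\alpha: X^* \to X^*{}'$ intertwining \emph{both} cup and cap, so that $\eta' = (\alpha\otimes\id_X)\circ\eta$ forces $\epsilon' = \epsilon\circ(\id_X\otimes\alpha^{-1})$, whence $\phi'_X(T) = \Tr_s[\eta^\dagger\circ((\alpha^\dagger\circ\alpha)\otimes T)\circ\eta]$ while $\psi'_X(T) = \Tr_r[\epsilon\circ(T\otimes(\alpha^\dagger\circ\alpha)^{-1})\circ\epsilon^\dagger]$ --- the two weights are mutually inverse precisely because they come from the one morphism $\alpha$. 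A second point you elide: your block formula with the factors $d_i$ presupposes a \emph{balanced} reference dual, i.e.\ standard solutions for the simple constituents. For a simple 1-morphism between simple objects both loops $\eta^\dagger\circ\eta$ and $\epsilon\circ\epsilon^\dagger$ are positive scalars whose product is invariant (again by the reciprocity), so rescaling $\eta\mapsto\lambda\eta$, $\epsilon\mapsto\lambda^{-1}\epsilon$ balances them at $d_i := \sqrt{(\eta^\dagger\circ\eta)(\epsilon\circ\epsilon^\dagger)}$; the extension to direct sums is the matrix construction of Appendix~\ref{app:mat2cats}, and this existence statement is exactly what the paper cites separately from \cite[Def.~7.29]{Giorgetti2019}. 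With these two insertions your argument becomes a complete, self-contained proof of a statement the paper only cites, which is a genuine gain; without them, the traciality and independence claims rest on an assertion rather than on the uniqueness-of-duals proposition that actually underwrites them.
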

\noindent 
A standard dual exists for every object~\cite[Def. 7.29]{Giorgetti2019}. It is straightforward to show (following the same approach as in the 1-categorical case~\cite[Thm. 2.2.21]{Neshveyev2013}) that a choice of standard duals for every object defines a unitary dual functor on $\mathcal{C}$. Different choices of standard duals are related by a unitary isomorphism (Proposition~\ref{prop:relateduals}). The tensor product of standard duals (Proposition~\ref{prop:nestedduals}) is standard.

\subsection{Semisimplicity}\label{sec:semisimp}

To define semisimplicity of a rigid $C^*$-2-category we need a notion of idempotent splitting at the level of 1-morphisms. In~\cite[\S{}1.3]{Douglas} it was proposed that categorified idempotents in the non-unitary setting correspond to separable monads (i.e. separable algebras in endomorphism categories). Semisimplicity corresponds to splitting of these algebras (we will explain what this means shortly). 

In the unitary $C^*$-setting, we do not want to work with all separable algebras, and so need to tighten this definition of an idempotent. We propose that the relevant idempotents in a presemisimple $C^*$-2-category are \emph{standard separable Frobenius algebras} in endomorphism categories. There is a physical motivation for this definition: as we will see in Section~\ref{sec:gc*alg}, in the category of representations of a compact quantum group $G$, Frobenius algebras correspond to pairs of a finite-dimensional $G$-$C^*$-algebra (a.k.a. $C^*$-dynamical system) and a $G$-invariant linear functional. There is a unique choice of linear functional on a $G$-$C^*$-algebra such that the corresponding Frobenius algebra is standard and separable. 

\begin{remark}\label{rem:idempcompletions}
The notion of idempotent splitting in the $C^*$-setting has already been considered in previous works; we mention now how our assumptions of standardness and separability compare. In~\cite{Chen2021}, the \emph{$Q$-system completion} of a $C^*$-2-category is defined. These $Q$-systems are separable Frobenius algebras, but they are not standard, since there is no assumption of rigidity on the $C^*$-2-category. Because there is no assumption of rigidity of the original $C^*$-2-category, the question of rigidity of the $Q$-system completion does not arise in their work. Here our additional standardness assumption is used to show rigidity of the idempotent completion.

However, in~\cite{Giorgetti2020}, an idempotent completion on a rigid $C^*$-2-category was studied, and it was stated there that, even without the standardness assumption, the completion is rigid. Therefore, it seems that it is possible to drop the standardness assumption on the Frobenius algebras, although we do not do this here. 

We remark that the idempotent completions in both these works are more general than the one we define here, since they complete a general 2-category rather than just a tensor category. It would not be hard to extend our completion to a 2-category, but we did not need this for our purposes. 
\end{remark}  

\subsubsection{Standard separable Frobenius algebras}

In this section, let $\mathcal{T}$ be a rigid $C^*$-tensor category. 
\begin{definition}\label{def:Frobenius}
An \emph{algebra} $[A,m,u]$ in $\mathcal{T}$ is an object $A$ with multiplication and unit morphisms, depicted as follows:%
\begin{calign}
\begin{tz}[zx,master]
\coordinate (A) at (0,0);
\draw (0.75,1) to (0.75,2);
\mult{A}{1.5}{1}
\end{tz}
&
\begin{tz}[zx,slave]
\coordinate (A) at (0.75,2);
\unit{A}{1}
\end{tz}
\\[0pt]\nonumber
m:A\otimes A \to A& u: \mathbbm{1} \to A 
\end{calign}\hspace{-0.2cm}
These morphisms satisfy the following associativity and unitality equations:
\begin{calign}\label{eq:assocandunitality}
\includegraphics[scale=1]{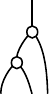}
~~=~~
\includegraphics[scale=1]{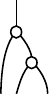}
&
\includegraphics[scale=1]{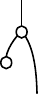}
~~=~~
\includegraphics[scale=1]{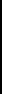}
~~=~~
\includegraphics[scale=1]{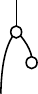}
\end{calign}
Analogously, a \textit{coalgebra} $[A,\delta,\epsilon]$ is an object $A$ with a  comultiplication $\delta: A \to A\otimes A$ and a counit $\epsilon:A\to \mathbbm{1}$ obeying the following coassociativity and counitality equations:
\begin{calign}\label{eq:coassocandcounitality}
\includegraphics[scale=1]{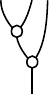}
~~=~~
\includegraphics[scale=1]{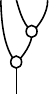}
&
\includegraphics[scale=1]{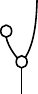}
~~=~~
\includegraphics[scale=1]{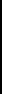}
~~=~~
\includegraphics[scale=1]{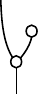}
\end{calign}
The dagger of an algebra $[A,m,u]$ is a coalgebra $[A,m^{\dagger},u^{\dagger}]$.  A algebra $[A,m,u]$ in $\mathcal{T}$ is called \textit{Frobenius} if the algebra and adjoint coalgebra structures are related by the following \emph{Frobenius equation}:
\begin{calign}\label{eq:Frobenius}
\includegraphics[scale=1]{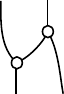}
~~=~~
\includegraphics[scale=1]{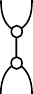}
~~=~~ 
\includegraphics[scale=1]{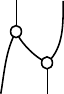}
\end{calign}
\end{definition}
\begin{definition}
Frobenius algebras are canonically self-dual. Indeed, it is easy to check that for any Frobenius algebra $A$ the following cup and cap fulfil the snake equations~\eqref{eq:snake}:
\begin{calign}\label{eq:cupcapfrob}
\includegraphics[scale=1]{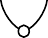}
~~:=~~
\includegraphics[scale=1]{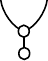}
&
\includegraphics[scale=1]{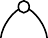}
~~:=~~
\includegraphics[scale=1]{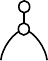}
\end{calign}
If the cup and cap~\eqref{eq:cupcapfrob} are a standard duality for $A$ (in the sense of Proposition~\ref{prop:standardintrinsic}), we say that the Frobenius algebra is \emph{standard}.
\end{definition}
\noindent
A Frobenius algebra is \emph{separable} (a.k.a. \emph{special}) if the following additional equation is satisfied:
\begin{calign}\label{eq:frobseparable}
\includegraphics[scale=1]{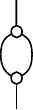}
~~=~~ 
\includegraphics[scale=1]{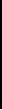}
\end{calign}
\noindent
From now on we will be concerned with separable standard Frobenius algebras ($\F$s).
\begin{definition}
Let $A,B$ be $\F$s in $\mathcal{T}$. We say that a morphism $f: A \to B$ is a \emph{$*$-homomorphism} if it obeys the following equations:
\begin{calign}\label{eq:homo}
\begin{tz}[zx, master, every to/.style={out=up, in=down},yscale=-1]
\draw (0,0) to (0,2) to [out=135] (-0.75,3);
\draw (0,2) to [out=45] (0.75, 3);
\node[zxnode=\zxwhite] at (0,1) {$f$};
\node[zxvertex=\zxwhite, zxdown] at (0,2) {};
\end{tz}
=
\begin{tz}[zx, every to/.style={out=up, in=down},yscale=-1]
\draw (0,0) to (0,0.75) to [out=135] (-0.75,1.75) to (-0.75,3);
\draw (0,0.75) to [out=45] (0.75, 1.75) to +(0,1.25);
\node[zxnode=\zxwhite] at (-0.75,2) {$f$};
\node[zxnode=\zxwhite] at (0.75,2) {$f$};
\node[zxvertex=\zxwhite, zxdown] at (0,0.75) {};
\end{tz}
&
\begin{tz}[zx,slave, every to/.style={out=up, in=down},yscale=-1]
\draw (0,0) to (0,2) ;
\node[zxnode=\zxwhite] at (0,1) {$f$};
\node[zxvertex=\zxwhite, zxup] at (0,2) {};
\end{tz}
=
\begin{tz}[zx,slave, every to/.style={out=up, in=down},yscale=-1]
\draw (0,0) to (0,0.75) ;
\node[zxvertex=\zxwhite, zxup] at (0,0.75) {};
\end{tz}
&
\begin{tz}[zx,slave, every to/.style={out=up, in=down},scale=-1]
\draw (0,0) to (0,3);
\node[zxnode=\zxwhite] at (0,1.5) {$f^\dagger$};
\end{tz}
=~~
\begin{tz}[zx,slave,every to/.style={out=up, in=down},scale=-1,xscale=1]
\draw (0,1.5) to (0,2) to [in=left] node[pos=1] (r){} (0.5,2.5) to [out=right, in=up] (1,2)  to [out=down, in=up] (1,0);
\draw (-1,3) to [out=down,in=up] (-1,1) to [out=down, in=left] node[pos=1] (l){} (-0.5,0.5) to [out=right, in=down] (0,1) to (0,1.5);
\node[zxnode=\zxwhite] at (0,1.5) {$f$};
\node[zxvertex=\zxwhite] at (l.center){};
\node[zxvertex=\zxwhite] at (r.center){};
\end{tz}
\end{calign}
We say that it is a \emph{$*$-cohomomorphism} if it obeys the following equations:
\begin{calign}\label{eq:cohomo}
\begin{tz}[zx, master, every to/.style={out=up, in=down}]
\draw (0,0) to (0,2) to [out=135] (-0.75,3);
\draw (0,2) to [out=45] (0.75, 3);
\node[zxnode=\zxwhite] at (0,1) {$f$};
\node[zxvertex=\zxwhite, zxup] at (0,2) {};
\end{tz}
=
\begin{tz}[zx, every to/.style={out=up, in=down}]
\draw (0,0) to (0,0.75) to [out=135] (-0.75,1.75) to (-0.75,3);
\draw (0,0.75) to [out=45] (0.75, 1.75) to +(0,1.25);
\node[zxnode=\zxwhite] at (-0.75,2) {$f$};
\node[zxnode=\zxwhite] at (0.75,2) {$f$};
\node[zxvertex=\zxwhite, zxup] at (0,0.75) {};
\end{tz}
&
\begin{tz}[zx,slave, every to/.style={out=up, in=down}]
\draw (0,0) to (0,2) ;
\node[zxnode=\zxwhite] at (0,1) {$f$};
\node[zxvertex=\zxwhite, zxup] at (0,2) {};
\end{tz}
=
\begin{tz}[zx,slave, every to/.style={out=up, in=down}]
\draw (0,0) to (0,0.75) ;
\node[zxvertex=\zxwhite, zxup] at (0,0.75) {};
\end{tz}
&
\begin{tz}[zx,slave, every to/.style={out=up, in=down}]
\draw (0,0) to (0,3);
\node[zxnode=\zxwhite] at (0,1.5) {$f^\dagger$};
\end{tz}
=~~
\begin{tz}[zx,slave,every to/.style={out=up, in=down},xscale=-1]
\draw (0,1.5) to (0,2) to [in=left] node[pos=1] (r){} (0.5,2.5) to [out=right, in=up] (1,2)  to [out=down, in=up] (1,0);
\draw (-1,3) to [out=down,in=up] (-1,1) to [out=down, in=left] node[pos=1] (l){} (-0.5,0.5) to [out=right, in=down] (0,1) to (0,1.5);
\node[zxnode=\zxwhite] at (0,1.5) {$f$};
\node[zxvertex=\zxwhite] at (l.center){};
\node[zxvertex=\zxwhite] at (r.center){};
\end{tz}
\end{calign}

Clearly the dagger of a $*$-homomorphism is a $*$-cohomomorphism.

If $f$ is a $*$-homomorphism and is additionally unitary, we say that it is a \emph{unitary $*$-isomorphism}. (It is easy to check that a unitary $*$-isomorphism is also a $*$-cohomomorphism.)
\end{definition}

\subsubsection{Idempotent splitting}

Let $\mathcal{C}$ be a presemisimple $C^*$-2-category, with its canonical unitary duals functor. Recall the definition of the dimension and trace in a pivotal dagger 2-category (Definition~\ref{def:trace}).

Let $X: r \to s$ be a 1-morphism, and let $[X^*,\eta,\epsilon]$ be the right dual defined by the unitary duals functor. By the $C^*$-axioms, $\dim_L(X) = \eta^{\dagger}  \circ \eta$ is a positive element of the commutative $C^*$-algebra $\End(\id_s)$. 
\begin{definition}\label{def:separable1morph}
We call a 1-morphism $X: r \to s$ in $\mathcal{C}$ \emph{separable} if $\dim_L(X)$ is invertible. We write $n_X := \sqrt{\dim_L(X)}$ for the positive square root and $n_X^{-1}$ for its (positive) inverse. 
\end{definition}
\begin{remark}
In matrix notation (Remark~\ref{rem:matrixstandardduals}) there is a $*$-isomorphism $\End(\id_s) \cong \End(\id_{\vec{\tau}})$ for some object $\vec{\tau}$ of $\Mat(\mathcal{C})$; up to permutation of the factors this isomorphism maps $\dim_L(X)$ to the matrix 
$$
\diag([\sum_k d(M_{k1}),\dots,\sum_k d(M_{kn})])
$$
where $M_{jk}$ are the entries of the 1-morphism matrix $M$ corresponding to $X$ under the equivalence $\Phi: \Mat(\mathcal{C}) \overset{\sim}{\to} \mathcal{C}$. We see that $\dim_L(X)$ is invertible precisely when the matrix $M$ has no columns of zeros.
\end{remark}
\noindent
In the following diagrams we leave regions corresponding to the object $r$ unshaded and shade regions corresponding to the object $s$ with wavy lines.
\begin{proposition}\label{prop:pairofpants}
Let $r,s$ be objects of $\mathcal{C}$  and let $X: r \to s$ be a separable 1-morphism. 

We define a \emph{pair of pants} algebra on the object $X \otimes X^*$ of the rigid $C^*$-tensor category $\End(r)$ by the following multiplication $m: (X \otimes X^*) \otimes (X \otimes X^*) \to X \otimes X^*$ and unit $u: \id_{r} \to X \otimes X^*$: 
\begin{calign}\label{eq:pairofpants}
\includegraphics[scale=1]{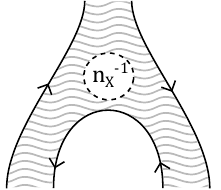}
&&
\includegraphics[scale=1]{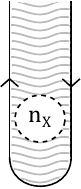}
\end{calign}
This algebra is a $\F$ in $\End(r)$.
\end{proposition}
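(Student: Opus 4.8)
The plan is to verify in turn each defining equation of an $\F$: the algebra axioms~\eqref{eq:assocandunitality}, the Frobenius equation~\eqref{eq:Frobenius}, separability~\eqref{eq:frobseparable}, and finally standardness. Throughout I would track the normalising scalars $n_X^{\pm 1} \in \End(\id_s)$ appearing in the multiplication and unit, using that $\End(\id_s)$ is commutative and that a scalar living on a contracted $\id_s$-region may be slid onto either of the adjacent $X$- or $X^*$-wires.

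First, associativity and the Frobenius equation, which are purely topological. The multiplication contracts the inner $X^* \otimes X$ with the cap $\eta^\dagger$, so that $m \circ (m \otimes \id)$ and $m \circ (\id \otimes m)$ are both two nested caps and agree by a planar isotopy, exactly as for the non-$C^*$ pair-of-pants algebra~\cite{Lauda2005}; since each side carries the same number of $n_X^{-1}$ factors, these cancel on the nose. The Frobenius equation follows the same way, with $\delta := m^\dagger$ splitting via the cup $\eta$: each of the three expressions in~\eqref{eq:Frobenius} contains exactly one $m$ and one $\delta$, so the scalar bookkeeping is automatic and the equation reduces to the topological identity.

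Next, unitality and separability, where the scalars do real work. For unitality I would compose $m$ with $u \otimes \id$ and invoke the snake equations for the dagger cup/cap $\epsilon^\dagger, \eta^\dagger$ (which hold since left cups and caps satisfy snake equations analogous to~\eqref{eq:snake}); the $X$-wire straightens, leaving a single residual scalar on the $s$-side equal to the product of the $n_X^{-1}$ from $m$ and the $n_X$ from $u$, hence $\id$, with the other unit law symmetric. For separability I would compute $m \circ m^\dagger$: the inner wires close into a loop evaluating to $\eta^\dagger \circ \eta = \dim_L(X) = n_X^2$, and the two factors $n_X^{-1}$ (from $m$, and from $m^\dagger$ since $n_X^{-1}$ is self-adjoint) cancel it, yielding $\id_{X \otimes X^*}$. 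This is exactly where separability of $X$, i.e. invertibility of $\dim_L(X)$ (Definition~\ref{def:separable1morph}), is used.

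The main obstacle is standardness. Here I would show that the self-duality~\eqref{eq:cupcapfrob} induced by the Frobenius structure on $X \otimes X^*$ coincides with the duality obtained by nesting the standard duals of $X$ and $X^*$. By Proposition~\ref{prop:nestedduals} the tensor product of standard duals is standard, so once this identification is made, standardness of the pair-of-pants algebra reduces to standardness of the ambient unitary duals functor, which holds by hypothesis. Concretely, I would unfold the Frobenius cup and cap built from $(m,u)$, check that the scalars $n_X^{\pm 1}$ assemble into exactly the factors relating the Frobenius cup/cap to $\eta,\epsilon$ and their daggers, and then apply the criterion $\phi_{X \otimes X^*} = \psi_{X \otimes X^*}$ of Proposition~\ref{prop:standardintrinsic}, using the tracial and faithful canonical functional there to equate the left and right traces. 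Verifying that the $n_X$ normalisation is precisely the one balancing these two traces is the delicate computational step.
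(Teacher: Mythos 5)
Your proposal is correct and follows essentially the same route as the paper's proof: the Frobenius axioms are checked by snake equations and isotopy with the $n_X^{\pm 1}$ scalars cancelling, separability comes from the closed loop $\eta^{\dagger} \circ \eta = \dim_L(X) = n_X^2$ absorbing the two normalising factors, and standardness is obtained by identifying the Frobenius cup and cap of~\eqref{eq:cupcapfrob} with the tensor-product duality of Proposition~\ref{prop:nestedduals}, which is standard. The only remark worth making is that your ``delicate computational step'' is easier than you anticipate: the factors $n_X^{-1}$ and $n_X$ occurring in $m^{\dagger} \circ u$ and $u^{\dagger} \circ m$ cancel exactly, so the Frobenius self-duality equals the nested standard duality on the nose and no further trace balancing is required.
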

\begin{proof}
That this is a Frobenius algebra is very easy to check (it just comes down to snake equations and isotopy) and we leave it to the reader. Separability is also clear:
\begin{calign}\label{eq:separableeq}
\includegraphics[scale=.8]{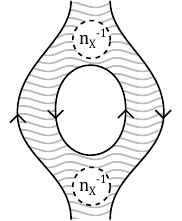}
~~
=
~~
\includegraphics[scale=.8]{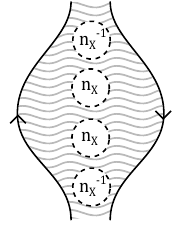}
~~=~~
\includegraphics[scale=.8]{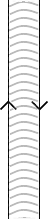}
\end{calign}
For standardness, we require (Proposition~\ref{prop:standardintrinsic}) that for any morphism $T: X \otimes X^* \to X \otimes X^*$, with respect to the Frobenius cup and cap~\eqref{eq:cupcapfrob} the left trace is equal to the right trace. This comes down to the following equation: 
\begin{calign}
\Tr_r[
\includegraphics[scale=.8]{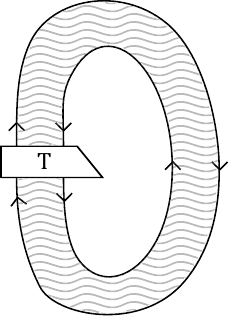}]
~~=~~
\Tr_r[\includegraphics[scale=.8]{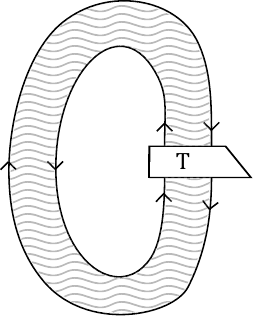}]
\end{calign}
For this we observe that the Frobenius cup and cap is simply the tensor product cup and cap on $X \otimes X^*$ (Proposition~\ref{prop:nestedduals}, which is standard.
\end{proof}
\noindent
We now define semisimplicity. This precisely corresponds to~\cite[Def. 3.34]{Chen2021}, except that our Frobenius algebras are standard.
\begin{definition}
Let $\mathcal{C}$ be a presemisimple $C^*$-2-category. Let $r$ be an object of $\mathcal{C}$. We say that a $\F$ $A$ in $\End(r)$ \emph{splits} if there exists an object $s$ of $\mathcal{C}$ and a separable 1-morphism $X: r \to s$ such that $A$ is unitarily $*$-isomorphic to the pair of pants algebra $X \otimes X^*$.

We say that $\mathcal{C}$ is \emph{semisimple} if, for every object $r$ of $\mathcal{C}$, every $\F$ in $\End(r)$ splits.
\end{definition}
\ignore{
\noindent
Before moving on we note the following lemma, which will be useful later.
\begin{lemma}\label{lem:funnyF}
Let $X: r \to s$ be a separable 1-morphism, and let $a \in \End(\id_s)$, $b \in \End(\id_r)$ be positive invertible elements such that the following multiplication and unit define an $\F$ on $X \otimes X^*$:
\begin{calign}\label{eq:funnyF}
\includegraphics[scale=.8]{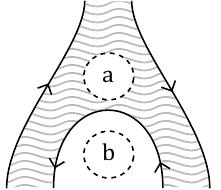}
&&
\includegraphics[scale=.8]{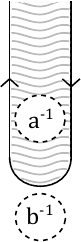}
\end{calign}
Then the following equation is obeyed:
\begin{calign}
\includegraphics[scale=.8]{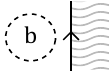}
~~=~~
\includegraphics[scale=.8]{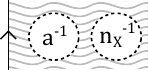}
\end{calign}
In particular, the $\F$~\eqref{eq:funnyF} is identical to the $\F$~\eqref{eq:pairofpants}.
\end{lemma}
\begin{proof}
By the same argument as~\eqref{eq:separableeq}, separability implies the following equation:
\begin{calign}\label{eq:funnyspec}
\includegraphics[scale=.8]{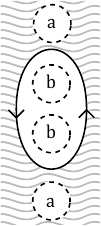}
~~=~~
\includegraphics[scale=.8]{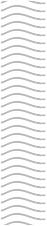}
\end{calign}
Standardness of the $\F$~\eqref{eq:funnyF}, together with standardness of the tensor product dual on $X \otimes X^*$ and Proposition~\ref{prop:standarddualsrelbyunitary}, implies that $b^{-1} \otimes \id_{X \otimes X^*} \otimes b$ is unitary. Since $b$ is positive, it follows that $$b^{-2} \otimes  \id_{X \otimes X^*} \otimes b^{2} = \id_{X \otimes X^*}.$$
Now we obtain the following series of implications:
\begin{calign}
\includegraphics[scale=.8]{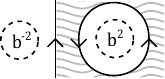}
~~=~~
\includegraphics[scale=.8]{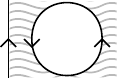}
&\Rightarrow
\includegraphics[scale=.8]{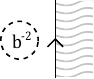}
~~=~~
\includegraphics[scale=.8]{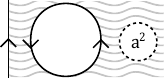}
\\
&\Rightarrow
b^{-2} \otimes \id_X = \id_X \otimes n_{X}^{2} \otimes a^2  
\\
&\Rightarrow 
b \otimes \id_X = \id_X \otimes n_X^{-1} \otimes a^{-1}
\end{calign}
Here for the first implication we postcomposed on both sides with $\id_X \otimes a^2$, and used~\eqref{eq:funnyspec}; for the second implication we used the definition $n_X^2 = \dim_L(X)$; and for the final implication we took the inverse of the positive square root on both sides in the f.d. $C^*$-algebra $\End(X)$.
\end{proof}
}

\section{Two semisimple completions of a rigid $C^*$-tensor category}

Let $\mathcal{T}$ be a rigid $C^*$-tensor category. We are about to define two semisimple $C^*$-2-categories in which $\mathcal{T}$ embeds as the endomorphism category of a fixed object. We will then show that these two 2-categories are equivalent.
 
\subsection{The 2-category $\Bimod(\mathcal{T})$}\label{sec:bimod}

The following construction is  identical to the constructions in~\cite[Def. 3.17]{Chen2021}\cite[Notation 2.16]{Giorgetti2020}, except that our Frobenius algebras are standard as well as separable. 

\subsubsection{Definition}

In what follows let $\mathcal{T}$ be a rigid $C^*$-tensor category. 
\def\d{0.5}
\def\h{2.25}
\def\inang{-45}
\begin{definition} 
Let $A$ and $B$ be $\F$s in $\mathcal{T}$. A \emph{left dagger $A$-module} is an object $M$ in $\mathcal{T}$ together with a morphism $\rho: A \otimes M \to M$ (the \emph{left action}) fulfilling the following equations:
\begin{align}\label{eq:module}
\begin{tz}[zx,every to/.style={out=up, in=down}]
\draw (0,0) to (0,3);
\draw (-\d-1.5,0) to [in=-135] (-\d-0.75,1) to[in=180-\inang] (0,\h);
\draw (-\d,0) to [in=-45] (-\d-0.75,1) ;
\node[zxvertex=\zxwhite, zxdown] at (-\d-0.75,1){};
\node[box,zxdown] at (0,\h) {$\rho$};
\end{tz}
~~=~~
\def\htop{2.25}
\def\hbot{1.25}
\begin{tz}[zx,every to/.style={out=up, in=down}]
\draw (0,0) to (0,3);
\draw (-\d-1.5,0) to [in=-135] (0,\htop);
\draw (-\d,0) to [in=-135] (0,\hbot);
\node[box,zxdown] at (0,\hbot) {$\rho$};
\node[box,zxdown] at (0,\htop) {$\rho$};
\end{tz}
&&
\begin{tz}[zx, every to/.style={out=up, in=down}]
\draw (0,0) to (0,3);
\draw (-\d,1.2) to [in=-135] (0,\h);
\node[box,zxdown] at (0,\h) {$\rho$};
\node[zxvertex=\zxwhite] at (-\d,1.2){};
\end{tz}
~~=~~~
\begin{tz}[zx, every to/.style={out=up, in=down}]
\draw (0,0) to (0,3);
\end{tz}
&&
\def\x{0.2}
\begin{tz}[zx, every to/.style={out=up, in=down},xscale=0.8]
\draw (0,0) to (0,3);
\draw (-\x,1.5) to [out=up, in=right] (-0.75-\x, 2.25) to [out=left, in=up] (-1.5-\x, 1.5) to (-1.5-\x,0);
\node[zxvertex=\zxwhite] at (-0.75-\x, 2.25){};
\node[box] at (0,1.5) {$\rho^\dagger$};
\end{tz}
~~=~~
\begin{tz}[zx, every to/.style={out=up, in=down},xscale=0.8]
\draw (0,0) to (0,3);
\draw (-1.25, 0) to [in=-135] (0,1.95) ;
\node[box] at (0,1.95) {$\rho$};
\end{tz}
\end{align}
A \emph{right dagger $B$-module} is defined similarly, with an \emph{right action} $\rho: M \otimes B \to M$ and the analogous equations.  An $A-B$-dagger bimodule is an object $M$ which is a left dagger $A$-module and a right dagger $B$-module, such that the left and right actions commute:
\begin{align}\label{eq:commute}
\begin{tz}[zx, every to/.style={out=up, in=down}]
\draw (0,0) to (0,3);
\draw (1,0) to [in=-45] (0,\h);
\draw (-1,0) to [in=-135] (0, 1.5);
\node[boxvertex,zxdown] at (0,1.5){};
\node[boxvertex,zxdown] at (0,\h) {};
\end{tz}
~=~
\begin{tz}[zx, every to/.style={out=up, in=down},xscale=-1]
\draw (0,0) to (0,3);
\draw (1,0) to [in=-45] (0,\h);
\draw (-1,0) to [in=-135] (0, 1.5);
\node[boxvertex,zxdown] at (0,1.5){};
\node[boxvertex,zxdown] at (0,\h) {};
\end{tz}
=:
\begin{tz}[zx, every to/.style={out=up, in=down}]
\draw (0,0) to (0,3);
\draw (1,0) to [in=-45] (0,\h);
\draw (-1,0) to [in=-135] (0,\h);
\node[boxvertex,zxdown] at (0,\h) {};
\end{tz}
\end{align} 
\end{definition}
\noindent
Every $\F$ $A$ has a trivial $A{-}A$-dagger bimodule ${}_AA_A$:
\begin{calign}
\begin{tz}[zx, every to/.style={out=up, in=down}]
\draw (0,0) to (0,3);
\draw (-1,0) to [in=-135] (0,2.);
\draw (1,0) to [in=-45] (0,2.);
\node[boxvertex,zxdown] at (0,2.){};
\end{tz}
~~:= ~~
\begin{tz}[zx]
\coordinate(A) at (0.25,0);
\draw (1,1) to [out=up, in=-135] (1.75,2);
\draw (1.75,2) to [out=-45, in=up] (3.25,0);
\draw (1.75,2) to (1.75,3);
\mult{A}{1.5}{1}
\node[zxvertex=\zxwhite,zxdown] at (1.75,2){};
\end{tz}
~~= ~~
\begin{tz}[zx,xscale=-1]
\coordinate(A) at (0.25,0);
\draw (1,1) to [out=up, in=-135] (1.75,2);
\draw (1.75,2) to [out=-45, in=up] (3.25,0);
\draw (1.75,2) to (1.75,3);
\mult{A}{1.5}{1}
\node[zxvertex=\zxwhite,zxdown] at (1.75,2){};
\end{tz}\end{calign}%
\begin{definition} A \textit{bimodule homomorphism} $_AM_B\to {}_AN_B$ is a morphism $f:M\to N$ that commutes with the $A$-$B$ action:
\begin{calign}\label{eq:bimodmorph}
\begin{tz}[zx]
\draw (0,0) to (0,3);
\draw (-1,0) to [out=up, in=-135] (0,2.15);
\draw (1,0) to [out=up, in=-45] (0,2.15) ;
\node[zxnode=\zxwhite] at (0,0.85) {$f$};
\node[boxvertex,zxdown] at (0,2.15){};
\end{tz}
=
\begin{tz}[zx]
\draw (0,0) to (0,3);
\draw (-1,0) to [out=up, in=-135] (0,0.85);
\draw (1,0) to [out=up, in=-45] (0,0.85) ;
\node[zxnode=\zxwhite] at (0,2.15) {$f$};
\node[boxvertex,zxdown] at (0,0.85){};
\end{tz}
\end{calign}
Two dagger bimodules are \textit{(unitarily) isomorphic} if there is a (unitary) invertible bimodule homomorphism ${}_AM_B\to{}_AN_B$.

Given two $\F$s $A,B$, the $A$-$B$ dagger bimodules and bimodule homomorphisms form a category which we write as $A$-$\Mod$-$B$. Left dagger $A$-modules and right dagger $A$-modules likewise form categories which we write as $A$-$\Mod$ and $\Mod$-$A$ respectively.
\end{definition}
\noindent
Since dagger idempotents in $\mathcal{T}$ split, we can compose dagger bimodules ${}_AM_B$ and ${}_BN_C$ to obtain an $A{-}C$-dagger bimodule ${}_AM{\otimes_B}N_C$, as follows. First we observe that the following endomorphism is a dagger idempotent (for this, we use that the Frobenius algebra $B$ is separable):
\begin{calign}\label{eq:idempotentforrelprod}
\begin{tz}[zx,every to/.style={out=up, in=down}]
\draw (0,0) to (0,3);
\draw (2,0) to (2,3);
\draw (0,2.25) to [out=-45, in=left] (1, 1.2) to[out=right, in=-135] (2,2.25);
\node[zxvertex=\zxwhite] at (1,1.2){};
\node[boxvertex,zxdown] at (0,2.25){};
\node[boxvertex,zxdown] at (2,2.25){};
\node[dimension, left] at (0,0) {$M$};
\node[dimension, right] at (2,0) {$N$};
\end{tz}
\end{calign}
The \emph{relative tensor product} ${}_AM{\otimes_B}N_C$, or \emph{tensor product of bimodules}, is defined as the object obtained by splitting this idempotent. We depict the isometry $i: M\otimes_B N\to M\otimes N$ as a downwards pointing triangle:
\begin{calign}\label{eq:moritaidempotentsplit}
\begin{tz}[zx,every to/.style={out=up, in=down}]
\draw (0,0) to (0,3);
\draw (2,0) to (2,3);
\draw (0,2.25) to [out=-45, in=left] (1, 1.2) to[out=right, in=-135] (2,2.25);
\node[zxvertex=\zxwhite] at (1,1.2){};
\node[boxvertex,zxdown] at (0,2.25){};
\node[boxvertex,zxdown] at (2,2.25){};
\end{tz}
~~=~~
\begin{tz}[zx,every to/.style={out=up, in=down}]
\draw (0,0) to (0,0.5);
\draw (0,2.5) to (0,3);
\draw (2,0) to (2,0.5);
\draw (2,2.5) to (2,3);
\draw (1,0.5) to (1,2.5);
\node[dimension, right] at (1,1.5) {$M{\otimes_B}N$};
\node[triangleup=2] at (1,0.5){};
\node[triangledown=2] at (1,2.5){};
\end{tz}
&
\begin{tz}[zx]
\clip (-1.2, -0.3) rectangle (1.9,3.3);
\draw (0,0) to (0,1);
\draw (-1,1) to (-1,2);
\draw (1,1) to (1,2);
\draw (0, 2) to (0,3);
\node[triangleup=2] at (0,2){};
\node[triangledown=2] at (0,1){};
\node[dimension, right] at (0,0) {$M{\otimes_B}N$};
\end{tz}
=~~
\begin{tz}[zx]
\clip (-0.2, -0.3) rectangle (1.9,3.3);
\draw (0,0) to (0,3);
\node[dimension, right] at (0,0) {$M{\otimes_B}N$};
\end{tz}
\end{calign}
\noindent
For dagger bimodules ${}_AM_B$ and ${}_BN_C$, the relative tensor product $M\otimes_B N$ is itself an $A{-}C$-dagger bimodule with the following action $A\otimes(M{\otimes_B}N) \otimes C\to M{\otimes_B}N$:
\begin{equation}
\begin{tz}[zx,every to/.style={out=up, in=down}]
\draw (0,-0.) to (0,1) ;
\draw (0,3) to (0,3.5);
\draw (-1,1) to (-1,2.5);
\draw (1,1) to (1,2.5);
\draw (-2,-0.) to [in=-135] (-1,1.75);
\draw (2,-0.) to [in=-45] (1,1.75);
\node[triangledown=2] at (0,1){};
\node[triangleup=2] at (0,2.5){};
\node[boxvertex] at (-1,1.75){};
\node[boxvertex] at (1,1.75){};
\end{tz}
\end{equation}
\noindent
The relative tensor product is also defined on morphisms of bimodules.
Let ${}_AM_B, {}_AM'_B$ and ${}_BN_C, {}_BN'_C$ be dagger bimodules and let $f:{}_AM_B \to {}_AM'_B$ and $g: {}_BN_C \to {}_BN'_C$ be bimodule homomorphisms. Then the relative tensor product $f \otimes_B g: {}_A M \otimes_B N_{C} \to {}_A M' \otimes_B N'_C$ is  a bimodule homomorphism defined as follows:
\begin{calign}
\includegraphics[scale=.7]{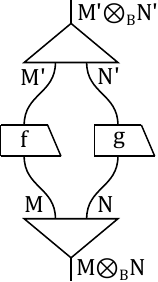}
\end{calign}
\begin{definition}
Let $\mathcal{T}$ be a rigid $C^*$-tensor category. We define a  $C^*$-2-category $\Bimod(\mathcal{T})$ as follows:
\begin{itemize}
\item \emph{Objects.} Standard separable Frobenius algebras $A, B, \dots$ in $\mathcal{T}$. 
\item \emph{Hom-categories.} $\Hom(A,B) := A$-$\Mod$-$B$. (The $C^*$-norm is that of $\mathcal{T}$.)
\item \emph{Horizontal composition.} Relative tensor product.
\item \emph{Associator.} For $M: A \to B$, $N: B \to C$, $O: C \to D$ the associator component $\alpha_{M,N,O}$ is defined as follows: 
\begin{calign}
\includegraphics[scale=.7]{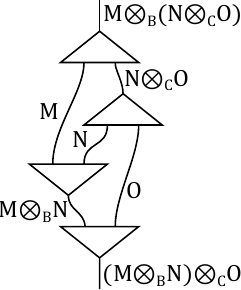}
\end{calign}
\item \emph{Identity 1-morphisms.} We define $\id_{A}: A \to A$  to be the dagger bimodule ${}_A A_{A}$.
\item \emph{Unitors.} For $M: A \to B$ the left and right unitor components $\lambda_M$ and $\rho_M$ are defined as follows:
\begin{calign}
\includegraphics[scale=.7]{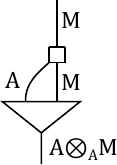}
&&
\includegraphics[scale=.7]{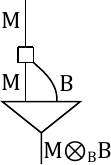}
\end{calign}
\end{itemize}
\end{definition}
\noindent
We leave to the reader the straightforward checks that $\Bimod(\mathcal{T})$ is indeed a well-defined $C^*$-2-category.
\ignore{ (e.g. the dagger of a bimodule homomorphism is a bimodule homomorphism, relative tensor product is a functor, the associator and unitors are indeed unitary natural isomorphisms obeying the pentagon and triangle equations, etc.).}

We observe that $\mathcal{T}$ embeds in $\Bimod(\mathcal{T})$ as an endomorphism category. 
\begin{proposition}\label{prop:bimodembeds}
Let $\mathbbm{1}$ be the trivial $\F$ in $\mathcal{T}$. There is a unitary isomorphism of $C^*$-tensor categories $F: \mathcal{T} \xrightarrow{\sim} \End(\mathbbm{1})$ defined as follows:
\begin{itemize}
\item Every object of $\mathcal{T}$ is taken to itself considered as a bimodule over the trivial $\F$.
\item Every morphism of $\mathcal{T}$ is taken to itself considered as a bimodule homomorphism with respect to the actions of the trivial $\F$. 
\end{itemize}
\ignore{This unitary isomorphism preserves the pivotal structure of $\mathcal{C}$ on the nose, i.e we have that $[F(X^*), F(\eta),F(\epsilon)]$ is the chosen right dual of $F(X)$ in $\Bimod(\mathcal{C})$ and $\hat{\iota}_{F(X)} = F(\iota_X)$ for any object $X$ of $\mathcal{C}$.}
\end{proposition}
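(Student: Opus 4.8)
The plan is to verify directly that $F$ is bijective on objects, fully faithful, and compatible with the dagger and tensor structures, with all comparison morphisms unitary. The one observation that drives everything is that for the trivial $\F$ $\mathbbm{1}$ the multiplication, unit, comultiplication and counit are all (invisible) unitors, so that in the graphical calculus the $\mathbbm{1}$-labelled strand together with all of its Frobenius vertices simply disappears. To handle objects, I would equip an object $M$ of $\mathcal{T}$ with the left and right actions given by the unitors $\mathbbm{1} \otimes M \to M$ and $M \otimes \mathbbm{1} \to M$. Since these are identities in the graphical calculus, the module axioms \eqref{eq:module} (including the dagger-module condition, which holds because the unitor is unitary), as well as the commutation of left and right actions \eqref{eq:commute}, reduce to trivial identities that hold by coherence (Proposition~\ref{prop:uniquedef2morph}); this shows $F$ is well defined with image in $\End(\mathbbm{1})$. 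Bijectivity on objects then comes from uniqueness: the module unitality axiom (second equation of \eqref{eq:module}), applied with the trivial unit $u=\id_{\mathbbm{1}}$, forces the action of $\mathbbm{1}$ on any $\mathbbm{1}$-module to be the unitor, so the $\mathbbm{1}$-$\mathbbm{1}$-bimodule structure on an object is unique and the assignment ``bimodule $\mapsto$ underlying object'' is a two-sided inverse to $F$ on objects.

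For morphisms, a morphism $f\colon M \to N$ of $\mathcal{T}$ automatically satisfies the bimodule homomorphism condition \eqref{eq:bimodmorph}, because with both actions equal to unitors this condition is exactly naturality of the unitors. Hence $\Hom_{\mathcal{T}}(M,N) = \Hom_{\End(\mathbbm{1})}(FM,FN)$ as complex vector spaces and $F$ is fully faithful; moreover the dagger induced on $\End(\mathbbm{1})$ by $\Bimod(\mathcal{T})$ is inherited strandwise from $\mathcal{T}$, so $F$ is a dagger functor. Combining this with bijectivity on objects shows that $F$ is a unitary isomorphism of $C^*$-categories, and it remains only to promote it to a tensor isomorphism.

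To exhibit the monoidal structure, I would analyse the horizontal composition in $\End(\mathbbm{1})$, which is the relative tensor product $- \otimes_{\mathbbm{1}} -$ obtained by splitting the idempotent \eqref{eq:idempotentforrelprod}. With $B=\mathbbm{1}$ the two actions in that idempotent are unitors, so the idempotent is the identity; one may therefore take $M \otimes_{\mathbbm{1}} N = M \otimes N$ with identity splitting isometries, yielding unitary comparison isomorphisms $F(M) \otimes_{\mathbbm{1}} F(N) \xrightarrow{\sim} F(M \otimes N)$, while $F(\mathbbm{1}) = {}_{\mathbbm{1}}\mathbbm{1}_{\mathbbm{1}} = \id_{\mathbbm{1}}$ supplies the unit comparison. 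I expect the only mildly delicate point to be checking that this identification $M \otimes_{\mathbbm{1}} N = M \otimes N$ is coherent with the associator and unitors of $\Bimod(\mathcal{T})$ (so that $F$ is genuinely monoidal, not merely tensor-preserving on objects); since those structure maps are themselves built from unitors and the splitting isometries, this follows once more from their definitions together with Proposition~\ref{prop:uniquedef2morph}. Everything else is a bookkeeping exercise in the graphical calculus, rendered trivial by the disappearance of the $\mathbbm{1}$-strand.
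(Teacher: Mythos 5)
Your proof is correct, and it is essentially the argument the paper leaves implicit: Proposition~\ref{prop:bimodembeds} is stated as an observation with no proof given, the intended justification being exactly that the structure maps of the trivial $\F$ are unitors, so that all module axioms, bimodule-homomorphism conditions and relative-tensor-product data collapse by coherence. The two substantive points you add --- that the unitality axiom in \eqref{eq:module} forces any $\mathbbm{1}$-action to be the unitor (whence bijectivity on objects), and that the idempotent \eqref{eq:idempotentforrelprod} is the identity for $B=\mathbbm{1}$ (so $\otimes_{\mathbbm{1}}$ may be realised as $\otimes$ with identity splitting isometries, making the comparison maps unitary) --- are precisely the right things to check, and you check them correctly.
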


\subsubsection{Semisimplicity}

We will now show that $\Bimod(\mathcal{T})$ is rigid. The definition of the right duals here is from~\cite{Yamagami2004}, which deals with the non-unitary case.
\begin{definition}
Let $A,B$ be $\F$s in $\mathcal{T}$ and let ${}_AM_B$ be a dagger bimodule. We define the \emph{dual} dagger bimodule ${}_B (M^*)_A$ as follows. The underlying object of the bimodule is the dual object $M^*$ of $M$ in the rigid $C^*$-tensor category $\mathcal{T}$. The left $B$-action is defined as follows:
\begin{calign}\label{eq:dualbactiondef}
\includegraphics[scale=.7]{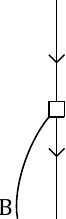}
~~:=~~
\includegraphics[scale=.7]{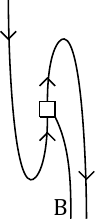}
\end{calign}
The right $A$-action is defined as follows:
\begin{calign}\label{eq:dualaactiondef}
\includegraphics[scale=.7]{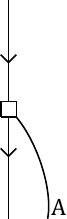}
~~:=~~
\includegraphics[scale=.7]{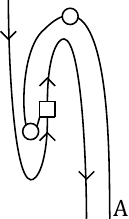}
\end{calign}
\end{definition}
\noindent
The following lemma shows that one can equally well express these actions in terms of the left cup and cap in the pivotal dagger category $\mathcal{T}$.
\begin{lemma}\label{lem:standardpivmod}
Let $A,B$ be $\F$s in $\mathcal{C}$ and let ${}_AM_B$ a dagger bimodule. Then the following equations hold:
\begin{calign}\label{eq:standardpivmod}
\includegraphics[scale=.7]{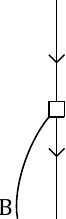}
~~=~~
\includegraphics[scale=.7]{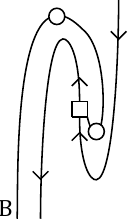}
&&
\includegraphics[scale=.7]{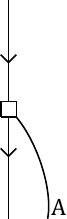}
~~=~~
\includegraphics[scale=.7]{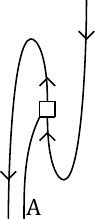}
\end{calign}
\end{lemma}
\begin{proof}
We show the second equation; the proof of the first is similar. Since $A$ is standard, by Proposition~\ref{prop:relateduals} there exists a unitary $U: A^* \to A$ such that the following equation is satisfied:
\begin{calign}
\includegraphics[scale=.7]{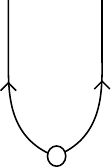}
~~=~~
\includegraphics[scale=.7]{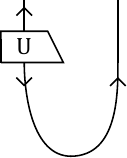}
\end{calign}
We then have the following sequence of equalities (where we offset the edge of the module action box in order to make the transpose visible):
\begin{calign}
\includegraphics[scale=.7]{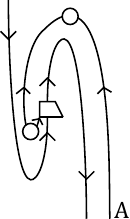}
~~=~~
\includegraphics[scale=.7]{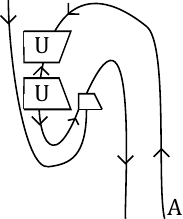}
~~=~~
\includegraphics[scale=.7]{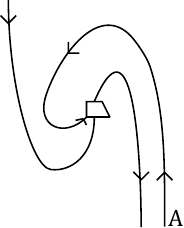}
~~=~~
\includegraphics[scale=.7]{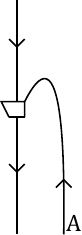}
~~=~~
\includegraphics[scale=.7]{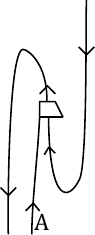}
\end{calign}
\end{proof}
\noindent
We now prove that these maps are indeed dagger module actions. 
\begin{proposition}
The maps~\eqref{eq:dualbactiondef} and \eqref{eq:dualaactiondef} give $M^*$ the structure of a left dagger $B$-module and a right dagger $A$-module respectively.
\end{proposition}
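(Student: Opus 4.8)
The plan is to verify the three defining module equations of~\eqref{eq:module} for the proposed left $B$-action~\eqref{eq:dualbactiondef} on $M^*$, and then to observe that the right $A$-action~\eqref{eq:dualaactiondef} is handled by the mirror-image argument (interchanging left/right and $A \leftrightarrow B$), so that I would only indicate the reflection symmetry rather than repeat the computation.

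The central device is Lemma~\ref{lem:standardpivmod}, just proved, which re-expresses both dual actions purely in terms of the left cup and cap of the pivotal dagger structure on $\mathcal{T}$. In this presentation the left $B$-action on $M^*$ is simply the original right $B$-action $\rho\colon M \otimes B \to M$ with its $M$-legs bent around by left cups and caps and its $B$-leg threaded through the canonical self-duality~\eqref{eq:cupcapfrob} of the Frobenius algebra $B$. Since every bend is governed by the snake equations~\eqref{eq:snake} (and their dagger/left variants), each axiom for $M^*$ reduces to the corresponding axiom for the right module $M$ after bending all the relevant wires. Conceptually this is the statement that the contravariant dual functor on $\mathcal{T}$ sends a right $B$-module to a left module over $B^*$, and that the self-duality~\eqref{eq:cupcapfrob} identifies this with a genuine left $B$-module; standardness of $B$ is what makes the identification compatible with the dagger and pivotal structure (and is already used in the proof of Lemma~\ref{lem:standardpivmod}).

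Concretely I would proceed axiom by axiom. Unitality is essentially topological: bending the unit equation for $\rho$ and using that the unit $u_B$ transposes to the counit $u_B^\dagger$ under~\eqref{eq:cupcapfrob} yields the unit law for $M^*$. For associativity I would start from the associativity of $\rho$ and bend every wire; the substantive input is recognising the transpose of the multiplication $m_B$ along~\eqref{eq:cupcapfrob} as again the $B$-structure acting on $M^*$, which is where the Frobenius relation~\eqref{eq:Frobenius} does the real work. The dagger law (the third equation of~\eqref{eq:module}) follows by the same bending, using that in the pivotal dagger calculus the dagger of a bent morphism is the bent dagger --- conjugation being reflection, cf.~\eqref{eq:conjugate} --- so that it descends from the dagger law already holding for $M$.

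I expect the associativity axiom to be the principal obstacle, precisely because it is the step that forces one to track how the multiplication of $B$ behaves under its self-duality, and a careless manipulation risks producing a spurious comultiplication or scalar factor. The safeguard is to invoke standardness explicitly: since~\eqref{eq:cupcapfrob} is a standard dual for $B$ (Proposition~\ref{prop:standardintrinsic}, with uniqueness as in Proposition~\ref{prop:relateduals}), the bent multiplication collapses cleanly against~\eqref{eq:Frobenius}, and I would treat this rewriting as a genuine use of the algebra structure rather than as pure isotopy. Separability~\eqref{eq:frobseparable} is not needed here and I would not invoke it.
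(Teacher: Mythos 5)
Your proposal is correct in substance and shares the paper's skeleton: verify the three equations of~\eqref{eq:module} diagrammatically for one of the two actions, obtain the other by the left/right mirror argument (the paper does the right $A$-action explicitly and declares the left $B$-action similar; you do the reverse, which is immaterial), with Lemma~\ref{lem:standardpivmod} and the module axioms of $M$ as the inputs. Where you genuinely diverge is in \emph{which presentation} of the dual action you compute in, and consequently where you locate the non-topological content. The paper works directly in the defining presentation~\eqref{eq:dualbactiondef}--\eqref{eq:dualaactiondef}, in which the algebra leg feeds straight into the action box and is never bent; there, associativity and unitality are pure isotopy --- expand the definition, cancel one snake pair, and invoke the corresponding axiom for $M$ --- so no Frobenius relation and no self-duality of $B$ ever appears in those two steps. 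Your insistence on the ``threaded'' left-cup/cap presentation (the action box rotated, the $B$-leg rerouted through~\eqref{eq:cupcapfrob}) forces the multiplication $m_B$ to be transposed across the self-duality, which is why you need~\eqref{eq:Frobenius} already for associativity; this can be made to work, but it imports algebraic content into a step that is otherwise topological. Conversely, the axiom that genuinely requires care is the third (dagger) equation of~\eqref{eq:module}: since $\lambda^{\dagger}$ carries the opposite handedness of cups and caps from $\lambda$, closing that verification is precisely where Lemma~\ref{lem:standardpivmod} --- hence standardness --- must be invoked, together with traciality of the pairing $u^{\dagger}\circ m$ to match the order in which the Frobenius cap meets the two algebra wires. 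That is the step you pass over most quickly (``the dagger of a bent morphism is the bent dagger''), so if you write this up, spend your care there rather than on associativity. Your observations that separability~\eqref{eq:frobseparable} plays no role in this proposition, and that standardness enters only through the lemma, are both correct and consistent with the paper.
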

\begin{proof}
We provide the proof for the right $A$-action; the proof for the left $B$-action is similar. 
\begin{calign}
\includegraphics[scale=.7]{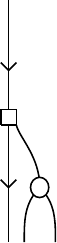}
~~=~~
\includegraphics[scale=.7]{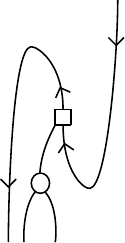}
~~=~~
\includegraphics[scale=.7]{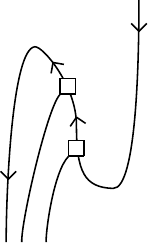}
~~=~~
\includegraphics[scale=.7]{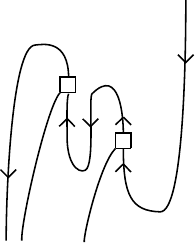}
~~=~~
\includegraphics[scale=.7]{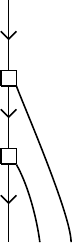}
\\
\includegraphics[scale=.7]{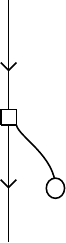}
~~=~~
\includegraphics[scale=.7]{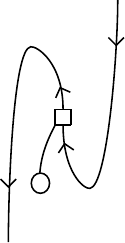}
~~=~~
\includegraphics[scale=.7]{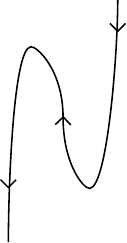}
~~=~~
\includegraphics[scale=.7]{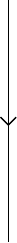}
\\
\includegraphics[scale=.7]{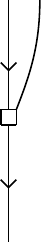}
~~=~~
\includegraphics[scale=.7]{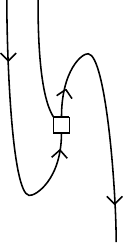}
~~=~~
\includegraphics[scale=.7]{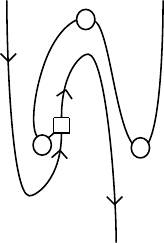}
~~=~~
\includegraphics[scale=.7]{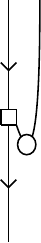}
\end{calign}
\end{proof}
\noindent
Having defined our dual 1-morphism, we now define a cup 2-morphism $\eta_{{}_A M_B}:  {}_B B_B \to  {}_B(M^*) \otimes_A M_B $ and a cap 2-morphism $\epsilon_{{}_A M_B}: {}_A M \otimes_B (M^*)_A \to {}_A A_A$ witnessing the duality:
\begin{calign}\label{eq:dualbimodcupcap}
\includegraphics[scale=.7]{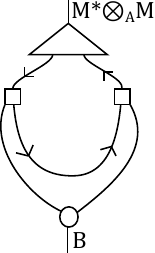} &&
\includegraphics[scale=.7]{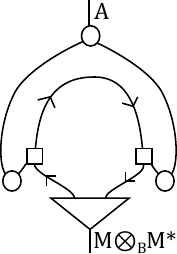}
\end{calign}
\begin{proposition}\label{prop:bimodrigid}
$\Bimod(\mathcal{T})$ is rigid; in particular, $[{}_B (M^*)_A, \eta_{{}_A M_B}, \epsilon_{{}_A M_B}]$ is a right dual for ${}_AM_B$.  
\end{proposition}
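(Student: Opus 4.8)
The plan is to verify the three conditions that make $[{}_B(M^*)_A, \eta_{{}_AM_B}, \epsilon_{{}_AM_B}]$ a right dual for ${}_AM_B$ in the $C^*$-2-category $\Bimod(\mathcal{T})$: first, that the cup $\eta_{{}_AM_B}$ is a genuine $2$-morphism, i.e.\ a $B$-$B$-bimodule homomorphism ${}_BB_B \to {}_B(M^*)\otimes_A M_B$; second, that the cap $\epsilon_{{}_AM_B}$ is an $A$-$A$-bimodule homomorphism ${}_AM\otimes_B(M^*)_A \to {}_AA_A$; and third, that the two snake equations~\eqref{eq:snake} hold in $\Bimod(\mathcal{T})$. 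Since ${}_AM_B$ was an arbitrary dagger bimodule, establishing these three facts exhibits a right dual for every $1$-morphism, which is precisely rigidity.

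For the first two conditions I would argue diagrammatically inside $\mathcal{T}$. The cup and cap of~\eqref{eq:dualbimodcupcap} are assembled from the cup and cap of the dual object $M^*$ in $\mathcal{T}$ together with the Frobenius (co)multiplications of $A$ and $B$ and the module actions. To check, say, that $\epsilon_{{}_AM_B}$ satisfies the bimodule homomorphism condition~\eqref{eq:bimodmorph}, I would slide the left and right $A$-actions across the cap: using the module associativity and unitality axioms~\eqref{eq:module}, the definition~\eqref{eq:dualaactiondef} of the right $A$-action on $M^*$, and Lemma~\ref{lem:standardpivmod} to re-express actions through the left cup and cap of $\mathcal{T}$, the action is absorbed into the Frobenius comultiplication of $A$, and the Frobenius and module axioms then reassemble the result on the other side. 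The same strategy, with $B$ in place of $A$, handles $\eta_{{}_AM_B}$. One must also confirm that the underlying $\mathcal{T}$-morphisms are fixed by the relevant idempotent~\eqref{eq:idempotentforrelprod}, so that they genuinely land in the split objects $(M^*)\otimes_A M$ and $M\otimes_B(M^*)$; this is immediate from the module-action axiom, since precomposing with the action is exactly what the idempotent projects onto.

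The snake equations are the main obstacle, and the subtlety is that horizontal composition in $\Bimod(\mathcal{T})$ is the \emph{relative} tensor product, defined by splitting the idempotent~\eqref{eq:idempotentforrelprod}, so the composite $2$-morphisms in~\eqref{eq:snake} are sandwiched between the splitting isometries $i$ and $i^\dagger$ of~\eqref{eq:moritaidempotentsplit}. My plan is to translate each snake composite into its underlying morphism in $\mathcal{T}$ by conjugating with these isometries, whereupon the bimodule snake becomes a closed $\mathcal{T}$-diagram containing a single idempotent loop coming from the relative product together with the $\mathcal{T}$-level cup and cap of $M$. The key computational input is \emph{separability}: the special equation~\eqref{eq:frobseparable} for the algebra over which the relative product is taken collapses the idempotent loop, after which the remaining diagram is the topological snake~\eqref{eq:snake} for the duality on $M$ in $\mathcal{T}$, which straightens to the identity. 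The normalisations built into~\eqref{eq:dualbimodcupcap} and the module unitality axiom~\eqref{eq:module} supply the unitors, so that the straightened diagram is exactly $\id_{{}_AM_B}$ (respectively $\id_{{}_B(M^*)_A}$) as a bimodule homomorphism.

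I expect the genuine difficulty to be bookkeeping rather than conceptual: carefully matching the idempotent projections to the splitting isometries and confirming that each stray (co)multiplication is removed by exactly one application of~\eqref{eq:frobseparable}, so that no spurious scalar or endomorphism of an identity $1$-morphism survives. Note that standardness of the $\F$s enters the argument only indirectly, through Lemma~\ref{lem:standardpivmod}, which we invoke merely to present the actions in terms of the left cup and cap; the structural work of rigidity is done by separability alone. Standardness will matter later, since by Proposition~\ref{prop:standardintrinsic} it is what makes these duals compatible with the canonical traces, but it is not needed for the present existence statement beyond that convenience.
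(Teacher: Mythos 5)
Your proposal is correct and follows essentially the same route as the paper: check that the cup and cap of~\eqref{eq:dualbimodcupcap} are bimodule homomorphisms (the paper leaves this as straightforward), then verify the snake equations by passing to the underlying $\mathcal{T}$-diagrams, where the splitting isometries of the relative tensor product cancel via separability of $A$ and $B$ (this is exactly the content of the paper's follow-up remark on the omitted triangles and~\eqref{eq:bimoddualtriangles}), and the dagger (bi)module equations together with the $\mathcal{T}$-level snake equations finish the job. Your closing observation that separability, not standardness, carries the structural weight also matches the paper's own remark that the standardness hypothesis can likely be dropped for rigidity of the completion.
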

\begin{proof}
It is straightforward to check that the cup and cap~\eqref{eq:dualbimodcupcap} are bimodule homomorphisms.
We also need to check that the snake equations~\eqref{eq:snake} are satisfied. We show the second of those equations (the other is shown similarly):
\begin{calign}\label{eq:bimoddualpf}
\includegraphics[scale=.7]{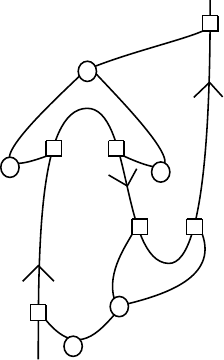}
~~=~~
\includegraphics[scale=.7]{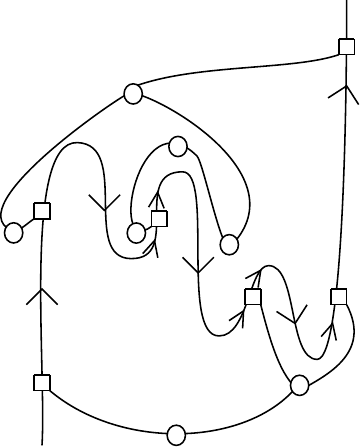}
~~=~~
\includegraphics[scale=.7]{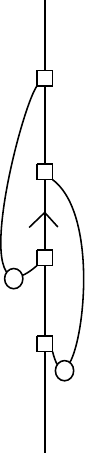}
~~=~~
\includegraphics[scale=.7]{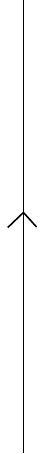}
\end{calign}
\end{proof}
\noindent
Here the first equality is by definition of the action for the dual bimodule; the second equality is by snake equations, the dagger bimodule equations and separability of the Frobenius algebras $A$ and $B$; and the third equality is by commutativity of the left and right module actions, the dagger module equations and separability of the Frobenius algebras $A$ and $B$.
\begin{remark}
In~\eqref{eq:bimoddualpf} we omitted the triangles~\eqref{eq:moritaidempotentsplit} in order to keep the size of the diagrams reasonable; the reader may insert them, and will observe that they cancel using separability of the Frobenius algebra and the following equations:
\begin{calign}\label{eq:bimoddualtriangles}
\includegraphics[scale=.7]{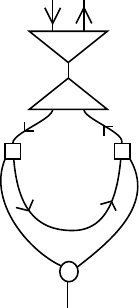}
~~=~~
\includegraphics[scale=.7]{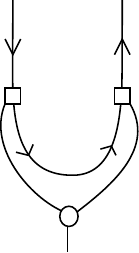}
&&
\includegraphics[scale=.7]{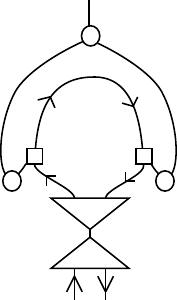}
~~=~~
\includegraphics[scale=.7]{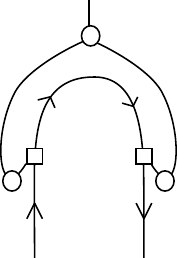}
\end{calign}
We prove the first equation of~\eqref{eq:bimoddualtriangles}, the other is shown similarly:
\begin{calign}
\includegraphics[scale=.7]{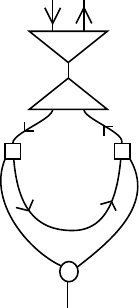}
~~=~~
\includegraphics[scale=.7]{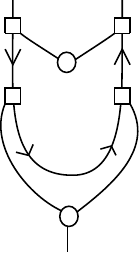}
~~=~~
\includegraphics[scale=.7]{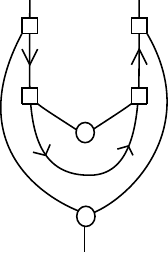}
~~=~~
\includegraphics[scale=.7]{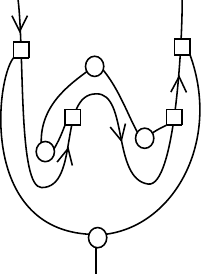}
~~=~~
\includegraphics[scale=.7]{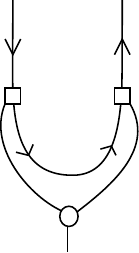}
\end{calign}
Here the last equality is by two snake equations, the dagger module equations and separability of the Frobenius algebra $A$.
\end{remark}
\noindent
We now show the other aspects of semisimplicity.
\begin{definition}
We say that an $\F$ $A$ in $\mathcal{T}$ is \emph{simple} if the $C^*$-algebra $\End({}_{A} A_A)$ of bimodule endomorphisms of the identity $A{-}A$ bimodule is one-dimensional.
\end{definition}
\begin{lemma}\label{lem:decomposingfs}
Every $\F$ in $\mathcal{T}$ may be decomposed as a direct sum of simple $\F$s. 
\end{lemma}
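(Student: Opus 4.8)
The plan is to decompose $A$ along the minimal central projections of its endomorphism algebra. Since $\mathcal{T}$ is semisimple, $\End({}_AA_A)$ is a finite-dimensional $C^*$-algebra, and I would first argue it is commutative. A bimodule endomorphism $f$ of the regular bimodule is completely determined by the element $z_f := f \circ u \colon \mathbbm{1} \to A$: the left and right module conditions~\eqref{eq:bimodmorph} together with unitality give $f = m \circ (z_f \otimes \id_A) = m \circ (\id_A \otimes z_f)$, so $z_f$ is central, and the composite $f \circ g$ corresponds to the product $m \circ (z_f \otimes z_g)$. Commutativity of composition then follows from centrality. Hence $\End({}_AA_A) \cong \mathbb{C}^n$, and I take its minimal orthogonal projections $p_1, \dots, p_n$, which satisfy $\sum_i p_i = \id_A$, $p_i \circ p_j = \delta_{ij}\, p_i$ and $p_i = p_i^\dagger$. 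Each $p_i$ is simultaneously a dagger idempotent in $\mathcal{T}$ and an $A$-$A$-bimodule map, so $p_i \circ m = m \circ (p_i \otimes \id_A) = m \circ (\id_A \otimes p_i)$.

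Because $\mathcal{T}$ is semisimple, dagger idempotents split; let $\iota_i \colon A_i \to A$ be isometries with $\iota_i \circ \iota_i^\dagger = p_i$ and $\iota_i^\dagger \circ \iota_i = \id_{A_i}$. I would equip each $A_i$ with the restricted structure $m_i := \iota_i^\dagger \circ m \circ (\iota_i \otimes \iota_i)$, $u_i := \iota_i^\dagger \circ u$, and coalgebra structure $\delta_i := m_i^\dagger$, $\epsilon_i := u_i^\dagger$. Using centrality and self-adjointness of $p_i$ --- in particular the relations $m \circ (p_i \otimes p_i) = p_i \circ m$ and $p_i \circ \iota_i = \iota_i$ --- the associativity, unitality, Frobenius and separability equations for $A_i$ reduce to the corresponding equations for $A$; for example separability is immediate from $m_i \circ \delta_i = \iota_i^\dagger \circ p_i \circ (m \circ \delta) \circ \iota_i = \iota_i^\dagger \circ \iota_i = \id_{A_i}$. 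Standardness of $A_i$ I would deduce from standardness of $A$ rather than verify afresh: the Frobenius cup and cap of $A_i$ are the restrictions along $\iota_i$ of those of $A$, so the canonical functional of Proposition~\ref{prop:standardintrinsic} for $A_i$ is the restriction of that of $A$ and remains faithful and tracial.

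It remains to assemble the pieces. Since $\sum_i \iota_i \circ \iota_i^\dagger = \sum_i p_i = \id_A$, the isometries $\iota_i$ present $A$ as a $C^*$-categorical direct sum of the $A_i$; orthogonality and centrality of the $p_i$ force $m \circ (\iota_i \otimes \iota_j) = 0$ for $i \neq j$, so the multiplication is block-diagonal and the resulting unitary $A \cong \bigoplus_i A_i$ is an isomorphism of $\F$s. Finally, each $A_i$ is simple: conjugation by $\iota_i$, i.e. $g \mapsto \iota_i \circ g \circ \iota_i^\dagger$ with inverse $h \mapsto \iota_i^\dagger \circ h \circ \iota_i$, gives a $*$-isomorphism $\End({}_{A_i}(A_i)_{A_i}) \cong p_i \circ \End({}_AA_A) \circ p_i$ (the intertwining relation $\iota_i \circ m_i = m \circ (\iota_i \otimes \iota_i)$ guarantees that conjugation exchanges bimodule maps). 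As $p_i$ is a minimal projection in the commutative algebra $\End({}_AA_A) \cong \mathbb{C}^n$, the right-hand side is $\mathbb{C}\,p_i \cong \mathbb{C}$, so $\End({}_{A_i}(A_i)_{A_i})$ is one-dimensional.

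The crux of the argument is the opening step: once $\End({}_AA_A)$ is identified as a finite-dimensional commutative $C^*$-algebra and its minimal projections are recognised as genuine self-adjoint bimodule idempotents, the remainder is bookkeeping with the module and Frobenius equations together with idempotent splitting in $\mathcal{T}$. The one place demanding care is standardness of the summands, which must be traced back to standardness of $A$ and not checked in isolation.
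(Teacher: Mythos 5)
Your proof is correct and takes essentially the same route as the paper's: both split the minimal orthogonal projections of the commutative finite-dimensional $C^*$-algebra $\End({}_AA_A)$, restrict the multiplication and unit along the resulting isometries $\iota_i$, verify separability by pulling the projections $p_i$ through $m$ via the bimodule property, and deduce standardness of each $A_i$ from standardness of $A$ rather than checking it afresh. The only differences are that you spell out details the paper asserts without proof (commutativity of $\End({}_AA_A)$ via central elements, and simplicity of $A_i$ via the corner-algebra isomorphism onto $p_i \circ \End({}_AA_A) \circ p_i$), which is harmless.
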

\begin{proof}
Let $A$ be a $\F$. Take a complete family of minimal orthogonal projections $\{p_i\}$ in the commutative finite-dimensional $C^*$-algebra $\End({}_A A_A)$, split the idempotents to obtain factors $\{A_i\}$ with isometries $\iota_i: A_i \to A$ such that $\iota_i \circ \iota_i^{\dagger} = p_i$, then define the structure of an $\F$ on $A_i$ by: 
\begin{align*}
m_i = \iota_i^{\dagger} \circ m \circ (\iota_i \otimes \iota_i)
&&
u_i = \iota_i^{\dagger} \circ u
\end{align*}
It is easy to check that the $A_i$ are Frobenius algebras and that $A \cong \oplus_i A_i$; simplicity of the $A_i$ follows from minimality of the projections $\{p_i\}$. This was already shown in~\cite[Lem. 2.8]{Neshveyev2018}. We now need only show that the $A_i$ are standard and separable. For separability:
\begin{align*}
m_i \circ m_i^{\dagger} =  \iota_i^{\dagger} \circ m \circ (p_i \otimes p_i) \circ m^{\dagger} \circ \iota_i = \iota_i^{\dagger} \circ m \circ m^{\dagger} \circ \iota_i = \id_{A_i}
\end{align*}
Here the second equality uses the fact that $p_i$ is a bimodule morphism to pull the $p_i$ through the multiplication $m$ and cancel it with $\iota_{i}^{\dagger}$; the third equality uses separability of $A$ and the fact that $\iota_i$ is an isometry.

Finally we show that the $A_i$ are standard. For this, we need to show that, for any $f \in \End(A_i)$:
$$\Tr[u_i^{\dagger} \circ m_i \circ (f \otimes \id_{A_i}) \circ m_i^{\dagger} \circ u_i] = \Tr[u_i^{\dagger} \circ m_i \circ (\id_{A_i} \otimes f) \circ m_i^{\dagger} \circ u_i] $$
We show this as follows:
\begin{align*}
\Tr[u_i^{\dagger} \circ m_i \circ (f \otimes \id_{A_i}) \circ m_i^{\dagger} \circ u_i] &= \Tr[u^{\dagger} \circ  p_i \circ  m \circ (\iota_i f \iota_i^{\dagger} \otimes p_i) \circ  m^{\dagger} \circ  p_i \circ u]
\\ &= \Tr[u^{\dagger} \circ m \circ (\iota_i f \iota_i^{\dagger} \otimes \id_{A}) \circ m^{\dagger} \circ u] 
\\ &=  \Tr[u^{\dagger} \circ m \circ (\id_{A} \otimes \iota_i f \iota_i^{\dagger}) \circ m^{\dagger} \circ u]
\\ &=  \Tr[u_i^{\dagger} \circ m_i \circ (\id_{A_i} \otimes f) \circ m_i^{\dagger} \circ u_i] 
\end{align*}
Here the first equality uses the definition of the multiplication and unit of $A_i$; the second equality uses that $p_i$ is a bimodule morphism to bring the $p_i$ next to the $\iota_i$ and $\iota_i^{\dagger}$, where they disappear; the third equality is by standardness of $A$; and for the final equality one simply repeats the process in the opposite direction.
\end{proof}

\begin{proposition}\label{prop:bimodsemisimple}
$\Bimod(\mathcal{T})$ is semisimple.
\end{proposition}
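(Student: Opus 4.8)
The plan is to verify the two ingredients in the definition of a semisimple $C^*$-2-category in turn: first that $\Bimod(\mathcal{T})$ is presemisimple, and then that every $\F$ in each endomorphism category $\End(A)$ splits. Rigidity is already established in Proposition~\ref{prop:bimodrigid}, and the decomposition of objects into simples is essentially Lemma~\ref{lem:decomposingfs}, so the bulk of the remaining work is local semisimplicity, additivity, and---the genuinely new point---the splitting of $\F$s.

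For presemisimplicity I would argue as follows. Local semisimplicity of each Hom-category $A\text{-}\Mod\text{-}B$ reduces to properties of $\mathcal{T}$: the spaces $\Hom({}_A M_B, {}_A N_B)$ are linear subspaces of the finite-dimensional spaces $\Hom(M,N)$ in $\mathcal{T}$, hence finite-dimensional; direct sums of bimodules are again bimodules; and a dagger idempotent $e \in \End({}_A M_B)$ splits because its splitting in $\mathcal{T}$ inherits a canonical dagger bimodule structure for which the splitting isometry is a bimodule homomorphism. Additivity of the 2-category is witnessed by the zero $\F$ and by the direct sum $A \oplus B$ of two $\F$s (block-diagonal multiplication and unit), with the inclusion and projection 1-morphisms given by the evident bimodules built from the block structure; checking the axioms of Definition~\ref{def:additive2cat} is routine. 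Finally, Lemma~\ref{lem:decomposingfs} exhibits every object as a finite direct sum of simple $\F$s, which are precisely the objects with simple identity.

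The crux is splitting. Let $A$ be an $\F$ and let $Q = [M, m, u]$ be an $\F$ in $\End(A) = A\text{-}\Mod\text{-}A$, so that $m : M \otimes_A M \to M$ and $u : {}_A A_A \to M$ are bimodule homomorphisms. I would build the target object as the \emph{total algebra} $\tilde M$ on the underlying object $M \in \mathcal{T}$, with multiplication $\tilde m := m \circ i^{\dagger}$ (where $i : M \otimes_A M \to M \otimes M$ is the splitting isometry of~\eqref{eq:moritaidempotentsplit}) and unit $\tilde u := u \circ u_A$, with $u_A$ the unit of $A$. One checks that $[M, \tilde m, \tilde u]$ is a standard separable Frobenius algebra in $\mathcal{T}$: separability is immediate since $i^{\dagger} \circ i = \id_{M \otimes_A M}$ reduces $\tilde m \circ \tilde m^{\dagger}$ to $m \circ m^{\dagger} = \id_M$; the Frobenius equation follows diagrammatically from that of $Q$ together with separability of $A$; and standardness comes from composing the standard structures of $A$ and of $Q$. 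Then $X := {}_A M_{\tilde M}$, with the given left $A$-action and right $\tilde M$-action equal to $\tilde m$, is the regular bimodule ${}_A \tilde M_{\tilde M}$ and defines a 1-morphism $A \to \tilde M$; one verifies that $\dim_L(X)$ is invertible, so that $X$ is separable (Definition~\ref{def:separable1morph}). Finally $X \otimes_{\tilde M} X^* \cong {}_A M_A = Q$ as bimodules, and the pair-of-pants $\F$ of Proposition~\ref{prop:pairofpants} on $X \otimes X^*$ is identified with $Q$.

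The main obstacle is this last identification. Both $Q$ and the pair of pants $X \otimes X^*$ are standard separable $\F$s on the same underlying bimodule $M$, so they can differ only by a rescaling controlled by the normalising factors $n_X$; the work is to show, by a diagrammatic computation unfolding the definitions of $\tilde m$, the dual bimodule actions~\eqref{eq:dualbactiondef}--\eqref{eq:dualaactiondef}, and the pair-of-pants multiplication~\eqref{eq:pairofpants}, that this rescaling is trivial, so that the identification is a unitary $*$-isomorphism. Standardness of $\tilde M$ is the key input that pins down the normalisation and hence forces the match; establishing it, and then propagating it correctly through the pair-of-pants computation, is where the real care is needed.
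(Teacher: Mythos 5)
Your presemisimplicity argument---local semisimplicity by splitting dagger idempotents in $\mathcal{T}$ and transporting the bimodule structure, finite-dimensionality of Hom-spaces as subspaces of Hom-spaces in $\mathcal{T}$, block-diagonal direct sums of $\F$s for additivity, and Lemma~\ref{lem:decomposingfs} for decomposition into simples---is exactly the paper's. Where you genuinely diverge is the splitting of $\F$s: the paper disposes of this in a single line by citing~\cite[Cor. 3.37]{Chen2021}, whereas you propose a direct construction via the total algebra $\tilde M$ of an $\F$ $Q=[M,m,u]$ in $A$-$\Mod$-$A$ and the regular bimodule ${}_A M_{\tilde M}$. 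That is the right construction (it is essentially the content of the cited result), and inlining it would be a real gain: the Frobenius algebras of~\cite{Chen2021} are \emph{not} standard, so the paper's citation itself tacitly requires a supplementary argument of precisely the kind you are attempting.

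However, as written your proposal has a genuine gap, and it sits exactly at the two points you defer. First, standardness of $\tilde M$ is asserted to ``come from composing the standard structures of $A$ and of $Q$'' but never proven; this is not formal, because standardness of $\tilde M$ is a trace condition over \emph{all} endomorphisms of $M$ in $\mathcal{T}$, while standardness of $Q$ only constrains \emph{bimodule} endomorphisms. The workable route is to identify $M\cong {}_{\mathbbm{1}}A_A\otimes Q\otimes {}_A A_{\mathbbm{1}}$ and exhibit the Frobenius duality $(\tilde m^{\dagger}\circ\tilde u,\ \tilde u^{\dagger}\circ\tilde m)$ as a composite of standard dualities: the Frobenius self-duality of $Q$ in $\Bimod(\mathcal{T})$ (standard by hypothesis) together with the weighted standard duality for ${}_{\mathbbm{1}}A_A$ built in the proof of Lemma~\ref{lem:pairofpants} (equation~\eqref{eq:acupcap}, with its $d(A_k)^{\pm 1/4}$ factors), then invoke the fact, noted after Proposition~\ref{prop:standardintrinsic}, that tensor products of standard duals are standard. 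Without this, $\tilde M$ is not even known to be an object of $\Bimod(\mathcal{T})$. Second, your justification of the final identification---``both $Q$ and $X\otimes X^*$ are standard separable $\F$s on the same underlying bimodule, so they can differ only by a rescaling controlled by $n_X$''---is not a valid principle: distinct $\F$ structures on one and the same object need not be related by a rescaling, and uniqueness holds only up to unitary $*$-isomorphism, which is the very thing to be constructed. What is true is that unfolding~\eqref{eq:pairofpants}, \eqref{eq:dualbactiondef}--\eqref{eq:dualaactiondef} and~\eqref{eq:moritaidempotentsplit} shows the pants structure equals $Q$'s structure decorated by normalising factors, and one must then compute that these cancel; that computation is of the same length and character as the ``only if'' direction of Lemma~\ref{lem:pairofpants}, not a formality. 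In short: sound architecture, a more self-contained route than the paper's citation, but the crux---the part the paper outsources to~\cite{Chen2021}---is still missing.
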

\begin{proof}
We showed in Proposition~\ref{prop:bimodrigid} that $\Bimod(\mathcal{T})$ is rigid.

There is clearly a direct sum of bimodules and a zero bimodule yielding local additivity. For an  $A$-$B$ bimodule $X$ and $p \in \End(X)$ it is straightforward to define an $A$-$B$ bimodule structure on the splitting of the idempotent $p$; local idempotent completeness follows. Then observe that every endomorphism algebra in $\Bimod(\mathcal{T})$ is finite-dimensional, since it is a subalgebra of an endomorphism algebra in $\mathcal{T}$. $\Bimod(\mathcal{T})$ is therefore locally semisimple. 

It is straightforward to check that the direct sum of $\F$s in $\mathcal{T}$ is a direct sum of objects in $\Bimod(\mathcal{T})$ in the sense of Definition~\ref{def:additive2cat}. The existence of a zero object is clear. That every object can be decomposed as a finite direct sum of simple objects is the content of Lemma~\ref{lem:decomposingfs}.

Finally, idempotent splitting is shown in~\cite[Cor. 3.37]{Chen2021}.
\end{proof}

\begin{remark}\label{rem:standarddualbimod}
Let ${}_A M_B$ be a bimodule. In this section we defined a right dual bimodule ${}_B (M^*)_A$. With the cup and cap specified in~\eqref{eq:dualbimodcupcap}, this is in general not a standard right dual in $\Bimod(\mathcal{T})$. However, it is straightforward to define a normalised cup and cap $\eta,\epsilon$ so that $[{}_B (M^*)_A,\eta,\epsilon]$ is a standard right dual for ${}_A M_B$. We leave the details to the reader. 
\end{remark}

\subsection{The 2-category $\Mod(\mathcal{T})$}\label{sec:modt}

We now define the second semisimple $C^*$-2-category in which $\mathcal{T}$ embeds. 
\begin{definition}
A \emph{semisimple left $\mathcal{T}$-module category} is a semisimple $C^*$-category $\mathcal{M}$ together with:
\begin{itemize}
\item A unitary linear bifunctor $\tilde{\otimes}: \mathcal{T} \times \mathcal{M} \to \mathcal{M}$. 
\item Unitary natural isomorphisms 
$l_X: \mathbbm{1} \tilde{\otimes} X \cong X$ and $m_{U,V,X}: (U \otimes V) \tilde{\otimes} X \cong U \tilde{\otimes} (V \tilde{\otimes} X)$ satisfying analogues of the pentagon and triangle equations~\cite[Def. 6]{Ostrik2003}. 
\end{itemize}
A semisimple right $\mathcal{T}$-module category can be defined analogously.

Following~\cite{Arano2015,Neshveyev2018}, we say that the module category $\mathcal{M}$ is \emph{cofinite} (a.k.a. \emph{proper}) if for any $X,Y \in \mathcal{M}$ we have $\Hom_{\mathcal{M}}(X,U_i \tilde{\otimes} Y) = 0$ for all but finitely many $i$, where $\{U_i\}$ are representatives of the isomorphism classes of simple objects in $\mathcal{T}$.
\end{definition}

\begin{definition}
Let $\mathcal{M}_1, \mathcal{M}_2$ be semisimple left $\mathcal{T}$-module categories. A \emph{unitary $\mathcal{T}$-module functor} $\mathcal{M}_1 \to \mathcal{M}_2$ is a unitary linear functor $F:\mathcal{M}_1 \to \mathcal{M}_2$ together with a unitary natural isomorphism $c_{U,X}: F(U \tilde{\otimes} X) \to U \tilde{\otimes} F(X)$; the $\{c_{U,X}\}$ must satisfy certain coherence equations~\cite[Def. 7]{Ostrik2003}.
\end{definition}

\begin{definition}
Let $F,G: \mathcal{M}_1 \to \mathcal{M}_2$ be unitary $\mathcal{T}$-module functors. We say that a natural transformation $\eta: F \to G$ is a \emph{morphism of $\mathcal{T}$-module functors} if the following diagram commutes for any $U \in \mathcal{T}, X \in \mathcal{M}_1$:
\begin{diagram}
F(U \tilde{\otimes} X) & \rTo^{c_{U,X}} & U \tilde{\otimes} F(X) \\
\dTo^{\eta_{U \tilde{\otimes} X}} & &  \dTo_{\id_U \otimes \eta_{X}}
\\
G(U \tilde{\otimes} X) & \rTo_{c_{U,X}} & U \tilde{\otimes} G(X)
\end{diagram}
\end{definition}
\ignore{
\noindent
Before defining what we mean by `finitely decomposable' we first consider the 2-category $\Mod'(\mathcal{T})$ whose objects are general (i.e. not necessarily finitely decomposable) cofinite semisimple left $\mathcal{T}$-module categories, whose 1-morphisms are unitary $\mathcal{T}$-module functors, and whose 2-morphisms are morphisms of $\mathcal{T}$-module functors. It is straightforward to see that $\Mod'(\mathcal{T})$ has the structure of a locally additive $\mathbb{C}$-linear dagger 2-category. It also has a direct sum for objects.
\begin{proposition}
The category $\Mod'(\mathcal{T})$ is a locally additive $\mathbb{C}$-linear dagger 2-category. 
\end{proposition}
\begin{proof}
We first observe that $\Mod_{c,s}(\mathcal{C})$ is $\mathbb{C}$-linear. It is well known that the category of $\mathbb{C}$-linear categories, $\mathbb{C}$-linear functors and natural transformations is $\mathbb{C}$-linear; in particular, the linear structure on the set $\Hom(F,G)$ of natural transformations between two $\mathbb{C}$-linear functors on $\mathbb{C}$-linear categories is defined componentwise:
\begin{align*}
(\eta_1 + \eta_2)_X := (\eta_1)_X + (\eta_2)_X 
&&
(\alpha \eta)_X := \alpha \eta_X
&&
(0)_X := 0
\end{align*}
It is easy to check that the morphisms of $\mathcal{C}$-module functors form a subspace of $\Hom(F,G)$. Therefore $\Mod_{c,s}(\mathcal{C})$ is $\mathbb{C}$-linear.

It is straightforward to see that $\Mod_{c,s}(\mathcal{C})$ inherits a dagger structure from $\mathcal{C}$, where the dagger of a natural transformation is defined componentwise:
\begin{align*}
(\eta^{\dagger})_X := (\eta_X)^{\dagger}
\end{align*}

For local additivity, we observe that for any $\mathcal{C}$-module categories $\mathcal{M}_1, \mathcal{M}_2$, $\Hom(\mathcal{M}_1,\mathcal{M}_2)$ has:
\begin{itemize} 
\item A zero object, namely the functor which maps everything in $\mathcal{D}$ to the zero object and zero morphism of $\mathcal{D}$. 
\item A direct sum. The $\mathcal{C}$-module functor $F \oplus G$ is defined as follows:
\begin{align*}
(F \oplus G)(X) := F(X) \oplus G(X) 
&& 
(F \oplus G)(f) :=F(f) \oplus G(f)
&&
c_{F \oplus G,U,X} := c_{F,U,X} \oplus c_{G,U,X}
\end{align*}
The injection morphism $\iota_F: F \to F \oplus G$ is defined componentwise as:
$$
(\iota_F)_X := \iota_X: F(X) \to F(X) \oplus G(X)
$$
where $\iota_X$ is the injection into the direct sum in $\mathcal{C}$. The injection for $\iota_G$ is defined similarly. It is straightforward to check that these injections satisfy the equations for a direct sum. 
\end{itemize} 
\end{proof}
\noindent
Since $\mathcal{M}_1 \oplus \mathcal{M}_2$ is a locally additive 2-category, we can also consider direct sums for objects.
}
\noindent
It is straightforward to define a notion of direct sum for cofinite semisimple $\mathcal{T}$-module categories (see e.g.~\cite[Prop. 7.3.4]{Etingof2016}).
\ignore{
We say that a  
\begin{definition}
Let $\mathcal{M}_1, \mathcal{M}_2$ be semisimple cofinite left $\mathcal{T}$-module categories. We define a left $\mathcal{T}$-module category $\mathcal{M}_1 \oplus \mathcal{M}_2$ as follows: 
\begin{itemize}
\item Objects: formal sums $X= X_1 \oplus X_2$, where $X_i$ is an object of $\mathcal{M}_i$.
\item Morphisms: $\Hom_{\mathcal{M}_1 \oplus \mathcal{M}_2}(X,Y) := \Hom_{M_1}(X_1, Y_1) \oplus \Hom_{M_2}(X_2,Y_2)$
\item $\mathcal{C}$-action: 
\begin{align*}
U \tilde{\otimes} X = U \tilde{\otimes} X_1 \oplus U \tilde{\otimes} X_2 
&&
f \tilde{\otimes} g = f \tilde{\otimes} g_1 \oplus f \tilde{\otimes} g_2
\end{align*}
\item Unitary natural isomorphisms: direct sums of those for the factors. 
\end{itemize}
We also define $\mathcal{T}$-module functors $\iota_i: \mathcal{M}_i \to \mathcal{M}_1 \oplus \mathcal{M}_2$ and $\rho_i: \mathcal{M}_1 \oplus \mathcal{M}_2 \to \mathcal{M}_i$ as follows (we state for $\iota_1, \rho_1$, the others are defined similarly):
\begin{itemize}
\item On objects: $\iota_1(X_1) = X_1 \oplus {\bf 0}$, $\rho_1(X_1 \oplus X_2) = X_1$.
\item On morphisms: $\iota_1(f) = f \oplus 0$, $\rho_1(f_1 \oplus g_1) = f_1$.
\item Natural isomorphism $c_{\iota_1,U,X_1}$: from uniqueness of zero object in $\mathcal{M}_2$.
\item Natural isomorphism $c_{\rho_1,U,X_1 \oplus X_2}$: trivial.
\end{itemize}
A quick check of the conditions of Definition~\ref{} shows that this is indeed a direct sum for objects.
\end{definition}
\noindent
We can now define what we mean by finite decomposability.
\begin{definition}
}
\begin{definition}
We say that a $\mathcal{T}$-module category is \emph{indecomposable} if it is not equivalent to a nontrivial direct sum of $\mathcal{T}$-module categories. We say that a $\mathcal{T}$-module category is \emph{finitely decomposable} if it is equivalent to a finite direct sum of indecomposable module categories.
\end{definition}

\begin{definition}\label{def:modt}
The (strict) 2-category $\Mod(\mathcal{T})$ is defined as follows:
\begin{itemize}
\item \emph{Objects.} Cofinite semisimple finitely decomposable left $\mathcal{T}$-module categories. 
\item \emph{1-morphisms.} Unitary $\mathcal{T}$-module functors. 
\item \emph{2-morphisms.} Morphisms of $\mathcal{T}$-module functors. 
\end{itemize}
\end{definition}
\noindent 
It is straightforward to show that $\Mod(\mathcal{T})$ is an additive $\mathbb{C}$-linear dagger 2-category in the sense of Section~\ref{sec:linearstructure}. Semisimplicity and rigidity will follow from Proposition~\ref{prop:bimodsemisimple} and Theorem~\ref{thm:eilenbergwatts}.

\subsection{Equivalence of the completions}\label{sec:eilwatts}

We now observe that the 2-categories we have defined are equivalent. 
\begin{definition}
We define a unitary $\mathbb{C}$-linear 2-functor  $\Psi: \Bimod(\mathcal{T}) \to \Mod(\mathcal{T})$ as follows:
\begin{itemize}
\item \emph{On objects}: The SSFA $A$ is mapped to its category of right dagger bimodules $\Mod$-$A$, considered as a left $\mathcal{T}$-module category under the following action:
\begin{align*}
U \tilde{\otimes} X_A &:= U \otimes X_A
\\
f \tilde{\otimes} g &:= f \otimes g
\end{align*}
\item \emph{On 1-morphisms}: A dagger bimodule ${}_A M_{B}$ is mapped to the unitary $\mathcal{T}$-module functor $\Mod$-$A \to \Mod$-$B$ given by relative tensor product, i.e. 
\begin{align*}
X_A &\mapsto X \otimes_A M_B
\\
(f: X_A \to Y_A) &\mapsto f \otimes_A \id_{{}_A M_B}
\end{align*}
The unitary natural isomorphism $\{c_{U,X}\}$ is defined using the isometries of the relative tensor product~\eqref{eq:moritaidempotentsplit}.
\item \emph{On 2-morphisms}: A bimodule homomorphism $f: {}_A M_{B} \to {}_A N_B$ is mapped to a natural isomorphism of module functors whose components are as follows: $$(\id_X \otimes_A f): X \otimes_A  M_{B} \to X \otimes_A N_{B}$$
\item \emph{Multiplicator and unitor}: Defined using the associator and right unitor of $\Bimod(\mathcal{T})$.
\end{itemize}
\end{definition}
\begin{remark}
We leave to the reader the straightforward checks that $\Psi$ is indeed a well-defined unitary $\mathbb{C}$-linear 2-functor. It is necessary to show in particular that the left $\mathcal{T}$-module category $\Mod$-$A$ is semisimple and cofinite. Semisimplicity follows from local semisimplicity of $\Bimod(\mathcal{T})$ (Proposition~\ref{prop:bimodsemisimple}). For cofiniteness observe that rigidity of $\Bimod(\mathcal{T})$ implies a linear isomorphism between the vector spaces $\Hom_{\textrm{Mod-A}}(X_A,U_i \otimes Y_A)$ and $\Hom_{\mathcal{T}}(X \otimes_A Y^*,U_i)$; cofiniteness follows by semisimplicity of $\mathcal{T}$. (This was already observed in the proof of~\cite[Thm. 3.2]{Neshveyev2018}.)
\end{remark}
\begin{theorem}\label{thm:eilenbergwatts}
The 2-functor $\Psi$ is an equivalence.
\end{theorem}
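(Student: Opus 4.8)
The plan is to deduce the theorem from the bicategorical Whitehead theorem (see e.g.~\cite{Johnson2021}): a $\mathbb{C}$-linear unitary 2-functor is an equivalence precisely when it is essentially surjective on objects and \emph{locally an equivalence}, meaning that for every pair of $\F$s $A,B$ the functor
$\Psi_{A,B} : \Hom_{\Bimod(\mathcal{T})}(A,B) \to \Hom_{\Mod(\mathcal{T})}(\Psi(A), \Psi(B))$
is an equivalence of $C^*$-categories. I would therefore organise the argument into four parts: local faithfulness, local fullness, local essential surjectivity on 1-morphisms (these three establish local equivalence), and global essential surjectivity on objects.

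Local faithfulness and fullness both rest on the observation that a $\mathcal{T}$-module natural transformation between the functors $-\otimes_A M$ and $-\otimes_A N$ is completely determined by its single component at the regular module $A_A$. For faithfulness, given a bimodule homomorphism $f : {}_AM_B \to {}_AN_B$, the component $\Psi(f)_{A_A} = \id_A \otimes_A f$ is identified with $f$ under the unitor $A \otimes_A M \cong M$, so $\Psi(f)=0$ forces $f=0$. For fullness, given a module natural transformation $\eta : \Psi({}_AM_B) \Rightarrow \Psi({}_AN_B)$, I set $f := \eta_{A_A}$, transported along the unitors to a map $M \to N$. Right $B$-linearity is automatic since $\eta_{A_A}$ is a morphism in $\Mod$-$B$; left $A$-linearity follows by applying naturality of $\eta$ to the multiplication $m : A \tilde{\otimes} A_A \to A_A$, regarded as a morphism of right $A$-modules (note that $A \tilde{\otimes} A_A = A\otimes A$ as a right module, by the definition of the $\mathcal{T}$-action on $\Mod$-$A$), together with the module-functor coherence isomorphism $c$ relating $\eta_{A \tilde{\otimes} A_A}$ and $\id_A \tilde{\otimes}\, \eta_{A_A}$. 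This yields a bimodule homomorphism $f$ with $\Psi(f)=\eta$.

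For local essential surjectivity on 1-morphisms, I would start from a unitary $\mathcal{T}$-module functor $F : \Mod$-$A \to \Mod$-$B$ and reconstruct a bimodule. Put $M := F(A_A)$, a right $B$-module. The coherence isomorphism $c$ turns $F(m) : F(A \tilde{\otimes} A_A) \to F(A_A)$ into a left action $A \otimes M \to M$, and the algebra axioms of $A$ together with the coherence of $c$ make $M$ an $A$-$B$-bimodule ${}_AM_B$. It then remains to produce a unitary $\mathcal{T}$-module natural isomorphism $F \cong -\otimes_A M$. Here I use that, because $A$ is a separable $\F$, every right $A$-module is a unitary retract \emph{in} $\Mod$-$A$ of a free module $U \tilde{\otimes} A_A = U \otimes A$ (the action splits its own adjoint, by the module analogue of separability~\eqref{eq:frobseparable}). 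On free modules both $F$ and $-\otimes_A M$ are canonically isomorphic to $U \otimes M$ via $c$ and the unitors; dagger-functoriality extends the isomorphism along the splitting idempotents, and naturality together with independence of the chosen retract follows from naturality of $c$.

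For essential surjectivity on objects, let $\mathcal{M}$ be a cofinite semisimple finitely decomposable left $\mathcal{T}$-module category. Decomposing $\mathcal{M}$ into a finite direct sum of indecomposables reduces to the indecomposable case, since direct sums of $\F$s are sent to direct sums of module categories. For indecomposable $\mathcal{M}$ I pick a nonzero simple object $X_0$; indecomposability guarantees that $X_0$ is a \emph{generator}, i.e. every simple object occurs in some $U_i \tilde{\otimes} X_0$. I form the internal endomorphism object $A := \underline{\End}(X_0)$ representing $U \mapsto \Hom_{\mathcal{M}}(U \tilde{\otimes} X_0, X_0)$; cofiniteness and semisimplicity ensure this is a genuine finite-direct-sum object of $\mathcal{T}$, composition equips it with an algebra structure, and the canonical normalisation coming from the $C^*$-traces makes it a standard separable Frobenius algebra. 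The internal-hom functor $X \mapsto \underline{\Hom}(X_0, X)$ then gives a unitary $\mathcal{T}$-module equivalence $\mathcal{M} \xrightarrow{\sim} \Mod$-$A = \Psi(A)$. I expect this last step to be the main obstacle: constructing $\underline{\End}(X_0)$ in the $C^*$-setting, checking that it is \emph{standard} and \emph{separable} rather than merely Frobenius, and verifying that the internal-hom functor is a module equivalence, all while relying only on linearity, the $C^*$-axioms and idempotent splitting rather than abelian-category machinery. With local equivalence and essential surjectivity on objects in hand, the Whitehead theorem concludes that $\Psi$ is an equivalence.
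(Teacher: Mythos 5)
Your proposal is correct in outline but takes a genuinely different route from the paper's. The paper outsources its two hardest ingredients to \cite{Neshveyev2018}: essential surjectivity on objects is exactly Thm.~A.1 there, and the unit case --- that $\Psi_{\mathbbm{1},\mathbbm{1}}\colon \mathcal{T}\to\End_{\mathcal{T}}(\mathcal{T})$ is an equivalence --- is deduced from Thm.~3.2(c). It then bootstraps everything else from the unit case: for local essential surjectivity it sandwiches a module functor $F\colon \Mod$-$A\to\Mod$-$B$ as $\Psi({}_{\mathbbm{1}}A_A)\otimes F\otimes\Psi({}_BB_{\mathbbm{1}})\in\End_{\mathcal{T}}(\mathcal{T})$, pulls this back to an object $O_F$ of $\mathcal{T}$, equips $O_F$ with an $A$-$B$ dagger bimodule structure, and shows that $F$ and $\Psi({}_A(O_F)_B)$ split one and the same dagger idempotent; fullness on 2-morphisms is likewise routed through the equivalence $\Psi_{\mathbbm{1},\mathbbm{1}}$. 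You instead run the classical Eilenberg--Watts argument uniformly in $A,B$: evaluate at the regular module, give $F(A_A)$ a bimodule structure via $F(m)$ and the coherence $c$, and extend the comparison isomorphism from free modules to all modules using the fact that the dagger-module axiom plus separability makes every dagger module a unitary retract of a free one ($\rho\circ\rho^{\dagger}=\id$). That mechanism is sound --- it is the same fact that powers the paper's idempotent-splitting step --- and it buys self-containedness: in particular the unit case becomes the elementary observation $F\cong -\otimes F(\mathbbm{1})$ via $c_{U,\mathbbm{1}}$, so no citation is needed there, and your fullness/faithfulness arguments via the component at $A_A$ are more direct than the paper's manipulations through $\Psi_{\mathbbm{1},\mathbbm{1}}$.

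The asymmetry is in essential surjectivity on objects. The paper simply cites \cite[Thm.~A.1]{Neshveyev2018}; you propose to re-prove it via the internal endomorphism algebra $A:=\IHom(X_0,X_0)$ of a simple generator $X_0$. Your outline (indecomposability gives a generator, cofiniteness and semisimplicity give existence of the internal hom as a finite direct sum, composition gives the algebra structure) is the standard route, but the steps you yourself flag as the main obstacle --- that the canonical normalisation makes $\IHom(X_0,X_0)$ \emph{standard} and \emph{separable}, and that $\IHom(X_0,-)$ is a unitary $\mathcal{T}$-module equivalence onto $\Mod$-$A$ --- are precisely the content of the cited theorem, not routine verifications. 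As written, this half of your argument is a plan rather than a proof: either execute those checks in the $C^*$-setting, or do what the paper does and cite the result, in which case the rest of your proposal goes through.
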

\begin{proof}
This result is well-known in the case of a fusion category $\mathcal{T}$ (where there are finitely many simple objects) and a proof was sketched in~\cite[Ex. 3.39]{Chen2021}. The proof in the general case is very similar. For the reader's convenience we provide the proof in full in Appendix~\ref{app:eilenbergwatts}.
\end{proof}
\begin{corollary}
The category $\Mod(\mathcal{T})$ is a semisimple $C^*$-2-category.
\end{corollary}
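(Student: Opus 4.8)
The plan is to deduce everything from the equivalence $\Psi$ of Theorem~\ref{thm:eilenbergwatts} together with the semisimplicity of $\Bimod(\mathcal{T})$ established in Proposition~\ref{prop:bimodsemisimple}. Since $\Psi$ is a unitary $\mathbb{C}$-linear $2$-functor and an equivalence, it induces, for each pair of objects $A,B$, a unitary $\mathbb{C}$-linear equivalence of the Hom dagger categories $\Bimod(\mathcal{T})(A,B) \simeq \Mod(\mathcal{T})(\Psi A, \Psi B)$, and (being essentially surjective) every Hom-category of $\Mod(\mathcal{T})$ arises this way up to equivalence. The strategy is then to observe that every clause in the definition of a semisimple $C^*$-$2$-category is invariant under such an equivalence, so that the structure transports from $\Bimod(\mathcal{T})$ to $\Mod(\mathcal{T})$.

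First I would settle the analytic structure. We already know $\Mod(\mathcal{T})$ is an additive $\mathbb{C}$-linear dagger $2$-category, so it remains only to produce the $C^*$-norms. Since $\Psi$ is full and faithful on $2$-morphisms and unitary, it restricts to $*$-isomorphisms of endomorphism algebras $\End_{\Bimod(\mathcal{T})}(M) \cong \End_{\Mod(\mathcal{T})}(\Psi M)$; because the former is a finite-dimensional $C^*$-algebra and a finite-dimensional $*$-algebra carries at most one $C^*$-norm, the target inherits a canonical $C^*$-norm, and essential surjectivity of $\Psi$ together with the formula $\|f\| := \|f^{\dagger} \circ f\|^{1/2}$ extends this to all Hom-spaces. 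The remaining presemisimplicity data transfer formally: local additivity and local semisimplicity pass across the equivalences of Hom-categories; rigidity follows because a unitary $\mathbb{C}$-linear equivalence carries a right dual $[M^*,\eta,\epsilon]$ of $M$ to a right dual of $\Psi M$, and essential surjectivity supplies duals for every $1$-morphism; and since $\Psi$ preserves direct sums and sends simple objects to simple objects (full faithfulness makes $\End(\Psi r) \cong \End(r)$), every object of $\Mod(\mathcal{T})$ is a finite direct sum of simple objects.

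The only clause requiring genuine attention is the splitting of every $\F$, since this is the least standard part of the definition and is Morita-theoretic rather than purely $2$-categorical. Given an $\F$ $A'$ in $\End_{\Mod(\mathcal{T})}(\mathcal{M})$, I would use essential surjectivity to write $\mathcal{M} \simeq \Psi A$ and transport $A'$ back along the unitary tensor equivalence $\End_{\Bimod(\mathcal{T})}(A) \simeq \End_{\Mod(\mathcal{T})}(\Psi A)$ to an $\F$ in $\End_{\Bimod(\mathcal{T})}(A)$; by Proposition~\ref{prop:bimodsemisimple} this splits as a pair-of-pants algebra $X \otimes X^*$ for some separable $1$-morphism $X$, and applying $\Psi$ returns a splitting in $\Mod(\mathcal{T})$. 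The point to verify here is that $\Psi$ genuinely transports the splitting structure: because $\Psi$ preserves duals and is unitary, it preserves $\dim_L$ (Definition~\ref{def:separable1morph}) and hence separability, and it carries the pair-of-pants multiplication and unit of Proposition~\ref{prop:pairofpants} to those of the image, so the transported splitting is again a splitting in the required sense. This preservation of separability and of the pair-of-pants construction is the main thing to check carefully; once it is in hand, the corollary follows immediately.
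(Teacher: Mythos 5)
Your proposal is correct and follows essentially the same route as the paper, which states this corollary without further proof precisely because it regards semisimplicity (and the $C^*$-structure) as transporting from $\Bimod(\mathcal{T})$ (Proposition~\ref{prop:bimodsemisimple}) across the equivalence $\Psi$ of Theorem~\ref{thm:eilenbergwatts}. The details you flag for verification (uniqueness of $C^*$-norms on finite-dimensional $*$-algebras, preservation of duals, $\dim_L$, separability, and the pair-of-pants structure under a unitary $\mathbb{C}$-linear equivalence) are exactly the checks the paper leaves implicit.
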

\noindent
It also follows that $\mathcal{T}$ embeds in $\Mod(\mathcal{T})$ as an endomorphism category, just as for $\Bimod(\mathcal{T})$ (Proposition~\ref{prop:bimodembeds}).
\begin{corollary}\label{cor:embedtmodt}
Let $\End_{\mathcal{T}}(\mathcal{T})$ be the category of endomorphisms of the $\mathcal{T}$-module category $\mathcal{T}$. There is an equivalence of $C^*$-tensor categories $\mathcal{T} \to \End_{\mathcal{T}}(\mathcal{T})$ defined by composing the equivalence $F$ of Proposition~\ref{prop:bimodembeds} with the equivalence $\Psi_{\mathbbm{1},\mathbbm{1}}: \mathbbm{1}$-$\Mod$-$\mathbbm{1} \to \End_{\mathcal{T}}(\mathcal{T})$.
\end{corollary}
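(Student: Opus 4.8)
The plan is to realise the desired functor as a composite of two equivalences of $C^*$-tensor categories and to show that each factor is such an equivalence. The first factor is the unitary isomorphism $F \colon \mathcal{T} \xrightarrow{\sim} \End_{\Bimod(\mathcal{T})}(\mathbbm{1})$ of Proposition~\ref{prop:bimodembeds}, and the second is the Hom-category component $\Psi_{\mathbbm{1},\mathbbm{1}}$ of the 2-functor $\Psi$. Since $\End_{\Bimod(\mathcal{T})}(\mathbbm{1}) = \mathbbm{1}$-$\Mod$-$\mathbbm{1}$ by the very definition of $\Bimod(\mathcal{T})$, and since the composite of equivalences of $C^*$-tensor categories is again one, it suffices to prove that $\Psi_{\mathbbm{1},\mathbbm{1}} \colon \mathbbm{1}$-$\Mod$-$\mathbbm{1} \to \End_{\mathcal{T}}(\mathcal{T})$ is an equivalence of $C^*$-tensor categories.

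First I would pin down the target. By definition $\Psi(\mathbbm{1}) = \Mod$-$\mathbbm{1}$; a right dagger $\mathbbm{1}$-module is just an object of $\mathcal{T}$ carrying its canonical (unitor) action, so $\Mod$-$\mathbbm{1}$ is nothing but the regular left $\mathcal{T}$-module category $\mathcal{T}$, whence $\End_{\Mod(\mathcal{T})}(\Psi(\mathbbm{1})) = \End_{\mathcal{T}}(\mathcal{T})$. This justifies the type of $\Psi_{\mathbbm{1},\mathbbm{1}}$ as stated. Next I would check that $\Psi_{\mathbbm{1},\mathbbm{1}}$ is a unitary monoidal functor. The clean way is to invoke the remark of Section~\ref{sec:psfct}: a 2-functor between one-object 2-categories is precisely a monoidal functor. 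Restricting $\Psi$ to the one-object sub-2-category of $\Bimod(\mathcal{T})$ on the object $\mathbbm{1}$ therefore yields a monoidal functor on the endomorphism category $\End(\mathbbm{1})$, whose multiplicator and unitor are exactly the compositor and unit constraint of $\Psi$; here the monoidal product is horizontal composition, i.e. the relative tensor product $\otimes_{\mathbbm{1}}$ on the source and composition of module endofunctors on the target. Because $\Psi$ is a \emph{unitary} 2-functor, these structure 2-morphisms are unitary, so $\Psi_{\mathbbm{1},\mathbbm{1}}$ is a unitary monoidal functor.

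Finally, since $\Psi$ is an equivalence of 2-categories (Theorem~\ref{thm:eilenbergwatts}), it is in particular a local equivalence, so every component functor $\Psi_{A,B}$ — and in particular $\Psi_{\mathbbm{1},\mathbbm{1}}$ — is fully faithful and essentially surjective, hence an equivalence of the underlying $C^*$-categories. A unitary monoidal functor whose underlying functor is an equivalence is an equivalence of $C^*$-tensor categories, so $\Psi_{\mathbbm{1},\mathbbm{1}}$ is one; composing with $F$ gives the claimed equivalence $\mathcal{T} \to \End_{\mathcal{T}}(\mathcal{T})$. I do not expect a genuine obstacle, as the statement is essentially bookkeeping on top of the two main results; the only point requiring care is the general principle that a unitary 2-equivalence restricts to a unitary monoidal equivalence on endomorphism categories, and this is handled transparently by the one-object-2-category description of monoidal functors together with local equivalence.
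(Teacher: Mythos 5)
Your proposal is correct and matches the paper's intended argument: the paper states this as an immediate corollary of Proposition~\ref{prop:bimodembeds} and Theorem~\ref{thm:eilenbergwatts}, relying exactly on the facts you spell out — that a 2-equivalence is in particular a local equivalence, and that the multiplicator/unitor of the unitary 2-functor $\Psi$ endow $\Psi_{\mathbbm{1},\mathbbm{1}}$ with the structure of a unitary monoidal functor (the one-object/delooping observation from Section~\ref{sec:psfct}). Your write-up simply makes explicit the bookkeeping (identification of $\Mod$-$\mathbbm{1}$ with the regular $\mathcal{T}$-module category, and monoidal-equivalence-from-underlying-equivalence) that the paper leaves implicit.
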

\noindent
Finally, we observe that this result allows us to characterise 
semisimple $C^*$-2-categories in general. 
\begin{definition}
We say that a semisimple $C^*$-2-category is \emph{connected} if the Hom-category between any pair of nonzero objects is not the terminal category.
\end{definition}
\begin{proposition}[{\cite[Lem. 2.2.3]{Giorgetti2020}}]\label{prop:reconstruction}
For any object $r$ of a connected semisimple $C^*$-2-category $\mathcal{C}$, there is an equivalence $ \mathcal{C} \simeq \Bimod(\End(r))$.
\end{proposition}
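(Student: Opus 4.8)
The plan is to build a $\mathbb{C}$-linear unitary 2-functor $\Phi \colon \Bimod(\End(r)) \to \mathcal{C}$ and prove it is an equivalence of 2-categories; throughout write $\mathcal{T} := \End(r)$, which is a rigid $C^*$-tensor category in the sense of this paper (its unit $\id_r$ need not be simple, but that is permitted here). We may assume $r$ is nonzero, the zero case being trivial. The 2-functor realises $\mathcal{C}$ as the higher idempotent completion of $\mathcal{T}$ encoded by $\Bimod(\mathcal{T})$: restricted to the endomorphisms of the trivial $\F$ it will recover the canonical identification $\End_{\Bimod(\mathcal{T})}(\mathbbm{1}) \simeq \mathcal{T} = \End_{\mathcal{C}}(r)$ of Proposition~\ref{prop:bimodembeds}.

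First I would define $\Phi$ using semisimplicity of $\mathcal{C}$. For each $\F$ $A$ in $\mathcal{T}$, semisimplicity provides an object $s_A$ and a separable 1-morphism $X_A \colon r \to s_A$ with $A$ unitarily $*$-isomorphic to the pair of pants $X_A \otimes X_A^*$; I would fix such a datum for every $A$, choosing it \emph{dominant} (so $X_A$ is nonzero on every simple summand of $s_A$, which is always arrangeable by discarding the summands it misses) and taking $(s_{\mathbbm{1}}, X_{\mathbbm{1}}) = (r, \id_r)$. Set $\Phi(A) := s_A$. An $A$-$B$-dagger bimodule ${}_A M_B$ is sent to the 1-morphism $\hat{M} \colon s_A \to s_B$ obtained by splitting the canonical dagger idempotent on $X_A^* \otimes M \otimes X_B$ determined by the two actions (this splitting exists by local idempotent completeness of $\mathcal{C}$), and a bimodule homomorphism $f$ to the evident induced 2-morphism built from $\id_{X_A^*} \otimes f \otimes \id_{X_B}$ and the splitting isometries. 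The multiplicator $\widehat{M \otimes_B N} \cong \hat{M} \otimes \hat{N}$ and the unitor come from comparing the two idempotent splittings using the snake equations and separability of $B$; verifying that $\Phi$ is a genuine unitary 2-functor is routine diagrammatics of the kind used throughout Section~\ref{sec:bimod}.

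Next I would prove that $\Phi$ is an equivalence on each Hom-category, i.e. that $M \mapsto \hat{M}$ is an equivalence $A\text{-}\Mod\text{-}B \xrightarrow{\sim} \Hom_{\mathcal{C}}(s_A, s_B)$. The candidate inverse sends a 1-morphism $Z \colon s_A \to s_B$ to the object $X_A \otimes Z \otimes X_B^* \in \mathcal{T}$, equipped with the left $A = X_A \otimes X_A^*$ and right $B = X_B \otimes X_B^*$ actions given by composition. Constructing the two natural unitary isomorphisms witnessing that these functors are mutually inverse is the Morita-theoretic heart of the argument and the step I expect to be the main obstacle: one round trip uses separability of $A$ and $B$ with the snake equations to collapse the doubled splittings, while the other crucially uses \emph{dominance} of $X_A$ and $X_B$ (guaranteeing that the relevant idempotent on $X_A^* \otimes X_A$ recovers $\id_{s_A}$, and likewise for $B$), so that no summand of $s_A$ or $s_B$ stays invisible to the bimodule picture. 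A consequence worth recording is that dominant splittings of a fixed $\F$ agree up to equivalence, since $\End_{\mathcal{C}}(s_A) \simeq \End_{\Bimod(\mathcal{T})}(A)$ independently of the choice.

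Finally I would establish essential surjectivity on objects using connectedness. Given any object $s$, decompose it as a finite direct sum of simple objects $s = \boxplus_i s_i$; connectedness supplies a nonzero 1-morphism $X_i \colon r \to s_i$ for each $i$, so $X := \boxplus_i X_i \colon r \to s$ is dominant. By the matrix criterion following Definition~\ref{def:separable1morph}, $\dim_L(X)$ is invertible because the associated 1-morphism matrix has no zero columns, whence $X$ is separable. Thus $A := X \otimes X^*$ is an $\F$ for which $(s, X)$ is a dominant splitting, and by the uniqueness noted above $\Phi(A) = s_A \simeq s$. Combining the local equivalence with essential surjectivity shows that $\Phi$ is an equivalence of 2-categories, and hence $\mathcal{C} \simeq \Bimod(\End(r))$.
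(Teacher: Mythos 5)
Your construction runs in the opposite direction to the paper's, with the roles of the two hypotheses swapped: you build $\Phi \colon \Bimod(\End(r)) \to \mathcal{C}$, using semisimplicity of $\mathcal{C}$ to define $\Phi$ on objects (a chosen splitting $(s_A, X_A)$ for each $\F$ $A$) and connectedness for essential surjectivity, whereas the paper builds $\Delta \colon \mathcal{C} \to \Bimod(\End(r))$, using connectedness to define $\Delta$ on objects (a chosen separable $P_s \colon r \to s$ for each nonzero $s$) and semisimplicity for essential surjectivity. Your ``candidate inverse'' $Z \mapsto X_A \otimes Z \otimes X_B^*$ is exactly the paper's $\Delta$ on Hom-categories; where the paper cites \cite[Lem. 2.2.3]{Giorgetti2020} for the local equivalence, you propose to prove it directly, which is a reasonable trade. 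One remark: your care about ``dominance'' is vacuous, since by the matrix criterion following Definition~\ref{def:separable1morph} a separable 1-morphism already hits every simple summand of its target (invertibility of $\dim_L(X)$ \emph{is} the no-zero-columns condition), so separability alone drives both round trips of your local equivalence.

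The genuine gap is your justification that two splittings of the same $\F$ have equivalent targets: ``dominant splittings of a fixed $\F$ agree up to equivalence, since $\End_{\mathcal{C}}(s_A) \simeq \End_{\Bimod(\mathcal{T})}(A)$ independently of the choice.'' An equivalence of endomorphism categories does not yield an equivalence of the underlying objects; nonequivalent objects of a semisimple $C^*$-2-category can have equivalent endomorphism categories (for $\mathcal{T} = \Rep(G)$ with $G$ finite abelian, the objects $\Rep(G)$ and $\Hilb$ of $\Mod(\mathcal{T})$ are inequivalent module categories whose endomorphism categories are both equivalent to $\Rep(G)$). Yet your essential-surjectivity step rests entirely on this claim: given $s$ and your connectedness-built separable $X \colon r \to s$, you need $s \simeq s_A$ for $A = X \otimes X^*$, i.e. that the given splitting and the chosen one have equivalent targets. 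This is precisely the point where the paper invokes \cite[Thm. A.1]{Musto2019}, which states that two 1-morphisms $r \to s$, $r \to t$ produce Morita equivalent (in particular, unitarily $*$-isomorphic) pair of pants algebras in $\End(r)$ if and only if $s \simeq t$. To repair your argument, either cite that theorem, or upgrade your local-equivalence machinery to mixed splittings: run the Hom-category equivalence with $X_A$ on one side and $X$ on the other, verify compatibility with horizontal composition (your multiplicator argument), and chase the identity bimodule ${}_A A_A$ through $\Hom_{\mathcal{C}}(s_A,s)$ and $\Hom_{\mathcal{C}}(s,s_A)$ to obtain 1-morphisms $E$, $F$ with $E \otimes F \cong \id_{s_A}$ and $F \otimes E \cong \id_s$. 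The statement you need is true and your architecture survives, but as written this step is a non sequitur rather than a proof.
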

\begin{proof}
The equivalence $\Delta: \mathcal{C} \overset{\sim}{\to} \Bimod(\End(r))$ is defined as follows.
\begin{itemize}
\item \emph{On objects.} For every nonzero object $s$ of $\mathcal{C}$, pick a separable 1-morphism $P_s: r \to s$ in $\mathcal{C}$. Then define $\Delta(s) := P_s \otimes (P_s)^*$, where $P_s \otimes (P_s)^*$ is the pair of pants $\F$ in $\End(r)$ corresponding to $P_s$.
\item \emph{On 1-morphisms.} For every 1-morphism $X: s \to t$ in $\mathcal{C}$, define $\Delta(X) := P_s \otimes X \otimes (P_t)^*$, which is an $\Delta(s)$-$\Delta(t)$ dagger bimodule with the following action (here and throughout the proof we leave regions corresponding to the object $r$ unshaded, we shade regions corresponding to the object $s$ with wavy lines, and we shade regions corresponding to the object $t$ with polka dots):
\begin{calign}
\includegraphics[scale=.9]{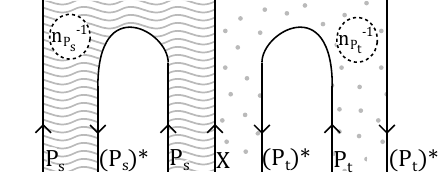}
\end{calign}
\item \emph{On 2-morphisms.} For every 2-morphism $f: X \to Y$ in $\mathcal{C}$, we define $\Delta(f):= \id_{P_s} \otimes f \otimes \id_{(P_t)^*}$.
\item \emph{Multiplicator.} Let $X: s \to t$, $Y: t \to u$ be 1-morphisms in $\mathcal{C}$. Then we define the multiplicator component $\mu_{X,Y}: \Delta(X) \otimes \Delta(Y) \to \Delta(X \otimes Y)$ as the following bimodule homomorphism (we shade regions corresponding to the object $u$ with a checkerboard effect):
\begin{calign}
\includegraphics[scale=.9]{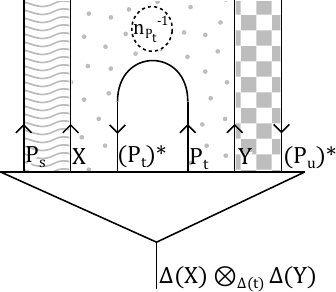}
\end{calign}
\item \emph{Unitor.} Trivial (up to unitors/associators in $\mathcal{C}$). 
\end{itemize}
It is straightforward to check that $\Delta$ is a $\mathbb{C}$-linear unitary 2-functor. It was shown in~\cite[Lem. 2.2.3]{Giorgetti2020} that it is furthermore a local equivalence. The only additional thing we must prove is essential surjectivity on objects. For any $\F$ $A$ in $\End(r)$, we must show that there exists an object $s$ of $\mathcal{C}$ such that $\Delta(s)$ is Morita equivalent to $A$.  Since $\mathcal{C}$ is semisimple, there certainly exists an object $s$ and a 1-morphism $X: r \to s$ such that $A \cong (X \otimes X^*)$. Now $\Delta(s) = P_s \otimes (P_s)^*$, where $P_s: r \to s$ is the 1-morphism chosen in the definition of the pseudofunctor $\Delta$. We claim that $X \otimes X^*$ is Morita equivalent to $P_s \otimes (P_s)^*$. This follows from~\cite[Thm. A.1]{Musto2019}, which implies that two 1-morphisms $r \to s, r \to t$ produce Morita equivalent $\F$s in $\End(r)$ iff $s$ and $t$ are equivalent objects in $\mathcal{C}$.\footnote{To be precise, the cited theorem classifies morphisms \emph{into} $r$ rather than morphisms out of $r$. However, the proof works equally well for morphisms out of $r$; just read the diagrams from left to right.}
\end{proof}
\begin{corollary}\label{cor:modtsemisimpleetc.}
Every connected semisimple $C^*$-2-category is equivalent to $\Mod(\mathcal{T})$ for some rigid $C^*$-tensor category $\mathcal{T}$.
\end{corollary}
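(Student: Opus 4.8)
The plan is to obtain the corollary as an immediate consequence of the two main structural results established above: the reconstruction theorem for connected semisimple $C^*$-2-categories (Proposition~\ref{prop:reconstruction}) and the equivalence $\Bimod(\mathcal{T}) \simeq \Mod(\mathcal{T})$ (Theorem~\ref{thm:eilenbergwatts}). No new computation is required; the content lies entirely in chaining the two equivalences after an appropriate choice of $\mathcal{T}$.

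Concretely, I would first fix a nonzero object $r$ of $\mathcal{C}$. (If $\mathcal{C}$ has no nonzero object it is the zero 2-category, a degenerate case one sets aside.) I then set $\mathcal{T} := \End(r)$. By the conventions of Section~\ref{sec:linearstructure}---where a rigid $C^*$-tensor category is required to be semisimple but is \emph{not} required to have a one-dimensional unit endomorphism algebra---the category $\End(r)$ is a bona fide rigid $C^*$-tensor category for any object $r$, and hence a valid input to both the $\Bimod(-)$ and $\Mod(-)$ constructions. By presemisimplicity one could even take $r$ simple, but this is not needed with the relaxed definition in force.

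With this $\mathcal{T}$ in hand, I would invoke Proposition~\ref{prop:reconstruction}, whose sole hypothesis---connectedness of $\mathcal{C}$---is precisely what is assumed in the statement, to produce an equivalence $\mathcal{C} \simeq \Bimod(\End(r)) = \Bimod(\mathcal{T})$. Theorem~\ref{thm:eilenbergwatts} then supplies the equivalence $\Psi \colon \Bimod(\mathcal{T}) \simeq \Mod(\mathcal{T})$, and composing the two yields $\mathcal{C} \simeq \Mod(\mathcal{T})$, exactly as asserted. There is essentially no technical obstacle remaining, since both cited results are already proved; the only points meriting a line of verification are that $\End(r)$ genuinely meets our (deliberately relaxed) definition of a rigid $C^*$-tensor category for an arbitrary nonzero $r$, and that the composite of two $\mathbb{C}$-linear unitary equivalences of $C^*$-2-categories is again such an equivalence, which is routine. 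If one prefers the conclusion with $\mathcal{T}$ having a simple unit, the only modification is to take $r$ simple.
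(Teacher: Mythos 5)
Your proof is correct and is essentially the paper's own argument: the corollary is obtained by setting $\mathcal{T} := \End(r)$ for a nonzero object $r$, invoking Proposition~\ref{prop:reconstruction} to get $\mathcal{C} \simeq \Bimod(\mathcal{T})$, and composing with the equivalence $\Bimod(\mathcal{T}) \simeq \Mod(\mathcal{T})$ of Theorem~\ref{thm:eilenbergwatts}. Your side remark that the paper's conventions for rigid $C^*$-tensor categories (semisimple, but with no requirement that the unit have one-dimensional endomorphism algebra) make $\End(r)$ an admissible choice even for non-simple $r$ is exactly the point that makes this work.
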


\section{A covariant Stinespring theorem}\label{sec:stinespring}

In the last section we defined two equivalent semisimple 2-categories in which a rigid $C^*$-tensor category $\mathcal{T}$ embeds as the endomorphisms of a fixed object. We will now apply this to the study of finite-dimensional $G$-$C^*$-algebras and covariant completely positive maps.

\subsection{A classification of finite-dimensional $G$-$C^*$-algebras}\label{sec:gc*alg}

We will first briefly recall how finite-dimensional $G$-$C^*$-algebras (a.k.a. $C^*$-dynamical systems) for a compact quantum group $G$ may be identified with $\F$s in the category $\Rep(G)$ of finite-dimensional continuous unitary representations of $G$. This characterisation already appeared in~\cite{Neshveyev2018}; other relevant works include~\cite{Vicary2011,Bischoff2015,Banica1999}. See~\cite[\S{}3]{Verdon2020b} for a more thorough summary.

We consider first of all the familiar notion of a finite-dimensional $G$-$C^*$-algebra, or $C^*$-dynamical system, for an ordinary compact group $G$. Let $A$ be a finite-dimensional $C^*$-algebra. An \emph{action} of a compact group $G$ on $A$ is a continuous homomorphism $\tau: G \to \Aut(A)$, where $\Aut(A)$ is the group of $*$-automorphisms of $A$. For any such action there is a canonical invariant trace $\phi: A \to \mathbb{C}$ which is preserved under the $G$-action in the sense that $\phi(\tau(g)(x)) = \phi(x)$ for all $x \in A$. 

The canonical invariant trace $\phi$ induces an inner product $\braket{x|y} := \phi(x^*y)$ on the finite-dimensional complex vector space underlying the $C^*$-algebra $A$, which thus acquires the structure of a Hilbert space. It is not hard to show (see e.g.~\cite[\S{}3.1]{Verdon2020b}) that the $C^*$-algebra structure of $A$ further induces the structure of an $\F$ on the Hilbert space $A$; the multiplication and unit of the $\F$ are precisely the multiplication and unit of the $C^*$-algebra, and the counit of the $\F$ is the trace $\phi$. Since $\phi$ is preserved under the action $\tau$ of $G$, this action induces a continuous unitary representation of $G$ on the Hilbert space $A$ such that the structure morphisms of the $\F$ --- the multiplication $m: A \otimes A \to A$ and the unit $u: \mathbb{C} \to A$ --- are intertwiners. From a $G$-$C^*$-algebra for a compact group $G$ we have therefore constructed an $\F$ in the rigid $C^*$-tensor category $\Rep(G)$ of finite-dimensional continuous unitary representations of $G$. In the other direction, every $\F$ in $\Rep(G)$ has a natural involution such that the resulting $*$-algebra is a $G$-$C^*$-algebra. These constructions are inverse and set up a bijective correspondence between $*$-isomorphism classes of $G$-$C^*$-algebras and unitary $*$-isomorphism classes of $\F$s in $\Rep(G)$. An $\F$ in $\Rep(G)$ is therefore a $G$-$C^*$-algebra equipped with its canonical invariant trace. 

We extend the notion of symmetry by generalising from representation categories of compact groups to rigid $C^*$-tensor categories $\mathcal{T}$ with simple unit object equipped with a faithful unitary linear functor $F: \mathcal{T} \to \Hilb$, called a \emph{fibre functor}. By Tannaka-Krein-Woronowicz (T-K-W) duality~\cite[Thm. 2.3.2]{Neshveyev2013}, these are precisely the categories $\Rep(G)$ of finite-dimensional continuous unitary representations of \emph{compact quantum groups} $G$, equipped with their canonical fibre functor.\footnote{For a very elementary algebraic perspective on the definition of a compact quantum group and its finite-dimensional unitary representation theory, see~\cite[\S{}2.1.4]{Verdon2020b} (which borrows heavily from the more sophisticated presentations in~\cite{Timmerman2008,Neshveyev2013}). Here we will not even need to define a compact quantum group, since we already know what a rigid $C^*$-tensor category with fibre functor is.} We define a finite-dimensional $G$-$C^*$-algebra for a compact quantum group $G$ to be an $\F$ in the rigid $C^*$-tensor category $\Rep(G)$.\footnote{There are other, less abstract definitions of a $G$-$C^*$-algebra, e.g.~\cite{Wang1998}; this definition is equivalent, as was observed in~\cite{Banica1999,Neshveyev2018}. Indeed, given an $\F$ $A$, using the canonical fibre functor $F: \Rep(G) \to \Hilb$ one recovers (by T-K-W duality) a coaction of the Hopf $*$-algebra $A_G$ associated to $G$ on the f.d. $C^*$-algebra $F(A)$~\cite[Prop. 3.2.4]{Verdon2020b}.} To recover a concrete $C^*$-algebra from such an $\F$ $A$ one considers the object $F(A)$, which is a Hilbert space with the structure of a separable Frobenius algebra; this algebra possesses a natural involution with a $C^*$-norm. We remark that, in this more general case, the canonical invariant functional on $A$ (that is, the counit of the $\F$) may not be tracial as a concrete functional on the $G$-$C^*$-algebra $F(A)$; in fact, it is only tracial when $d(A) = \dim(F(A))$~\cite[Thm. 5.3]{Verdon2020b}. This causes no problems, provided one is content to move from completely positive trace-preserving maps to completely positive functional-preserving maps in this more general setting.

Finally, we could generalise still further and consider $\F$s in general rigid $C^*$-tensor categories. In this case there is no obvious way to identify these $\F$s with concrete $C^*$-algebras, since they do not have an associated vector space in general; however, the results we are about to obtain apply at this level of generality.

Before considering channels, we will draw one straightforward consequence of what has already been proven: namely, a classification of finite-dimensional $G$-$C^*$-algebras for a compact quantum group $G$, which amounts to a classification of $\F$s in $\Rep(G)$. In fact, we will work in the most general setting and classify $\F$s in $\mathcal{T}$ for a rigid $C^*$-tensor category $\mathcal{T}$, whether or not a fibre functor exists. This classification is certainly not new (see e.g.~\cite{DeCommer2012,Neshveyev2013a,Neshveyev2018}), although our version seems to offer some additional precision regarding the equivalence relation on objects of a module category corresponding to unitary $*$-isomorphism of the associated $\F$s.

\begin{definition}
We say that $\F$s in $\mathcal{T}$ are \emph{Morita equivalent} if they are equivalent as objects of $\Bimod(\mathcal{T})$.
\end{definition}
\begin{lemma}\label{lem:pairofpants}
Let $A, \tilde{A}$ be $\F$s in $\mathcal{T}$. Following Proposition~\ref{prop:bimodembeds}, we embed $\mathcal{T} \cong \End(\mathbbm{1})$ in $\Bimod(\mathcal{T})$. Then $A$ is Morita equivalent to $\tilde{A}$ if and only if $A$ is unitarily $*$-isomorphic to a pair of pants algebra $M \otimes M^*$ for some separable 1-morphism $M: \mathbbm{1} \to \tilde{A}$ in $\Bimod(\mathcal{T})$.
\end{lemma}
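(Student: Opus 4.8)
The plan is to exhibit the $\F$ $A$ itself as a pair of pants over the object $A$ of $\Bimod(\mathcal{T})$, and then to deduce both implications from the uniqueness-of-splitting result \cite[Thm.\ A.1]{Musto2019} already used in the proof of Proposition~\ref{prop:reconstruction}. The key object is the right dagger $A$-module $M_0 := {}_{\mathbbm{1}}A_A$, that is, the underlying object $A$ with trivial left $\mathbbm{1}$-action and right action given by the multiplication of $A$; this is a $1$-morphism $M_0 : \mathbbm{1} \to A$ in $\Bimod(\mathcal{T})$. First I would verify that $M_0$ is separable and that its pair of pants $\F$ is unitarily $*$-isomorphic to $A$. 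Separability is cleanest via the matrix description of $\dim_L$ in the remark following Definition~\ref{def:separable1morph}: the matrix of $M_0$ has no zero column, since the regular right module $A_A$ is nonzero over every simple summand of $A$, so $\dim_L(M_0)$ is invertible. That the normalised pair of pants of Proposition~\ref{prop:pairofpants} recovers $A$ is then a diagrammatic check: splitting the relative-tensor idempotent identifies $M_0 \otimes M_0^*$ with $A \otimes_A A \cong A$, under which the pair-of-pants multiplication and unit (which contract the middle copies through the Frobenius cap and cup) become exactly the multiplication and unit of $A$, once the factor $n_{M_0}$ is absorbed using separability and standardness of $A$.

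For the forward implication, suppose $A$ is Morita equivalent to $\tilde{A}$ and choose, by the lemma following Definition~\ref{def:daggerequiv}, an adjoint equivalence $E : A \to \tilde{A}$ with unitary cup $\eta_E : \id_{\tilde{A}} \to E^* \otimes E$ and unitary cap $\epsilon_E : E \otimes E^* \to \id_{A}$. I would set $M := M_0 \otimes E : \mathbbm{1} \to \tilde{A}$. Separability of $M$ follows from that of $M_0$: the nested-dual formula (Proposition~\ref{prop:nestedduals}) gives $\dim_L(M) = \eta_E^{\dagger} \circ (\id_{E^*} \otimes \dim_L(M_0) \otimes \id_E) \circ \eta_E$, a conjugate by the unitary $\eta_E$ of a positive invertible element, hence positive invertible. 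The equality $(M_0 \otimes E) \otimes (M_0 \otimes E)^* = M_0 \otimes (E \otimes E^*) \otimes M_0^*$ together with the unitary $\id_{M_0} \otimes \epsilon_E \otimes \id_{M_0^*}$ produces a unitary $M \otimes M^* \cong M_0 \otimes M_0^* \cong A$. This unitary is a $*$-isomorphism of $\F$s: the multiplications and units of both pair-of-pants algebras are built from cups, caps and the normalising factors $n_M, n_{M_0}$, and since $a \mapsto \id_{E^*} \otimes a \otimes \id_E$ is a $*$-homomorphism and $\eta_E, \epsilon_E$ are unitary, the positive square root yields $n_M = \eta_E^{\dagger} \circ (\id_{E^*} \otimes n_{M_0} \otimes \id_E) \circ \eta_E$, so the two normalisations are transported into one another by $\eta_E$. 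Thus $M$ is a separable $1$-morphism $\mathbbm{1} \to \tilde{A}$ with $M \otimes M^* \cong A$, as required.

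For the converse, suppose $A \cong M \otimes M^*$ (unitary $*$-isomorphism) for some separable $M : \mathbbm{1} \to \tilde{A}$. Then the pair of pants $\F$s of the two separable $1$-morphisms $M_0 : \mathbbm{1} \to A$ and $M : \mathbbm{1} \to \tilde{A}$ are both unitarily $*$-isomorphic to $A$, hence to each other, and in particular Morita equivalent as objects of $\Bimod(\mathcal{T})$. Applying \cite[Thm.\ A.1]{Musto2019} to these $1$-morphisms out of $\mathbbm{1}$ --- exactly as in the essential-surjectivity step of Proposition~\ref{prop:reconstruction} --- forces the targets $A$ and $\tilde{A}$ to be equivalent objects of $\Bimod(\mathcal{T})$, i.e.\ $A$ is Morita equivalent to $\tilde{A}$.

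The main obstacle is the opening step: checking on the nose that the normalised pair of pants of $M_0$ is unitarily $*$-isomorphic to $A$, and that the normalising factors interact correctly with the equivalence $E$. Everything else is routine manipulation of cups, caps and the nested-dual formulae, but the rescaling that makes the pair-of-pants construction standard must be tracked carefully through the relative tensor product and the unitaries $\eta_E, \epsilon_E$; this is the one place where a moment of carelessness would break the argument.
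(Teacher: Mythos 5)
Your architecture is close to the paper's in the forward direction (the paper also takes $M := {}_{\mathbbm{1}}A_A \otimes E$), and your converse via \cite[Thm.~A.1]{Musto2019} --- applied in $\Bimod(\mathcal{T})$ just as the paper applies it in Proposition~\ref{prop:reconstruction} --- is a genuinely shorter alternative to the paper's explicit construction of the equivalence bimodule ${}_A M_{\tilde{A}}$ from the unitary $*$-isomorphism $f$. However, both of your directions rest on your opening claim, and there the proposal has a genuine gap rather than a deferred routine check. The pair of pants $\F$ on $M_0 \otimes M_0^*$ must be formed from the \emph{standard} right dual of $M_0 = {}_{\mathbbm{1}}A_A$ in $\Bimod(\mathcal{T})$, together with the canonical normalisation $n_{M_0}$ of Definition~\ref{def:separable1morph}; and this standard dual is \emph{not} given by ``the Frobenius cap and cup'' as your sketch asserts --- Remark~\ref{rem:standarddualbimod} states explicitly that the Frobenius-induced bimodule duality is in general not standard. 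When $A$ is not simple, $\End(\id_A)$ is a several-dimensional commutative $C^*$-algebra and standardness (Proposition~\ref{prop:standardintrinsic}) is a balancing condition over the simple summands $A \cong \oplus_k A_k$: the paper shows that the standard cup and cap must carry the weights $d(A_k)^{1/4}$ and $d(A_k)^{-1/4}$ respectively (equation~\eqref{eq:acupcap}), whence $n_{M_0} = \sum_k d(A_k)^{1/4}\,(\iota_k \circ \iota_k^{\dagger})$. Identifying this weighted duality and checking that the weights cancel against $n_{M_0}$ to reproduce exactly $m_A$ and $u_A$ is the bulk of the lemma's proof, not a corollary of ``separability and standardness of $A$''.

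There is a second, related gap in your forward step. You build $M^* = E^* \otimes M_0^*$ and derive $n_M = \eta_E^{\dagger} \circ (\id_{E^*} \otimes n_{M_0} \otimes \id_E) \circ \eta_E$ using the adjoint-equivalence dual of $E$, whose cup and cap are unitary. For the algebra you then produce to be \emph{the} pair-of-pants $\F$ of Proposition~\ref{prop:pairofpants} (and for your $n_M$ to be the canonical normalisation), you need this unitary-cup-and-cap dual to be a \emph{standard} dual of $E$, so that the tensor dual of Proposition~\ref{prop:nestedduals} is standard. This fact is true but nowhere justified in your proposal: one can prove it, for instance, by noting that whiskering by the equivalence followed by conjugation by the unitaries $\eta_E, \epsilon_E$ gives $*$-isomorphisms $\End(E) \to \End(\id_{\tilde{A}})$ and $\End(E) \to \End(\id_A)$, which carry minimal projections to minimal projections, so the two functionals of Proposition~\ref{prop:standardintrinsic} agree on minimal projections and hence everywhere. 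The paper avoids needing this statement by keeping the two duals distinct and relating them through the comparison 2-morphism $v: E^* \to E^{-1}$ of Proposition~\ref{prop:relateduals}, proving $v^{\dagger} = v^{-1} \otimes \dim_R(E)$ and $(v^{-1})^{\dagger} = \dim_L(E) \otimes v$. Without an argument of one of these two kinds, your chain of unitaries $M \otimes M^* \cong M_0 \otimes M_0^* \cong A$ has not been shown to be a unitary $*$-isomorphism of the canonical $\F$s, which is what the lemma asserts. Once these two points are supplied, your converse via \cite[Thm.~A.1]{Musto2019} does go through and is a legitimate simplification of the paper's ``if'' direction.
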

\begin{proof}
\emph{Only if.} We will show that from a Morita equivalence $A \simeq \tilde{A}$ we can construct a satisfactory 1-morphism $M: \mathbbm{1} \to \tilde{A}$. In the following diagrams we leave the regions corresponding to the object $\mathbbm{1}$ unshaded; we shade the regions corresponding to the object $A$ with wavy lines; and we shade the regions corresponding to the object $\tilde{A}$ with polka dots. 

Let $[E,E^{-1},\alpha_E,\beta_E]$ be the data associated to an adjoint equivalence $E: A \to \tilde{A}$ in $\Bimod(\mathcal{T})$. We will first consider the relationship between the right dual $[E^{-1},\alpha_E,\beta_E]$ for $E$ and the standard right dual $[E^*,\eta_E,\epsilon_E]$. Let $v: E^* \to E^{-1}$ be the isomorphism relating the right duals $E^*$ and $E^{-1}$ by Proposition~\ref{prop:relateduals}, i.e.:
\begin{calign}
\includegraphics[scale=.8]{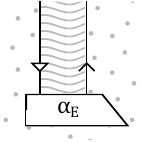}
~~=~~
\includegraphics[scale=.8]{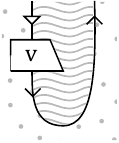}
&&
\includegraphics[scale=.8]{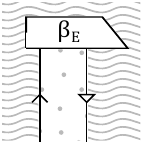}
~~=~~
\includegraphics[scale=.8]{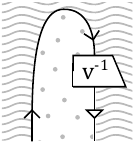}
\end{calign}
(Here we drew the $E^{-1}$ wire with a triangular downwards-pointing arrow.) 
We first observe that
\begin{equation}
v^{\dagger} = v^{-1} \otimes \dim_R(E),
\end{equation}
which can be seen by the following equation:
\begin{calign}\nonumber
\includegraphics[scale=.8]{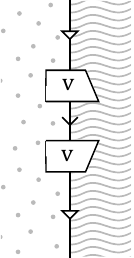}
~~=~~
\includegraphics[scale=.8]{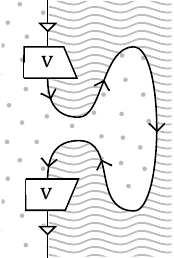}
~~=~~
\includegraphics[scale=.8]{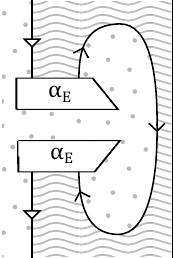}
~~=~~
\includegraphics[scale=.8]{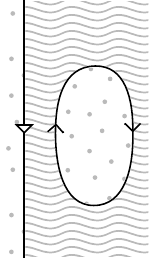}
\end{calign}
In the same way it can be shown that 
\begin{equation}
\label{eq:vinvdag}
(v^{-1})^{\dagger} = \dim_L(E) \otimes v,
\end{equation} and therefore that $\dim_L(E) \otimes \id_{E^{-1}} \otimes \dim_R(E) = \id_{E^{-1}}$. It follows that $E$ is a separable 1-morphism. We also make the following further observation for later:
\begin{calign}\label{eq:atildepipe}
\includegraphics[scale=.8]{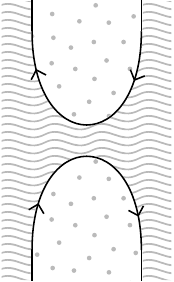}
~~=~~
\includegraphics[scale=.8]{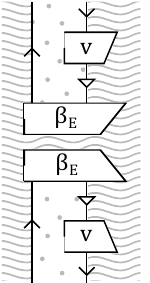}
~~=~~
\includegraphics[scale=.8]{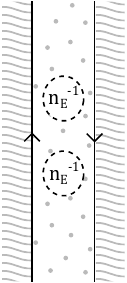}
\end{calign}
Let ${}_{\mathbbm{1}} A_A: \mathbbm{1} \to A$ be the 1-morphism in $\Bimod(\mathcal{C})$ corresponding to $A$ considered as a $\mathbbm{1}$-$A$ bimodule, and let ${}_A A_{\mathbbm{1}}: A \to \mathbbm{1}$ be the 1-morphism corresponding to $A$ considered as an $A$-$\mathbbm{1}$ bimodule. We will show that ${}_A A_{\mathbbm{1}}$ is a standard right dual for ${}_{\mathbbm{1}} A_A$ with a cup and cap we will now define. 

Recall from Lemma~\ref{lem:decomposingfs} that $A \cong \oplus_k A_k$, where $A_k$ are simple $\F$s, and let $\iota_k: A_k \to A$ be the isometric injections. In the following diagrams we draw the $\iota_k$ as downward-pointing triangles, the $\iota_k^{\dagger}$ as upward-pointing triangles, and the structure morphisms of the algebra $A_k$ as white circles with a $k$ next to them.  We define the cup and cap $\eta$, $\epsilon$ as follows (on the LHS of the following definitions is the cup/cap as it appears in $\Bimod(\mathcal{T})$, and on the RHS is the definition as a concrete bimodule homomorphism in $\mathcal{T}$):
\begin{calign}\label{eq:acupcap}
\includegraphics[scale=1]{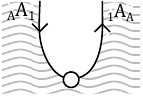}
~~:=~~
\sum_k d(A_k)^{1/4}
\includegraphics[scale=1]{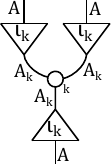}
&&
\includegraphics[scale=1]{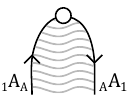}
~~:=~~
\sum_k d(A_k)^{-1/4}
\includegraphics[scale=1]{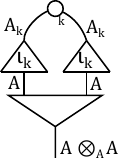}
\end{calign}
It is straightforward to check by considering the underlying morphisms in $\mathcal{T}$ that this cup and cap obey the snake equations~\eqref{eq:snake}. We will now show that the duality is standard. Let $T \in \End({}_{\mathbbm{1}} A_A)$. Clearly $\epsilon \circ (T \otimes \id_{{}_A A_{\mathbbm{1}}}) \circ \epsilon^{\dagger}$ is the following concrete morphism in $\mathcal{T}$:
\begin{calign}
\sum_k d(A_k)^{-1/2}~~
\includegraphics[scale=1]{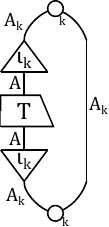}
\end{calign}
In the notation of Proposition~\ref{prop:standardintrinsic}, we therefore see that $\psi_{{}_\mathbbm{1} A_A}(T) = \sum_k d(A_k)^{-1/2} \psi_{A_k}[\iota_k^{\dagger} \circ  T \circ  \iota_k]$, where $\psi_{A_k}: \End(A_k) \to \mathbb{C}$ is the trace defined by the standard duality on $A_k$ in $\mathcal{T}$.

On the other hand, it is also clear that $\tilde{\eta}^{\dagger} \circ (\id_{{}_A A_{\mathbbm{1}}} \otimes T) \circ \tilde{\eta}$ is the following concrete morphism in $\mathcal{T}$:
\begin{calign}
\sum_k d(A_k)^{1/2}~
\includegraphics[scale=1]{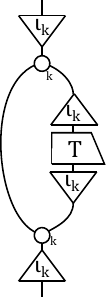}
\end{calign}
Since the $A_k$ are simple, as an $A_k$-$A_k$ bimodule endomorphism, $m_k \circ (\id_{A_k} \otimes (\iota_k^{\dagger} \circ T \circ \iota_k)) \circ m_k^{\dagger} = \alpha_{k,T} \id_{A_k}$ for some scalars $\alpha_{k,T}$, so, in the notation of Proposition~\ref{prop:standardintrinsic}, $\phi_{{}_\mathbbm{1} A_A}(T) =  \sum_k d(A_k)^{1/2} \alpha_{k,T}$. But now, taking the trace of $\alpha_{k,T} \id_{A_k}$ in $\mathcal{T}$, we obtain the following equation:
\begin{calign}
\alpha_{k,T} \cdot d(A_k) 
~~=~~
\includegraphics[scale=1]{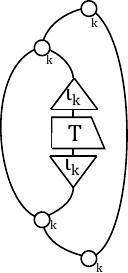}
~~=~~
\includegraphics[scale=1]{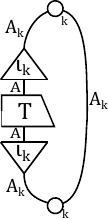}
~~=~~
\psi_{A_k}[\iota_k^{\dagger} \circ  T \circ  \iota_k].
\end{calign} 
Here for the second equality we used separability and standardness of $A_k$. It follows that 
$$\phi_{{}_\mathbbm{1} A_A}(T) = \sum_{k} d(A_k)^{1/2} \alpha_{k,T} = \sum_{k} d(A_k)^{-1/2} \psi_{A_k}[\iota_k^{\dagger} \circ T \circ \iota_k] = \psi_{{}_\mathbbm{1} A_A}(T),$$
so the right dual $[{}_A A_{\mathbbm{1}}, \eta, \epsilon]$ is indeed standard. We observe in particular that $\dim_L({}_\mathbbm{1} A_A) = \sum_k d(A_k)^{1/2} (\iota_k \circ \iota_k^{\dagger})$; ${}_\mathbbm{1} A_A$ is therefore a separable 1-morphism, with $n_{{}_\mathbbm{1} A_A} = \sum_k d(A_k)^{1/4} (\iota_k \circ \iota_k^{\dagger})$.

Now we set $M:= {}_{\mathbbm{1}}A_A \otimes E : \mathbbm{1} \to \tilde{A}$. Using the fact that the tensor product of standard duals (Proposition~\ref{prop:nestedduals}) is standard it is easy to see that
$$n_M = n_E \otimes (\alpha_E^{\dagger} \circ (\id_{E^{-1}} \otimes n_{{}_\mathbbm{1} A_A} \otimes \id_E) \circ \alpha_E).$$
It follows that $M$ is a separable 1-morphism.  

We claim that there is a unitary $*$-isomorphism between $A$ and the pair of pants $\F$ $M \otimes M^*$. It does not matter which standard right dual $M^*$ we pick in defining the pair of pants algebra $M \otimes M^*$ --- they will all produce unitarily $*$-isomorphic $\F$s --- so we pick the tensor product dual $E^* \otimes {}_A A_{\mathbbm{1}}$. This yields an $\F$ on the object $({}_{\mathbbm{1}}A_A \otimes E) \otimes (E^* \otimes {}_A A_{\mathbbm{1}})$ with the following multiplication and unit:
\begin{calign}
\includegraphics[scale=.8]{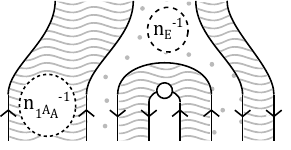}
&&
\includegraphics[scale=.8]{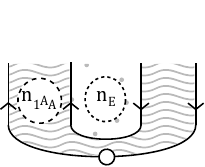}
\end{calign}
We will show that the following map $f: ({}_{\mathbbm{1}}A_A \otimes E) \otimes (E^* \otimes {}_{A}A_{\mathbbm{1}}) \to A$  is a unitary $*$-isomorphism:
\begin{calign}
~~\includegraphics[scale=1]{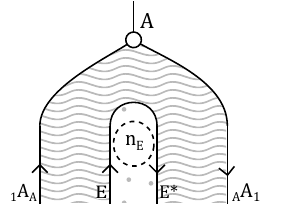}
\end{calign}
Here the 2-morphism $\tilde{m}: {}_1 A_A \otimes {}_A A_1 \to A$ represented by a white circle is concretely the following morphism in $\mathcal{T}$, where this time the white circle represents the multiplication $m: A \otimes A \to A$ of the Frobenius algebra $A$, as usual:
\begin{calign}
\includegraphics[scale=1]{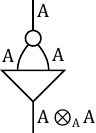}
\end{calign}
It is easy to check that $\tilde{m}$ is unitary by separability of the Frobenius algebra, the Frobenius equation and~\eqref{eq:moritaidempotentsplit}.

We first need to show that $f$ is a $*$-homomorphism. We will begin with multiplicativity (the first equation of~\eqref{eq:homo}). We will need the following equation in $\Bimod(\mathcal{T})$, which can be straightforwardly checked by considering the underlying morphisms in $\mathcal{T}$:
\begin{calign}\label{eq:adualspullround}
\includegraphics[scale=1]{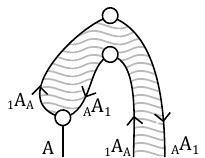}
~~:=~~
\includegraphics[scale=1]{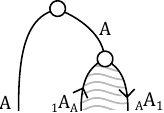}
\end{calign}
We also need the following equation, which can again be straightforwardly checked by considering the underlying morphisms in $\mathcal{T}$:
\begin{calign}\label{eq:amultcap}
\includegraphics[scale=1]{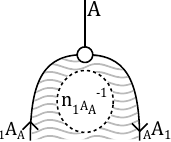}
~~:=~~
\includegraphics[scale=1]{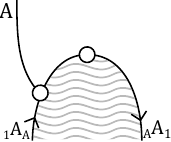}
\end{calign}
(Here the morphism ${}_{\mathbbm{1}} A_A \to A \otimes  A_A $ represented by a white circle is concretely just the comultiplication of the Frobenius algebra $A$.)

Now we prove multiplicativity:
\begin{calign}
\includegraphics[scale=.7]{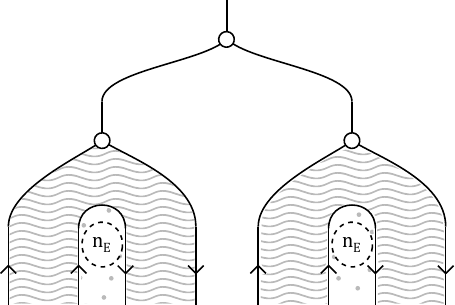}
~~=~~
\includegraphics[scale=.7]{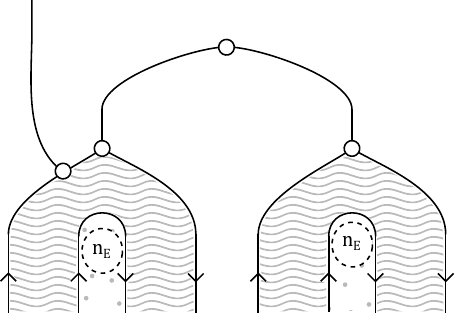}
\\=~~
\includegraphics[scale=.7]{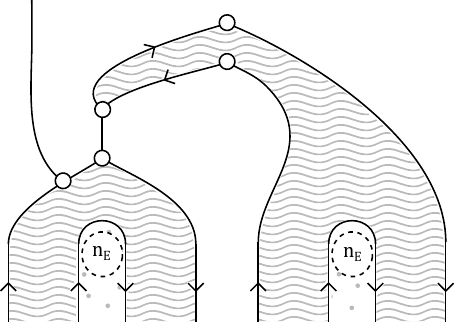}
~~=~~
\includegraphics[scale=.7]{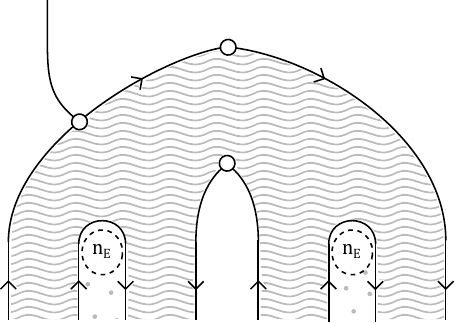}
\\
=~~
\includegraphics[scale=.7]{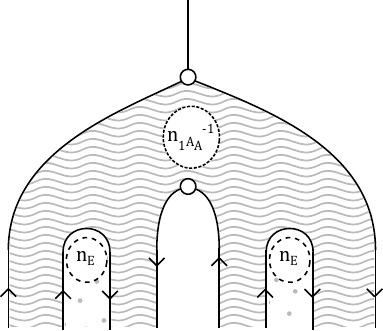}
~~=~~
\includegraphics[scale=.7]{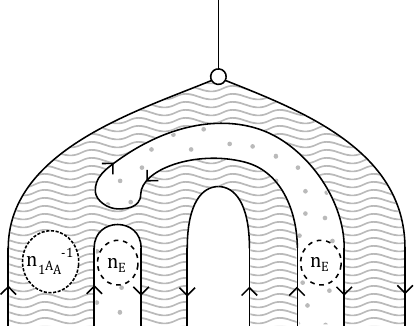}
\\=~~
\includegraphics[scale=.7]{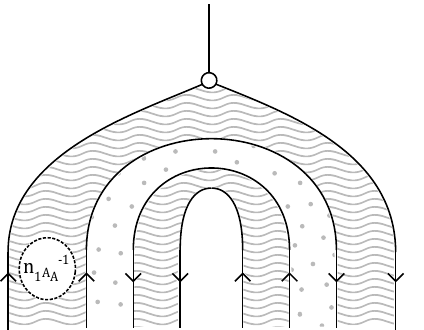}
\end{calign}
Here the first equality is clear; the second equality is by~\eqref{eq:adualspullround}; the third equality is by unitarity of $\tilde{m}$; the fourth equality is by~\eqref{eq:amultcap}; the fifth equality is by the pivotal dagger structure on $\Bimod(\mathcal{T})$; and the sixth equality is by~\eqref{eq:atildepipe}.

Unitality (the second equation of~\eqref{eq:homo}) is shown by the following equalities:
\begin{calign}\label{eq:funitality}
\includegraphics[scale=.7]{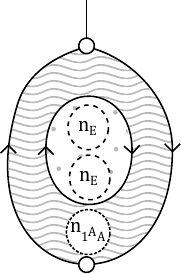}
~~=~~
\includegraphics[scale=.7]{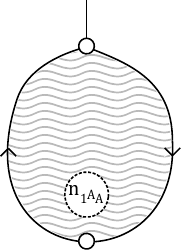}
~~=~~
\includegraphics[scale=.7]{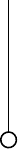}
\end{calign}
Here the first equality can be seen by inserting $v^{-1} \circ v$ on the $E^*$-wire and using~\eqref{eq:vinvdag}. The second equality can straightforwardly be seen by considering the underlying morphisms in $\mathcal{T}$. 

For $*$-preservation (the third equation of~\eqref{eq:homo}), we have the following equalities:
\begin{calign}
\includegraphics[scale=.7]{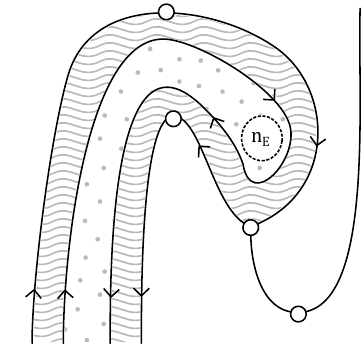}
~~=~~
\includegraphics[scale=.7]{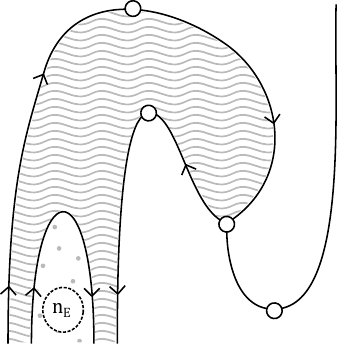}
\\
=~~
\includegraphics[scale=.7]{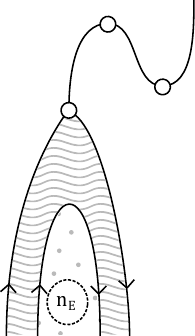}
~~=~~
\includegraphics[scale=.7]{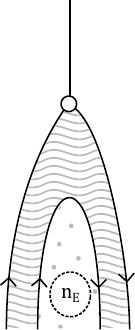}
\end{calign}
Here the first equality is by the pivotal dagger structure on $\Bimod(\mathcal{T})$; the second equality is by~\eqref{eq:adualspullround} (or more precisely,~\eqref{eq:adualspullround} precomposed by $\tilde{m} \otimes \tilde{m}^{\dagger}$); and the final equality is by a snake equation for $A$. 

We have shown that $f$ is a $*$-homomorphism. Now we need only show that $f$ is unitary:
\begin{calign}
\includegraphics[scale=.7]{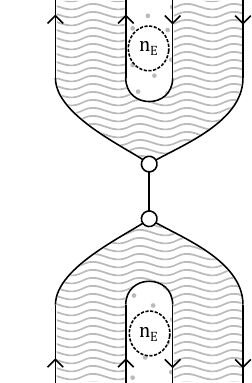}
~~=~~
\includegraphics[scale=.7]{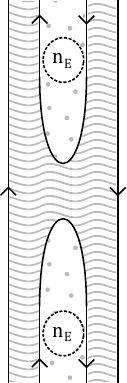}
~~=~~
\includegraphics[scale=.7]{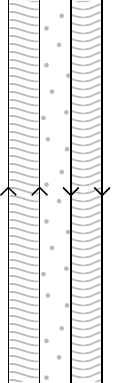}
\\
\includegraphics[scale=.7]{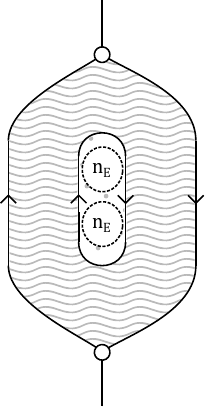}
~~=~~
\includegraphics[scale=.7]{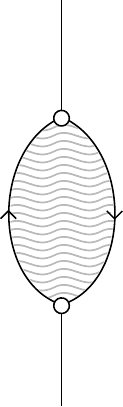}
~~=~~
\includegraphics[scale=.7]{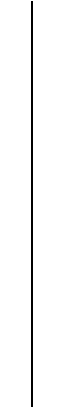}
\end{calign}
Here in the first line the first equality is by unitarity of $\tilde{m}$ and the second equality is by~\eqref{eq:atildepipe}; in the second line the first equality is by the same argument as in~\eqref{eq:funitality} and the second equality is by unitarity of $\tilde{m}$. The proof of the `only if' direction is complete.

\emph{If.} We now show the opposite implication: if there exists a separable 1-morphism ${}_{\mathbbm{1}} M_{\tilde{A}}: \mathbbm{1} \to \tilde{A}$ and a unitary $*$-isomorphism $f: {}_{\mathbbm{1}}M_{\tilde{A}} \otimes ({}_{\mathbbm{1}}M_{\tilde{A}})^* \to A$, then $A$ and $\tilde{A}$ are Morita equivalent. 

We first observe that the right $\tilde{A}$-dagger module $M_{\tilde{A}}$ is in fact an $A$-$\tilde{A}$-dagger bimodule by the following left $A$-action:
\begin{calign}
\includegraphics[scale=.7]{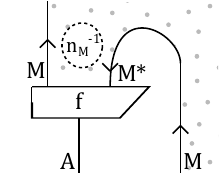}
\end{calign}
We therefore obtain a 1-morphism ${}_A M_{\tilde{A}}: A \to \tilde{A}$ in $\Bimod(\mathcal{T})$. We will show that it is an equivalence with weak inverse $({}_A M_{\tilde{A}})^*$, proving that $A$ and $\tilde{A}$ are Morita equivalent. For this we need to produce unitary 2-morphisms ${}_A A_A \to {}_A M_{\tilde{A}} ~\otimes ({}_A M_{\tilde{A}})^*$ and ${}_{\tilde{A}} \tilde{A}_{\tilde{A}} \to ({}_A M_{\tilde{A}})^* \otimes {}_A M_{\tilde{A}}$.

The following equalities show that $f^{\dagger}: A \to {}_{\mathbbm{1}}M_{\tilde{A}} \otimes ({}_{\mathbbm{1}} M_{\tilde{A}})^*$ is in fact an $A$-$A$ bimodule morphism:
\begin{calign}
\includegraphics[scale=.7]{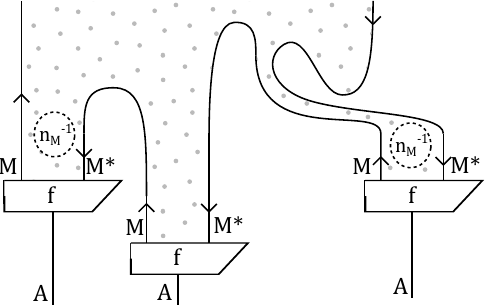}
~~=~~
\includegraphics[scale=.7]{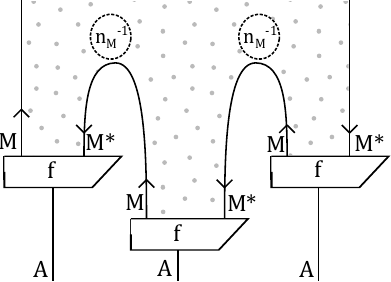}
~~=~~
\includegraphics[scale=.7]{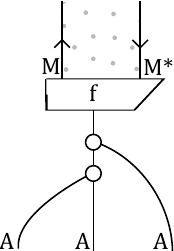}
\end{calign}
Here in the first diagram the right $A$-action on $({}_{\mathbbm{1}} M_{\tilde{A}})^*$ is defined in terms of the left $A$-action on ${}_{\mathbbm{1}} M_{\tilde{A}}$ as in~\eqref{eq:standardpivmod}, using the fact, communicated in Remark~\ref{rem:standarddualbimod}, that ${}_{\tilde{A}}(M^*)_{A}$ is a standard right dual bimodule for ${}_{A}M_{\tilde{A}}$. For the first equality we used isotopy of the diagram, and for the second equality we used the first $*$-homomorphism condition~\eqref{eq:homo}.

We have therefore found the first desired unitary 2-morphism,  $f^{\dagger}: {}_A A_A \to {}_A M_{\tilde{A}} ~\otimes ({}_A M_{\tilde{A}})^*$. We will now obtain the second. Let $x:=\sqrt{\dim_L({}_A M_{\tilde{A}})} \in \End(\id_{{}_{\tilde{A}} \tilde{A}_{\tilde{A}}})$, and let $\epsilon: {}_A M_{\tilde{A}} ~\otimes ({}_A M_{\tilde{A}})^* \to {}_A A_A $ be the cap of the standard duality in $\Bimod(\mathcal{T})$. The following equations show that the 2-morphism $\epsilon \circ (\id_{{}_A M_{\tilde{A}}} \otimes x \otimes \id_{({}_A M_{\tilde{A}})^*})$ is unitary:
\begin{calign}\label{eq:epsilonunitary1}
\includegraphics[scale=.7]{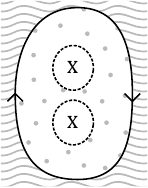}
~~=~~
\includegraphics[scale=.7]{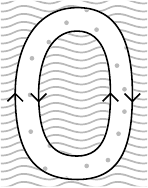}
~~=~~
\includegraphics[scale=.7]{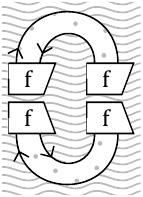}
~~=~~
\includegraphics[scale=.7]{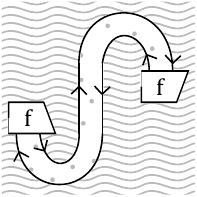}
~~=~~
\includegraphics[scale=.7]{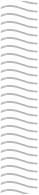}
\\\label{eq:epsilonunitary2}
\includegraphics[scale=.7]{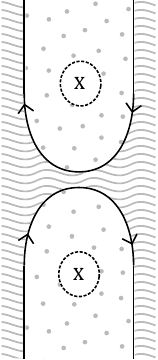}
~~=~~
\includegraphics[scale=.7]{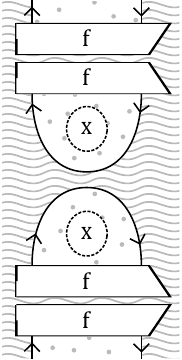}
~~=~~
\includegraphics[scale=.7]{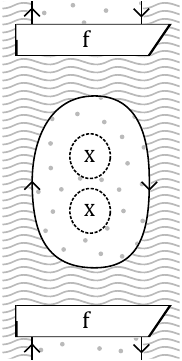}
~~=~~
\includegraphics[scale=.7]{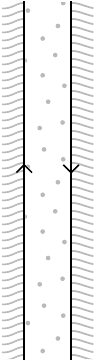}
\end{calign}
Here for the last equality of~\eqref{eq:epsilonunitary2} we used~\eqref{eq:epsilonunitary1}. 

We now show that $\dim_L({}_AM_{\tilde{A}})$, and therefore also $x$, is invertible. Indeed, by assumption, $\dim_L({}_{\mathbbm{1}} M_{\tilde{A}}) = \dim_L({}_\mathbbm{1} A_{A} \otimes {}_AM_{\tilde{A}})$ is invertible. By Remark~\ref{rem:matrixstandardduals}, up to permutation of the factors we have the following expressions for left dimensions in the commutative $C^*$-algebra $\End(\id_{\tilde{A}})$:
\begin{align*}
\dim_L({}_\mathbbm{1} A_{A} \otimes {}_A M_{\tilde{A}}) &= [\sum_i d(A_i)^{1/2} d(\tilde{M}_{i1}), \dots, \sum_i d(A_i)^{1/2} d(\tilde{M}_{in_{\tau}})]\\
\dim_L({}_A M_{\tilde{A}}) &= [\sum_i d(\tilde{M}_{i1}), \dots, \sum_i d(\tilde{M}_{in_{\tau}})]
\end{align*}
Here $\tilde{M}: \vec{\sigma} \to \vec{\tau}$ is the matrix of 1-morphisms corresponding to ${}_AM_{\tilde{A}}$ under the equivalence $\Phi: \Mat(\Bimod(\mathcal{T})) \overset{\sim}{\to} \Bimod(\mathcal{T})$. Since all the $d(A_i)$ are nonzero, an entry in the vector $\dim_L({}_A M_{\tilde{A}})$ can be zero only if the corresponding entry in the vector $\dim_L({}_\mathbbm{1} A_{A} \otimes {}_AM_{\tilde{A}})$ is zero; therefore invertibility of $\dim_L({}_{\mathbbm{1}} M_{\tilde{A}})$ implies invertibility of $\dim_L({}_AM_{\tilde{A}})$.

Let $\eta: {}_{\tilde{A}} \tilde{A}_{\tilde{A}} \to ({}_A M_{\tilde{A}})^* \otimes {}_A M_{\tilde{A}}$ be the cup of the standard duality in $\Bimod(\mathcal{T})$. We will now show that $\eta \circ x^{-1}$ is a unitary 2-morphism ${}_{\tilde{A}} \tilde{A}_{\tilde{A}} \to ({}_A M_{\tilde{A}})^* \otimes {}_A M_{\tilde{A}}$, finishing the proof:
\begin{calign}\label{eq:etaunitary1}
\includegraphics[scale=.7]{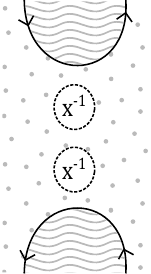}
~~=~~
\includegraphics[scale=.7]{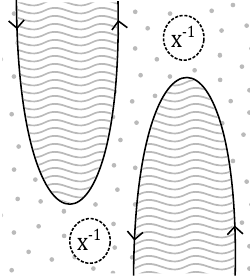}
~~=~~
\includegraphics[scale=.7]{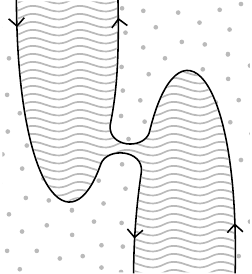}
~~=~~
\includegraphics[scale=.7]{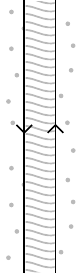}
\\
\includegraphics[scale=.7]{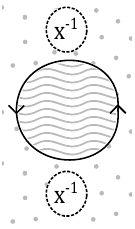}
~~=~~
\includegraphics[scale=.7]{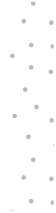}
\end{calign}
Here for the second equality of~\eqref{eq:etaunitary1} we used~\eqref{eq:epsilonunitary2}.
\end{proof}
\noindent
In Lemma~\ref{lem:decomposingfs} we showed that every $\F$ in $\mathcal{T}$ is a direct sum of simple $\F$s. We therefore need only classify the simple $\F$s.  In Lemma~\ref{lem:pairofpants} we showed that two $\F$s in $\mathcal{T}$ are Morita equivalent precisely when one can be expressed as a pair of pants algebra over the other in $\Bimod(\mathcal{T})$. By Theorem~\ref{thm:eilenbergwatts} we can rephrase this in terms of  $\Mod(\mathcal{T})$. We also observe that all nonzero 1-morphisms between simple objects of $\Mod(\mathcal{T})$ are separable. We may therefore construct all simple $\F$s in $\mathcal{T}$ as follows:
\begin{itemize}
\item Obtain representatives $\{\mathcal{M}_i\}_{i \in I}$ of equivalence classes of semisimple cofinite indecomposable left $\mathcal{T}$-module categories. 
\item For each $i \in I$ and for each unitary $\mathcal{T}$-module functor $X: \mathcal{T} \to \mathcal{M}_i$ in $\Bimod(\mathcal{T})$, construct the pair of pants $\F$ $X \otimes X^*$ in $\End_{\mathcal{T}}(\mathcal{T}) \simeq \mathcal{T}$.
\end{itemize}
To turn this into a classification, we need to determine when two 1-morphisms $\mathcal{C} \to \mathcal{M}_i$ in $\Mod(\mathcal{T})$ give rise to the same $\F$.

It is clear that certain nonisomorphic 1-morphisms will give rise to unitarily $*$-isomorphic $\F$s. For instance, set $\mathcal{T} = \Rep(G)$ for some ordinary compact group $G$, and let $\theta$ be a nontrivial one-dimensional representation. For any other  representation $X$, clearly $X$ and $X \otimes \theta$ are nonisomorphic objects in $\End_{\mathcal{T}}(\mathcal{T}) \simeq \Rep(G)$. However, since $\theta \otimes \theta^* \cong \mathbbm{1}$, there is a unitary $*$-isomorphism $X \otimes X^* \cong (X \otimes \theta) \otimes (X \otimes \theta)^*$. In fact, we will now see that this is all that can go wrong; two 1-morphisms will produce the same $\F$ if and only if they are `equivalent up to a phase' in this way. For this we use the following theorem. 
\begin{definition}
We say that two 1-morphisms $X: r \to s$ and $Y: r \to t$ in a dagger 2-category are \emph{equivalent} when there exists an equivalence $E: t \to s$ and a unitary 2-morphism $\tau: X \to Y \otimes E$.
\end{definition}
\begin{theorem}[{\cite[Thm. 5.7]{Verdon2020a}}]\label{thm:pairofpantsisoclassification}
Let $\mathcal{C}$ be a $\mathbb{C}$-linear pivotal dagger 2-category with split dagger idempotents. Let $s, t$ be simple objects, and let $X: r \to s$ and $Y: r \to t$ be 1-morphisms. Then $X$ and $Y$ are equivalent in $\mathcal{C}$ if and only if the separable Frobenius algebras $X \otimes X^*$ and $Y \otimes Y^*$ in $\End(r)$ are unitarily $*$-isomorphic.
\end{theorem}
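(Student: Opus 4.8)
The plan is to prove the two implications separately, throughout using the graphical calculus of Section~\ref{sec:pivdagcats} and the explicit pair-of-pants data~\eqref{eq:pairofpants}.

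\emph{Equivalence implies unitary $*$-isomorphism.} Suppose we are given an equivalence $E: t \to s$ and a unitary $\tau: X \to Y \otimes E$. Upgrading $E$ to an adjoint equivalence (the lemma following Definition~\ref{def:daggerequiv}), we may assume the duality cap $\epsilon_E: E \otimes E^* \to \id_t$, and the matching unit, are unitary. Writing $\tau_*: X^* \to E^* \otimes Y^*$ for the conjugate of $\tau$ (recall $(Y \otimes E)^* = E^* \otimes Y^*$ by Proposition~\ref{prop:nestedduals}), I would set
\[ \phi := (\id_Y \otimes \epsilon_E \otimes \id_{Y^*}) \circ (\tau \otimes \tau_*) : X \otimes X^* \to Y \otimes Y^* . \]
This is unitary, being a composite of the unitary $\tau \otimes \tau_*$ with $\id_Y \otimes \epsilon_E \otimes \id_{Y^*}$. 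The remaining task is to verify the $*$-homomorphism axioms~\eqref{eq:homo}: after substituting the pair-of-pants multiplication and unit, multiplicativity and unitality reduce to the snake equations~\eqref{eq:snake} for $X$ and $Y$ together with unitarity of the unit and counit of the adjoint equivalence (used to contract the $E$-loops), while $*$-preservation is immediate from the pivotal dagger structure. The normalising factors $n_X, n_Y$ match because $E$, being an equivalence, contributes trivial dimension. These are routine diagrammatic checks.

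\emph{Unitary $*$-isomorphism implies equivalence.} This is the substantial direction. Given a unitary $*$-isomorphism $\phi: X \otimes X^* \to Y \otimes Y^*$, set $E_0 := Y^* \otimes X : t \to s$ and define
\[ \tau_0 := (\phi \otimes \id_X) \circ (\id_X \otimes \eta_X) : X \to Y \otimes Y^* \otimes X = Y \otimes E_0 . \]
Since $s$ is simple, $\End(\id_s) = \mathbb{C}$ and $\dim_L(X) = d\,\id$ for a scalar $d > 0$; using $\phi^\dagger \circ \phi = \id$ and $\eta_X^\dagger \circ \eta_X = \dim_L(X)$ (Definition~\ref{def:separable1morph}) one gets $\tau_0^\dagger \circ \tau_0 = d\,\id_X$, so $\tau := d^{-1/2}\tau_0$ is an isometry. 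A dimension count shows $E_0$ is too large for $\tau$ to be a coisometry, so it must be cut down, and this is where the main obstacle lies: the key step is to show that \emph{multiplicativity} of $\phi$ forces the support projection of $\tau$ to factor through the $Y$-wire, i.e. $\tau \circ \tau^\dagger = \id_Y \otimes q$ for a dagger idempotent $q$ on $Y^* \otimes X$. That unitarity of $\phi$ alone does not suffice is visible already for $\phi = \id$, where $q$ turns out to be exactly the projection of $X^* \otimes X$ onto its $\id_s$-summand; the algebra-homomorphism property is precisely what makes the general identity hold, and proving it is the heart of the argument. Granting it, I split $q$ (the hypothesis that dagger idempotents split) to obtain $E : t \to s$ and an isometry $\iota : E \to Y^* \otimes X$ with $q = \iota \circ \iota^\dagger$; then $\tau$ corestricts to a unitary $X \to Y \otimes E$, again denoted $\tau$.

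\emph{$E$ is an equivalence.} Finally I would check that $E$ is a genuine equivalence, which is where simplicity of \emph{both} $s$ and $t$ enters. From $\tau : X \cong Y \otimes E$ and its dual $X^* \cong E^* \otimes Y^*$ we obtain $X \otimes X^* \cong Y \otimes (E \otimes E^*) \otimes Y^*$; comparing with $\phi : X \otimes X^* \cong Y \otimes Y^*$ gives $Y \otimes (E \otimes E^*) \otimes Y^* \cong Y \otimes \id_t \otimes Y^*$. Decomposing $E \otimes E^* \in \End(t)$ into simples and applying the faithful, multiplicative dimension, together with the fact that simple objects of the rigid $C^*$-tensor category $\End(t)$ have dimension $\ge 1$ with equality iff invertible, forces $E \otimes E^* \cong \id_t$. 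This yields $Y \cong X \otimes E^*$, and the symmetric computation in $\End(s)$ (using $\phi^{-1}$) forces $E^* \otimes E \cong \id_s$; both isomorphisms may be taken unitary since the Hom-categories are semisimple $C^*$-categories (Remark~\ref{rem:polar}). Hence $E$ is an equivalence, and $\tau : X \to Y \otimes E$ exhibits $X$ and $Y$ as equivalent, completing the proof.
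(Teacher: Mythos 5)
A remark before the substance: the paper never actually proves this theorem---it is imported wholesale from~\cite[Thm. 5.7]{Verdon2020a} and used as a black box---so there is no internal proof to compare yours against, and your attempt must stand on its own. As written, it does not: there is a genuine gap, and you flag it yourself. In the direction ``unitary $*$-isomorphism $\Rightarrow$ equivalence'', after constructing the isometry $\tau$ from $\tau_0 := (\phi \otimes \id_X)\circ(\id_X \otimes \eta_X)$, the entire weight of the theorem rests on the claim that $\tau \circ \tau^{\dagger} = \id_Y \otimes q$ for a dagger idempotent $q$ on $Y^* \otimes X$; you say that proving this ``is the heart of the argument'' and then proceed by ``[g]ranting it''. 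Everything downstream---splitting $q$, corestricting $\tau$ to a unitary, checking $E$ is an equivalence---is comparatively routine, so what you have produced is a correct reduction of the theorem to its crux, not a proof of it.

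The granted claim is true, and your strategy is salvageable by a short computation which you should supply. Write $A = X \otimes X^*$, $B = Y \otimes Y^*$, with unnormalised pair-of-pants multiplications $m_A = \id_X \otimes \eta_X^{\dagger} \otimes \id_{X^*}$ and $m_B = \id_Y \otimes \eta_Y^{\dagger} \otimes \id_{Y^*}$. First check the rotation identity $(m_A \otimes \id_X)\circ(\id_{A\otimes X} \otimes \eta_X) = \id_X \otimes (\eta_X \circ \eta_X^{\dagger})$, which exhibits $\tau_0 \circ \tau_0^{\dagger}$ as $(\phi \otimes \id_X)\circ(m_A \otimes \id_X)\circ(\id_{A\otimes X}\otimes \eta_X)\circ(\phi^{\dagger}\otimes\id_X)$. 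Now slide $\phi^{\dagger}$ past the inserted cup by the interchange law, and use multiplicativity of $\phi$ together with $\phi \circ \phi^{\dagger} = \id_B$; up to the positive scalar coming from the normalisations $n_X, n_Y$ this gives
\begin{equation*}
\tau_0 \circ \tau_0^{\dagger} \;=\; (m_B \otimes \id_X)\circ(\id_B \otimes \tau_0) \;=\; \id_Y \otimes \Big((\eta_Y^{\dagger}\otimes \id_{Y^*\otimes X})\circ(\id_{Y^*}\otimes \tau_0)\Big),
\end{equation*}
which is exactly the required form; $q$ is then a dagger idempotent because $\tau$ is an isometry and $g \mapsto \id_Y \otimes g$ is injective (partial-trace over $Y$, using simplicity of $t$ so that $\eta_Y^{\dagger}\circ\eta_Y$ is an invertible scalar). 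Two smaller points. Your first direction is fine, though the matching of normalising factors deserves the one-line argument rather than an appeal to routine checks. More seriously, your final step treats $\End(t)$ as a semisimple rigid $C^*$-tensor category (decomposition into simples, faithful multiplicative dimension, ``dimension $\geq 1$ with equality iff invertible''); none of this is granted by the theorem's literal hypotheses, which assume only a $\mathbb{C}$-linear pivotal dagger 2-category with split dagger idempotents. These tools are available in the paper's intended application $\Mod(\mathcal{T})$, but a proof of the theorem as stated must either avoid them or strengthen the hypotheses.
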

\noindent
Applying Theorem~\ref{thm:pairofpantsisoclassification} in $\Mod(\mathcal{T})$, we obtain the following classification of simple $\F$s in a rigid $C^*$-tensor category. 
\begin{definition}
Let $\mathcal{T}$ be a rigid $C^*$-tensor category. We say that an object $\theta$ in $\mathcal{T}$ is a \emph{phase} if $\theta \otimes \theta^* \cong \mathbbm{1} \cong \theta^* \otimes \theta$; or, equivalently, if $d(\theta) = 1$. 

Let $\mathcal{M}$ be a right $\mathcal{T}$-module category. We say that two objects $X_1,X_2$ of $\mathcal{M}$ are equivalent \emph{up to a phase in $\mathcal{T}$} if there is a unitary isomorphism $X_1 \cong X_2 \tilde{\otimes} \theta$ for a phase $\theta$ in $\mathcal{T}$.
\end{definition}
\begin{theorem}[Classification of $\F$s in a rigid $C^*$-tensor category]\label{thm:classificationoffs}
Let $\mathcal{T}$ be a rigid $C^*$-tensor category. There is a bijective correspondence between:
\begin{itemize}
\item Morita equivalence classes of simple $\F$s in $\mathcal{T}$. 
\item Equivalence classes of cofinite semisimple indecomposable left $\mathcal{T}$-module categories. 
\end{itemize}
Let $\mathcal{M}$ be a cofinite semisimple indecomposable left $\mathcal{T}$-module category. Since $\mathcal{M}$ is indecomposable, $\End_{\mathcal{T}}(\mathcal{M})$ is a rigid $C^*$-tensor category with a right action on $\mathcal{M}$. There is a bijective correspondence between:
\begin{itemize}
\item Unitary $*$-isomorphism classes of simple $\F$s in the corresponding Morita class.
\item Isomorphism classes of objects in $\mathcal{M}$, up to a phase in $\End_{\mathcal{T}}(\mathcal{M})$.
\end{itemize}
\end{theorem}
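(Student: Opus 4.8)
The plan is to deduce everything from the equivalence $\Psi\colon \Bimod(\mathcal{T}) \simeq \Mod(\mathcal{T})$ (Theorem~\ref{thm:eilenbergwatts}), the Morita characterisation of Lemma~\ref{lem:pairofpants}, and the abstract pair-of-pants classification (Theorem~\ref{thm:pairofpantsisoclassification}), performing no diagrammatic computation of our own. For the first correspondence, note that Morita equivalence of $\F$s is by definition equivalence of the corresponding objects of $\Bimod(\mathcal{T})$, so $\Psi$ carries Morita equivalence classes of $\F$s bijectively onto equivalence classes of objects of $\Mod(\mathcal{T})$, i.e.\ of cofinite semisimple finitely decomposable left $\mathcal{T}$-module categories. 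It then remains only to match the ``simple'' objects on each side: an $\F$ $A$ is simple exactly when $\End({}_A A_A) = \End(\id_A)$ is one-dimensional, i.e.\ when $A$ is a simple (indecomposable) object of $\Bimod(\mathcal{T})$, while a module category is a simple object of $\Mod(\mathcal{T})$ exactly when it is indecomposable. Since $\Psi$ is an equivalence of presemisimple $C^*$-2-categories it preserves simplicity of objects, so it restricts to the asserted bijection between Morita classes of simple $\F$s and equivalence classes of indecomposable module categories (each of the latter being automatically finitely decomposable).

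For the second correspondence I would fix an indecomposable $\mathcal{M}$ and an $\F$ $\tilde{A}$ with $\Psi(\tilde{A}) \simeq \mathcal{M}$; since $\mathcal{M}$ is a simple object, $\End(\id_{\mathcal{M}}) \cong \mathbb{C}$ and $\End_{\mathcal{T}}(\mathcal{M})$ is a rigid $C^*$-tensor category. By Lemma~\ref{lem:pairofpants} the simple $\F$s Morita equivalent to $\tilde{A}$ are, up to unitary $*$-isomorphism, exactly the pair-of-pants algebras $M \otimes M^*$ for separable $1$-morphisms $M\colon \mathbbm{1} \to \tilde{A}$ in $\Bimod(\mathcal{T})$. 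Transporting along $\Psi$ and the embedding $\mathcal{T} \simeq \End_{\mathcal{T}}(\mathcal{T})$ (Corollary~\ref{cor:embedtmodt}), these become pair-of-pants algebras $X \otimes X^*$ for $1$-morphisms $X\colon \mathcal{T} \to \mathcal{M}$ in $\Mod(\mathcal{T})$; because $\mathcal{T}$ and $\mathcal{M}$ are simple, every nonzero such $X$ is automatically separable, its left dimension being a nonzero and hence invertible scalar in $\End(\id_{\mathcal{M}}) \cong \mathbb{C}$. I would then invoke the standard identification of the category of unitary $\mathcal{T}$-module functors $\mathcal{T} \to \mathcal{M}$ with $\mathcal{M}$ itself, sending $X$ to its value $X(\mathbbm{1})$ on the unit; under it nonzero $1$-morphisms $\mathcal{T} \to \mathcal{M}$ correspond to nonzero objects of $\mathcal{M}$, and unitary isomorphism of $1$-morphisms to unitary isomorphism of objects.

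It then suffices to apply Theorem~\ref{thm:pairofpantsisoclassification} in the pivotal dagger $2$-category $\Mod(\mathcal{T})$ (which has split dagger idempotents by local semisimplicity), with source object $\mathcal{T}$ and simple target $\mathcal{M}$: the algebras $X \otimes X^*$ and $Y \otimes Y^*$ are unitarily $*$-isomorphic iff $X$ and $Y$ are equivalent, i.e.\ iff there is an equivalence $E\colon \mathcal{M} \to \mathcal{M}$ and a unitary $2$-morphism $\tau\colon X \to Y \otimes E$. The remaining work is to read this equivalence off as equivalence up to a phase: a self-equivalence $E$ of the object $\mathcal{M}$ is precisely an invertible object of $\End_{\mathcal{T}}(\mathcal{M})$, equivalently an object $\theta$ with $d(\theta) = 1$, i.e.\ a phase; under $X \mapsto X(\mathbbm{1})$ the composite $Y \otimes E$ corresponds to the right action $Y(\mathbbm{1}) \tilde{\otimes} \theta$ and the unitary $\tau$ to a unitary isomorphism $X(\mathbbm{1}) \cong Y(\mathbbm{1}) \tilde{\otimes} \theta$, yielding the second bijection. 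The genuinely new bookkeeping, and the step I expect to be most delicate, is exactly this final dictionary: matching self-equivalences of $\mathcal{M}$ with phases in $\End_{\mathcal{T}}(\mathcal{M})$, getting the handedness of $Y \otimes E$ to agree with the right action $\tilde{\otimes}$ used in the statement, and confirming that the value-on-unit functor is compatible with the pivotal and dagger structures so that unitarity is preserved in both directions. Everything else is a direct appeal to the cited results.
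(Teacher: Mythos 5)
Your proposal is correct and follows essentially the same route as the paper: both halves reduce to the equivalence $\Psi$ of Theorem~\ref{thm:eilenbergwatts} together with Lemma~\ref{lem:pairofpants} for the Morita part, and then a single application of Theorem~\ref{thm:pairofpantsisoclassification} in $\Mod(\mathcal{T})$ with source $\mathcal{T}$ and simple target $\mathcal{M}$ (including your observation, also made in the paper, that nonzero 1-morphisms between simple objects are automatically separable). The one place where your execution diverges is exactly the step you flag as delicate: identifying $\Hom_{\mathcal{T}}(\mathcal{T},\mathcal{M})$, with its right $\End_{\mathcal{T}}(\mathcal{M})$-action by postcomposition, with $\mathcal{M}$ itself carrying the action by functor application. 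You invoke the evaluation-at-unit functor $X \mapsto X(\mathbbm{1})$ and leave the compatibility of actions, daggers and unitarity as an open check; the paper sidesteps evaluation at the unit entirely and instead constructs the required equivalence of $\mathcal{T}$-$\End_{\mathcal{T}}(\mathcal{M})$-bimodule categories as a chain $\mathcal{M} \simeq \Mod\text{-}A \simeq \Hom_{\mathcal{T}}(\mathcal{T},\Mod\text{-}A) \simeq \Hom_{\mathcal{T}}(\mathcal{T},\mathcal{M})$, where the first equivalence is essential surjectivity of $\Psi$ on objects, the middle one is the local equivalence $\Psi_{\mathbbm{1},A}$ (with the $\End_{\mathcal{T}}(\mathcal{M})$-action transported along $\tilde{E}: F \mapsto E^{-1} \otimes F \otimes E$), and the last is postcomposition with $E^{-1}$. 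This packages all of the bookkeeping you anticipate into facts already proven about $\Psi$; your evaluation-at-unit functor is a quasi-inverse of the composite of these equivalences, so your route closes as well, but that identification (and the matching of self-equivalences of $\mathcal{M}$ with phases in $\End_{\mathcal{T}}(\mathcal{M})$, which both you and the paper treat as immediate from polar decomposition in a $C^*$-category) is the part you would still need to write out rather than cite.
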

\begin{proof}
The first correspondence has already been explained.

For the second correspondence, by Theorem~\ref{thm:pairofpantsisoclassification} there is a bijective correspondence between unitary $*$-isomorphism classes of $\F$s in the corresponding Morita class and isomorphism classes of objects in $\Hom_{\mathcal{T}}(\mathcal{T},\mathcal{M})$ up to a phase in $\End_{\mathcal{T}}(\mathcal{M})$, where $\End_{\mathcal{T}}(\mathcal{M})$ acts on the right by postcomposition. There is a left $\mathcal{T}$-module action on $\Hom_{\mathcal{T}}(\mathcal{T},\mathcal{M})$ induced by the local equivalence $\Psi_{\mathbbm{1},\mathbbm{1}}: \mathcal{T} \to \End_{\mathcal{T}}(\mathcal{T})$. We claim that $\Hom_{\mathcal{T}}(\mathcal{T},\mathcal{M})$ is equivalent to $\mathcal{M}$ as a $\mathcal{T}$-$\End_{\mathcal{T}}(\mathcal{M})$ bimodule category.  Indeed, by essential surjectivity of  $\Psi$, there exists an $\F$ $A$ and an equivalence of $\mathcal{T}$-$\End_{\mathcal{T}}(\mathcal{M})$ bimodule categories $E: \mathcal{M} \overset{\sim}{\to} \Mod$-$A$ (where the right action of $\End_{\mathcal{T}}(\mathcal{M})$ is given by the equivalence $\tilde{E}: \End_{\mathcal{T}}(\mathcal{M}) \to \End_{\mathcal{T}}(\Mod$-$A): F \mapsto E^{-1} \otimes F \otimes E$.) The equivalence $\Psi$ also induces an equivalence of $\mathcal{T}$-$\End_{\mathcal{T}}(\Mod$-$A)$ bimodule categories $\Psi_{\mathbbm{1},A}: \Mod$-$A \overset{\sim}{\to} \Hom_{\mathcal{T}}(\mathcal{T},\Mod$-$A)$, where the right action of $ \End_{\mathcal{T}}(\Mod$-$A)$ on $\Hom_{\mathcal{T}}(\mathcal{T},\Mod$-$A)$ is given by postcomposition; this can be extended to a morphism of $\mathcal{T}$-$\End_{\mathcal{T}}(\mathcal{M})$ bimodule categories using $\tilde{E}$. Finally, there is an equivalence of left $\mathcal{T}$-$\End_{\mathcal{T}}(\mathcal{M})$ bimodule categories $\Hom_{\mathcal{T}}(\mathcal{T},\Mod$-$A) \overset{\sim}{\to} \Hom_{\mathcal{T}}(\mathcal{T},\mathcal{M})$ given by postcomposition with $E^{-1}$.
\end{proof}
\noindent
By what was already said at the beginning of this section, to obtain  a classification of finite-dimensional $G$-$C^*$-algebras for a compact quantum group $G$, simply set $\mathcal{T} = \Rep(G)$ in Theorem~\ref{thm:classificationoffs}.

Before moving on we make a brief remark about how connectedness (a.k.a. ergodicity) of $\F$s (considered in e.g.~\cite{Bichon2005,DeCommer2012,Arano2015}) relates to the above classification. 
\begin{definition}
Let $\mathcal{T}$ be a rigid $C^*$-tensor category. We say that an simple $\F$ $A$ in $\mathcal{T}$ is \emph{connected} if $\Hom(\mathbbm{1},A)$ (i.e. the Hom-space between these objects in $\mathcal{T}$) is one-dimensional.
\end{definition}
\begin{proposition}\label{prop:ergodic}
Let $A$ be an $\F$ in $\mathcal{T}$, let $\mathcal{M}$ be the $\mathcal{T}$-module category representing its Morita class, and let $X$ be an object of $\mathcal{M} \simeq \Hom_{\mathcal{T}}(\mathcal{T},\mathcal{M})$ such that $X \otimes X^* \cong A$. Then $A$ is connected precisely when $X$ is a simple object in $\mathcal{M}$.
\end{proposition}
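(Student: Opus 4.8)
The plan is to express the one-dimensionality condition defining connectedness purely in terms of the endomorphism algebra of $X$, and then invoke semisimplicity of $\mathcal{M}$. Concretely, I will produce a linear isomorphism $\Hom_{\mathcal{T}}(\mathbbm{1}, A) \cong \End_{\mathcal{M}}(X)$ and observe that $X$ is simple in the semisimple $C^*$-category $\mathcal{M}$ precisely when $\End_{\mathcal{M}}(X) \cong \mathbb{C}$, i.e. is one-dimensional.

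First I would transport the computation into $\Mod(\mathcal{T})$. Using the equivalence of $C^*$-tensor categories $\mathcal{T} \simeq \End_{\mathcal{T}}(\mathcal{T})$ of Corollary~\ref{cor:embedtmodt}, together with the hypothesis $X \otimes X^* \cong A$, we obtain $\Hom_{\mathcal{T}}(\mathbbm{1}, A) \cong \Hom(\id_{\mathcal{T}}, X \otimes X^*)$, where the right-hand side is the space of 2-morphisms $\id_{\mathcal{T}} \to X \otimes X^*$ in $\Mod(\mathcal{T})$. Since $\Mod(\mathcal{T})$ is rigid (Theorem~\ref{thm:eilenbergwatts}), $X^*$ is a genuine dual of $X$, and bending the $X^*$-leg of such a 2-morphism around a cap of the duality gives a linear map $\Hom(\id_{\mathcal{T}}, X \otimes X^*) \to \End(X)$; the cup produces a candidate inverse, and the two are mutually inverse by the snake equations~\eqref{eq:snake}. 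Hence $\Hom(\id_{\mathcal{T}}, X \otimes X^*) \cong \End(X)$ as vector spaces, where $\End(X)$ denotes the 2-endomorphisms of the 1-morphism $X \colon \mathcal{T} \to \mathcal{M}$.

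It remains to reconcile the two incarnations of $X$. Under the $C^*$-category equivalence $\mathcal{M} \simeq \Hom_{\mathcal{T}}(\mathcal{T}, \mathcal{M})$ (evaluation at $\mathbbm{1}$, as used in Theorem~\ref{thm:classificationoffs}), the object $X \in \mathcal{M}$ corresponds to the 1-morphism $X \colon \mathcal{T} \to \mathcal{M}$, and any equivalence of $C^*$-categories induces a $*$-isomorphism of endomorphism algebras; thus $\End_{\mathcal{M}}(X) \cong \End(X)$. Chaining the isomorphisms gives $\Hom_{\mathcal{T}}(\mathbbm{1}, A) \cong \End_{\mathcal{M}}(X)$. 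Since $\mathcal{M}$ is semisimple, $\End_{\mathcal{M}}(X)$ is a finite-dimensional $C^*$-algebra which is one-dimensional exactly when $X$ is a simple object; so $\Hom_{\mathcal{T}}(\mathbbm{1}, A)$ is one-dimensional exactly when $X$ is simple, which is the claim.

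The computation is short, and the only points demanding care are bookkeeping rather than genuine obstacles: one must check that the duality isomorphism $\Hom(\id_{\mathcal{T}}, X \otimes X^*) \cong \End(X)$ is linear and dimension-preserving (which the snake equations guarantee, independently of the normalisation of the chosen standard dual), and that the equivalence $\mathcal{M} \simeq \Hom_{\mathcal{T}}(\mathcal{T}, \mathcal{M})$ really carries the object $X$ to the 1-morphism $X$, so that the two endomorphism algebras are identified. Both are immediate from the machinery already set up, so I expect no substantive difficulty.
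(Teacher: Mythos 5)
Your proposal is correct and takes essentially the same approach as the paper: the paper's (one-line) proof likewise uses rigidity of $\Mod(\mathcal{T})$ to obtain a linear isomorphism between $\Hom(\mathbbm{1}, X \otimes X^*)$ in $\mathcal{T} \simeq \End_{\mathcal{T}}(\mathcal{T})$ and $\End(X)$ in $\Hom_{\mathcal{T}}(\mathcal{T},\mathcal{M})$, and then concludes from the definitions of connectedness and simplicity. You have simply made explicit the cup/cap bending, the snake-equation check, and the bookkeeping identifications that the paper leaves to the reader.
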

\begin{proof}
Rigidity of $\Mod(\mathcal{T})$ induces a linear isomorphism between the vector spaces $\Hom(\mathbbm{1},X \otimes X^*)$ in $\mathcal{T} \simeq \End_{\mathcal{T}}(\mathcal{T})$ and $\End(X)$ in $\Hom_{\mathcal{T}}(\mathcal{T},\mathcal{M})$.
\end{proof}

\subsection{A covariant Stinespring theorem}\label{sec:covstinespring}

We now consider covariant channels between $G$-$C^*$-algebras. 

Let us consider the case without symmetry first. Let $A,B$ be two finite-dimensional $C^*$-algebras. As explained in Section~\ref{sec:gc*alg}, using the canonical trace on these $C^*$-algebras we define an inner product giving rise to $\F$s $A,B$ in $\Hilb$. The standard notion of a physical transformation, or \emph{channel}, is a completely positive trace-preserving linear map. It was shown in~\cite{Coecke2016,Heunen2019} that complete positivity of a linear map $A \to B$ as a morphism in $\Hilb$ can be expressed in terms of the Frobenius algebra structures on $A,B$. To this end we make the following definition, which makes sense in any rigid $C^*$-tensor category. 
\begin{definition}\label{def:channel}
Let $\mathcal{T}$ be a rigid $C^*$-tensor category and let $A,B$ be $\F$s in $\mathcal{T}$. Let $f: A \to B$ be a morphism. We say that $f$ satisfies the \emph{CP condition}, or is a \emph{CP morphism}, when there exists an object $S$ of $\mathcal{T}$ and a morphism $g: A \otimes B \to S$ such that the following equation holds:
\begin{calign}\label{eq:cpcond}
\includegraphics[scale=.8]{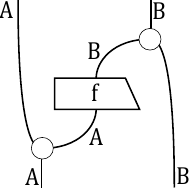}
~~=~~
\includegraphics[scale=.8]{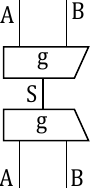}
\end{calign}
In other words, the morphism on the LHS of~\eqref{eq:cpcond} is positive as a morphism in $\Rep(G)$.
\end{definition}
\noindent 
It is shown in~\cite[Thm. 7.18]{Heunen2019} that a linear map $A \to B$ is completely positive precisely when it obeys the CP condition as a morphism in $\Hilb$. To complete the definition of a channel, we observe that, since the canonical trace on a finite-dimensional $C^*$-algebra is precisely the counit of the corresponding $\F$, trace-preservation corresponds to counit-preservation. 
\begin{definition}
We say that a CP morphism $f: A \to B$ is a  \emph{channel} when it satisfies the second equation of~\eqref{eq:cohomo}.
\end{definition}
\noindent
This characterisation extends straightforwardly to $G$-$C^*$-algebras for a compact quantum group $G$. In Section~\ref{sec:gc*alg} we saw how an $\F$ $A$ in $\Rep(G)$ corresponds to a $G$-$C^*$-algebra equipped with its canonical $G$-invariant functional; the concrete $C^*$-algebra is obtained as the image $F(A)$ of the $\F$ under the canonical fibre functor $F: \Rep(G) \to \Hilb$, and the concrete $A_G$-coaction is obtained by T-K-W duality. The canonical fibre functor maps a morphism $A \to B$ in $\Rep(G)$ to a \emph{covariant} linear map $F(A) \to F(B)$ (that is, an intertwiner of $G$-representations). It is known (see e.g.~\cite[Prop. 3.22]{Verdon2020b}) that, for $\F$s $A,B$ in $\Rep(G)$, for any covariant completely positive map $f: F(A) \to F(B)$ there is a unique CP morphism $\tilde{f}:A \to B$ in $\Rep(G)$ such that $F(\tilde{f}) = f$. Preservation of the canonical $G$-invariant functional precisely corresponds to counit preservation. Completely positive maps/channels between $G$-$C^*$-algebras can therefore be identified with CP morphisms/channels between $\F$s in $\Rep(G)$. 

We can further generalise by considering CP morphisms and channels between $\F$s in $\mathcal{T}$, where $\mathcal{T}$ is a general rigid $C^*$-tensor category. Without a fibre functor there is no obvious way to identify $\F$s with concrete $C^*$-algebras or morphisms with linear maps; however, the theory holds in this general setting.

We now state the result. Let $\mathcal{T}$ be a rigid $C^*$-tensor category. We saw in Corollary~\ref{cor:embedtmodt} that $\mathcal{T}$ embeds as the endomorphism category $\End_{\mathcal{T}}(\mathcal{T})$ in the semisimple $C^*$-2-category $\Mod(\mathcal{T})$. By semisimplicity, for every $\F$ $A$ in $\mathcal{T} \simeq \End_{\mathcal{T}}(\mathcal{T})$ there exists an object $\mathcal{M}$ of $\Mod(\mathcal{T})$ and a separable 1-morphism $X: \mathcal{T} \to \mathcal{M}$ such that $A \cong X \otimes X^*$. This is the context for the following theorem.

\begin{theorem}[Covariant Stinespring theorem]\label{thm:covstinespring}
Let $\mathcal{C}$ be a semisimple $C^*$-2-category and let $r$ be any object. Let $X:r \to s$, $Y: r \to t$ be separable 1-morphisms, and let $f: X \otimes X^* \to Y \otimes Y^*$ be a CP morphism between the corresponding $\F$s in $\End(r)$. 

Then there exists a 1-morphism $E: t \to s$ (the `environment') and a 2-morphism $\tau: X \to Y \otimes E$ such that the following equation holds:
\begin{calign}\label{eq:stinespring}
\includegraphics[scale=.8]{pictures/covstinespring/covstinespring11.pdf}
~~=~~
\includegraphics[scale=.8]{pictures/covstinespring/covstinespring12.pdf}
\end{calign}
We say that $\tau$ is a \emph{dilation} of $f$. The morphism 
\begin{calign}\label{eq:ssisometrycond}
\includegraphics[scale=.8]{pictures/covstinespring/covstinespringextra1.pdf}
\end{calign}
is an isometry if and only if $f$ is a channel.

In the other direction, for any 1-morphism $E: t \to s$ and 2-morphism $\tau: X \to Y \otimes E$, the morphism $f: X \otimes X^* \to Y \otimes Y^*$ defined by~\eqref{eq:stinespring} is CP, and a channel if and only if~\eqref{eq:ssisometrycond} is an isometry.

Different dilations for a CP morphism $f: X \otimes X^* \to Y \otimes Y^*$ are related by a partial isometry on the environment. Specifically, let $\tau_1: X \to Y \otimes E_1$, $\tau_2: X \to Y \otimes E_2$ be two dilations of $f$. Then there exists a partial isometry $\alpha: E_1 \to E_2$ such that 
\begin{align*}\label{eq:alphareldilations}
(\id_Y \otimes \alpha) \circ \tau_1 = \tau_2 
&&
(\id_Y \otimes \alpha^{\dagger}) \circ \tau_2 = \tau_1 
\end{align*}
In particular, the dilation minimising the quantum dimension of the environment $d(E)$ is unique up to unitary $\alpha$. (A concrete construction of the minimal dilation from any other dilation is specified in the last paragraph of the proof.)
\end{theorem}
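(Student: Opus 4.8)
The plan is to prove the two implications of the dilation characterisation separately, then establish uniqueness up to a partial isometry, and finally extract the minimal dilation. I would begin with the converse (``if'') direction, since equation~\eqref{eq:stinespring} already exhibits $f$ in a manifestly positive doubled form built from a single copy of $\tau$ and its dagger. Given any $E: t \to s$ and $\tau: X \to Y \otimes E$, I would read off a witness for the CP condition of Definition~\ref{def:channel} directly from the right-hand side of~\eqref{eq:stinespring}: bending the relevant $X^*$ and $Y^*$ legs with the cups and caps of the unitary duals functor rewrites~\eqref{eq:stinespring} as an expression of the form $g^\dagger \circ g$, which is exactly the positivity required by~\eqref{eq:cpcond}. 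For the channel clause I would compose $f$ with the counit of the pair-of-pants algebra $Y \otimes Y^*$, which by Proposition~\ref{prop:pairofpants} and the normalisation $n_Y$ of Definition~\ref{def:separable1morph} is the normalised cap; using the snake equations~\eqref{eq:snake} the $Y$-loop closes and the computation collapses to the assertion that~\eqref{eq:ssisometrycond} is an isometry, the factors $n_X,n_Y$ being precisely what is needed to match the counit of $X \otimes X^*$.

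For the forward (``only if'') direction I would unfold the CP witness. By Definition~\ref{def:channel} a CP morphism $f$ comes equipped with an object $S$ and a 2-morphism $g$ realising the positive form on the left of~\eqref{eq:cpcond}. Bending the $X^*$, $Y$ and $Y^*$ legs of $g$ into the appropriate configuration with the cups and caps of the duals functor produces a 2-morphism $\tau: X \to Y \otimes E$, where $E: t \to s$ is the 1-morphism collecting the bent duals of $X$ and $Y$ together with $S$. Substituting this $\tau$ into the right-hand side of~\eqref{eq:stinespring} and straightening the wires with~\eqref{eq:snake} recombines the two copies of $\tau$ into the form $g^\dagger \circ g$, and the CP equation~\eqref{eq:cpcond} then identifies this with $f$, so $\tau$ is a dilation. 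The channel clause is handled exactly as in the converse direction.

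The heart of the theorem, and the step I expect to be the main obstacle, is the comparison of two dilations $\tau_1: X \to Y \otimes E_1$ and $\tau_2: X \to Y \otimes E_2$ of the same $f$. I would define the candidate partial isometry $\alpha: E_1 \to E_2$ by partial-tracing out $Y$ from $\tau_2 \circ \tau_1^\dagger: Y \otimes E_1 \to Y \otimes E_2$, concretely $\alpha := n_Y^{-2}(\eta_Y^\dagger \otimes \id_{E_2}) \circ (\id_{Y^*} \otimes (\tau_2 \circ \tau_1^\dagger)) \circ (\eta_Y \otimes \id_{E_1})$, with the analogous expression for $\alpha^\dagger$. The key input is that $\tau_1$ and $\tau_2$ satisfy~\eqref{eq:stinespring} for the \emph{same} $f$: this lets me replace, inside $\alpha^\dagger \circ \alpha$ and $\alpha \circ \alpha^\dagger$, a $Y$-traced product $\tau_i^\dagger(\cdots)\tau_i$ by a term depending only on $f$ and hence common to $i=1,2$. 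Carrying out this replacement yields $(\id_Y \otimes \alpha) \circ \tau_1 = \tau_2$ and $(\id_Y \otimes \alpha^\dagger) \circ \tau_2 = \tau_1$, and shows that $\alpha^\dagger \circ \alpha$ and $\alpha \circ \alpha^\dagger$ are dagger idempotents, so $\alpha$ is a partial isometry in the sense of Definition~\ref{def:isomunitary}. The real difficulty lies here: fixing the exact power of $n_Y$ and pushing the 2-morphisms through the $Y$-cups and caps with Proposition~\ref{prop:sliding} without misplacing a leg is where the bookkeeping is genuinely delicate, and it is at this point that the dilation equation, rather than merely the CP condition, must be invoked.

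Finally, for minimality I would observe that from any dilation $\tau: X \to Y \otimes E$ the positive 2-morphism $p \in \End(E)$ obtained by partial-tracing $\tau \circ \tau^\dagger$ over $Y$ detects the part of $E$ actually used: by Remark~\ref{rem:polar} its support projection $s(p)$ splits through an isometry $\iota: E_{\min} \to E$, and $\tau$ factors as $(\id_Y \otimes \iota) \circ \tau_{\min}$ with $\tau_{\min}: X \to Y \otimes E_{\min}$ again a dilation; this is the concrete construction promised in the statement. Any dilation of strictly smaller $d(E)$ would contradict this factorisation, so $\tau_{\min}$ minimises $d(E)$. If $\tau_1,\tau_2$ are both minimal, then the support projections of the partial isometry $\alpha$ from the previous paragraph are full, i.e.\ $\alpha^\dagger \circ \alpha = \id_{E_1}$ and $\alpha \circ \alpha^\dagger = \id_{E_2}$, so $\alpha$ is unitary, giving uniqueness of the minimal dilation up to a unitary on the environment.
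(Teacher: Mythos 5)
Your handling of the two existence directions and of the channel/isometry criterion is fine in outline: the paper does not prove the dilation characterisation itself but cites~\cite{Henriques2020} (Lem.~5.12) for it, so your direct bend-the-witness argument is a legitimate, more self-contained route, and it is essentially how that cited lemma is proved (the wire-level details and the placement of the $n_X,n_Y$ factors are glossed, but the idea is right). The genuine problem is in the step you correctly identify as the heart of the theorem: the comparison of two dilations.

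Your candidate $\alpha := n_Y^{-2}(\eta_Y^{\dagger} \otimes \id_{E_2}) \circ (\id_{Y^*} \otimes (\tau_2 \circ \tau_1^{\dagger})) \circ (\eta_Y \otimes \id_{E_1})$ is, in terms of the bent morphisms $\tilde{\tau}_i := (\eta_Y^{\dagger} \otimes \id_{E_i}) \circ (\id_{Y^*} \otimes \tau_i) : Y^* \otimes X \to E_i$, just $\alpha = n_Y^{-2}\, \tilde{\tau}_2 \circ \tilde{\tau}_1^{\dagger}$. The hypothesis that $\tau_1,\tau_2$ dilate the same $f$ says exactly $\tilde{\tau}_1^{\dagger} \circ \tilde{\tau}_1 = \tilde{\tau}_2^{\dagger} \circ \tilde{\tau}_2$, and substituting this into your expressions gives $\alpha^{\dagger} \circ \alpha = n_Y^{-4} (\tilde{\tau}_1 \circ \tilde{\tau}_1^{\dagger})^2$ and $\alpha \circ \tilde{\tau}_1 = n_Y^{-2}\, \tilde{\tau}_2 \circ (\tilde{\tau}_1^{\dagger} \circ \tilde{\tau}_1)$. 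Neither is what you need: a square of a positive element is not a dagger idempotent unless that element is essentially a projection, and $\tilde{\tau}_2 \circ |\tilde{\tau}_1|^2 \neq \tilde{\tau}_2$ in general, so $\alpha$ is not a partial isometry and does not intertwine the dilations. A minimal counterexample: take $\mathcal{T} = \Hilb$, $X = Y = \mathbb{C}$, $E_1 = E_2 = \mathbb{C}^2$, and $\tau_1 = \tau_2 = \tau = (1,2)^{T}$. Then $n_Y = 1$ and your $\alpha = \tau\tau^{\dagger}$ satisfies $\alpha^{\dagger}\alpha = 5\,\tau\tau^{\dagger}$ (not a projection) and $\alpha \circ \tau_1 = 5\tau \neq \tau_2$. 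The missing idea is the polar decomposition available in these Hom-categories (Remark~\ref{rem:polar}): since $\tilde{\tau}_1^{\dagger}\tilde{\tau}_1 = \tilde{\tau}_2^{\dagger}\tilde{\tau}_2 =: |\tilde{\tau}|^2$, one writes $\tilde{\tau}_i = u_i \circ |\tilde{\tau}|$ with $u_i$ partial isometries and sets $\alpha := u_2 \circ u_1^{\dagger}$. This is your partial trace with the spurious positive factor $|\tilde{\tau}|^2$ stripped out, and with it the computations you describe do close up: $\alpha^{\dagger}\alpha = s(|\tilde{\tau}^{\dagger}_1|)$ is a projection and $\alpha \circ \tilde{\tau}_1 = u_2 \circ s(|\tilde{\tau}|) \circ |\tilde{\tau}| = \tilde{\tau}_2$.

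Your final paragraph on minimality is structurally the paper's argument (the support projection of $\tilde{\tau}\tilde{\tau}^{\dagger}$ splits through an isometry $\iota$, and fullness of the relevant supports forces the comparison map between minimal dilations to be unitary), and it goes through once $\alpha$ is repaired as above; note also that comparing $d(E_{\min})$ with the environment dimension of an \emph{arbitrary} dilation requires the (corrected) partial-isometry comparison, not merely the factorisation of the particular dilation you started from.
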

\begin{proof}
The fact that a morphism between SSFAs is CP iff it admits a representation~\eqref{eq:stinespring} was shown in~\cite[Lem. 5.12]{Henriques2020}. It is straightforward to see that~\eqref{eq:ssisometrycond} is an isometry if and only if $f$ is a channel:
\begin{calign}
\includegraphics[scale=.8]{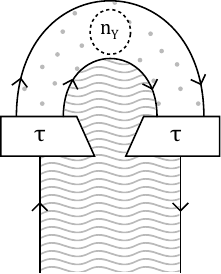}
~~=~~
\includegraphics[scale=.8]{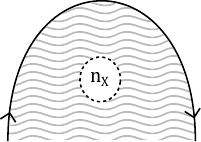}
\qquad \Leftrightarrow \qquad 
\includegraphics[scale=.8]{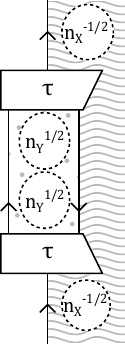}
~~=~~
\includegraphics[scale=.8]{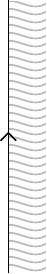}
\end{calign}
We now show that different dilations are related by a partial isometry. Let $\tau_1: X \to Y \otimes E_1$, $\tau_2: X \to Y \otimes E_2$ be two dilations of the same CP morphism. For each $i \in \{1,2\}$ we define the following morphism $\tilde{\tau}_i: Y^* \otimes X \to E_i:$
\begin{calign}
\includegraphics[scale=.8]{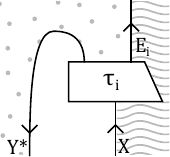}
\end{calign}
The fact that $\tau_1$ and $\tau_2$ are dilations of the same CP morphism comes down to the following equation:
\begin{calign}
\includegraphics[scale=.8]{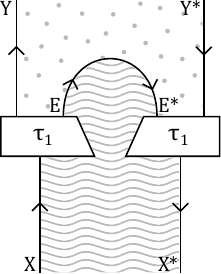}
~~=~~
\includegraphics[scale=.8]{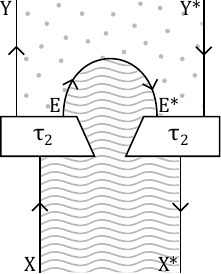}
\qquad \Leftrightarrow \qquad
\includegraphics[scale=.8]{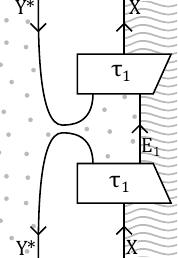}
~~=~~
\includegraphics[scale=.8]{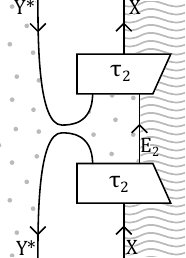}
\end{calign}
In inline notation, this is:
\begin{equation}\label{eq:cpinline}
\tilde{\tau}_1^{\dagger} \circ \tilde{\tau}_1 = \tilde{\tau}_2^{\dagger} \circ \tilde{\tau}_2 
\end{equation}
We perform the polar decomposition (Remark~\ref{rem:polar}) on $\tilde{\tau}_1$ and $\tilde{\tau}_2$. Observe that
$$|\tilde{\tau}| := |\tilde{\tau}_1| = (\tilde{\tau}_1^{\dagger} \circ \tilde{\tau}_1)^{1/2} = (\tilde{\tau}_2^{\dagger} \circ \tilde{\tau}_2)^{1/2} = |\tilde{\tau}_2|,$$
where the second equality is by~\eqref{eq:cpinline}. We therefore have 
\begin{align*}
\tilde{\tau}_1 = u_1 \circ |\tilde{\tau}| && \tilde{\tau}_2 = u_2 \circ |\tilde{\tau}|
\end{align*}
where $u_i: Y^* \otimes X \to E_i$ is a partial isometry such that $u_i^{\dagger} \circ u_i = s(|\tilde{\tau}|)$ and $u_i \circ u_i^{\dagger} = s(|\tilde{\tau}^{\dagger}|)$.

Now we define $\alpha:= u_2 \circ u_1^{\dagger}$. To see that $\alpha: E_1 \to E_2$ is a partial isometry:
\begin{align}\label{eq:alphapartialisomdilation}
\alpha^{\dagger} \circ \alpha = u_1 \circ u_2^{\dagger} \circ u_2 \circ u_1^{\dagger} = u_1 \circ s(|\tilde{\tau}|) \circ u_1^{\dagger} =  u_1 \circ u_1^{\dagger} \circ u_1 \circ u_1^{\dagger} = s(|\tilde{\tau}^{\dagger}|)
\end{align}
The fact that $(\id_Y \otimes \alpha) \circ \tau_1 = \tau_2$ follows immediately from the following equation (simply transpose the $Y$-wire):
\begin{align*}
\alpha \circ (\tilde{\tau}_1) = u_2 \circ u_1^{\dagger} \circ u_1 \circ |\tilde{\tau}| = u_2 \circ s(|\tilde{\tau}|) \circ |\tilde{\tau}| = u_2 \circ |\tilde{\tau}| = \tilde{\tau}_2
\end{align*}
The proof that $(\id_Y \otimes \alpha^{\dagger}) \circ \tau_2 = \tau_1$ is similar. 

To see that the dilation minimising the quantum dimension of the environment is unique up to a unitary, suppose that $\tau_1$ and $\tau_2$ are minimal dilations related by a partial isometry $\alpha: E_1 \to E_2$. Then $\tau_1 = (\id_Y \otimes (\alpha^{\dagger} \circ \alpha)) \circ \tau_1$. Let us split the dagger idempotent $\alpha^{\dagger} \circ \alpha$ to obtain an isometry $i: \tilde{E}_1 \to E_1$. Then $(\id_Y \otimes i^{\dagger}) \circ \tau_1$ is also a dilation, and $d(\tilde{E}_1) \leq d(E_1)$ with equality iff $i$ is unitary; unitarity of $i$ therefore follows by minimality of $\tau_1$, and it follows that $\alpha$ is an isometry. Making the same argument for $\alpha \circ \alpha^{\dagger}$ and $\tau_2$ we obtain that $\alpha$ is a coisometry, and therefore $\alpha$ is unitary. 

Finally, to construct a minimal dilation from a given dilation $\tau: X \to Y \otimes E$, take the projection $s(|\tilde{\tau}^{\dagger}|)$. Split this idempotent to obtain an isometry $i: \tilde{E} \to E$ and define the minimal dilation as $(\id_Y \otimes i^{\dagger}) \circ \tau$. It follows from~\eqref{eq:alphapartialisomdilation} and the definition of $s(|\tilde{\tau}^{\dagger}|)$ that, for any other dilation $\tau': X \to Y \otimes E'$, the partial isometry $\alpha: \tilde{E} \to E'$ relating it to the minimal dilation will be a genuine isometry; therefore $d(\tilde{E}) \leq d(E')$ with equality iff $\alpha$ is unitary. 
\end{proof}
\noindent
We now show how this theorem recovers previous results in the literature.
\begin{example}[{Finite-dimensional Stinespring theorem}]\label{ex:vanillastinespring}
Let us show how Theorem~\ref{thm:covstinespring} implies the standard f.d. covariant Stinespring's theorem (e.g.~\cite[Thm. 2]{Holevo2007}\cite[Thm. 1]{Stinespring1955}\cite[Thm. 15]{Szafraniec2010}\cite[Thm. 1]{Scutaru1979}\cite[Thm. 2.1]{Paulsen1982}).

Let $A,B$ be $G$-$C^*$-algebras. There is an equivalence between the 2-category $\Mod(\Rep(G))$ and the 2-category whose objects are finite-dimensional $G$-$C^*$-algebras, whose 1-morphisms are $G$-equivariant finitely generated Hilbert bimodules, and whose 2-morphisms are equivariant bimodule maps. This equivalence takes the object $\Rep(G)$ onto the trivial $G$-$C^*$-algebra $\mathbb{C}$. Therefore $A \cong X \otimes X^*$ and $B \cong Y \otimes Y^*$, where $X$ and $Y$ are equivariant right Hilbert $A$- and right Hilbert $B$-modules respectively, considered as 1-morphisms $\mathbb{C} \to A$ and $\mathbb{C} \to B$. In Theorem~\ref{thm:covstinespring} we characterised CP maps $f: X \otimes X^* \to Y \otimes Y^*$.  Making a slight conventional change (which is clearly equivalent by bending the $E$-wire and scaling $\tau$), the theorem says that $f$ is completely positive if and only if there exists a Hilbert $A$-$B$-bimodule $E: A \to B$ and an equivariant bimodule map $\tau: X \otimes E \to Y$ such that the following equation holds:
\begin{calign}\label{eq:stinespringconventionchange}
\includegraphics[scale=.8]{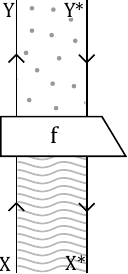}
~~=~~
\includegraphics[scale=.8]{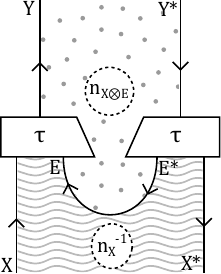}
\end{calign}
Here the regions corresponding to $A$ are shaded with wavy lines and regions corresponding to $B$ with polka dots. We observe that $X \otimes E$ is an equivariant right Hilbert $B$-module; this is the Hilbert $B$-module $\mathcal{E}$ in~\cite[Eq. 22]{Szafraniec2010}. The pair of pants algebra $(X \otimes E) \otimes (E^* \otimes X^*)$ is the $*$-algebra $B^*(\mathcal{E})$. Now it is easy to check that the 2-morphism
\begin{calign}\label{eq:stinespringstarhom}
\includegraphics[scale=.8]{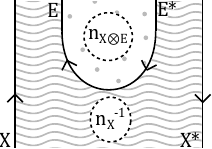}
\end{calign}
is a $*$-homomorphism; we thus obtain the equivariant $*$-homomorphism $\Phi: A \to B^*(\mathcal{E})$. As a map $B^*(\mathcal{E}) \to B$, the 2-morphism 
\begin{calign}
\includegraphics[scale=.8]{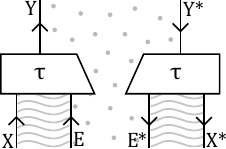}
\end{calign}
can be expressed as $x \mapsto \hat{\tau}  x \hat{\tau}^{\dagger}$, where $\hat{\tau}:= \tau \otimes (n_Y^{-1/2} \circ n_{X \otimes E}^{1/2})$. (Here the normalisation comes from the choice of functional on the algebra.) We therefore set $\tau^{\dagger}: Y \to X \otimes E$ to be the equivariant module map $V$ of~\cite[Eq. 22]{Szafraniec2010}. We thus obtain the characterisation of completely positive maps $A \to B(H)$ given in that theorem. Another common statement (not actually given in~\cite{Stinespring1955}) is that $f$ is unital if and only if $V$ is an isometry. But $f$ is unital if and only if $f^{\dagger}$ is trace-preserving, and so we require that 
\begin{calign}\label{eq:stinspinftp}
\includegraphics[scale=.8]{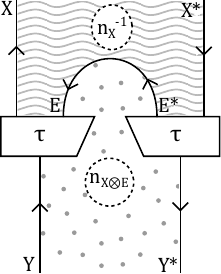}
\end{calign}
is trace-preserving. By Theorem~\ref{thm:covstinespring} we see that~\eqref{eq:stinspinftp} preserves the canonical separable trace if and only if $ \tau^{\dagger} \otimes (n_Y^{-1/2} \circ n_{X \otimes E}^{1/2}) =  V$ is an isometry.
\end{example}

\subsection{A covariant Choi theorem}\label{sec:choi}

We finish by observing the following corollary of the covariant Stinespring theorem. 
\begin{theorem}[Covariant Choi theorem]\label{thm:choi}
Let $\mathcal{C}$ be a semisimple $C^*$-2-category and let $r$ be any object. Let $X: r \to s$, $Y: r \to t$ be separable 1-morphisms, and let $X \otimes X^*$ and $Y \otimes Y^*$ be the corresponding $\F$s in $\End(r)$. Then there is a bijective correspondence (in fact, an isomorphism of convex cones, in the sense that it preserves positive linear combinations) between:
\begin{itemize}
\item CP morphisms $f: X \otimes X^* \to Y \otimes Y^*$.
\item Positive elements $\tilde{f} \in \End(Y^* \otimes X)$.
\end{itemize}
The correspondence is given as follows:
\begin{calign}
\includegraphics[scale=.7]{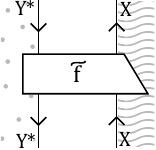}
~~=~~
\includegraphics[scale=.7]{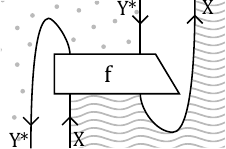}
\end{calign}
\end{theorem}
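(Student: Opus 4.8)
The plan is to deduce the statement directly from the covariant Stinespring theorem (Theorem~\ref{thm:covstinespring}), which has just been established. The first observation is that the displayed formula relating $f$ and $\tilde f$ is simply a rearrangement of wires: bending the $Y$-output down to the input and the $X^*$-input up to the output, using the cups and caps of the rigid structure, defines a map $\Gamma \colon \End(Y^* \otimes X) \to \Hom(X \otimes X^*, Y \otimes Y^*)$, $\tilde f \mapsto f$, whose inverse is given by bending the wires back. This is an instance of the adjunction isomorphism $\Hom(X \otimes X^*, Y \otimes Y^*) \cong \End(Y^* \otimes X)$, so $\Gamma$ is a $\mathbb{C}$-linear isomorphism of vector spaces. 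Since $\Gamma$ is built purely from horizontal and vertical composition with fixed cup and cap $2$-morphisms, it preserves arbitrary linear combinations in both directions; hence it will suffice to show that $\Gamma$ restricts to a bijection from the cone of positive elements of $\End(Y^* \otimes X)$ onto the cone of CP morphisms $X \otimes X^* \to Y \otimes Y^*$, and the isomorphism-of-convex-cones claim then follows for free from linearity.

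For the forward inclusion I would take a positive element $\tilde f \in \End(Y^* \otimes X)$. By Definition~\ref{def:isomunitary}, positivity in the finite-dimensional $C^*$-algebra $\End(Y^* \otimes X)$ means $\tilde f = g^{\dagger} \circ g$ for some $2$-morphism $g \colon Y^* \otimes X \to S$; concretely one may take $g = |\tilde f|$ with $S = Y^* \otimes X$ via the polar decomposition of Remark~\ref{rem:polar}. Treating $g$ as the transpose $\tilde\tau$ of a dilation and bending the $Y^*$-wire to the top produces a $2$-morphism $\tau \colon X \to Y \otimes S$. The key step is to check that substituting $\tilde f = g^{\dagger} \circ g$ into the Choi formula $\Gamma(\tilde f)$ reproduces exactly the Stinespring expression~\eqref{eq:stinespring}, with environment $E = S$ and dilation $\tau$; this is a direct diagrammatic computation using the definition of $\tilde\tau$ from $\tau$ given in the proof of Theorem~\ref{thm:covstinespring} together with the snake and pivotal-slide relations of Proposition~\ref{prop:sliding}. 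Granting this identity, the ``other direction'' of Theorem~\ref{thm:covstinespring} immediately shows that $\Gamma(\tilde f)$ is a CP morphism.

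For surjectivity onto CP morphisms I would start from an arbitrary CP morphism $f$. By Theorem~\ref{thm:covstinespring} it admits a dilation $\tau \colon X \to Y \otimes E$ satisfying~\eqref{eq:stinespring}; forming its transpose $\tilde\tau \colon Y^* \otimes X \to E$ and setting $\tilde f := \tilde\tau^{\dagger} \circ \tilde\tau$ yields a positive element of $\End(Y^* \otimes X)$ with $\Gamma(\tilde f) = f$, by the identity established in the previous paragraph. Crucially, $\tilde f = \Gamma^{-1}(f)$ does not depend on the chosen dilation: this is exactly equation~\eqref{eq:cpinline} from the proof of Theorem~\ref{thm:covstinespring}, which asserts $\tilde\tau_1^{\dagger} \circ \tilde\tau_1 = \tilde\tau_2^{\dagger} \circ \tilde\tau_2$ for any two dilations of $f$. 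Hence $\Gamma$ carries the positive cone bijectively onto the cone of CP morphisms.

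Combining these facts, $\Gamma$ is a $\mathbb{C}$-linear isomorphism of Hom-spaces that restricts to a bijection between the two cones, and linearity gives the preservation of positive linear combinations in both directions asserted in the statement. I expect the only genuinely nontrivial point to be the diagrammatic identification in the second paragraph --- namely that bending the wires of $g^{\dagger} \circ g$ according to the Choi formula agrees with the Stinespring formula~\eqref{eq:stinespring}; everything else is either formal wire-bending or already contained in Theorem~\ref{thm:covstinespring}.
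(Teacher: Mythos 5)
Your proposal is correct and follows essentially the same route as the paper: the paper's proof likewise shows that bending the wires of a dilation $\tau$ exhibits the Choi element as $\tilde\tau^{\dagger}\circ\tilde\tau$ (hence positive), and conversely factors a positive $\tilde f$ as $m^{\dagger}\circ m$ and transposes wires to produce a dilation, so that CP-ness follows from Theorem~\ref{thm:covstinespring}. The diagrammatic identity you flag as the one nontrivial step is exactly the chain of pictures the paper supplies, and your additional remarks (linearity of the wire-bending isomorphism, independence of the chosen dilation via~\eqref{eq:cpinline}) are correct refinements of the same argument.
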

\begin{proof}
Let $\tau: X \to Y \otimes E$ be a dilation of $f$, then:
\begin{calign}
\includegraphics[scale=.7]{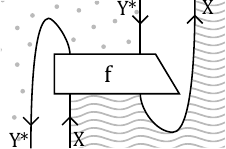}
~~=~~
\includegraphics[scale=.7]{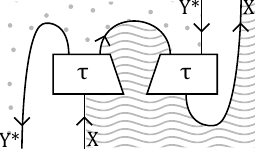}
~~=~~
\includegraphics[scale=.7]{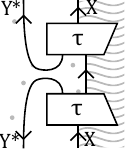}
\end{calign}
The last diagram is clearly the composition of a 2-morphism with its dagger and is therefore positive.

In the other direction, let $\tilde{f}$ be positive. Then we can choose $m$ such that $\tilde{f} = m^{\dagger} \circ m$, and transposing the relevant wires we obtain a dilation for $f$.
\end{proof}
\begin{remark}
To recover the usual Choi's theorem for matrix $C^*$-algebras~\cite{Choi1975}, let $\mathcal{T} = \Hilb$ and $X,Y$ be 1-morphisms $\Hilb \to \Hilb$ in $\Mod(\Hilb)$, i.e. Hilbert spaces. Then Theorem~\ref{thm:choi} says precisely that CP maps $B(X) \to B(Y)$ correspond to positive elements of $B(Y^* \otimes X)$.
\end{remark}

\section{Acknowledgements}

We thank Benjamin Musto, David Reutter, Changpeng Shao, Jamie Vicary and Makoto Yamashita for useful discussions. We thank Luca Giorgetti, David Penneys, Franciszek Szafraniec and an anonymous referee for pointing out some relevant previous work we had missed when reviewing the literature; we apologise to those authors whose work was overlooked in the earlier version. We used the \emph{diagrams} package~\cite{Taylor} for commutative diagrams, and the open-source vector graphics editor \emph{Inkscape} for 2-category diagrams. This project has received funding from the European Research Council (ERC) under the European 8 Union’s Horizon 2020 research and innovation programme (grant agreement No. 817581). This project has also received funding from EPSRC.

\bibliographystyle{halpha}
\bibliography{bibliography}

\section{Appendices}
\subsection{Matrix notation for presemisimple 2-categories}\label{app:mat2cats}
\begin{definition}\label{def:matc}
Let $\mathcal{C}$ be a presemisimple $C^*$-2-category, and let $\{r_{\sigma}\}_{\sigma \in \Sigma}$ be representatives of equivalence classes of simple objects in $\mathcal{C}$, with index set $\Sigma$. We define a presemisimple $C^*$-2-category $\Mat(\mathcal{C})$ as follows:
\begin{itemize}
\item Objects: Finite-length vectors $\vec{\sigma} = [\sigma_1, \dots, \sigma_n]$ of elements $\sigma_i \in \Sigma$ (including the empty vector, which is a zero object).
\item 1-morphisms $[\sigma_1, \dots, \sigma_{n_1}] \to{} [\tau_1, \dots, \tau_{n_2}]$: $n_1 \times n_2$ matrices $M$ whose $i,j$-th entry $M_{ij}$ is a 1-morphism $r_{\sigma_i} \to r_{\tau_{j}}$.
\item 2-morphisms $M \to N$: $n_1 \times n_2$ matrices $f$ whose $i,j$-th entry $f_{ij}$ is a 2-morphism $M_{ij} \to N_{ij}$.
\item Composition of 1-morphisms: $(M \otimes N)_{ik}:= \oplus_{j} M_{ij} \otimes M_{jk}$.
\item Identity 1-morphisms: $(\id_{[\sigma_1,\dots,\sigma_n]})_{ij}:= \delta_{ij} \mathbbm{1}_i$ (where $\delta_{ij}$ indicates the zero 1-morphism $r_{\sigma_i} \to r_{\sigma_{j}}$ if $i \neq j$, and $\mathbbm{1}_i := \id_{r_{\sigma_i}}$).
\item Horizontal composition of 2-morphisms: $(f \otimes g)_{km} := \sum_l i_l \circ (f_{kl} \otimes g_{lm}) \circ i_l^{\dagger}$, where $i_l: M_{kl} \otimes M_{lm} \to \oplus_l M_{kl} \otimes M_{lm}$ is the injection isometry of the direct sum.
\item Vertical composition of 2-morphisms: $(g \circ f)_{ij}:= g_{ij} \circ f_{ij}$.
\item Dagger on 2-morphisms: $(f^{\dagger})_{ij} := (f_{ij})^{\dagger}$.
\item $\mathbb{C}$-linear structure on 2-morphisms: $(\lambda f)_{ij} = \lambda f_{ij}$.
\item Associators: Matrix entries $(\alpha_{M,N,O})_{il}$ are the unitary natural isomorphisms $\oplus_k (\oplus_j M_{ij} \otimes N_{jk}) \otimes O_{kl} \cong \oplus_j M_{ij} \otimes (\oplus_k N_{jk} \otimes O_{kl})$ in $\mathcal{C}$.\footnote{We remark that, as far as we are aware, the existence of natural isomorphisms distributing direct sum over tensor product depends on rigidity of $\mathcal{C}$; see e.g.~\cite[Sec. 3.3.2]{Heunen2019}.}
\item Unitors: The matrix entries (modulo ${\bf 0} \otimes X \cong X \cong X \otimes {\bf 0}$) are given by the unitors $\id_{r_{\sigma_i}} \otimes M_{ij} \cong M_{ij} \cong M_{ij} \otimes \id_{r_{\sigma_j}}$ in $\mathcal{C}$.
\item Additive structure on 1-morphisms: $(M \oplus N)_{ij} := M_{ij} \oplus N_{ij}$.
\item Additive structure on objects: $[\sigma_1,\dots,\sigma_n] = [\sigma_1] \boxplus \dots \boxplus [\sigma_n]$ with the following injection and projection 1-morphisms.
\begin{align*}
\iota_i:=
\begin{pmatrix}
{\bf 0} & \cdots & \mathbbm{1}_i & \cdots & {\bf 0}
\end{pmatrix}
&&
\rho_i:=
\begin{pmatrix}
{\bf 0} \cdots & \mathbbm{1}_i \cdots & {\bf 0}
\end{pmatrix}^T
\end{align*}
It is moreover straightforward to check that $\rho_i$ is a right dual for $\iota_i$ with the following cup and cap (here $\diag([\vec{v}])$ indicates a diagonal matrix of 2-morphisms, i.e. the 2-morphisms on the diagonal are given by the vector $\vec{v}$ and all the 2-morphisms not on the diagonal are the zero 2-morphism):
\begin{align}\label{eq:injectiondual}
\eta_i:= \diag([0, \dots, \id_{\mathbbm{1}_i}, \dots, 0])
&&
\epsilon_i := \id_{\mathbbm{1}_i}
\end{align}
\end{itemize}
\end{definition}
\noindent
We now show that $\Mat(\mathcal{C})$ has duals. In fact, it is straightforward to show that all right duals in $\Mat(\mathcal{C})$ are of the following form. Let $M: [\sigma_1,\dots,\sigma_{n_1}] \to{} [\tau_1,\dots,\tau_{n_2}]$ be a 1-morphism in $\Mat(\mathcal{C})$. Pick a dual $((M_{kl})^*,\eta_{kl},\epsilon_{kl})$ for each $M_{kl}$. We now define a dual 1-morphism $M^*$ as the `conjugate transpose' matrix, i.e.:
\begin{align}\label{eq:dualmatrix}
(M^*)_{kl}:= (M_{lk})^*
\end{align} 
We observe that: 
\begin{align*}
(M^* \otimes M)_{km} = \bigoplus_{l} (M_{lk})^* \otimes M_{lm}
&&
(M \otimes M^*)_{km} = \bigoplus_{l} M_{kl} \otimes (M_{ml})^*
\end{align*}
We then define a right cup and cap $\eta: \id_{[\tau_1,\dots,\tau_{n_2}]} \to M^* \otimes M$ and $\epsilon: M \otimes M^* \to \id_{[\sigma_1,\dots,\sigma_{n_1}]}$ as the following diagonal matrices of 2-morphisms:
\begin{align}\label{eq:matrixdualcupcap}
\eta := \diag([\sum_{l} i_{l1} \circ \eta_{l1},\dots, \sum_{l} i_{l n_2}  \circ \eta_{ln_2}])
&&
\epsilon := \diag([\sum_{l} \epsilon_{1l} \circ \bar{i}_{1l}^{\dagger},\dots, \sum_{l} \epsilon_{n_1 l}  \circ \bar{i}_{n_1 l}^{\dagger}])
\end{align}
Here $i_{lk}: (M_{lk})^* \otimes M_{lk} \to \oplus_l (M_{lk})^* \otimes M_{lk}$ and $\bar{i}_{kl}: M_{kl} \otimes (M_{kl})^* \to \oplus_l M_{kl} \otimes (M_{kl})^*$ are the isometric injections of the direct sums. We will show one of the snake equations for $[M^*,\eta,\epsilon]$; the other is shown similarly. Let $s = (\id_{M^*} \otimes \epsilon)\circ (\eta \otimes \id_{M^*})$. Let $\{\rho_{\sigma,i}, \iota_{\sigma,i}\}$ and $\{\rho_{\tau, i}, \iota_{\tau,i}\}$ be the dual projection and injection 1-morphisms of the direct sum decompositions $\vec{\sigma} = \boxplus_i [\sigma_{i}]$, $\vec{\tau} = \boxplus_i [\tau_i]$. We need to show that $s = \id_{M^*}$; or equivalently that $\id_{\iota_{\tau,j}} \otimes s \otimes \id_{\rho_{\sigma_,i}} = \id_{\iota_{\tau,j}} \otimes \id_{M^*}\otimes \id_{\rho_{\sigma_,i}} = \id_{(M_{ij})^*}$ for all $i,j$. Now:
\begin{calign}\nonumber
\includegraphics[scale=1]{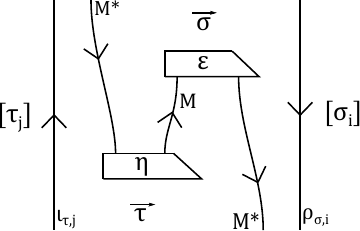}
~~=~~
\includegraphics[scale=1]{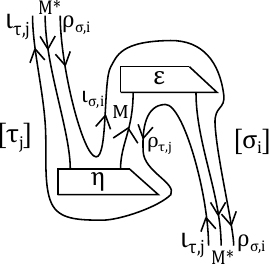}
\\
~~=~~
\includegraphics[scale=1]{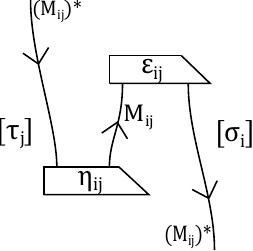}
~~=~~
\includegraphics[scale=1]{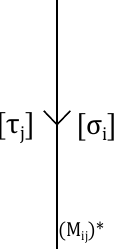}
\end{calign}
Here the first equality uses duality of $\iota$ and $\rho$, the second equality follows straightforwardly from the definitions~(\ref{eq:injectiondual},\ref{eq:matrixdualcupcap}), and the final equality follows since $[(M_{ij})^*,\eta_{ij},\epsilon_{ij}]$ is a right dual for $M_{ij}$ in $\mathcal{C}$.

We now define an equivalence between $\Mat(\mathcal{C})$ and $\mathcal{C}$. 
\begin{definition}
The unitary $\mathbb{C}$-linear 2-functor $\Phi: \Mat(\mathcal{C}) \to \mathcal{C}$ is defined as follows:
\begin{itemize}
\item On objects: $\Phi([\sigma_1,\dots,\sigma_n]) := r_{\sigma_1} \boxplus \dots \boxplus r_{\sigma_n}$.
\item On 1-morphisms: Let $M: [\sigma_1,\dots,\sigma_{n_1}] \to{} [\tau_1,\dots,\tau_{n_2}]$. Let $\{\iota_{\sigma,i},\rho_{\sigma,i}\}$ and $\{\iota_{\tau,i},\rho_{\tau,i}\}$ be the injection and projection 1-morphisms for the direct sums $r_{\sigma_1} \boxplus \dots \boxplus r_{\sigma_{n_1}}$ and $r_{\tau_1} \boxplus \dots \boxplus r_{\tau_{n_2}}$. Then define $\Phi(M) := \oplus_{ij} ((\rho_{\sigma,i} \otimes M_{ij}) \otimes \iota_{\tau,j})$.
\item On 2-morphisms: Let $f: M \to N$ be a 2-morphism. Then we define $\Phi(f):= \sum_{kl }i_{kl} \circ ((\id_{\rho_{\sigma,k}} \otimes f_{kl}) \otimes \id_{\iota_{\tau,l}}) \circ i_{kl}^{\dagger}$, where $i_{kl}: ((\rho_{\sigma,k} \otimes M_{kl}) \otimes \iota_{\tau,l}) \to \oplus_{kl} ((\rho_{\sigma,k} \otimes M_{kl}) \otimes \iota_{\tau,l})$ is the isometric injection of the direct sum. 
\item Multiplicators: Given by the natural unitary isomorphisms $\Phi(M) \otimes \Phi(N) 
= 
(\oplus_{ij} ((\rho_{\sigma,i} \otimes M_{ij}) \otimes \iota_{\tau,j})) \otimes (\oplus_{kl} ((\rho_{\tau,k} \otimes N_{kl}) \otimes \iota_{\nu,l})) 
\cong 
\oplus_{ijkl}~ ((\rho_{\sigma,i} \otimes M_{ij}) \otimes (\iota_{\tau,j} \otimes \rho_{\tau,k})) \otimes (N_{kl} \otimes \iota_{\nu,l})
\cong
\oplus_{il}~ (\rho_{\sigma,i} \otimes (\oplus_j (M_{ij} \otimes N_{jl}))) \otimes \iota_{\nu,l}
=
\Phi(M \otimes N).
$ 
Here for the first isomorphism we used the associators and distributivity of the direct sum over tensor product in $\mathcal{C}$, and for the second isomorphism we used $\iota_{\tau,j} \otimes \rho_{\tau,k} \cong \delta_{jk} \id_{X_{\sigma_j}}$.
\item Unitors: Given by the unitary isomorphisms 
$\id_{\boxplus_i  r_{\sigma_{i}}} \cong \oplus_{i} (\iota_i \otimes \rho_i) \cong \Phi(\id_{[\sigma_1,\dots,\sigma_n]})
$.
\end{itemize}
\end{definition}
\begin{proposition}\label{prop:matcequiv}
The 2-functor $\Phi: \Mat(\mathcal{C}) \to \mathcal{C}$ is an equivalence. 
\end{proposition}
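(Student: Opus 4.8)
The plan is to verify the standard criterion for a pseudofunctor to be a biequivalence: that $\Phi$ is essentially surjective on objects and a \emph{local equivalence}, meaning each component functor $\Phi_{\vec{\sigma},\vec{\tau}}: \Mat(\mathcal{C})(\vec{\sigma},\vec{\tau}) \to \mathcal{C}(\Phi(\vec{\sigma}),\Phi(\vec{\tau}))$ is an equivalence of $C^*$-categories (see~\cite{Johnson2021}). Since $\Phi$ is a unitary $\mathbb{C}$-linear 2-functor, establishing these two properties upgrades automatically to a unitary equivalence. Essential surjectivity on objects is immediate from presemisimplicity: every object of $\mathcal{C}$ decomposes as a finite direct sum of simple objects, each of which is equivalent to some representative $r_\sigma$, and since direct sums are unique up to equivalence, any object is equivalent to $\Phi([\sigma_1,\dots,\sigma_n]) = r_{\sigma_1} \boxplus \dots \boxplus r_{\sigma_n}$ for a suitable index vector.

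For local essential surjectivity, I would fix $\vec{\sigma},\vec{\tau}$ and a 1-morphism $F:\Phi(\vec{\sigma}) \to \Phi(\vec{\tau})$ in $\mathcal{C}$, and define a matrix $M$ by $M_{ij}:=\iota_{\sigma,i}\otimes F\otimes\rho_{\tau,j}$, where $\{\iota_{\sigma,i},\rho_{\sigma,i}\}$ and $\{\iota_{\tau,j},\rho_{\tau,j}\}$ are the injection/projection 1-morphisms of the two direct sums. Substituting this into the definition of $\Phi$ and using the relations $\iota_{\sigma,i}\otimes\rho_{\sigma,i}\cong\id_{r_{\sigma_i}}$, $\id_{\Phi(\vec{\sigma})}\cong\bigoplus_i \rho_{\sigma,i}\otimes\iota_{\sigma,i}$ of Definition~\ref{def:additive2cat} (and likewise for $\vec{\tau}$) together with distributivity of $\otimes$ over $\oplus$, one computes
\begin{align*}
\Phi(M)&=\bigoplus_{ij}\rho_{\sigma,i}\otimes M_{ij}\otimes\iota_{\tau,j}
=\bigoplus_{ij}(\rho_{\sigma,i}\otimes\iota_{\sigma,i})\otimes F\otimes(\rho_{\tau,j}\otimes\iota_{\tau,j})\\
&\cong \id_{\Phi(\vec{\sigma})}\otimes F\otimes\id_{\Phi(\vec{\tau})}\cong F,
\end{align*}
so $F$ lies in the essential image of $\Phi_{\vec{\sigma},\vec{\tau}}$.

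For local fully faithfulness, I would compute the Hom-spaces directly. Additivity of $\Phi(M)=\bigoplus_{ij}\rho_{\sigma,i}\otimes M_{ij}\otimes\iota_{\tau,j}$ and the analogous expression for $N$ give $\Hom_{\mathcal{C}}(\Phi(M),\Phi(N))\cong\bigoplus_{i,j,k,l}\Hom_{\mathcal{C}}(\rho_{\sigma,i}\otimes M_{ij}\otimes\iota_{\tau,j},\,\rho_{\sigma,k}\otimes N_{kl}\otimes\iota_{\tau,l})$. Folding the injections and projections onto the inner 1-morphisms by the $\iota$--$\rho$ dualities~\eqref{eq:injectiondual} and then applying the orthogonality relations $\iota_{\sigma,i}\otimes\rho_{\sigma,k}\cong\delta_{ik}\id_{r_{\sigma_i}}$ and $\iota_{\tau,l}\otimes\rho_{\tau,j}\cong\delta_{jl}\id_{r_{\tau_j}}$ annihilates every summand except those with $i=k$ and $j=l$, leaving $\bigoplus_{ij}\Hom_{\mathcal{C}}(M_{ij},N_{ij})=\Hom_{\Mat(\mathcal{C})}(M,N)$. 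Unwinding the definition of $\Phi$ on 2-morphisms shows this identification is precisely $\Phi_{\vec{\sigma},\vec{\tau}}$, which is manifestly $\mathbb{C}$-linear and dagger-preserving, hence a unitary equivalence on each Hom-category.

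I expect the only real friction to be bookkeeping rather than anything conceptual: keeping the coherence data straight in the two computations --- the associators and unitors, and especially the distributivity isomorphisms of $\otimes$ over $\oplus$, whose existence relies on rigidity of $\mathcal{C}$ as noted in the footnote to Definition~\ref{def:matc} --- and confirming that every isomorphism produced is unitary, so that $\Phi$ is a genuine unitary equivalence and not merely an equivalence. These verifications are routine once the direct-sum orthogonality relations are in hand, since those relations do all the essential work.
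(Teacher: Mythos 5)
Your proposal is correct and follows essentially the same route as the paper: the same presemisimplicity argument for essential surjectivity on objects, the identical matrix $M_{ij}=\iota_{\sigma,i}\otimes F\otimes\rho_{\tau,j}$ for local essential surjectivity, and fully faithfulness driven by the $\iota$--$\rho$ dualities and the orthogonality relations $\iota_{\sigma,i}\otimes\rho_{\sigma,k}\cong\delta_{ik}\id$. The only difference is presentational: you package fullness and faithfulness in one stroke as a Hom-space decomposition (and correctly flag that this identification must be checked to agree with $\Phi$ on 2-morphisms), whereas the paper carries out two separate diagrammatic computations resting on exactly the same relations.
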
  
\begin{proof}
\begin{itemize}
\item \emph{Essentially surjective on objects.} By presemisimplicity every object is a direct sum of simples, and direct sums are unique up to equivalence. 
\item \emph{Essentially surjective on 1-morphisms.} Let $X: \boxplus_i r_{\sigma_i} \to  \boxplus_j r_{\tau_j}$ be a 1-morphism. Now we have 
\begin{align*}
X &\cong ((\oplus_i~ \rho_{\sigma,i} \otimes \iota_{\sigma,i}) \otimes X) \otimes (\oplus_j~ \rho_{\tau,j} \otimes \iota_{\tau,j})
\\ 
&\cong \oplus_{i,j}~ (\rho_{\sigma,i} \otimes ((\iota_{\sigma,i} \otimes X) \otimes \rho_{\tau,j})) \otimes \iota_{\tau,j}
\\
&\cong \Phi(M)
\end{align*}
where $M: [\sigma_1,\dots,\sigma_{n_1}] \to{} [\tau_1,\dots,\tau_{n_2}]$ is defined by $M_{ij}:= (\iota_{\sigma,i} \otimes X) \otimes \rho_{\tau,j}$.
\item \emph{Full on 2-morphisms.} Let $f: \Phi(M) = \oplus_{i,j} ((\rho_{\sigma,i} \otimes M_{ij}) \otimes \iota_{\tau,j}) \to \oplus_{i,j} ((\rho_{\sigma,i} \otimes N_{ij}) \otimes \iota_{\tau,j}) = \Phi(N)$ be a 2-morphism in $\mathcal{C}$. Let $\kappa_{\sigma}: \id_{\boxplus_{i} r_{\sigma_i}} \overset{\sim}{\to} \oplus_{i} \rho_{\sigma,i} \otimes \iota_{\sigma,i}$ and $\lambda_{\sigma,i}: \iota_i \otimes \rho_i \overset{\sim}{\to} \id_{r_{\sigma_i}}$ be the unitary isomorphisms in the definition of the direct sum $\boxplus_{i} r_{\sigma_i}$, and define $\kappa_{\tau}$ and $\lambda_{\tau,i}$ similarly. Let $\nabla_{M,i,j}: ((\rho_{\sigma,i} \otimes M_{ij}) \otimes \iota_{\tau,j}) \to \oplus_{i,j} ((\rho_{\sigma,i} \otimes M_{ij}) \otimes \iota_{\tau,j})$, $\nabla_{N,i,j}: ((\rho_{\sigma,i} \otimes N_{ij}) \otimes \iota_{\tau,j}) \to \oplus_{i,j} ((\rho_{\sigma,i} \otimes N_{ij}) \otimes \iota_{\tau,j})$, $\nabla_{\sigma,i}: \rho_{\sigma,i} \otimes \iota_{\sigma,i} \to \oplus_{i} \rho_{\sigma,i} \otimes \iota_{\sigma,i}$, $\nabla_{\tau,i}: \rho_{\tau,i} \otimes \iota_{\tau,i} \to \oplus_{i} \rho_{\tau,i} \otimes \iota_{\tau,i}$ be the isometric injections of the various direct sums of 1-morphisms; we depict these as labelled downwards-pointing triangles in the diagram, and their daggers as labelled upwards-pointing triangles. Then we have the following sequence of equalities in $\mathcal{C}$:
\begin{calign}
\includegraphics[scale=1]{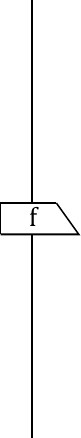}
~~=~~
\sum_{i,j,k,l,m,n}~
\includegraphics[scale=1]{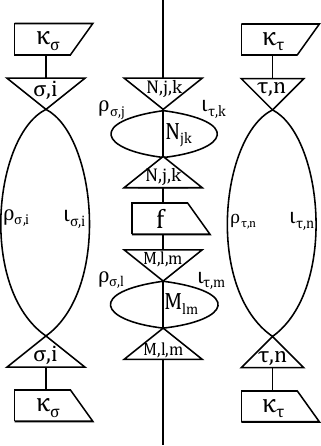}
\\
=~~
\sum_{i,n}~
\includegraphics[scale=1]{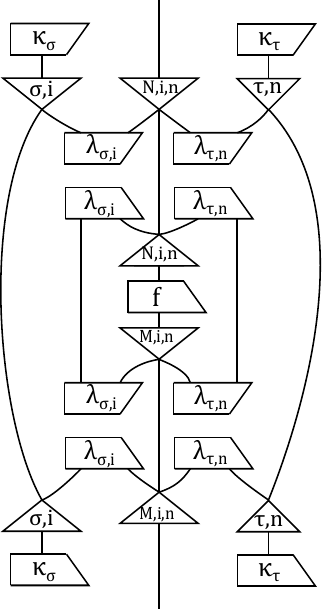}
\\=~~
\sum_{i,n}~
\includegraphics[scale=1]{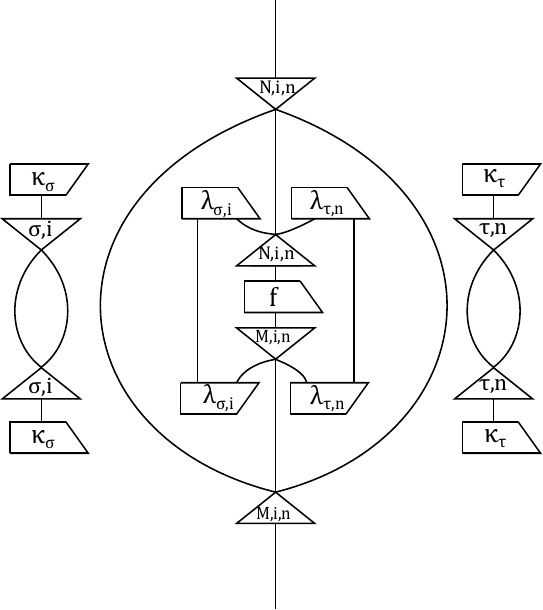}
\\
=~~\sum_{i,n}~
\includegraphics[scale=1]{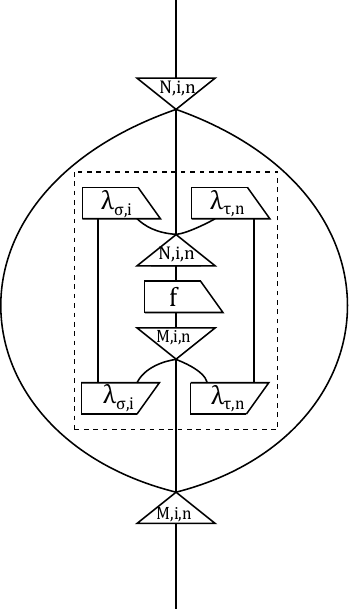}
\end{calign}
Here the first equality is by $\kappa_{\sigma}^{\dagger} \circ (\sum_i \nabla_{\sigma,i} \circ \nabla_{\sigma,i}^{\dagger}) \circ \kappa_{\sigma} = \id_{\id_{\boxplus_{i}r_{\sigma_i}}}$, $\kappa_{\tau}^{\dagger} \circ (\sum_n \nabla_{\tau,n} \circ \nabla_{\tau,n}^{\dagger}) \circ \kappa_{\tau} = \id_{\id_{\boxplus_{n}r_{\tau_n}}}$, $\sum_{l,m} \nabla_{M,l,m} \circ \nabla_{M,l,m}^{\dagger} = \id_{\Phi(M)}$ and $\sum_{j,k} \nabla_{N,j,k} \circ \nabla_{N,j,k}^{\dagger} = \id_{\Phi(N)}$. The second equality is by $\id_{\iota_{\sigma,i} \otimes \rho_{\sigma,l}} = \delta_{il} (\lambda_{\sigma,i}^{\dagger} \circ \lambda_{\sigma,i})$, $\id_{\iota_{\tau,m} \otimes \rho_{\tau,n}} = \delta_{mn} (\lambda_{\tau,n}^{\dagger} \circ \lambda_{\tau,n})$, $\id_{\iota_{\sigma,i} \otimes \rho_{\sigma,j}} = \delta_{ij} (\lambda_{\sigma,i}^{\dagger} \circ \lambda_{\sigma,i})$ and $\id_{\iota_{\tau,k} \otimes \rho_{\tau,n}} = \delta_{kn} (\lambda_{\tau,n}^{\dagger} \circ \lambda_{\tau,n})$. The third equality is by unitarity of $\lambda_{\sigma,i}$ and $\lambda_{\tau,n}$. The fourth equality is by $\id_{\iota_{\sigma,j} \otimes \rho_{\sigma,i}} = \delta_{ij} \id_{\iota_{\sigma,i} \otimes \rho_{\sigma,i}}$ and 
$\id_{\iota_{\sigma,n} \otimes \rho_{\sigma,o}} = \delta_{no} \id_{\iota_{\sigma,n} \otimes \rho_{\sigma,n}}$, which allows us to use $\kappa_{\sigma}^{\dagger} \circ (\sum_i \nabla_{\sigma,i} \circ \nabla_{\sigma,i}^{\dagger}) \circ \kappa_{\sigma} = \id_{\id_{\boxplus_{i}r_{\sigma_i}}}$ and $\kappa_{\tau}^{\dagger} \circ (\sum_n \nabla_{\tau,n} \circ \nabla_{\tau,n}^{\dagger}) \circ \kappa_{\tau} = \id_{\id_{\boxplus_{n}r_{\tau_n}}}$ again.

In the last diagram we see $\Phi(\tilde{f})$ where $\tilde{f}_{in}: M_{in} \to N_{in}$ is the morphism in the dashed box.
\item \emph{Faithful on 2-morphisms}. Let $f,g: M \to N$ be 2-morphisms in $\Mat(\mathcal{C})$. Then:
\begin{align*}
\Phi(f) = \Phi(g) &\Leftrightarrow  \sum_{ij} \nabla_{N,i,j} \circ (\id_{\rho_{\sigma,i}} \otimes f_{ij} \otimes \id_{\iota_{\tau,j}}) \circ \nabla_{M,i,j}^{\dagger} = \sum_{ij} \nabla_{N,i,j} \circ (\id_{\rho_{\sigma,i}} \otimes g_{ij} \otimes \id_{\iota_{\tau,j}}) \circ \nabla_{M,i,j}^{\dagger}
\\
& \Rightarrow \id_{\rho_{\sigma,i}} \otimes f_{ij} \otimes \id_{\iota_{\tau,j}} = \id_{\rho_{\sigma,i}} \otimes g_{ij} \otimes \id_{\iota_{\tau,j}} ~~~ \forall i,j
\\
& \Rightarrow 
(\kappa_{\sigma}^{\dagger} \otimes \id \otimes \kappa_{\tau}^{\dagger}) 
\circ 
(\id \otimes f_{ij} \otimes  \id)
\circ
(\kappa_{\sigma} \otimes \id \otimes \kappa_{\tau}) \\
& \qquad \qquad \qquad \qquad =
(\kappa_{\sigma}^{\dagger} \otimes \id \otimes \kappa_{\tau}^{\dagger}) 
\circ 
(\id \otimes g_{ij} \otimes  \id)
\circ
(\kappa_{\sigma} \otimes \id \otimes \kappa_{\tau}) ~~~\forall i,j
\\
& \Leftrightarrow f_{ij} = g_{ij} ~~~ \forall i,j
\end{align*} 
\end{itemize}
\end{proof}

\begin{remark}[Matrix notation for standard duals and traces]\label{rem:matrixstandardduals}
Let $X: r \to s$ be any 1-morphism in $\mathcal{C}$. By Proposition~\ref{prop:matcequiv}, for any objects $r,s$ of $\mathcal{C}$ there exist adjoint equivalences $[E_r,E_r^*,\alpha_r,\beta_r]: r \to \Phi([\sigma_1,\dots,\sigma_{n_{1}}])$ and $[E_s,E_s^*,\alpha_s,\beta_s]: s \to \Phi([\tau_1,\dots,\tau_{n_2}])$, and unitary isomorphisms $u: E_{r}^* \otimes X \otimes E_s \to \Phi(M)$ and $\bar{u}: E_{s}^* \otimes X^* \otimes E_r \to \Phi(M^*)$ for some morphisms $M: [\sigma_1, \dots, \sigma_{n_1}] \to{} [\tau_1,\dots, \tau_{n_2}]$ and $M^*: [\tau_1,\dots,\tau_{n_2}] \to{} [\sigma_1,\dots, \sigma_{n_1}]$ in $\Mat(\mathcal{C})$. 

In the following diagrams we draw the wires for $E_r,E_r^*,E_s,E_s^*$ in blue, and draw $\alpha_r,\alpha_s$ as cups and $\beta_r,\beta_s$ as caps. We shade the functorial boxes corresponding to the equivalence $\Phi: \Mat(\mathcal{C}) \to \mathcal{C}$ in a lighter blue. Using fullness and faithfulness of $\Phi$ we define 2-morphisms $\tilde{\eta}: \id_{\vec{\tau}} \to M^* \otimes M$ and $\tilde{\epsilon}: M \otimes M^* \to \id_{\vec{\sigma}}$ in $\Mat(\mathcal{C})$ as follows:
\begin{align}\nonumber
\includegraphics[scale=1]{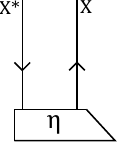} 
~~:=~~ 
\includegraphics[scale=1]{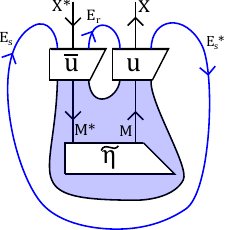}
\\\label{eq:standardcupequiv}
\includegraphics[scale=1]{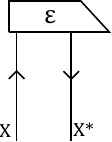}
~~:=~~
\includegraphics[scale=1]{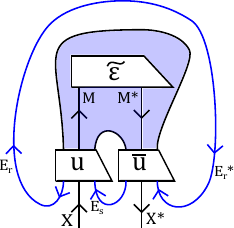}
\end{align} 
It is easy to check that $[M^*,\tilde{\eta},\tilde{\epsilon}]$ is a right dual for $M$ in $\Mat(\mathcal{C})$. By the discussion following Definition~\ref{def:matc}, there exist standard right duals $[(M_{ij})^*,\tilde{\eta}_{ij},\tilde{\epsilon}_{ij}]$ for each of the 1-morphisms $M_{ij}$, so that $M^*$ is defined as in~\eqref{eq:dualmatrix} and $\tilde{\eta},\tilde{\epsilon}$ as in~\eqref{eq:matrixdualcupcap}.

We will now consider the trace $\psi_X = \phi_X: \End(X) \to \mathbb{C}$. Let $T \in \End(X)$. In the notation just defined, we have the following expression for $\eta^{\dagger} \circ (\id_{X^*} \otimes T) \circ \eta$:
\begin{calign}\label{eq:standardtrace1}
\includegraphics[scale=1]{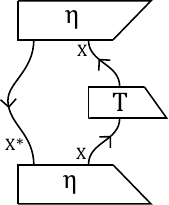}
~~=~~
\includegraphics[scale=1]{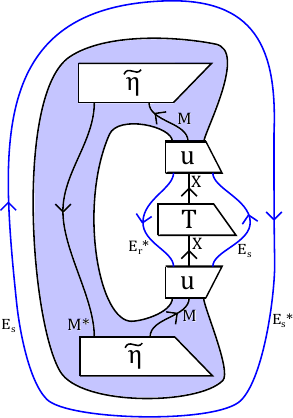}
~~=~~
\includegraphics[scale=1]{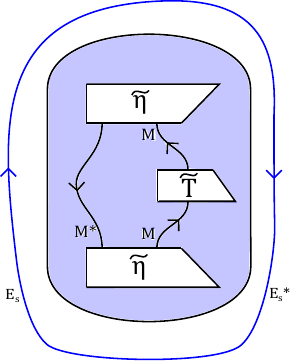}
\end{calign}
For the first equality we used~\eqref{eq:standardcupequiv} and unitarity of $\bar{u}$, $\beta_r$ and $\beta_s$. For the second equality we used fullness and faithfulness of the equivalence $\Phi$ and unitarity of the multiplicator $\mu_{M^*,M}$; here $\tilde{T}: M \to M$ is a uniquely defined 2-morphism in $\Mat(\mathcal{C})$.

It is straightforward to check that $x \mapsto \beta_s \circ (\id_{E_s} \otimes x \otimes \id_{E_s^*}) \circ \beta_s^{\dagger}$ defines a $*$-isomorphism $\nu_1: \End(\Phi(\vec{\tau})) \to \End(s)$, and likewise $x \mapsto \upsilon_{\vec{\tau}}^{\dagger} \circ x \circ \upsilon_{\vec{\tau}}$ defines a $*$-isomorphism $\nu_2: \End(\vec{\tau}) \to \End(\Phi(\vec{\tau}))$ (recall $\upsilon_{\vec{\tau}}$ is our notation for the unitor of the 2-functor $\Phi$). Since these are $*$-isomorphisms of commutative f.d.~$C^*$-algebras they must preserve the canonical trace in the sense that $\Tr(\nu_i(x)) = \Tr(x)$. In the rightmost diagram of~\eqref{eq:standardtrace1}  we see $(\nu_1 \circ \nu_2)(\eta^{\dagger} \circ (\id_{M^*} \otimes \tilde{T}) \circ \eta)$ and so it follows that:
$$
\phi_X(T) = \Tr_{\vec{\tau}}[\tilde{\eta}^{\dagger} \circ (\id_{M^*} \otimes \tilde{T}) \circ \tilde{\eta}]
$$
But it is straightforward to calculate in $\Mat(\mathcal{C})$ that:
$$\tilde{\eta}^{\dagger} \circ (\id_{M^*} \otimes \tilde{T}) \circ \tilde{\eta}
=
\diag([\sum_i (\tilde{\eta}_{i1})^{\dagger} \circ (\id_{(M_{i1})^*} \otimes \tilde{T}_{i1}) \circ \tilde{\eta}_{i1},~ \dots ~, \sum_i (\tilde{\eta}_{in_{\tau}})^{\dagger} \circ (\id_{(M_{in_{\tau}})^*} \otimes \tilde{T}_{in_{\tau}}) \circ \tilde{\eta}_{in_{\tau}}])$$
Using this together with a similar argument for $\psi_X$, we see that:
\begin{equation}\phi_X(T) = \sum_{i,j} \tilde{\eta}_{ij}^{\dagger} \circ (\id_{(M_{ij})^*} \otimes \tilde{T}_{ij}) \circ \tilde{\eta}_{ij} = \sum_{i,j} \tilde{\epsilon}_{ij} \circ (\tilde{T}_{ij} \otimes \id_{(M_{ij})^*}) \circ \tilde{\epsilon}_{ij}^{\dagger} = 
\psi_X(T) 
\end{equation}
\end{remark}

\subsection{Proof of Theorem~\ref{thm:eilenbergwatts}}
\label{app:eilenbergwatts}
\begin{theorem*}
The 2-functor $\Psi$ is an equivalence.
\end{theorem*}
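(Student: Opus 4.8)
The plan is to verify the two conditions that make a $\mathbb{C}$-linear unitary 2-functor an equivalence of 2-categories: that $\Psi$ is (i) a \emph{local equivalence}, i.e.\ $\Psi_{A,B}\colon A\text{-}\Mod\text{-}B \to \Hom_{\Mod(\mathcal{T})}(\Mod\text{-}A,\Mod\text{-}B)$ is an equivalence of $C^*$-categories for every pair of $\F$s $A,B$, and (ii) essentially surjective on objects up to equivalence. These are the unitary refinements of the Eilenberg--Watts theorem and of the Ostrik-style classification of module categories, and I would establish them in that order.

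For the local equivalence I would run the Eilenberg--Watts argument adapted to the dagger setting. The essential observation is that the free modules generate $\Mod\text{-}A$ under retracts: every right dagger $A$-module $Y$ is a retract of the free module $Y\otimes A$, with the unit $\id_Y\otimes u\colon Y \to Y \otimes A$ as section and the action $\rho\colon Y \otimes A \to Y$ as retraction, and this can be promoted to an isometric summand after polar decomposition (Remark~\ref{rem:polar}). \emph{Faithfulness} of $\Psi_{A,B}$ is then immediate, since a bimodule map $f\colon{}_AM_B\to{}_AN_B$ is recovered from $\Psi(f)$ by evaluating at the free module $A_A$ and using the unitor $A\otimes_A M\cong M$. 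For \emph{fullness}, given a morphism of $\mathcal{T}$-module functors $\eta\colon\Psi(M)\Rightarrow\Psi(N)$, I would set $f:=\eta_{A_A}$; naturality against right $A$-module maps together with the module-functor coherence forces $f$ to be an $A$-$B$-bimodule homomorphism and forces $\eta_{X\otimes A}=\id_X\otimes_A f$, so that $\eta=\Psi(f)$ because free modules generate. For \emph{essential surjectivity on 1-morphisms}, given a unitary $\mathcal{T}$-module functor $F$ I would put $M:=F(A_A)$; this carries a right $B$-action (it lives in $\Mod\text{-}B$) and a left $A$-action obtained by applying $F$ to the right-$A$-module map $m\colon A\,\tilde\otimes\,A_A\to A_A$ and composing with the module-functor constraint $F(A\,\tilde\otimes\,A_A)\cong A\,\tilde\otimes\,F(A_A)$, and one exhibits a unitary natural isomorphism $F\cong(-)\otimes_A M=\Psi(M)$, using once more that $F$, being unitary, preserves the retracts presenting arbitrary modules as summands of free ones.

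For essential surjectivity on objects I would first reduce to the indecomposable case, since a finite direct sum decomposition of $\mathcal{M}$ matches a direct sum of $\F$s and $\Psi$ preserves direct sums of objects. Given an indecomposable cofinite semisimple $\mathcal{M}$, I would fix a simple object $m_0$; indecomposability guarantees that $m_0$ is a \emph{generator}, in the sense that every object is a retract of some $X\,\tilde\otimes\,m_0$. The crucial input is the existence of the internal endomorphism object $A:=\underline{\End}(m_0)$: the functor $X\mapsto\Hom_{\mathcal{M}}(X\,\tilde\otimes\,m_0,m_0)$ is representable precisely because $\mathcal{M}$ is cofinite and semisimple (cf.~\cite{Ostrik2003}\cite{Etingof2016}), and composition endows $A$ with an algebra structure in $\mathcal{T}$. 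Separability of $A$ follows from semisimplicity of $\mathcal{M}$, and the correct normalisation of the canonical trace and pivotal data promotes $A$ to a \emph{standard} separable Frobenius algebra. Finally the comparison functor $n\mapsto\underline{\Hom}(m_0,n)$ is a unitary $\mathcal{T}$-module equivalence $\mathcal{M}\xrightarrow{\sim}\Mod\text{-}A$: it is fully faithful by the internal-Hom adjunction and representability, and essentially surjective because $m_0$ generates and dagger idempotents split in $\mathcal{M}$.

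The main obstacle is Step 2: establishing representability of the internal-Hom functor (where cofiniteness is indispensable, and without which $A$ need not be an object of $\mathcal{T}$) and, above all, equipping $A$ with a \emph{standard} separable Frobenius structure rather than merely an algebra structure. This standardness is exactly the extra data that distinguishes our construction from those of~\cite{Chen2021,Giorgetti2020}, and verifying it requires tracking the pivotal/trace normalisation of Proposition~\ref{prop:standardintrinsic} through the internal-End construction. A recurring technical point throughout both steps is that every isomorphism or retraction produced by abstract category theory must be unitarised using the polar decomposition of Remark~\ref{rem:polar}, so that the comparison functors and natural transformations are genuinely unitary and not merely invertible; the existence of these unitarisations is what makes $\Psi$ an equivalence in the unitary (rather than the purely linear) sense. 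Since the argument is essentially the one sketched in~\cite[Ex.\ 3.39]{Chen2021}, with the additional standardness bookkeeping, I would present the details in full in Appendix~\ref{app:eilenbergwatts}.
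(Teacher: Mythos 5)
Your proposal is correct in outline, but it takes a genuinely different route from the paper's proof. The paper outsources both hard inputs to \cite{Neshveyev2018}: essential surjectivity on objects is exactly \cite[Thm. A.1]{Neshveyev2018} (the internal-End construction, standardness included, that you propose to carry out yourself), and the unit case --- that $\Psi_{\mathbbm{1},\mathbbm{1}}:\mathcal{T}\to\End_{\mathcal{T}}(\mathcal{T})$ is an equivalence --- is deduced from \cite[Thm. 3.2 (c)]{Neshveyev2018} by viewing $\mathcal{T}$ as an invertible bimodule category over itself. Essential surjectivity on 1-morphisms is then bootstrapped from this unit case: given $F:\Mod\text{-}A\to\Mod\text{-}B$, the composite $\Psi({}_{\mathbbm{1}}A_A)\otimes F\otimes \Psi({}_BB_{\mathbbm{1}})$ lies in $\End_{\mathcal{T}}(\mathcal{T})$, hence is unitarily isomorphic to $\Psi(O_F)$ for some $O_F$ in $\mathcal{T}$ (this $O_F$ is, up to isomorphism, your $M=F(A_A)$); the bimodule structure on $O_F$ is pulled back through the equivalence $\Psi_{\mathbbm{1},\mathbbm{1}}$, and --- the key trick --- $F$ and $\Psi({}_A(O_F)_B)$ are shown to split the \emph{same dagger idempotent} on $\Psi({}_AA\otimes O_F\otimes B_B)$, so they are unitarily isomorphic by uniqueness of dagger splittings; fullness is likewise proved diagrammatically through $\Psi_{\mathbbm{1},\mathbbm{1}}$. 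Your route instead runs the classical Eilenberg--Watts argument uniformly in every Hom-category: evaluate at the rank-one free module $A_A$, use that every dagger module is a dagger retract of a free one (note that $\rho\circ\rho^{\dagger}=\id$ already follows from separability and the dagger module axioms, so the polar-decomposition step you invoke is unnecessary --- $\rho^{\dagger}$ is itself the isometric section), and check that the comparison transformation $(-)\otimes_AF(A_A)\Rightarrow F$ is unitary (unitary on free modules by unitarity of the module-functor constraints, then extended to retracts by naturality). What your approach buys is self-containedness: the unit case becomes a triviality rather than an external input, and no invertible-bimodule-category machinery is needed. What it costs is having to construct and unitarise the comparison isomorphism by hand --- the paper's idempotent-splitting argument gets unitarity for free --- and having to re-prove the module-category classification (representability via cofiniteness, separability, and above all standardness of the internal End), which is precisely the content of the theorem the paper cites; your plan correctly identifies that standardness bookkeeping as the principal burden.
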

\begin{proof}
We prove the result now.
\begin{itemize}
\item \emph{Essentially surjective on objects.} We use~\cite[Thm. A.1]{Neshveyev2018}, which shows that for any nonzero cofinite semisimple indecomposable left\footnote{Strictly speaking the cited theorem proves this for \emph{right} $\mathcal{T}$-module categories, but this is just a matter of convention. Indeed, a right $\mathcal{T}$-module category is just a left module category over the category $\mathcal{T}^{\otimes\op}$ with opposite tensor product. But $\dagger \circ *: \mathcal{T} \to \mathcal{T}^{\otimes\op}$ is an equivalence, where $\dagger$ is the dagger functor and $*$ is the right duals functor; this induces an equivalence $E:\Mod(\mathcal{T}) \overset{\sim}{\to} \Mod(\mathcal{T}^{\otimes\op})$. Then it suffices to observe that there is an equivalence of right $\mathcal{T}$-module categories $E(\Mod$-$A) \simeq A$-$\Mod$ (which takes a right $A$-module to its dual left $A$-module and a right $A$-module morphism to its conjugate).} $\mathcal{T}$-module category $\mathcal{M}$ there exists an SSFA $A$ in $\mathcal{T}$ such that $\mathcal{M}$ is equivalent to $\Mod$-$A$. Since direct sums of objects are preserved by linear 2-functors, essential surjectivity follows.
\item \emph{Essentially surjective on 1-morphisms.} We need to show that for any $\F$s $A, B$, the local functor $\Psi_{A,B}: A$-$\Mod$-$B \to \Hom_{\mathcal{T}}(\Mod$-$A, \Mod$-$B)$ is essentially surjective. In other words, for any module functor $F: \Mod$-$A \to \Mod$-$B$ there exists some dagger bimodule ${}_A M_B$ such that $F$ is unitarily isomorphic to $\Psi({}_A M_B)$. 

We first consider the special case where $A=B=\mathbbm{1}$. Here the left $\mathcal{T}$-module action on $\Mod$-$\mathbbm{1} = \mathcal{T}$ is given by tensor product on the left, and the functor $\Psi_{\mathbbm{1},\mathbbm{1}}: \mathcal{T} \to \End_{\mathcal{T}}(\mathcal{T})$ is given by tensor product on the right. We will show that $\Psi_{\mathbbm{1},\mathbbm{1}}: \mathcal{T} \to \End_{\mathcal{T}}(\mathcal{T})$ is an equivalence, implying essential surjectivity in this case. For this, observe that $\mathcal{T}$ is an invertible $\mathcal{T}$-$\mathcal{T}$ bimodule category in the sense of~\cite[Def. 3.1]{Neshveyev2018} (consider the two-object rigid $C^*$-2-category $\mathcal{C}$ with the set $\{0,1\}$ of objects, where $\mathcal{C}_{00} = \mathcal{C}_{01} = \mathcal{C}_{10} = \mathcal{C}_{11} = \mathcal{T}$, composition is by tensor product keeping track of the objects, and the dual functor takes a 1-morphism in $\mathcal{C}_{01}$ to its dual in $\mathcal{C}_{10}$ and vice versa). By~\cite[Thm. 3.2 (c)]{Neshveyev2018}, it follows that the functor $\Psi_{\mathbbm{1},\mathbbm{1}}$ is an equivalence.\footnote{Strictly speaking, this cited theorem also uses the opposite convention and shows that the functor from $\mathcal{T}$ acting on the \emph{left} to the endomorphism category of $\mathcal{T}$ as a \emph{right} $\mathcal{T}$-module category is an equivalence. To get round this, run the argument for $\mathcal{T}^{\otimes \op}$ rather than $\mathcal{T}$.}

We will now show that this is enough to imply essential surjectivity for the other $\Hom$-categories. Indeed, let $F: \Mod-A \to \Mod-B$ be a module functor. Now $\Psi({}_{\mathbbm{1}}A_A) \otimes F \otimes \Psi({}_B B_{\mathbbm{1}}) \in \End_{\mathcal{T}}(\mathcal{T})$; therefore, by the special case just proven, there exists some object $O_F$ of $\mathcal{T}$ and a unitary isomorphism $U: \Psi({}_{\mathbbm{1}}A_A) \otimes F \otimes \Psi({}_BB_{\mathbbm{1}}) \overset{\sim}{\to} \Psi(O_F)$. 

In what follows we use light blue shading for the functorial boxes of the 2-functor $\Psi$, and we label regions corresponding to objects of $\Mod(\mathcal{T})$ with the name of the corresponding object. We first observe that $O_F$ naturally has the structure of an $A$-$B$ dagger bimodule. Indeed, we define the following morphism $\Psi(A \otimes O_F \otimes B) \to \Psi(O_F)$ in $\End_{\mathcal{T}}(\mathcal{T})$:
\begin{calign}\label{eq:alphapushforward}
\includegraphics[scale=1]{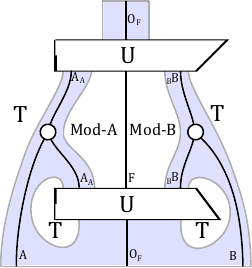}
\end{calign}
Using fullness and faithfulness of the equivalence $\Psi_{\mathbbm{1},\mathbbm{1}}: \mathcal{T} \to \End_{\mathcal{T}}(\mathcal{T})$, we can pull this back uniquely to obtain a preimage $\alpha: A \otimes O_F \otimes B \to O_F$ in $\mathcal{T}$. Using faithfulness of $\Psi_{\mathbbm{1},\mathbbm{1}}$ we now show that $\alpha$ is a dagger bimodule action. For the first condition of~\eqref{eq:module}:
\begin{calign}
\includegraphics[scale=1]{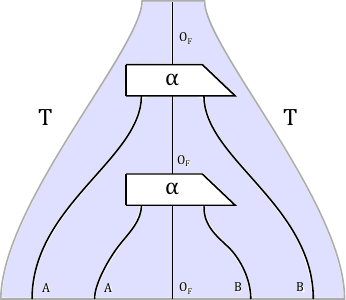}
~~=~~
\includegraphics[scale=1]{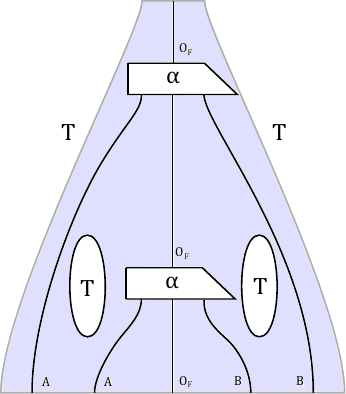}
\\
=~~
\includegraphics[scale=1]{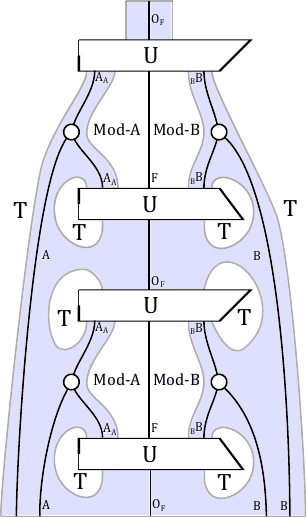}
~~=~~
\includegraphics[scale=1]{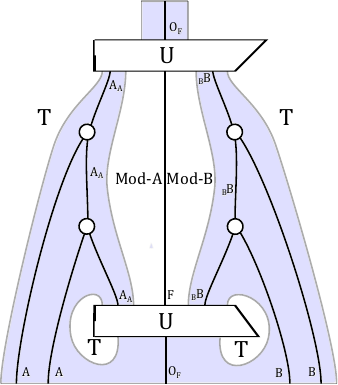}
\\
\includegraphics[scale=1]{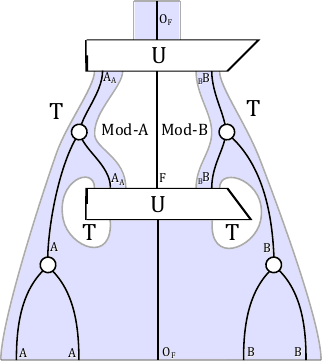}
~~=~~
\includegraphics[scale=1]{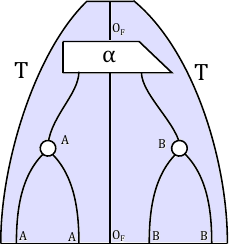}
\end{calign}
Here for the first equality we used unitarity of the unitor $\upsilon_{\mathbbm{1}}$ of the 2-functor $\Psi$; for the second equality we used the definition of $\alpha$ as the pullback of the morphism~\eqref{eq:alphapushforward}; for the third equality we used unitarity of the unitor $\upsilon_{\mathbbm{1}}$ and unitarity of $U$; for the fourth equality we used associativity of the $\F$s $A$ and $B$; and for the final equality we used the definition of $\alpha$ again. By faithfulness of $\Psi_{\mathbbm{1},\mathbbm{1}}$ we can remove the functorial boxes and so we obtain the desired equality.

For the second condition of~\eqref{eq:module}:
\begin{calign}
\includegraphics[scale=1]{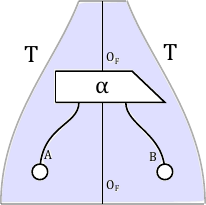}
~~=~~
\includegraphics[scale=1]{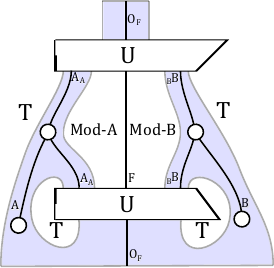}
\\=~~
\includegraphics[scale=1]{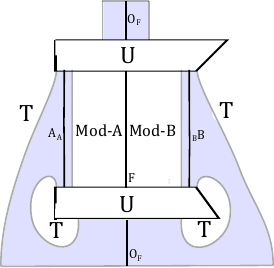}
~~=~~
\includegraphics[scale=1]{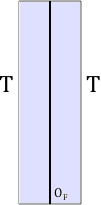}
\end{calign}
Here the first equality is by definition of $\alpha$; the second equality is by unitality of the Frobenius algebras $A$ and $B$; and the final equality is by unitarity of $\upsilon_{\mathbbm{1}}$, manipulation of functorial boxes, and unitarity of $U$. Again, by faithfulness of $\Psi_{\mathbbm{1},\mathbbm{1}}$ we can remove the functorial boxes and so obtain the desired equality.

For the third condition of~\eqref{eq:module}:
\begin{calign}
\includegraphics[scale=1]{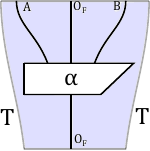}
~~=~~
\includegraphics[scale=1]{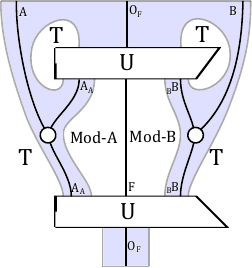}
\\=~~
\includegraphics[scale=1]{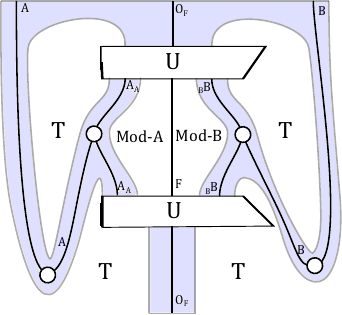}
~~=~~
\includegraphics[scale=1]{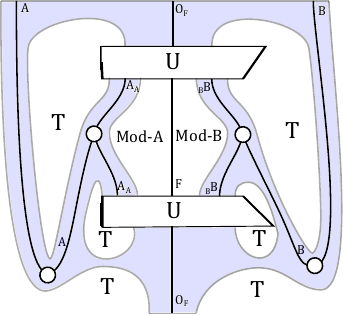}
\\=~~
\includegraphics[scale=1]{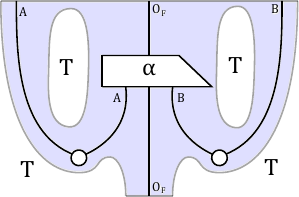}
~~=~~
\includegraphics[scale=1]{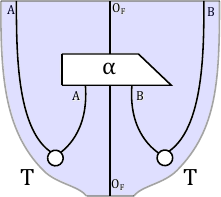}
\end{calign}
Here the first equality is by definition of $\alpha$ and unitarity of $\Psi$; the second equality is by the Frobenius equation for $A$ and $B$ and manipulation of functorial boxes; the third equality is by unitarity of $\upsilon_{\mathbbm{1}}$; the fourth equality is by definition of $\alpha$; and the final equality is by unitarity of $\upsilon_{\mathbbm{1}}$.
Again, by faithfulness of $\Psi_{\mathbbm{1},\mathbbm{1}}$ we can remove the functorial boxes and so obtain the desired equality.

Having defined an $A$-$B$ dagger bimodule structure on $O_F$, we now claim that $F$ is unitarily isomorphic to $\Psi({}_A(O_F)_B)$. To prove this, we will define a dagger idempotent in $\Hom_{\mathcal{T}}(\Mod$-$A,\Mod$-$B)$ for which the objects $F$ and $\Psi({}_A(O_F)_B)$ are both valid splittings. Since the splitting of a dagger idempotent is unique up to a unitary isomorphism, the result follows. 

The dagger idempotent will be an endomorphism of $\Psi({}_A A \otimes O_F \otimes B_B)$. We make some preliminary remarks. First, it is obvious that the comultiplication morphisms of the $\F$s $A$ and $B$ are $A$-$A$ and $B$-$B$ bimodule homomorphisms. In $\Bimod(\mathcal{T})$ we depict these as follows (recall that, as the unit objects in $A$-$\Mod$-$A$ and $B$-$\Mod$-$B$, the bimodules ${}_A A_A$ and ${}_B B_B$ are not depicted in the diagrammatic calculus of $\Bimod(\mathcal{T})$):
\begin{calign}
&\includegraphics[scale=1]{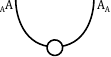}
&&
&\includegraphics[scale=1]{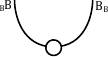}
\\
&(m_A)^{\dagger}: {}_A A_A \to {}_A A \otimes A_A
&&
&(m_B^{\dagger}): {}_B B_B \to {}_B B \otimes B_B
\end{calign}
To avoid any confusion we remark that these are not the same as the cups of~\eqref{eq:acupcap}. By separability of $A$ and $B$, these 2-morphisms in $\Bimod(\mathcal{T})$ are isometries.

We now define the following 2-morphism $\iota_F: F \to \Psi({}_A A \otimes O_F \otimes B_B)$:
\begin{calign}
\includegraphics[scale=1]{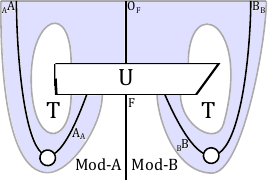}
\end{calign}
Clearly $\iota_F$ is an isometry:
\begin{calign}
\includegraphics[scale=1]{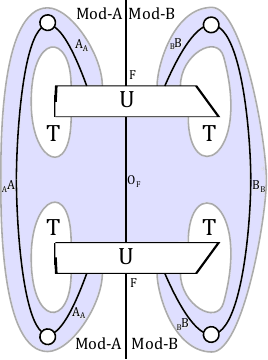}
~~=~~
\includegraphics[scale=1]{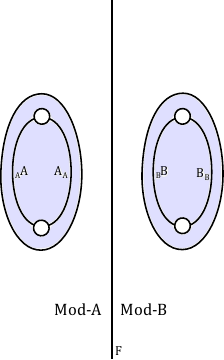}
~~=~~
\includegraphics[scale=1]{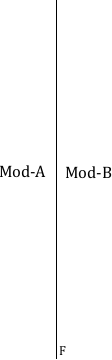}
\end{calign}
Here the first equality is by unitarity of $\upsilon_{\mathbbm{1}}$ and of $U$, and the second equality is by separability of $A$ and $B$ and unitarity of $\upsilon_{A}$ and $\upsilon_{B}$. 

It follows that $\pi:= \iota_F \circ \iota_F^{\dagger} \in \End(\Psi({}_A A \otimes O_F \otimes B_B))$ is a dagger idempotent which is split by $F$. 

To show that $\Psi({}_A(O_F)_B)$ also splits $\pi$ we need to define an isometry $\iota: \Psi({}_A(O_F)_B) \to \Psi({}_A A \otimes O_F \otimes B_B)$ such that $\iota \circ \iota^{\dagger} = \pi$. Consider the action $\alpha: A \otimes O_F \otimes B \to O_F$ defining the $A$-$B$ bimodule structure on ${}_A (O_F)_B$. By the first condition of~\eqref{eq:module}, $\alpha$ is a bimodule homomorphism ${}_A A \otimes O_F \otimes B_B \to {}_A (O_F)_B$. We therefore define $\iota$ as follows:
\begin{calign}
\includegraphics[scale=1]{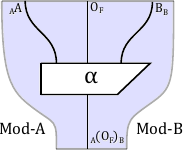}
\end{calign}
To see that $\iota$ is an isometry it suffices to show that $\alpha \circ \alpha^{\dagger} = \id_{O_F}$. But this follows straightforwardly from the dagger bimodule equations and separability of $A$ and $B$:
\begin{calign}
\includegraphics[scale=1]{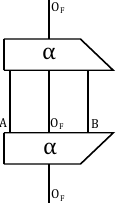}
~~=~~
\includegraphics[scale=1]{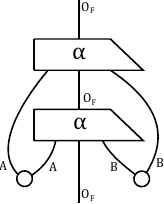}
~~=~~
\includegraphics[scale=1]{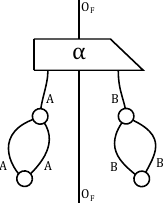}
~~=~~
\includegraphics[scale=1]{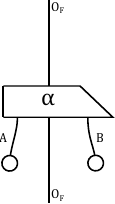}
~~=~~
\includegraphics[scale=1]{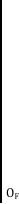}
\end{calign}
Here the first equality is by the third equation of~\eqref{eq:module}; the second equality is by the first equation of~\eqref{eq:module}; the third equality is separability of $A$ and $B$; and the final equality is by the second equation of~\eqref{eq:module}.

To finish we need to show that $\iota \circ \iota^{\dagger} = \pi$. We first observe the following equation for $\alpha^{\dagger} \circ \alpha$:
\begin{calign}\label{eq:alphadaggeralpha}
\includegraphics[scale=1]{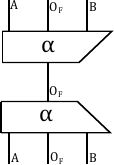}
~~=~~
\includegraphics[scale=1]{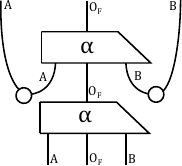}
~~=~~
\includegraphics[scale=1]{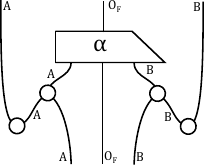}
~~=~~
\includegraphics[scale=1]{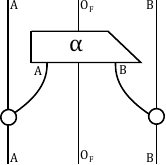}
\end{calign}
Here the first equality is by the third equation of~\eqref{eq:module}; the second equality is by the first equation of~\eqref{eq:module}; and the final equality is by the Frobenius equation~\eqref{eq:Frobenius}.

We also observe that by the Frobenius equation we have the following equation in $\Bimod(\mathcal{T})$ (the analogous equation for $B$ also holds):
\begin{calign}\label{eq:frobinbimod}
\includegraphics[scale=1]{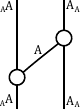}
~~=~~
\includegraphics[scale=1]{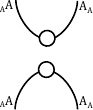}
~~=~~
\includegraphics[scale=1]{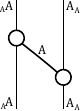}
\end{calign}
We then have the following equation for $\iota \circ \iota^{\dagger}$:
\begin{calign}\label{eq:iotaiotadagger}
\includegraphics[scale=1]{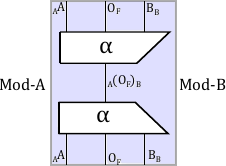}
~~=~~
\includegraphics[scale=1]{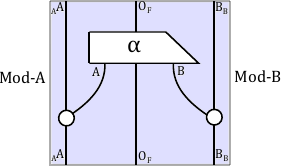}
~~=~~
\includegraphics[scale=1]{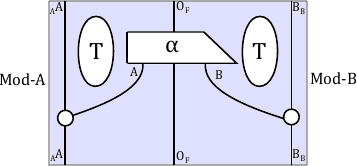}
\\=~~
\includegraphics[scale=1]{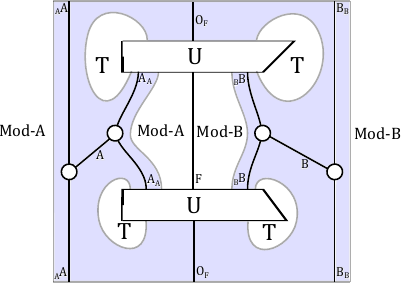}
~~=~~
\includegraphics[scale=1]{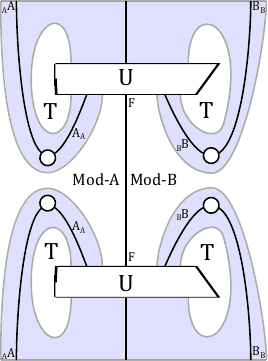}
\end{calign}
Here the first equality is by~\eqref{eq:alphadaggeralpha}; the second equality is by unitarity of $\upsilon_{\mathbbm{1}}$; the third equality is by definition of $\alpha$; and the final equality is by~\eqref{eq:frobinbimod}. In the last diagram of~\eqref{eq:iotaiotadagger} we indeed see $\pi = \iota_f \circ \iota_f^{\dagger}$ and so essential surjectivity on 1-morphisms is proven.
\item \emph{Faithful on 2-morphisms}. Let $f,g: {}_A M_B \to {}_A N_B$ be bimodule homomorphisms. Faithfulness is the statement that $\Psi(f) = \Psi(g) \Rightarrow f = g$. Now suppose $\Psi(f)=\Psi(g)$ and consider the component of this natural transformation on the object $A_{A}$ of $\Mod$-$A$ as a morphism in $\mathcal{T}$:
\begin{calign}
\includegraphics[scale=1]{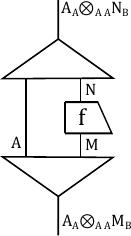}
~~=~~
\includegraphics[scale=1]{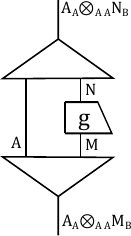}
&&\Rightarrow&&
\includegraphics[scale=1]{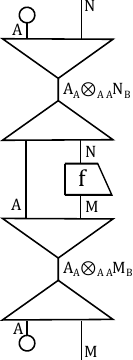}
~~=~~
\includegraphics[scale=1]{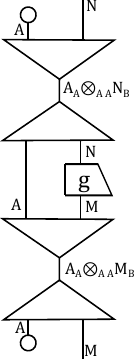}
\\
\Leftrightarrow~~
\includegraphics[scale=1]{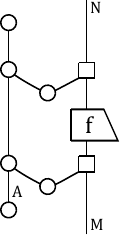}
~~=~~
\includegraphics[scale=1]{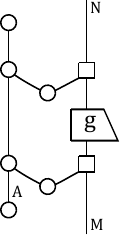}
&&\Leftrightarrow&&
\includegraphics[scale=1]{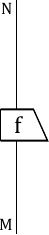}
~~=~~
\includegraphics[scale=1]{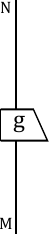}
\end{calign}
Here the first implication is by pre- and postcomposition on both sides of the equality; the second implication is by definition of the idempotent~\eqref{eq:idempotentforrelprod}; and the final implication is by unitality, counitality and separability of $A$, the Frobenius equation for $A$, the fact that $f$ and $g$ are bimodule homomorphisms, and~\eqref{eq:module}.
\item \emph{Full on 2-morphisms.} We must show that for any morphism of module functors $f: \Psi({}_A M_B) \to \Psi({}_A N_B)$ there exists a bimodule morphism $\phi: {}_A M_B \to {}_A N_B$ such that $\Psi(\phi) = f$. This is seen as follows:
\begin{calign}
\includegraphics[scale=1]{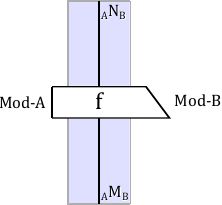}
~~=~~
\includegraphics[scale=1]{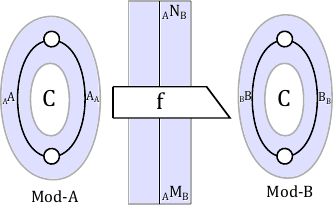}
\\=~~
\includegraphics[scale=1]{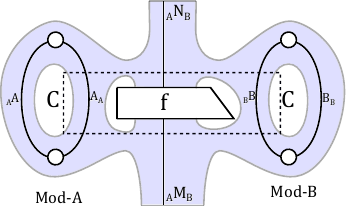}
~~=~~
\includegraphics[scale=1]{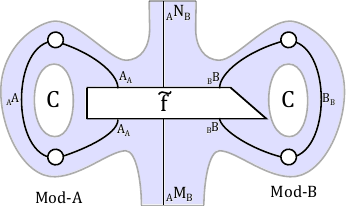}
\\=~~
\includegraphics[scale=1]{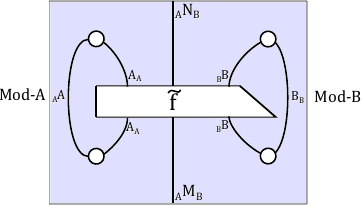}
\end{calign}
Here the first equality is by unitarity of $\upsilon_{A}$, $\upsilon_{B}$ and $\upsilon_{\mathbbm{1}}$, and separability of $A$ and $B$; the second equality is by unitarity of $\upsilon_{A}$ and $\upsilon_{B}$; the third equality is by fullness and faithfulness of the equivalence $\Psi_{\mathbbm{1},\mathbbm{1}}$ (here $\tilde{f}: A \otimes_A M \otimes_B  B \to A \otimes_A  N \otimes_B  B$ is the unique preimage of the morphism $\Psi(A \otimes_A M \otimes_B B) \to \Psi(A \otimes_A  N \otimes_B B)$ contained in the dashed box); and the final equality is by unitarity of $\upsilon_{\mathbbm{1}}$ and manipulation of functorial boxes.
\end{itemize}
\end{proof}

\end{document}